\tikzset{
  equation/.style={
    baseline={([yshift=-1.5ex]current bounding box.center)}
  },
  torus horizontal/.style = {
  decoration={
    markings,
    mark=at position 0.5 with {
            \draw (-2pt,-2pt) -- (2pt,2pt);
            \draw (2pt,-2pt) -- (-2pt,2pt);
    }}, decorate},
  torus vertical/.style = {
  decoration={
    markings,
    mark=at position 0.5 with {
            \draw (-2pt,-2pt) -- (2pt,2pt);
            \draw (-3pt,-2pt) -- (1pt,2pt);
    }}, decorate}
}
\newtheorem{Theo}{Theorem}[section]
\newtheorem{Prop}[Theo]{Proposition}
\newtheorem{Coro}[Theo]{Corollary}
\newtheorem{Lemm}[Theo]{Lemma}
\newtheorem{Defi}[Theo]{Definition}
\newtheorem{Assumption}[Theo]{Assumption}
\newtheorem{Rema}[Theo]{Remark}
\newcommand{\EE}{\mathcal{E}}
\newcommand{\RR}{\mathcal{R}}
\newcommand{\VV}{\mathcal{V}}
\newcommand{\SSS}{\mathcal{S}}
\newcommand{\XXX}{\mathcal{X}}
\def\gap{\operatorname{gap}}
\def\R{\mathbb{ R}}
\title{Thermalization in Kitaev's quantum double models via Tensor Network techniques}
\author[1,2,3]{Angelo Lucia\thanks{ {\tt anglucia@ucm.es} , \, \url{https://orcid.org/0000-0003-1709-1220}}}
\author[2,3]{David Pérez-García\thanks{ {\tt dperezga@ucm.es} , \, \url{https://orcid.org/0000-0003-2990-791X}}}
\author[3,4]{Antonio Pérez-Hernández\thanks{ {\tt antperez@ind.uned.es} , \, \url{https://orcid.org/0000-0001-8600-7083}  }   }
\affil[1]{
    Walter Burke Institute for Theoretical Physics
    and Institute for Quantum Information and Matter, 
    Caltech, Pasadena, CA 91125, USA
}
\affil[2]{
    Instituto de Ciencias Matemáticas,
    28049 Madrid, Spain
    }
\affil[3]{
    Departamento de Análisis y Matemática Aplicada,
    Universidad Complutense de Madrid, 28040 Madrid, Spain
}
\affil[4]{
    Departamento de Matemática Aplicada I,
    Universidad Nacional de Educación a Distancia, 28040 Madrid, Spain
}
\begin{document}

\maketitle

{\small \noindent \textbf{Acknowledgements:} \footnotesize \\
We acknowledge financial support from MCIN/AEI/10.13039/501100011033 (grants MTM2017-88385-P, PID2020-113523GB-I00, SEV-2015-0554 and CEX2019-000904-S), from Comunidad de Madrid (grant QUITEMAD-CM, ref. S2018/TCS-4342), and the European Research Council (ERC) under the European Union’s Horizon 2020 research and innovation programme (grant agreement No 648913).
A.\,L.~acknowledges support from the Walter Burke Institute for Theoretical Physics in the form of the Sherman Fairchild Fellowship, and  by grant RYC2019-026475-I funded by MCIN/AEI/10.13039/501100011033 and ``ESF Investing in your future''. 
A.\,P.\,H.~ acknowledges support from the grants ``Juan de la Cierva Formación'' (FJC2018-036519-I), and of the ETSI Industriales (UNED) of Spain, projects 2023-ETSII-UNED-01 and 2021-MAT11 .\\
}

{\small\noindent \textbf{Abstract}:
We show that every ergodic Davies generator associated to \textit{any} 2D Kitaev's quantum double model has a non-vanishing spectral gap in the thermodynamic limit. This validates rigorously the extended belief that those models are useless as self-correcting quantum memories, even in the non-abelian case. The proof uses recent ideas and results regarding the characterization of the spectral gap for parent Hamiltonians associated to Projected Entangled Pair States in terms of a bulk-boundary correspondence.\\

\noindent \emph{Key words: thermalization, quantum double model, tensor network, parent Hamiltonian, self-correcting quantum memory, spectral gap, Davies generator}.\\

\noindent \emph{2020 MSC: 81P73; 82C10; 81S22; 81V27; 37A25}.
}

\newpage

\tableofcontents


\def\weight{\mathcal{G}}

\def\weightS{
\mathcal{G}_{
\begin{tikzpicture}[equation, scale=0.2]
\draw[thick, gray] (-0.5,0) -- (0.5,0);
\draw[thick, gray] (0,-0.5) -- (0,0.5);
\end{tikzpicture}}
}

\def\weightP{
\mathcal{G}_{\,
\begin{tikzpicture}[equation, scale=0.2]
\draw[thick, gray] (-0.5,-0.5) rectangle (0.5,0.5);
\end{tikzpicture}}
}


\def\centerarc[#1](#2)(#3:#4:#5)
    { \draw[#1] ($(#2)+({#5*cos(#3)},{#5*sin(#3)})$) arc (#3:#4:#5); }


\def\edgePEPO[#1]{


\begin{scope}[ scale=1]

\draw[thick, gray]  (0,-0.5) -- (0,0.5) ;

\def\r{0.8};

\begin{scope}[xshift=1cm, thick]
  \draw[postaction={decorate}, blue] (-0.5*\r,0.5*\r)  -- (0-\r,0); 
\draw[postaction={decorate}, blue] 
(0-\r,0)  -- (-0.5*\r,-0.5*\r); 

     \ifthenelse{#1=1}{ 
\filldraw[fill=white, draw=blue, thick] (-0.75*\r,0.25*\r) circle (0.05);
\filldraw[fill=white, draw=blue, thick] (-0.75*\r,-0.25*\r) circle (0.05);
}{}
\end{scope}

\begin{scope}[xshift=-1cm, thick]    
    \draw[postaction={decorate}, blue]   (\r,0) -- (0.5*\r,0.5*\r); 
    \draw[postaction={decorate}, blue]  (0.5*\r,-0.5*\r) -- (\r,0);

     \ifthenelse{#1=1}{ 
\filldraw[fill=white, draw=blue, thick] (0.75*\r,0.25*\r) circle (0.05);
\filldraw[fill=white, draw=blue, thick] (0.75*\r,-0.25*\r) circle (0.05);
}{}
\end{scope}

\begin{scope}[yshift=1cm, thick]
  \draw[postaction={decorate}, red] (-0.5*\r,-0.5*\r)  -- (0,-\r); 
    \draw[postaction={decorate}, red] (0,-\r) -- (0.5*\r,-0.5*\r); 
    
 \ifthenelse{#1=1}{ 
\filldraw[fill=white, draw=red, thick] (-0.25*\r,-0.75*\r) circle (0.05);
\filldraw[fill=white, draw=red, thick] (0.25*\r,-0.75*\r) circle (0.05);
}{}

\end{scope}

\begin{scope}[yshift=-1cm, thick]
  \draw[postaction={decorate}, red] (0.5*\r,0.5*\r)  -- (0,\r); 
    \draw[postaction={decorate}, red] (0,\r) -- (-0.5*\r,0.5*\r); 
    
 \ifthenelse{#1=1}{   
\filldraw[fill=white, draw=red, thick] (0.25*\r,0.75*\r) circle (0.05);
\filldraw[fill=white, draw=red, thick] (-0.25*\r,0.75*\r) circle (0.05);
}{}

\end{scope}

\end{scope}
}


\def\TransferOperator[#1,#2]{


\begin{scope}[decoration={markings, mark=at position 0.55 with {\arrow{stealth}}}]

\ifthenelse{#1=1}{  

\draw[very thin] (0,-0.5) -- (0,0.5);    

\draw[thick, blue, postaction={decorate}] + (0,0.5) arc (0:50:1.5cm and 0.2cm);

\draw[thick, blue, postaction={decorate}] + (0,-0.5) arc (0:50:1.5cm and 0.2cm);

\ifthenelse{#2=1}{

\filldraw[fill=white, draw=blue, thin] ({-1.5+1.5*cos(43)},{0.5+0.2*sin(43)}) circle (0.05);

\filldraw[fill=white, draw=blue, thin] ({-1.5+1.5*cos(43)},{-0.5+0.2*sin(43)}) circle (0.05);
}{}

}{}

\ifthenelse{#1=2}{  

\draw[very thin] (0,-0.5) -- (0,0.5);

\draw[thick, red,postaction={decorate}] + (0,0.5) arc (90:120:1cm and 1cm);

\draw[thick, red, postaction={decorate}] + (0,-0.5) arc (90:120:1cm and 1cm);

\ifthenelse{#2=1}{

\filldraw[fill=white, draw=red, thin] ({0+1*cos(113)},{-0.5+1*sin(113)}) circle (0.05);

\filldraw[fill=white, draw=red, thin] ({0+1*cos(113)},{-1.5+1*sin(113)}) circle (0.05);
}{}

}{}
\end{scope}

\begin{scope}[decoration={markings, mark=at position 0.55 with {\arrow{stealth reversed}}}]

\ifthenelse{#1=1}{  

\draw[very thin] (0,-0.5) -- (0,0.5);    

\draw[thick, blue, postaction={decorate}] + (0,0.5) arc (0:-50:1.5cm and 0.2cm);

\draw[thick, blue,postaction={decorate}] + (0,-0.5) arc (0:-50:1.5cm and 0.2cm);

\ifthenelse{#2=1}{

\filldraw[fill=white, draw=blue, thin] ({-1.5+1.5*cos(-43)},{0.5+0.2*sin(-43)}) circle (0.05);

\filldraw[fill=white, draw=blue, thin] ({-1.5+1.5*cos(-43)},{-0.5+0.2*sin(-43)}) circle (0.05);
}{}

}{}

\ifthenelse{#1=2}{  

\draw[very thin] (0,-0.5) -- (0,0.5);    

\draw[thick, red, postaction={decorate}] + (0,0.5) arc (90:60:1cm and 1cm);

\draw[thick, red, postaction={decorate}] + (0,-0.5) arc (90:60:1cm and 1cm);

\ifthenelse{#2=1}{

\filldraw[fill=white, draw=red, thin] ({0+1*cos(67)},{-0.5+1*sin(67)}) circle (0.05);

\filldraw[fill=white, draw=red, thin] ({0+1*cos(67)},{-1.5+1*sin(67)}) circle (0.05);
}{}

}{}

\end{scope}

}


\def\edgefour[#1,#2]{


\begin{scope}[scale=1]

\draw[thick, black!20!white]  (0,-1) -- (0,1) ;

\ifthenelse{#2=1}{\filldraw[fill=black] (-0.06,-0.06) rectangle (0.06,0.06) ;}{}

\def\r{0.5};
\def\R{0.8};

\begin{scope}[xshift=1cm, thick]
\centerarc[blue,thick](0,0)(135:225:\R);
\draw [blue,-{latex}] (-\R,0) -- (-\R,-0.1);

\ifthenelse{#1=1}{\filldraw[fill=white, draw=blue, thin] ($(0,0)+({\R*cos(150)},{\R*sin(150)})$) circle (0.05) ;}{}

\ifthenelse{#1=1}{\filldraw[fill=white, draw=blue, thin] ($(0,0)+({\R*cos(210)},{\R*sin(210)})$) circle (0.05) ;}{}
\end{scope}

\begin{scope}[xshift=-1cm, thick]   
 \centerarc[blue,thick](0,0)(315:405:\R);
 \draw [blue,-{latex}] (\R,0) -- (\R,0.1);
 
 \ifthenelse{#1=1}{\filldraw[fill=white, draw=blue, thin] ($(0,0)+({\R*cos(330)},{\R*sin(330)})$) circle (0.05) ;}{}

\ifthenelse{#1=1}{\filldraw[fill=white, draw=blue, thin] ($(0,0)+({\R*cos(390)},{\R*sin(390)})$) circle (0.05) ;}{}
\end{scope}

\begin{scope}[yshift=1cm, thick]
\centerarc[red,thick](0,0)(225:315:\r);

\ifthenelse{#1=1}{\filldraw[fill=white, draw=red, thin] ($(0,0)+({\r*cos(240)},{\r*sin(240)})$) circle (0.05) ;}{}

\ifthenelse{#1=1}{\filldraw[fill=white, draw=red, thin] ($(0,0)+({\r*cos(300)},{\r*sin(300)})$) circle (0.05) ;}{}
\end{scope}

\begin{scope}[yshift=-1cm, thick]
\centerarc[postaction={decorate},red,thick](0,0)(45:135:\r);

\ifthenelse{#1=1}{\filldraw[fill=white, draw=red, thin] ($(0,0)+({\r*cos(60)},{\r*sin(60)})$) circle (0.05) ;}{}

\ifthenelse{#1=1}{\filldraw[fill=white, draw=red, thin] ($(0,0)+({\r*cos(120)},{\r*sin(120)})$) circle (0.05) ;}{}
\end{scope}

\end{scope}
}


\def\plaquettefive[#1,#2](#3,#4,#5,#6){






\def\r{0.4};
\def\R{0.7};

\begin{scope}[xshift=-0.5 cm, yshift=-0.5 cm]

\draw[step=1cm,black,xshift=0.5cm,yshift=0.5cm] (0,0) grid (#1,#2);

\foreach \x in {1,...,#1} 
\foreach \y in {1,...,#2}{

\begin{scope}[xshift=\x cm, yshift=\y cm, scale=0.5]


\ifthenelse{#3=3 \OR #3=5}{
\filldraw[fill=black,xshift=1cm,yshift=0cm] (-0.07,-0.07) rectangle (0.07,0.07);
\filldraw[fill=black,xshift=0cm,yshift=1cm] (-0.07,-0.07) rectangle (0.07,0.07);
\filldraw[fill=black,xshift=-1cm,yshift=0cm] (-0.07,-0.07) rectangle (0.07,0.07);
\filldraw[fill=black,xshift=0cm,yshift=-1cm] (-0.07,-0.07) rectangle (0.07,0.07);
}{}


\ifthenelse{#3=1 \OR #3=4 \OR #3=5} { 


\ifthenelse{\x=1}{  

\centerarc[blue,thick](-2,0)(-45:45:\R);


\ifthenelse{#5=1}{  
\draw [blue,-{latex}] (-2+\R,0.1) -- (-2+\R,0.2);
}{}


\ifthenelse{#6=1}{
\filldraw[fill=white, draw=blue, thin] ($(-2,0)+({\R*cos(-30)},{\R*sin(-30)})$) circle (0.05) ;
\filldraw[fill=white, draw=blue, thin] ($(-2,0)+({\R*cos(30)},{\R*sin(30)})$) circle (0.05) ;
}{}
}{}

\ifthenelse{\x=#1}{  

\centerarc[blue,thick](2,0)(135:225:\R);


\ifthenelse{#5=1}{  
\draw [blue,-{latex}] (2-\R,-0.1) -- (2-\R,-0.2);
}{}


\ifthenelse{#6=1}{
\filldraw[fill=white, draw=blue, thin] ($(2,0)+({\R*cos(150)},{\R*sin(150)})$) circle (0.05) ;
\filldraw[fill=white, draw=blue, thin] ($(2,0)+({\R*cos(210)},{\R*sin(210)})$) circle (0.05) ;
}{}
}{}

\ifthenelse{\y=1}{  
\centerarc[blue,thick](0,-2)(45:135:\R);


\ifthenelse{#5=1}{  
\draw [blue,-{latex}] (-0.1,\R-2) -- (-0.2,\R-2);
}{}


\ifthenelse{#6=1}{
\filldraw[fill=white, draw=blue, thin] ($(0,-2)+({\R*cos(60)},{\R*sin(60)})$) circle (0.05) ;
\filldraw[fill=white, draw=blue, thin] ($(0,-2)+({\R*cos(120)},{\R*sin(120)})$) circle (0.05) ;
}{}
}{}

\ifthenelse{\y=#2}{  

\centerarc[blue,thick](0,2)(225:315:\R);


\ifthenelse{#5=1}{  
\draw [blue,-{latex}] (0.1,2-\R) -- (0.2,2-\R);
}{}


\ifthenelse{#6=1}{
\filldraw[fill=white, draw=blue, thin] ($(0,2)+({\R*cos(240)},{\R*sin(240)})$) circle (0.05) ;
\filldraw[fill=white, draw=blue, thin] ($(0,2)+({\R*cos(300)},{\R*sin(300)})$) circle (0.05) ;
}{}
}{}


\ifthenelse{#4=1}{   
\centerarc[blue,thick](0,0)(0:360:\R);


\ifthenelse{#5=1}{  
\draw [blue,-{latex}] (\R,0.1) -- (\R,0.2);
\draw [blue,-{latex}] (-0.1,\R) -- (-0.2,\R);
\draw [blue,-{latex}] (-\R,-0.1) -- (-\R,-0.2);
\draw [blue,-{latex}] (0.1,-\R) -- (0.2,-\R);
}{}

}{}

}{}


\ifthenelse{#3=2 \OR #3=4 \OR #3=5} { 


\ifthenelse{\y < #2 \AND \x < #1 \AND #4=1}{  \centerarc[red,thick](1,1)(0:360:\r);}{}


\ifthenelse{\y=1 \AND \x < #1}{  \centerarc[red,thick](1,-1)(-45:225:\r);

\ifthenelse{#6=1}{
\filldraw[fill=white, draw=red, thin] ($(1,-1)+({\r*cos(-30)},{\r*sin(-30)})$) circle (0.05) ;
\filldraw[fill=white, draw=red, thin] ($(1,-1)+({\r*cos(210)},{\r*sin(210)})$) circle (0.05) ;
}{}

}{}

\ifthenelse{\y=#2 \AND \x < #1}{  \centerarc[red,thick](1,1)(135:405:\r);

\ifthenelse{#6=1}{
\filldraw[fill=white, draw=red, thin] ($(1,1)+({\r*cos(-150)},{\r*sin(150)})$) circle (0.05) ;
\filldraw[fill=white, draw=red, thin] ($(1,1)+({\r*cos(390)},{\r*sin(390)})$) circle (0.05) ;
}{}

}{}

\ifthenelse{\x=1 \AND \y < #2}{  \centerarc[red,thick](-1,1)(-135:135:\r);

\ifthenelse{#6=1}{
\filldraw[fill=white, draw=red, thin] ($(-1,1)+({\r*cos(-120)},{\r*sin(-120)})$) circle (0.05) ;
\filldraw[fill=white, draw=red, thin] ($(-1,1)+({\r*cos(120)},{\r*sin(120)})$) circle (0.05) ;
}{}

}{}

\ifthenelse{\x=#1 \AND \y < #2}{ 
 \centerarc[red,thick](1,1)(45:315:\r);

\ifthenelse{#6=1}{
\filldraw[fill=white, draw=red, thin] ($(1,1)+({\r*cos(60)},{\r*sin(60)})$) circle (0.05) ;
\filldraw[fill=white, draw=red, thin] ($(1,1)+({\r*cos(300)},{\r*sin(300)})$) circle (0.05) ;
}{}
 
 }{}

\ifthenelse{\y=1 \AND \x =1}{  \centerarc[red,thick](-1,-1)(-45:135:\r);

\ifthenelse{#6=1}{
\filldraw[fill=white, draw=red, thin] ($(-1,-1)+({\r*cos(-30)},{\r*sin(-30)})$) circle (0.05) ;
\filldraw[fill=white, draw=red, thin] ($(-1,-1)+({\r*cos(120)},{\r*sin(120)})$) circle (0.05) ;
}{}

}{}

\ifthenelse{\y=1 \AND \x = #1}{  \centerarc[red,thick](1,-1)(45:225:\r);

\ifthenelse{#6=1}{
\filldraw[fill=white, draw=red, thin] ($(1,-1)+({\r*cos(60)},{\r*sin(60)})$) circle (0.05) ;
\filldraw[fill=white, draw=red, thin] ($(1,-1)+({\r*cos(210)},{\r*sin(210)})$) circle (0.05) ;
}{}

}{}

\ifthenelse{\y=#2 \AND \x = 1}{  \centerarc[red,thick](-1,1)(45:-135:\r);

\ifthenelse{#6=1}{
\filldraw[fill=white, draw=red, thin] ($(-1,1)+({\r*cos(30)},{\r*sin(30)})$) circle (0.05) ;
\filldraw[fill=white, draw=red, thin] ($(-1,1)+({\r*cos(-120)},{\r*sin(-120)})$) circle (0.05) ;
}{}

}{}

\ifthenelse{\y=#2 \AND \x = #1}{  \centerarc[red,thick](1,1)(135:315:\r);

\ifthenelse{#6=1}{
\filldraw[fill=white, draw=red, thin] ($(1,1)+({\r*cos(150)},{\r*sin(150)})$) circle (0.05) ;
\filldraw[fill=white, draw=red, thin] ($(1,1)+({\r*cos(300)},{\r*sin(300)})$) circle (0.05) ;
}{}

}{}

}{}

\end{scope}

}

\end{scope}

}


\def\cylinderopen[#1,#2]{

\def\r{0.2};
\def\R{0.35};

\begin{scope}

\draw[step=1.0,black,thin] (0.1,0) grid (#1-0.1,#2);

\foreach \y in {0,...,#2}{
\draw (0,\y) -- (#1,\y);
}


\foreach \y in {1,...,#2}{
    \centerarc[blue, thick](0.5,\y-0.5)(-135:135:\R);
    \centerarc[blue, thick](#1-0.5,\y-0.5)(45:315:\R);
}

\foreach \y in {0,...,#2}{
    \centerarc[red, thick](0,\y)(-45:45:\r);
     \centerarc[red, thick](#1,\y)(135:225:\r);
}


\foreach \x in {1,...,#1}{
    \centerarc[blue, thick](\x-0.5,-0.5)(45:135:\R);
    \centerarc[blue, thick](\x-0.5,#2+0.5)(-135:-45:\R);
}


\foreach \x in {2,...,#1}{
    \centerarc[red, thick](\x-1,0)(-45:225:\r);
    \centerarc[red, thick](\x-1,#2)(135:405:\r);
}

\end{scope}

}


\newpage

\section{Introduction}

In the seminal works \cite{Dennis02,Alicki4D} it is shown that the 4D Toric Code is a self-correcting quantum memory, that is, it allows to keep quantum information protected against thermal errors (for all temperatures below a threshold) without the need for active error correction, for times that grow exponentially with the system size $N$. As interactions become highly non-local after mapping the 4D Toric Code to a 2D or 3D geometry, it has been a major open question whether similar self-correction is possible in 2D or 3D, where the information is encoded in the degenerate ground space of a locally interacting Hamiltonian in a 2D or 3D geometry. We refer to the review \cite{review-memories} for a very detailed discussion of the many different contributions to the problem, that still remains open up to date. 

Before focusing on the 2D case, which is the main goal of this work, let us briefly comment that in 3D, this question motivated the discovery of Haah's cubic code \cite{Haah-code, Haah-code-1}, which was the opening door to a family of new ultra-exotic quantum phases of matter, currently known as {\it fractons} \cite{fractons}. 

In  2D,  it is a general belief that self-correction is not possible. There is indeed compelling evidence for that. For instance, Landon-Cardinal and Poulin \cite{Landon}, extending a result of Bravyi and Terhal \cite{Bravyi-Terhal}, showed that  commuting frustration free models in 2D display only a constant energy barrier. That is, it is possible to implement a sequence of ${\rm poly}(N)$ local operations that maps one ground state into an orthogonal one and, at the same time, the energy of all intermediate states is bounded by a constant independent of $N$. This seemed to rule out the existence of self-correction in 2D. 

However, it was later shown in \cite{Brown} that having a bounded energy barrier does not exclude self-correction, since it could happen that the paths implementing changes in the ground space are highly non-typical, and hence, the system could be {\it entropically} protected. Indeed, an example is shown in \cite{Brown} where, in a very particular regime of temperatures though, entropic protection occurs. 

Therefore, in order to solve the problem in a definite manner, one needs to consider directly the mixing time of the thermal evolution operator which, in the weak coupling limit, is given by the Davies master equation \cite{Davies}. Self-correction will not be possible if the noise operator relaxes fast to the Gibbs ensemble, where all information is lost. As detailed in \cite{Alicki2D} or \cite{Komar} using standard arguments on Markovian semigroups, the key quantity that controls this relaxation time is the spectral gap of the Davies Lindbladian generator. Self-correction in 2D would be excluded if one is able to show that such a gap is uniformly lower bounded independently of the system size. This is precisely the result proven for the Toric Code by Alicki et al, already in 2002, in the pioneer work \cite{Alicki2D}.  The result was extended for the case of all abelian quantum double models by Komar et al in 2016 \cite{Komar}. Indeed, up to now, these were the only cases for which the belief that self-correction does not exist in 2D have been rigorously proven. In particular, it remained an open question (as  highlighted in the review \cite{review-memories}) whether the same result would hold for the case of {\it non-abelian} quantum double models. 

In this work, we address and solve this problem showing that non-abelian quantum double models behave as their abelian counterparts. The main result of this work is the summarized as follow: for any finite group $G$, we consider Kitaev's quantum double Hamiltonian $H$ of group $G$ defined on $\mathbb{Z}_N \times \mathbb{Z}_N$. We consider a thermal bath at inverse temperature $\beta < \infty$, acting independently on each site of $\mathbb{Z}_N \times \mathbb{Z}_N$ in a translation invariant way, described by a Davies semigroup. This is given by a family of single-site jump operators $\{S_\alpha\}_\alpha$ and positive coupling functions $\widehat{g}_\alpha$ satisfying detailed balance. The resulting generator $\mathcal{L}$ is then given by
\begin{equation}
    \mathcal{L}(Q) = \sum_{e \in \mathbb{Z}_N \times \mathbb{Z}_N} \sum_{\alpha,\omega} \frac{\widehat{g}_\alpha(\omega)}{2}
    \qty( S_{e,\alpha}^{\dagger}(\omega) [Q, S_{e,\alpha}(\omega)] +  [S_{e,\alpha}^{\dagger}(\omega), Q] \, S_{e,\alpha}(\omega) )
\end{equation}
where $\omega$ runs over the Bohr frequencies of $H$, i.e., the differences between eigenvalues of $H$, and $S_{e,\alpha}(\omega)$ are the Fourier coefficients of $S_\alpha$ acting on edge $e$, with respect to the evolution by $H$. As these vanish for all values of $\omega$ outside of a finite set $\Omega$, we can without loss of generality restrict the sum to $\omega \in \Omega$. See Section~\ref{sec:davies-generators} for a more complete explanation of the construction. 
We state and prove the theorem for the case of a translation invariant Lindbladian for simplicity: with minor adaptations, our proof could be extended to the non translation invariant case, as long as it is possible to obtain uniform estimates on the behavior of the local generators (see Remark~\ref{rema:translation-invariance}).

\vspace{0.2cm}

\begin{Theo}\label{Theo:MainResultIntroduction} 
Suppose that the jump operators satisfy $\{ S_\alpha\}_\alpha' = \mathbb{C}\mathbbm{1}$, where $'$ denotes the commutator. Then the Davies generator $\mathcal{L}$ defined above is ergodic, and its spectral gap  has a lower bound which is independent of the system size $N$.
Specifically, there exist positive constants $C$ and $\lambda$, independent of $\beta$ and the system size, such that
\begin{equation}
    \operatorname{gap}(\mathcal{L}) \ge\, \widehat{g}_{\min} \, e^{-C\, e^{\beta}} \, \lambda \qc \widehat{g}_{\min} = \min_{\alpha} \min_{\omega \in \Omega} \widehat{g}_\alpha(\omega)
\end{equation}
\end{Theo}
The constant $\lambda$ will depend both on the group $G$ and on the choice of the jump operators $\{S_\alpha\}_\alpha$. 
Note that while in principle $\widehat{g}_{\min}$ could also scale with $\beta$, there are examples where it can be lower bounded by a strictly positive constant independent of the temperature. The dependence of our bound on $\beta$ is \emph{worse} than the ones obtained in the previous works for the case of an abelian group $G$ \cite{Alicki2D, Komar} (double exponential instead of exponential): we believe this dependence is an artifact of our proof and therefore is probably not optimal.

The tools used to address the main theorem are completely different from those used in the  abelian case in \cite{Komar}. There, following ideas of \cite{Temme}, the authors bound the spectral gap of $\mathcal{L}$ via a quantum version of the canonical-paths method in classical Markov chains. 
Instead, we go back to the original idea of Alicki et al. for the Toric Code \cite{Alicki2D}: construct an artificial Hamiltonian from the Davies generator $\mathcal{L}$ so that the spectral gap of $\mathcal{L}$ coincides with the spectral gap above the ground state of that Hamiltonian, and then use techniques to bound spectral gaps of many body Hamiltonians. 
This trick has already found other interesting implications in quantum information, especially in problems related to thermalization such as the behavior of random quantum circuits \cite{random-circuits} or the convergence of Gibbs sampling protocols \cite{Kastoryano-Brandao}.

In particular, we will follow closely the implementation of the idea used in \cite{VWPGC06}, and reason as follows. We purify the Gibbs state $\rho_\beta$ and consider the (pure) thermofield double state $|\rho_\beta^{1/2}\rangle$ (i.e.\ the cyclic vector of the GNS representation of the algebra of observables with state $\rho_\beta$). The commutativity of the terms in the quantum double Hamiltonian $H$ makes $|\rho_\beta^{1/2}\rangle$ a Projected Entangled Pair State (PEPS). We will show then (see Proposition~\ref{Prop:davies-to-parent-hamiltonian}) that the gap of $\mathcal{L}$ can be lower bounded by the gap of the parent Hamiltonian of $|\rho_\beta^{1/2}\rangle$ in the PEPS formalism. 

This opens the door to exploit the extensive knowledge gained in the area of Tensor Networks during the last decades. Tensor Networks, and in particular PEPS, have revealed themselves as an invaluable tool to understand, classify and simulate strongly correlated quantum systems (see e.g. the reviews \cite{Romanreview,Murg-review,Cirac20}). The key reason is that they approximate well the ground and thermal states of short-range Hamiltonians  and, at the same time, display a local structure that allows to describe and manipulate them efficiently \cite{Cirac20}.

Such a local structure manifests itself in a bulk-boundary correspondence that was first uncovered in \cite{Cirac11}, where one can associate to each patch of the 2D PEPS a 1D mixed state that lives on the boundary of the patch. It is conjectured in \cite{Cirac11}, and verified numerically for some examples, that the gap of the parent Hamiltonian in the bulk corresponds to a form of locality in the associated boundary state. 

This bulk-boundary correspondence was made rigorous for the first time in \cite{KaLuPeGa19} (see also the subsequent contribution \cite{Antonio}). In particular, it is shown in \cite{KaLuPeGa19} that if the boundary state displays a locality property called {\it approximate factorization}, then the bulk parent Hamiltonian has a non-vanishing spectral gap in the thermodynamic limit.

Roughly speaking, approximate factorization can be defined as follows. Consider a 1D chain of $N$ sites that we divide in 3 regions: left (L), middle (M) and right (R).  A mixed state $\rho_{LMR}$ is said to approximately factorize if if can be written as
$$\rho_{LMR} \approx \left( \Omega_{LM} \otimes \mathbbm{1}_R \right) \left( \mathbbm{1}_L\otimes \Delta_{MR} \right) \,, $$
where, for a particular notion of distance, the error in the approximation decays fast with the size of $M$.

It is one of the main contributions of \cite{KaLuPeGa19, Antonio} to show that Gibbs states of 1D Hamiltonians with sufficiently fast decaying interactions fulfill the approximate factorization property. Indeed, this idea has been used in \cite{Kuwahara} to give algorithms that provide efficiently Matrix Product Operator (MPO) descriptions of 1D Gibbs states.

We will precisely show (Theorem~\ref{Theo:leadingTerm}) that the boundary states associated to the thermofield double PEPS $|\rho_\beta^{1/2}\rangle$ approximately factorize. In order to finish the proof of our main theorem,  we will also need to extend the validity of the results in \cite{KaLuPeGa19} beyond the  cases considered there (injective and MPO-injective PEPS), so that it applies to $|\rho_\beta^{1/2}\rangle$. Indeed, it has been a technical challenge in the paper to deal with a PEPS which is neither injective nor MPO-injective, the classes for which essentially all the analytical results for PEPS have been proven \cite{Cirac20}. 

Let us finish this introduction by commenting that the results presented in this work can be seen as a clear illustration of the power of the bulk-boundary correspondence in PEPS, and in particular, the power of the ideas and techniques developed in \cite{KaLuPeGa19}. 

We are very confident that the result presented here can be extended, using similar techniques, to cover all possible 2D models that are renormalization fixed points, like string net models \cite{LevinWen}. The reason is that all those models have shown to be very naturally described and analyzed in the language of PEPS \cite{Cirac20}. We leave such extension for future work. 

This paper is structured as follow. 
In Section~\ref{sec:quantum-spin-systems}, we recall some elementary properties of quantum spin system, and explain the strategy we will use to estimate spectral gaps of 2D quantum Hamiltonians. Moreover, we will explain under which assumptions we can estimate the spectral gap of a model on a 2D torus with the spectral gap of the same model with open boundary conditions.
In Section~\ref{sec:PEPS}, we give a general introduction to the tensor networks and PEPS formalism, and explain the graphical notation we will use to represent tensors. We will then recall the results from \cite{KaLuPeGa19} that connect the quasi-factorization property with the spectral gap of a parent Hamiltonian of a PEPS, and present the necessary modifications of this results that we will need in this paper.
In Section~\ref{sec:peps-quantum-double} we introduce the Quantum Double Models, and present the PEPS representation of the thermofield double state $\ket*{\rho_\beta^{1/2}}$. From this construction, we will compute the corresponding boundary state, prove the approximate factorization condition, construct a parent Hamiltonian and estimate its spectral gap.
In Section~\ref{sec:davies-generators}, we recall the definition and elementary properties of Davies generators. We will then show that we can lower bound the spectral gap of an ergodic Davies generator by the spectral gap of a parent Hamiltonian for $\ket*{\rho_\beta^{1/2}}$, which will imply our main result.


\section{Quantum spin systems}\label{sec:quantum-spin-systems}

In this section, we are presenting some of the concepts and auxiliary results that we will use for the main result of the paper. Since we expect that they are useful in other contexts, we decided to present them in a more general setting. 

\subsection{Notation and elementary properties}\label{sec:notation}
We use Dirac's \emph{bra-ket} notation. Vectors in a Hilbert space $\mathcal{H}$ will be represented as ``\emph{kets}'' $\ket{\phi}$, and the scalar product between $\ket{\phi}$ and $\ket{\psi}$ is written as $\braket{\phi}{\psi}$ (which is anti-linear in the first argument). The linear functional $\ket{\psi} \mapsto \braket{\phi}{\psi}$ is then denoted as a ``\emph{bra}'' $\bra{\phi}$. Rank-one linear maps will be written as $\ketbra{\psi}{\phi}$. 

Let us consider an arbitrary set $\Lambda$, and associate to every site $x \in \Lambda$ a finite-dimensional Hilbert space $\mathcal{H}_{x} \equiv \mathbb{C}^{d}$ for a prefixed  $d \in \mathbb{N}$. As usual, for a finite subset $X \subset \Lambda$ we define the corresponding space of states $\mathcal{H}_{X} := \otimes_{x \in X}\mathcal{H}_{x}$ and the space of bounded linear operators (observables) $\mathcal{B}_{X} := \mathcal{B}(\mathcal{H}_{X})$ endowed with the usual operator norm. We denote by $\mathbbm{1}_{X} \in \mathcal{B}_{X}$ the identity. We identify for $X \subset X' \subset \Lambda$ observables $\mathcal{B}_{X} \hookrightarrow \mathcal{B}_{X'}$ via the isometric embedding $Q \mapsto Q \otimes \mathbbm{1}_{X' \setminus X}$. Given an operator $Q$, the minimal region $X\subset \Lambda$ such that $Q \in \mathcal{B}_X$ is called the \emph{support} of $Q$.
Let us observe that if $Q \in \mathcal{B}_{X}$ is a self-adjoint element, then $Q \otimes \mathbbm{1}_{X' \setminus X}$ and $Q$ have the same eigenvalues $\lambda$, and the corresponding eigenspaces are related via $V_{\lambda} \mapsto V_{\lambda} \otimes \mathcal{H}_{X' \setminus X}$. In particular, $\ker{(Q \otimes \mathbbm{1}_{X' \setminus X})} = \ker{(Q)} \otimes \mathcal{H}_{X' \otimes X}$. We define the \emph{spectral gap} of such $Q$, denoted $\gap(Q)$, as the difference of the two lowest (unequal) eigenvalues of $Q$. If $Q$ has only one eigenvalue, we set $\gap(Q)=0$.

A local Hamiltonian is defined in terms of a family of \emph{local interactions}, that is, a map $\Phi$ that associates to each finite subset $Z \subset \Lambda$ a self-adjoint observable $\Phi_{Z} = \Phi_{Z}^{\dagger} \in \mathcal{B}_{Z}$.
For each finite $X \subset \Lambda$, the corresponding \emph{Hamiltonian} is the self-adjoint operator $H_{X} \in \mathcal{B}_{X}$ given by
\[ H_{X} = \sum_{Z \subset X}{\Phi_{Z}}\,. \]
Next, let us introduce some further conditions on the type of interactions we are going to deal with. 

First, we are going to assume the lowest eigenvalue of $\Phi_{Z}$ is zero for each $Z$. This means that $\Phi_{Z} \geq 0$, and thus $H_{X} \geq 0$, for each finite subset $X \subset \Lambda$. In general, from an arbitrary local interaction $\Phi_{Z}$ we can always construct a new interaction satisfying this property by shifting each local term $\Phi_{Z}$ to $\Phi_{Z} - c_{Z} \mathbbm{1}_{Z}$ where $c_{Z}$ is the lowest eigenvalue of $\Phi_{Z}$. As a consequence, each local Hamiltonian $H_{X}$ is shifted to $H_{X} - (\sum_{Z \subset X} c_{Z}) \mathbbm{1}_{X}$, and its eigenvalues $\lambda$ are then shifted to $\lambda - (\sum_{Z \subset X} c_{Z}) $, although the corresponding eigenspaces and the spectral gap are preserved. It should be mentioned that this shifting procedure introduces an energy constraint that can significantly impact certain physical properties of the original system. However, since our sole focus is on studying the spectral gap properties in relation to the orthogonal projectors onto the ground spaces, this argument appears reasonable for reducing the overall problem to this particular setting.

On the other hand, we are also going to assume that the local interaction $\Phi$ is \emph{frustration-free}, namely that for every finite subset $X \subset \Lambda$ it holds that
\[ W_{X}:=\bigcap_{Z \subset X}{\ker(\Phi_{Z})} \neq \{0\} \,. \]
Let  $P_{X} \in \mathcal{B}_{X}$ be the orthogonal projector onto $W_{X}$. As a consequence of 
\begin{equation}\label{eq:frustration-free-proof}
\expval{H_X}{\phi} = \sum_{Z\subset X} \expval{\Phi_Z}{\phi} \ge 0\,,
\end{equation}
we immediately get that $\ker{(H_{X})}= W_{X}$, and that the frustration-free condition is equivalent to each $H_{X}$ having zero as the lowest eigenvalue. In this case, $W_{X}$ is the ground space of $H_{X}$, $P_{X}$ is the ground state projector, and the spectral gap of $H_{X}$, in case the latter is non-zero, can be described as the largest positive constant satisfying that for every $\ket{\phi} \in \mathcal{H}_{X}$
\begin{equation}\label{eq:frustration-free-gap-formula}
\expval{P_X^{\perp}}{\phi} \cdot \gap(H_{X}) \leq \expval{H_X}{\phi} \,, 
\end{equation}
where $P_X^{\perp} := \mathbbm{1}_{X} - P_{X}$. 

We conclude by remarking that for every $X \subset Y \subset \Lambda$ we have $W_{Y} \subset W_{X}$ where we are identifying $W_{X} \equiv \mathcal{H}_{Y \setminus X} \otimes W_{X}$, and therefore  $P_{Y} P_{X} = P_{Y}$.

\subsection{Spectral gap of local Hamiltonians}\label{sec:toolsSpectralGap}

We will now recall a recursive strategy to obtain lower bounds to the spectral gap of frustration-free Hamiltonians described in \cite{KaLu18}, which we will also slightly improve over the original formulation.
The main tool will be the following lemma, which is an improved version of \cite[Lemma 14]{KaLu18} in which the constant $(1-2c)$ has been improved to $(1-c)$. The argument here is different and inspired by \cite[Lemma 14.4]{KSVV02}.

\begin{Lemm}\label{Lemm:MartingaleProjector}
Let $U, V, W$ be subspaces of a finite-dimensional Hilbert space $\mathcal{H}$ with corresponding orthogonal projectors $\Pi_{U}, \Pi_{V}, \Pi_{W}$ and assume that $W \subset U \cap V$. Then,
\[ \Pi_{U}^{\perp} + \Pi_{V}^{\perp} \geq (1 - c) \, \Pi_{W}^{\perp} \quad \mbox{ where }  \quad c:=\| \Pi_{U}\Pi_{V} - \Pi_{W}\|\,. \]
Moreover, $c \in [0,1]$ always holds, and $c \in [0,1)$ if and only if $U \cap V =W$.
\end{Lemm}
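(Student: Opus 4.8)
The plan is to prove the operator inequality $\Pi_U^\perp + \Pi_V^\perp \ge (1-c)\Pi_W^\perp$ by reducing it to a statement about the single operator $\Pi_U\Pi_V$ acting on the orthogonal complement of $W$. Since $W \subset U\cap V$, the projectors $\Pi_U,\Pi_V,\Pi_W$ all commute with $\Pi_W$ (indeed $\Pi_W\Pi_U = \Pi_W = \Pi_U\Pi_W$, and similarly for $V$), so everything decomposes as a direct sum over $W$ and $W^\perp$. On $W$ all three $\perp$-projectors vanish and the inequality is trivial; so I may assume $W=\{0\}$, i.e.\ work entirely inside $W^\perp$, where the claim becomes $\Pi_U^\perp + \Pi_V^\perp \ge (1-c)\mathbbm 1$ with $c = \|\Pi_U\Pi_V\|$.

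First I would record the elementary identity $\Pi_U^\perp + \Pi_V^\perp = \mathbbm 1 - \Pi_U + \mathbbm 1 - \Pi_V$, so the desired bound is equivalent to $\Pi_U + \Pi_V \le (1+c)\mathbbm 1$. Next I would bound the largest eigenvalue of $\Pi_U+\Pi_V$. The key step is the operator identity $(\Pi_U+\Pi_V)^2 = \Pi_U + \Pi_V + \Pi_U\Pi_V + \Pi_V\Pi_U$, which rearranges to
\[
\Pi_U\Pi_V + \Pi_V\Pi_U = (\Pi_U+\Pi_V)^2 - (\Pi_U+\Pi_V) = (\Pi_U+\Pi_V)\bigl(\Pi_U+\Pi_V - \mathbbm 1\bigr).
\]
Writing $S := \Pi_U+\Pi_V$ (self-adjoint, $0\le S\le 2$), for any unit vector $\ket\phi$ we get $\expval{S}{\phi}(\expval{S}{\phi}-1) \le \| \Pi_U\Pi_V + \Pi_V\Pi_U\| \le 2\|\Pi_U\Pi_V\| = 2c$; but actually the cleaner route is spectral: if $s$ is an eigenvalue of $S$ with eigenvector $\ket\phi$, then $\ket\phi$ is an eigenvector of $\Pi_U\Pi_V+\Pi_V\Pi_U$ with eigenvalue $s(s-1)$, hence $s(s-1)\le \|\Pi_U\Pi_V+\Pi_V\Pi_U\|$. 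Here I would use the standard fact that for two orthogonal projectors $\|\Pi_U\Pi_V + \Pi_V\Pi_U\| \le 2\|\Pi_U\Pi_V\|$ is too lossy; instead I'd diagonalize the pair of projectors à la Halmos (Jordan's lemma): in a suitable orthogonal decomposition of $W^\perp$ the pair $(\Pi_U,\Pi_V)$ is a direct sum of blocks of dimension $\le 2$, and on each block the angle $\theta$ between the lines gives eigenvalues of $S$ equal to $1\pm\cos\theta$ and $\|\Pi_U\Pi_V\| = \cos\theta$ (over all blocks, the supremum). Thus $\lambda_{\max}(S) = 1 + c$, which is exactly $S \le (1+c)\mathbbm 1$, proving the inequality.

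For the last sentence — $c\in[0,1]$ always, and $c\in[0,1)$ iff $U\cap V = W$ — I would again reduce to $W=\{0\}$. Then $c = \|\Pi_U\Pi_V\| \le \|\Pi_U\|\|\Pi_V\| = 1$ gives $c\le 1$, and $c\ge 0$ is trivial. For the characterization: $c = 1$ means $\cos\theta = 1$ occurs (or is approached, but in finite dimension the supremum is attained), i.e.\ there is a block with $\theta = 0$, i.e.\ a nonzero vector lying in both $U$ and $V$; equivalently $U\cap V \ne \{0\}$ inside $W^\perp$, i.e.\ $U\cap V \supsetneq W$ in general. Contrapositively, $U\cap V = W \iff c<1$.

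**The main obstacle** is the norm bound $\lambda_{\max}(\Pi_U+\Pi_V) = 1 + \|\Pi_U\Pi_V\|$; everything else is bookkeeping. One can get it cleanly either via Jordan's lemma on two projectors (the $2\times 2$-block decomposition), which makes both the inequality and the equality case transparent, or by the direct computation sketched above with the identity $(\Pi_U+\Pi_V)^2 = (\Pi_U+\Pi_V) + (\Pi_U\Pi_V + \Pi_V\Pi_U)$ combined with the fact that an eigenvector of $\Pi_U+\Pi_V$ is automatically an eigenvector of the anticommutator, so that controlling $\|\Pi_U\Pi_V+\Pi_V\Pi_U\|$ suffices — and on a $2\times2$ block $\Pi_U\Pi_V+\Pi_V\Pi_U$ has norm $\cos\theta(1+\cos\theta)$, consistent with $s(s-1)$ at $s = 1+\cos\theta$. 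I expect the write-up following \cite[Lemma 14.4]{KSVV02} to take the quadratic-identity route, avoiding an explicit invocation of Jordan's lemma, and to extract the constant $(1-c)$ rather than $(1-2c)$ precisely because one uses $\lambda_{\max}(S)=1+c$ exactly instead of a triangle-inequality estimate.
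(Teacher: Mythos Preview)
Your proposal is correct, and you identified the right reduction: restrict to $W^\perp$ and show $\lambda_{\max}(\Pi_U+\Pi_V)\le 1+c$ there. Your route to that bound, however, differs from the paper's. You invoke the Halmos--Jordan two-projector decomposition to read off the block eigenvalues $1\pm\cos\theta$ and identify $c=\sup\cos\theta$. The paper instead gives a self-contained eigenvector argument, essentially a sharpened version of your ``quadratic-identity'' route that you abandoned as too lossy: for a unit eigenvector $\ket{x}\in W^\perp$ of $S=\Pi_U+\Pi_V$ with eigenvalue $\lambda$, it writes $\Pi_U\ket{x}=\lambda_U\ket{x_U}$ and $\Pi_V\ket{x}=\lambda_V\ket{x_V}$ with $\ket{x_U},\ket{x_V}$ unit vectors, computes $\lambda=\lambda_U^2+\lambda_V^2$ and $\lambda^2=\lambda_U^2+\lambda_V^2+2\lambda_U\lambda_V\operatorname{Re}\braket{x_U}{x_V}$, and then checks the elementary inequality $(1+c_x)\lambda-\lambda^2\ge c_x(\lambda_U-\lambda_V)^2\ge 0$ with $c_x=|\operatorname{Re}\braket{x_U}{x_V}|\le c$. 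This avoids importing Jordan's lemma and keeps the proof entirely elementary; your approach is more structural and makes the equality $\lambda_{\max}(S)=1+c$ (not just $\le$) and the characterization of $c=1$ transparent in one stroke. Both buy the sharp constant $1-c$; the paper's argument is what delivers the improvement over the $1-2c$ of \cite{KaLu18} without appealing to an external decomposition theorem.
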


\begin{proof}
Let us start by observing that, since $W\subset U$, it holds that $\Pi_W \Pi_U = \Pi_U \Pi_W = \Pi_W$, and similarly for $V$. Thus, the constant $c$ can be rewritten as
\begin{equation}\label{equa:MartingaleProjectorAux0}
\begin{split}
c & = \| (\Pi_{U} - \Pi_{W}) \, (\Pi_{V} - \Pi_{W}) \|\\ 
& = \| \Pi_{W}^{\perp}  \, \Pi_{U} \, \Pi_{V} \, \Pi_{W}^{\perp}\|\\
& = \sup{\{ \, |\bra{a} \Pi_{W}^{\perp}  \, \Pi_{U} \, \Pi_{V} \, \Pi_{W}^{\perp} \ket{b}| \, \colon \,  \| a\| \leq 1, \, \| b\| \leq 1  \rangle  \}}\\
& = \sup{\{ \, | \bra{a}  \Pi_{U} \, \Pi_{V} \ket{b}| \, \colon \, a, b \in W^{\perp}, \, \| a\| \leq 1, \, \| b\| \leq 1  \rangle  \}}\\ 
& = \sup{\{ \, \abs{\braket{a}{b}} \, \colon \, a \in U \cap W^{\perp}, \, b\in V \cap W^{\perp}, \, \| a\| \leq 1, \, \| b\| \leq 1  \rangle  \}}.
\end{split}
\end{equation}
From here it immediately follows that $c \in [0,1]$. Moreover, since $\mathcal{H}$ is finite-dimensional, the 
set over which the supremum is taken is compact, meaning that the supremum is always attained. Therefore, $c=1$ if and only if there exists $a\in U \cap W^{\perp}$ and $b\in V \cap W^{\perp}$ such that $\abs{\braket{a}{b}} = \| a\| \| b\|$. As the Cauchy–Schwarz inequality is only saturated by vectors which are proportional to each other, $c=1$ is equivalent to the fact that there exists an $a \in (U \cap V \cap W^{\perp}) \setminus \{ 0\}$ , or equivalently, that $W \subsetneq U \cap V$\,.

The first observation also implies that $\Pi_{U} = \Pi_{W} + \Pi_{W}^{\perp} \, \Pi_{U} \, \Pi_{W}^{\perp}$ and $\Pi_{V} = \Pi_{W} + \Pi_{W}^{\perp} \, \Pi_{V} \, \Pi_{W}^{\perp}$, and therefore
\[ \Pi_{U}^\perp + \Pi_{V}^\perp = 2\Pi_W^\perp - \Pi_{W}^{\perp} \, (\Pi_{U} + \Pi_{V}) \Pi_{W}^{\perp}.\]
This allows us to reformulate the original inequality we aim to prove as
\begin{equation}\label{equa:MartingaleProjectorAux1} 
\Pi^{\perp}_{U} + \Pi_{V}^{\perp} \geq (1-c) \, \Pi_{W}^{\perp} \,\, \Leftrightarrow \,\, (1+c) \Pi_{W}^{\perp} \geq \Pi_{W}^{\perp} \, (\Pi_{U} + \Pi_{V}) \Pi_{W}^{\perp}\,. 
\end{equation}
Let $\ket{x}$ be a norm-one eigenvector of $\Pi_{W}^{\perp} \, (\Pi_{U} + \Pi_{V}) \Pi_{W}^{\perp}$ with corresponding eigenvalue $\lambda >0$\,. Note that $\Pi_{W}^{\perp} \ket{x} = \ket{x}$ necessarily, since eigenvectors with different eigenvalues are orthogonal, and $W$ is contained in the kernel. Thus, to show that the right-hand side inequality of \eqref{equa:MartingaleProjectorAux1} holds, it is enough to check that $\lambda \leq 1+c$ necessarily. For that, let us write
\begin{align*}
\Pi_{U}\ket{x} & = \lambda_{U} \, \ket{x_{U}} \quad \mbox{ for some } \quad \ket{x_{U}} \in U \cap W^{\perp}\,, \,  \braket{x_{U}} = 1\,, \, \lambda_{U} \geq 0\,,\\
\Pi_{V}\ket{x} & = \lambda_{V} \, \ket{x_{V}} \quad \mbox{ for some } \quad  \ket{x_{V}} \in V \cap W^{\perp}\,, \,  \braket{x_{V}}= 1\,, \, \lambda_{V} \geq 0\,.
\end{align*}
On the one hand, we have
\begin{align*}
\lambda = \bra{x} \Pi_{U} + \Pi_{V} \ket{x} & = \bra{x} \Pi_{U} \ket{x} + \bra{x} \Pi_{V} \ket{x}\\ 
& \, = \bra{x} \Pi_{U}^{2} \ket{x} + \bra{x} \Pi_{V}^{2} \ket{x} \\
& \, = \, \lambda_{U}^{2} + \lambda_{V}^{2},
\end{align*}
and, on the other hand
\begin{align*} 
\lambda^{2} & = \bra{x} (\Pi_{U} + \Pi_{V}) \, (\Pi_{U} + \Pi_{V}) \ket{x}\\ 
& = \lambda_{U}^{2} + \lambda_{V}^{2} +  \bra{x}\Pi_{U} \Pi_{V}\ket{x} + \bra{x}\Pi_{V} \Pi_{U}\ket{x}\\
& =   \lambda_{U}^{2} + \lambda_{V}^{2} + 2 \, \lambda_{U} \, \lambda_{V} \, \operatorname{Re}\braket{x_{U}}{x_{V}} \,.
\end{align*}
Combining both equalities we get, denoting $c_{x}:=|\operatorname{Re}\braket{x_{U}}{x_{V}}|$,
\begin{align*}
(1+c_{x}) \lambda - \lambda^{2} & = c_{x} \left( \lambda_{U}^{2} + \lambda_{V}^{2} \right) - 2  \lambda_{U}  \lambda_{V} \operatorname{Re}\braket{x_{U}}{x_{V}}\\
& \geq c_{x}(\lambda_{U}^{2} + \lambda_{V}^{2} - 2 \lambda_{U} \lambda_{V}) = c_{x}(\lambda_{U} - \lambda_{V})^{2} \geq 0\,.
\end{align*}
Since $\lambda>0$, we conclude that
\[ \lambda \leq 1 + c_{x} \leq 1+c, \]
where the last inequality follows from \eqref{equa:MartingaleProjectorAux0}. This concludes the argument.
\end{proof}

This lemma has important implications for frustration-free local Hamiltonians, when applied to the ground state subspaces $W_X$ and their associated orthogonal projections $P_X$. The frustration-free condition yields that, for $X, Y \subset \Lambda$, the ground state subspaces satisfy $ W_{X \cup Y} \subset W_{X} \cap W_{Y}$. 
As a consequence of Lemma~\ref{Lemm:MartingaleProjector}, 
\[ \| P_{X \cup Y} - P_{X} P_{Y}\| = \| (P_{X \cup Y} - P_{X}) (P_{X \cup Y} - P_{Y})\| \in [0, 1]\,. \]
Moreover, $W_{X \cup Y} = W_{X} \cap W_{Y}\,$
if and only if $\| P_{X \cup Y} - P_{X} P_{Y}\| \in [0, 1)$.
 This happens whenever $\Lambda$ is a metric space, $\Phi$ has finite range $r>0$, namely that $\Phi_{X}=0$ if the diameter of $X$ is larger than $r$, and the distance $d(X \setminus Y, Y \setminus X)$ is greater than $r$, since in this case every subset $Z \subset X \cup Y$ with $\Phi_{Z} \neq 0$ is either contained in $X$ or $Y$, so that
 \[ W_{X \cup Y} = \bigcap_{Z \subset X \cup Y} \ker{\Phi_{Z}} = \bigcap_{Z \subset X} \ker{\Phi_{X}} \cap \bigcap_{Z \subset  Y} \ker{\Phi_{Z}} = W_{X} \cap W_{Y} \,. \]

\begin{Defi}[Spectral gap]\label{defi:spectralgap}
For each finite subset $Y \subset \Lambda$, let us denote by $\gap(H_{Y})$, or simply $\gap(Y)$, the \emph{spectral gap} of $H_{Y}$, namely the difference between the two lowest unequal eigenvalues of $H_{Y}$. If it has only one eigenvalue, then we define $\gap(Y) = 0$.
Given a family $\mathcal{F}$ of finite subsets of $\Lambda$, we say that the system of Hamiltonians $(H_{Y})_{Y \in \mathcal{F}}$ is \emph{gapped} whenever
\[ \gap(\mathcal{F}):=\inf{\{ \gap(Y) \colon Y \in \mathcal{F} \}} > 0\,. \]
Otherwise, it is said to be \emph{gapless}\,.
\end{Defi}

The following result allows to relate the gap of two families. It adapts a result from \cite[Section 4.2]{KaLu18}.

\begin{Theo}\label{Theo:RecursiveGapEstimate}
Let $\mathcal{F}$ and $\mathcal{F}'$ be two families of finite subsets of $\Lambda$. Suppose that there are $s \in \mathbb{N}$ and $\delta \in [0,1]$ satisfying the following property: for each $Y \in \mathcal{F}' \setminus \mathcal{F}$ there exist $(A_{i}, B_{i})_{i=1}^{s}$ pairs  of elements in $\mathcal{F}$ such that:
\begin{enumerate}
    \item[(i)] $Y = A_{i} \cup B_{i}$ for each $i=1, \ldots, s$,
    \item[(ii)] $(A_{i} \cap B_{i}) \cap (A_{j} \cap B_{j}) = \emptyset$ whenever $i \neq j$,
    \item[(iii)] $\|P_{A_{i}}P_{B_{i}} - P_{Y}\| \leq \delta$ for every $i=1, \ldots, s$.
\end{enumerate}
Then,
\[ \gap(\mathcal{F}') \, \geq \, \frac{1-\delta}{1+ \frac{1}{s}} \, \gap(\mathcal{F}) \, \underset{(\text{if }\delta < 1)}{\geq} \, \exp\left[ -\frac{\delta}{1- \delta} - \frac{1}{s}\right] \, \gap(\mathcal{F}) \,. \]
\end{Theo}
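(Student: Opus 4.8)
The plan is to fix a finite subset $Y \in \mathcal{F}' \setminus \mathcal{F}$ (if $Y \in \mathcal{F}$ there is nothing to prove, since then $\gap(Y) \ge \gap(\mathcal{F})$ trivially) and derive a lower bound on $\gap(Y)$ in terms of $\gap(\mathcal{F})$ by comparing the operator $H_Y$ against the projectors $P_{A_i}, P_{B_i}$ supplied by the hypothesis. The natural quantity to estimate is the expectation $\expval{H_Y}{\phi}$ for an arbitrary $\ket{\phi} \in \mathcal{H}_Y$, which by the frustration-free gap formula \eqref{eq:frustration-free-gap-formula} it suffices to bound below by $c \cdot \expval{P_Y^\perp}{\phi}$ for a constant $c \ge \tfrac{1-\delta}{1+1/s}\gap(\mathcal{F})$.

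First I would note that for each $i$, since $Y = A_i \cup B_i$, we have by frustration-freeness $P_{A_i} P_Y = P_Y = P_{B_i} P_Y$, so applying Lemma~\ref{Lemm:MartingaleProjector} with $U = W_{A_i}$, $V = W_{B_i}$, $W = W_Y$ (note $W_Y \subset W_{A_i} \cap W_{B_i}$), and using hypothesis (iii) which gives $c_i := \|P_{A_i}P_{B_i} - P_Y\| \le \delta$, we obtain
\[ P_{A_i}^\perp + P_{B_i}^\perp \ge (1-\delta)\, P_Y^\perp \qquad (i = 1, \dots, s). \]
Summing over $i$ and dividing gives $\frac{1}{s}\sum_{i=1}^s (P_{A_i}^\perp + P_{B_i}^\perp) \ge (1-\delta) P_Y^\perp$. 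Next, since $A_i, B_i \in \mathcal{F}$, the gap hypothesis together with \eqref{eq:frustration-free-gap-formula} applied on $A_i$ and $B_i$ (embedded in $\mathcal{H}_Y$, using that embedding preserves eigenspaces and gap) yields $\gap(\mathcal{F})\, P_{A_i}^\perp \le H_{A_i}$ and likewise for $B_i$. Combining,
\[ \frac{\gap(\mathcal{F})}{s} \sum_{i=1}^s \left( H_{A_i} + H_{B_i} \right) \ge \gap(\mathcal{F})(1-\delta)\, P_Y^\perp. \]

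The crux is then to bound $\sum_{i=1}^s (H_{A_i} + H_{B_i})$ from above by a multiple of $H_Y$. Here I would use hypothesis (ii): the sets $A_i \cap B_i$ are pairwise disjoint. Writing $H_{A_i} + H_{B_i} = \sum_{Z \subset A_i}\Phi_Z + \sum_{Z \subset B_i}\Phi_Z$, every interaction term $\Phi_Z$ with $Z \subset Y$ appears here; a term is counted twice (once in $H_{A_i}$, once in $H_{B_i}$) exactly when $Z \subset A_i \cap B_i$, and otherwise once, for a given $i$. Summing over $i$, each $\Phi_Z \ge 0$ with $Z\subset Y$ is counted at most $s+1$ times in total: at most once for each of the $s$ indices, plus at most one extra time coming from the unique $i$ (if any) with $Z \subset A_i \cap B_i$, uniqueness being guaranteed by the disjointness in (ii). Hence $\sum_{i=1}^s (H_{A_i}+H_{B_i}) \le (s+1) H_Y$ (all terms nonnegative and the interactions not appearing in $Y$ are simply absent). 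Plugging this in:
\[ \frac{(s+1)\gap(\mathcal{F})}{s}\, H_Y \ge \gap(\mathcal{F})(1-\delta)\, P_Y^\perp, \]
so $\expval{H_Y}{\phi} \ge \frac{1-\delta}{1+1/s}\gap(\mathcal{F})\,\expval{P_Y^\perp}{\phi}$ for all $\ket{\phi}$, which by \eqref{eq:frustration-free-gap-formula} gives $\gap(Y) \ge \frac{1-\delta}{1+1/s}\gap(\mathcal{F})$. Taking the infimum over $Y \in \mathcal{F}'$ proves the first inequality. The second inequality is elementary calculus: for $\delta \in [0,1)$ and $s \in \mathbb{N}$ one checks $\frac{1-\delta}{1+1/s} \ge \exp\!\left[-\frac{\delta}{1-\delta} - \frac1s\right]$ by separately verifying $1 - \delta \ge e^{-\delta/(1-\delta)}$ (equivalently $\log(1-\delta) \ge -\delta/(1-\delta)$, which follows from $\log(1+t)\le t$ with $t = \delta/(1-\delta)$ after rearranging, or from convexity) and $\frac{1}{1+1/s} \ge e^{-1/s}$ (i.e.\ $1 + 1/s \le e^{1/s}$).

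The main obstacle I anticipate is the combinatorial counting step: one must be careful that the disjointness condition (ii) is exactly what prevents an interaction $\Phi_Z$ from being double-counted across different indices $i$, and that the bound $s+1$ (rather than $2s$) is the sharp one — this is precisely where the improved constant $\frac{1-\delta}{1+1/s}$ (as opposed to something like $\frac{1-\delta}{2}$) comes from. The embedding bookkeeping (that $H_{A_i}$ really means $H_{A_i} \otimes \mathbbm{1}_{Y\setminus A_i}$ and that this does not disturb the gap or projector relations) should be handled by a sentence citing the identifications set up in Section~\ref{sec:notation}, rather than belabored.
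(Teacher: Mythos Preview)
Your proposal is correct and follows essentially the same approach as the paper: apply Lemma~\ref{Lemm:MartingaleProjector} to each pair $(A_i,B_i)$, convert projector inequalities to Hamiltonian inequalities via \eqref{eq:frustration-free-gap-formula}, and then use the disjointness condition (ii) to bound $\sum_i (H_{A_i}+H_{B_i}) \le (s+1)H_Y$. The paper expresses this last step via the cleaner intermediate inequality $H_{A_i}+H_{B_i} \le H_Y + H_{A_i\cap B_i}$ before summing, whereas you count interaction terms directly (note a small slip: ``otherwise once'' should read ``otherwise at most once'', since a $Z$ straddling $A_i$ and $B_i$ is counted zero times, but this only helps the upper bound).
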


\begin{proof}[Proof of Theorem \ref{Theo:RecursiveGapEstimate}]
Let $Y \in \mathcal{F}' \setminus \mathcal{F}$ and let $(A_{i}, B_{i})_{i=1}^{s}$ the family of pairs satisfying $(i)$-$(iii)$ provided by the hypothesis. To prove the first inequality, we  can assume that $\delta < 1$ and $\gap(\mathcal{F})>0$, since otherwise the inequality is obvious. Applying Lemma \ref{Lemm:MartingaleProjector}, and using \eqref{eq:frustration-free-gap-formula}, we can estimate
\begin{align*}
\bra{x} P_{Y}^{\perp} \ket{x} \,  = \, \frac{1}{s} \sum_{i=1}^{s}{\bra{x} P_{Y}^{\perp} \ket{x}} \,\, & \leq \,\, \frac{1}{s} \sum_{i=1}^{s} \frac{1}{1-\delta} \left(\bra{x} P_{A_{i}}^{\perp} \ket{x} + \bra{x} P_{B_{i}}^{\perp} \ket{x} \right) \\
& \leq \,\, \frac{1}{1-\delta}  \, \frac{1}{s}  \sum_{i=1}^{s} \frac{1}{\gap(\mathcal{F})}  \left(\bra{x} H_{A_{i}} \ket{x} + \bra{x} H_{B_{i}} \ket{x} \right)\\
& \leq  \,\, \frac{1}{1-\delta}  \, \frac{1}{\gap(\mathcal{F})} \, \frac{1}{s} \sum_{i=1}^{s} \bra{x} H_{Y} + H_{A_{i} \cap B_{i}} \ket{x}\\
& \leq \,\, \frac{1}{1- \delta} \, \frac{1}{\gap(\mathcal{F})} \,  \, \bra{x} H_{Y} + \frac{1}{s} \sum_{i=1}^{s} H_{A_{i} \cap B_{i}} \ket{x}\\
& \leq \,\, \frac{1}{1- \delta} \, \frac{1}{\gap(\mathcal{F})} \,  \, \left( 1+ \frac{1}{s} \right) \, \bra{x} H_{Y}  \ket{x}\,.
\end{align*}
Notice that in the third line we have used that $H_{A_{i}} + H_{B_{i}} \leq H_{Y} + H_{A_{i} \cap B_{i}}$, which holds since all the local interactions are positive semidefinite. Therefore, again by~\eqref{eq:frustration-free-gap-formula}, it holds that
\[ \gap(Y) \geq \frac{1- \delta}{1+\frac{1}{s}} \, \gap(\mathcal{F})\quad \text{whenever} \quad Y \in \mathcal{F}' \setminus \mathcal{F}\,. \]
\noindent On the other hand, if $Y \in \mathcal{F}' \cap \mathcal{F}$, then  $\gap(Y) \geq \gap(\mathcal{F}) $ by definition. Hence, we conclude the that the first inequality holds. To obtain the second inequality, we simply use twice that $(1+x)^{-1} \geq e^{-x}$ for every $x \geq 0$.
\end{proof}




In the next section, we will apply Theorem \ref{Theo:RecursiveGapEstimate} in order to bound the spectral gap of a quantum spin Hamiltonian on an arbitrarily large torus (with periodic boundary conditions) in terms of  the spectral gap of the same model on a finite family of rectangles with open boundary conditions. This is reminiscent to the bounds on the spectral gap based on the local gap thresholds \cite{Knabe88, Gosset2016, Lemm2019, Lemm2020, Anshu2020}. The reason for which we are following a different approach is that the constants appearing in the spectral gap thresholds are highly dependent on the specific shape and range of the interactions of the Hamiltonian, and in our case we will have to consider $\beta$-dependent interaction length. Therefore it will be unfeasible to verify the local gap threshold conditions for our models.
The connection between the control of quantities of the type $\norm{P_{X\cup Y} - P_X P_Y}$ and spectral gap estimates originated in the seminal work on Finitely Correlated States~\cite{FCS}, later extended to more general spin models~\cite{Nachtergaele}. The main difference between that approach and the one of \cite{KaLu18}, which we are following here, is due to the way in which we are growing the lattice: while the methods of \cite{FCS, Nachtergaele} consider a single increasing sequence of regions, Theorem~\ref{Theo:RecursiveGapEstimate} permits more rich families of subsets, thus allowing us to keep the shape of their intersections more well behaved (i.e.\ they will always be rectangles or cylinders).

\subsection{Periodic boundary conditions on a torus}\label{subsec:torusSettingAndGap}

To describe the periodic boundary conditions case, we have to introduce further notation. For each natural $N$, let us denote by $\mathbb{S}_{N}$ the quotient $\mathbb{R}/\sim$ where we relate $x \sim x+N$ for every $x \in \mathbb{R}$. Note that we can identify $\mathbb{S}_{N} \equiv [0, N)$. 

We will take as $\Lambda_{N}$, the set where the \emph{spins} of the system are located, the set of midpoints of the edges $\EE_{N}$ of the square lattice on the torus $\mathbb{S}_{N} \times \mathbb{S}_{N}$ (as this is the setting in which the Quantum Double Models are defined). We will identify each point of $\Lambda_{N}$ with the corresponding edge from $\EE_{N}$, so that we will indistinctly use $\Lambda_{N}$ or $\EE_{N}$ (see Figure \ref{Fig:squareLatticeFirstRep}).

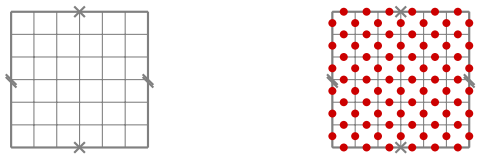
\begin{figure}[ht] 
\centering
  \begin{tikzpicture}[equation,scale=0.3]
    \draw[gray, thin] (0,0) grid (6,6);
    \draw[gray,thick, postaction=torus horizontal] (0,0) -- (6,0);
    \draw[gray,thick, postaction=torus horizontal] (0,6) -- (6,6);
    \draw[gray,thick, postaction=torus vertical] (0,0) -- (0,6);
    \draw[gray,thick, postaction=torus vertical] (6,0) -- (6,6);
    \end{tikzpicture}
    \hspace{2cm}
    \begin{tikzpicture}[equation,scale=0.3]
    \draw[gray, thin] (0,0) grid (6,6);
    \draw[gray,thick, postaction=torus horizontal] (0,0) -- (6,0);
    \draw[gray,thick, postaction=torus horizontal] (0,6) -- (6,6);
    \draw[gray,thick, postaction=torus vertical] (0,0) -- (0,6);
    \draw[gray,thick, postaction=torus vertical] (6,0) -- (6,6);
    \foreach \x in {0,...,5}
        \foreach \y in {0,...,6}
            \node[circle,fill=red!80!black, inner sep=0pt,minimum size=3pt] at (0.5+\x,\y) {};
             \foreach \x in {0,...,6}
        \foreach \y in {0,...,5}
            \node[circle,fill=red!80!black, inner sep=0pt,minimum size=3pt] at (\x,0.5+\y) {};
    \end{tikzpicture}\\[2mm]
\caption{The square lattice on the torus (left), and the quantum spin system with spins  on the midpoints of the edges (right). The marks on the borders of both squares represent the pairwise identification of edges, a standard depiction of the torus in topology. A similar notation will be used for the cylinder, where only one pair of edges is identified.}
\label{Fig:squareLatticeFirstRep}
\end{figure}

Let us recall the notion of a (closed) \emph{interval} in $\mathbb{S}_{N}$. Given $x,y \in \mathbb{S}_{N}$ we denote by $d_{+}(x,y)$ the unique $0 \leq c < N$ such that $x+c \sim y$.  Then, we define the \emph{interval} $[a,b]$ as the set $\{ x \in \mathbb{S}_{N} \colon d_{+}(a, x) \leq d_{+}(a,b) \}$.  We are only going to consider intervals with integer endpoints, that is, $a,b \in \mathbb{Z}_{N}$. 

A \emph{proper rectangle} $\RR$ in $\mathbb{R}^{2}$ or $\mathbb{S}_{N} \times \mathbb{S}_{N}$ is a Cartesian product of intervals $\RR=[a_{1}, b_{1}] \times [a_{2}, b_{2}]$ (with integer endpoints). Its number of plaquettes per row is then $d_{+}(a_{1},b_1)$ and per column is $d_{+}(a_{2},b_2)$. Shortly, we say that $\RR$ has \emph{dimensions} $d_{+}(a_{1},b_1)$ and $d_{+}(a_{2},b_2)$. A \emph{cylinder} is a Cartesian product of the form $\mathbb{S}_{N} \times [a,b]$ or $[a,b] \times \mathbb{S}_{N}$. We will refer simply as \emph{rectangles} to  proper rectangles, cylinders and the whole torus $\mathbb{S}_{N} \times \mathbb{S}_{N}$. In an abuse of notation, we will identify $\RR$ with $\RR \cap \Lambda$ and often write $\RR \subset \Lambda$ and $\mathcal{H}_{\RR}$ to denote the associated Hilbert space (see Figure \ref{Figure:rectanglesTorus}). 
\begin{figure}[ht]
\centering
\begin{tikzpicture}[equation,scale=0.3]
    \draw[gray, thin] (0,0) grid (6,6);
    \draw[gray,thick, postaction=torus horizontal] (0,0) -- (6,0);
    \draw[gray,thick, postaction=torus horizontal] (0,6) -- (6,6);
    \draw[gray,thick, postaction=torus vertical] (0,0) -- (0,6);
    \draw[gray,thick, postaction=torus vertical] (6,0) -- (6,6);
    \draw[black, ultra thick] (2,2) grid (5,4);
    \end{tikzpicture}
    \hspace{1cm} 
    \begin{tikzpicture}[equation,scale=0.3]
    \draw[gray, thin] (0,0) grid (6,6);
    \draw[gray,thick, postaction=torus horizontal] (0,0) -- (6,0);
    \draw[gray,thick, postaction=torus horizontal] (0,6) -- (6,6);
    \draw[gray,thick, postaction=torus vertical] (0,0) -- (0,6);
    \draw[gray,thick, postaction=torus vertical] (6,0) -- (6,6);
     \foreach \x in {2,...,5}
        \foreach \y in {2,...,3}
            \node[circle,fill=red!80!black, inner sep=0pt,minimum size=3pt] at (\x,\y+0.5) {};
    \foreach \x in {2,...,4}
        \foreach \y in {2,...,4}
        \node[circle,fill=red!80!black, inner sep=0pt,minimum size=3pt] at (\x+0.5,\y) {};
    \end{tikzpicture}
    \hspace{1.5cm} 
    \begin{tikzpicture}[equation,scale=0.3]
    \draw[gray, thin] (0,0) grid (6,6);
    \draw[gray,thick, postaction=torus horizontal] (0,0) -- (6,0);
    \draw[gray,thick, postaction=torus horizontal] (0,6) -- (6,6);
    \draw[gray,thick, postaction=torus vertical] (0,0) -- (0,6);
    \draw[gray,thick, postaction=torus vertical] (6,0) -- (6,6);
    \draw[black, ultra thick] (2,4) grid (5,6);
    \draw[black, ultra thick] (2,0) grid (5,1);
    \end{tikzpicture}
    \hspace{1cm} 
    \begin{tikzpicture}[equation,scale=0.3]
    \draw[gray, thin] (0,0) grid (6,6);
    \draw[gray,thick, postaction=torus horizontal] (0,0) -- (6,0);
    \draw[gray,thick, postaction=torus horizontal] (0,6) -- (6,6);
    \draw[gray,thick, postaction=torus vertical] (0,0) -- (0,6);
    \draw[gray,thick, postaction=torus vertical] (6,0) -- (6,6);
     \foreach \x in {2,...,5}
        \foreach \y in {4,...,5}
            \node[circle,fill=red!80!black, inner sep=0pt,minimum size=3pt] at (\x,\y+0.5) {};
    \foreach \x in {2,...,4}
        \foreach \y in {4,...,6}
        \node[circle,fill=red!80!black, inner sep=0pt,minimum size=3pt] at (\x+0.5,\y) {};
    \foreach \x in {2,...,5}
        \foreach \y in {0}
            \node[circle,fill=red!80!black, inner sep=0pt,minimum size=3pt] at (\x,\y+0.5) {};
    \foreach \x in {2,...,4}
        \foreach \y in {0,...,1}
        \node[circle,fill=red!80!black, inner sep=0pt,minimum size=3pt] at (\x+0.5,\y) {};    
    \end{tikzpicture}\\[10mm]

    \begin{tikzpicture}[equation,scale=0.3]
    \draw[gray, thin] (0,0) grid (6,6);
    \draw[gray,thick, postaction=torus horizontal] (0,0) -- (6,0);
    \draw[gray,thick, postaction=torus horizontal] (0,6) -- (6,6);
    \draw[gray,thick, postaction=torus vertical] (0,0) -- (0,6);
    \draw[gray,thick, postaction=torus vertical] (6,0) -- (6,6);
    \draw[black, ultra thick] (0,1) grid (6,3);
    \end{tikzpicture}
    \hspace{1cm}
    \begin{tikzpicture}[equation,scale=0.3]
    \draw[gray, thin] (0,0) grid (6,6);
    \draw[gray,thick, postaction=torus horizontal] (0,0) -- (6,0);
    \draw[gray,thick, postaction=torus horizontal] (0,6) -- (6,6);
    \draw[gray,thick, postaction=torus vertical] (0,0) -- (0,6);
    \draw[gray,thick, postaction=torus vertical] (6,0) -- (6,6);
    \foreach \x in {0,...,6}
        \foreach \y in {1,...,2}
            \node[circle,fill=red!80!black, inner sep=0pt,minimum size=3pt] at (\x,\y+0.5) {};
    \foreach \x in {0,...,5}
        \foreach \y in {1,...,3}
        \node[circle,fill=red!80!black, inner sep=0pt,minimum size=3pt] at (\x+0.5,\y) {}; 
    \end{tikzpicture}
    \hspace{1.5cm}
    \begin{tikzpicture}[equation,scale=0.3]
    \draw[gray] (0,0) grid (6,6);
    \draw[gray,thick, postaction=torus horizontal] (0,0) -- (6,0);
    \draw[gray,thick, postaction=torus horizontal] (0,6) -- (6,6);
    \draw[gray,thick, postaction=torus vertical] (0,0) -- (0,6);
    \draw[gray,thick, postaction=torus vertical] (6,0) -- (6,6);
    \draw[black, ultra thick] (3,0) grid (5,6);
    \end{tikzpicture}
    \hspace{1cm}
    \begin{tikzpicture}[equation,scale=0.3]
    \draw[gray] (0,0) grid (6,6);
    \draw[gray,thick, postaction=torus horizontal] (0,0) -- (6,0);
    \draw[gray,thick, postaction=torus horizontal] (0,6) -- (6,6);
    \draw[gray,thick, postaction=torus vertical] (0,0) -- (0,6);
    \draw[gray,thick, postaction=torus vertical] (6,0) -- (6,6);
    \foreach \x in {3,...,5}
        \foreach \y in {0,...,5}
            \node[circle,fill=red!80!black, inner sep=0pt,minimum size=3pt] at (\x,\y+0.5) {};
    \foreach \x in {3,...,4}
        \foreach \y in {0,...,6}
        \node[circle,fill=red!80!black, inner sep=0pt,minimum size=3pt] at (\x+0.5,\y) {}; 
    \end{tikzpicture}\\[2mm]
    
\caption{Examples of rectangles as a subset of the square lattice $\Lambda_{N} \equiv \EE_{N}$. Above, two examples of proper rectangles, whereas below, two examples of cylinders. In each case we present two pictures: on the right, we highlight the spins belonging to the region, while on the left we highlight the edges. We will use this latter representation in the forthcoming pictures.}
\label{Figure:rectanglesTorus}
\end{figure}
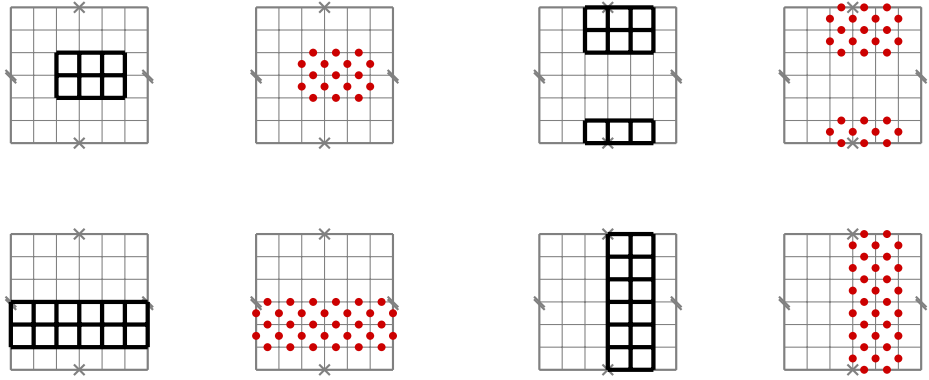


 We are going to define for every $N,r \in \mathbb{N}$ with $N \geq r \geq 2$ the following sets of rectangular regions in $\EE_{N}$:
 
\begin{itemize}
    \item[$\triangleright$] $\mathcal{F}_{N}$ is the set of all rectangular regions having at least two plaquettes per row and per column.
    \item[$\triangleright$] $\mathcal{F}_{N}^{torus}$ is the family consisting of only one element, the whole torus. 
    \item[$\triangleright$] $\mathcal{F}_{N}^{cylin}$ is the family of all cylinders having at least two plaquettes per row and per column.
    \item[$\triangleright$] $\mathcal{F}_{N, r}^{rect}$ is the family of all proper rectangles having at least two and at most $r$ plaquettes per row and per column. 
\end{itemize}
 
\subsection*{Martingale condition on local projectors}

 For each $X \in \mathcal{F}_{N}$ let $\Pi_{X}$ be an orthogonal projector onto a subspace of $\mathcal{H}_{X}$ such that $(\Pi_{X})_{X}$ satisfies the \emph{frustration-free condition}: $\Pi_{X} \Pi_{Y}=\Pi_{Y}\Pi_{X} =\Pi_{Y}$ for every pair of rectangular regions $X \subset Y$. Notice that the family of projectors $(P_{X})_{X}$ associated to a frustration-free Hamiltonian, as it was defined in Section \ref{sec:notation}, satisfies this condition. 

\begin{Defi}[\textbf{Martingale condition}] \label{Defi:MartingaleCondition}
We say that  $(\Pi_{X})_{X}$ as above satisfies the \emph{Martingale Condition} if there is a non-increasing function \mbox{$\delta: (0,\infty)  \longrightarrow [0, 1]$} such that $\lim_{\ell \rightarrow \infty} \delta(\ell) = 0$, called the \emph{decay function}, satisfying for each $N \geq 2$ the following properties:
\begin{enumerate}
\item[(i)] For every proper rectangle $\RR$ split along the rows (resp. columns) into three disjoint parts $\RR=ABC$ as in the next picture 
\begin{equation}\label{equa:decompPropRect}
\begin{tikzpicture}[equation,scale=0.35]
\draw[thin, gray] (0,0) grid (7,3);
\draw[black, thick] (2,0) rectangle (5,3);
\draw (1,1.5) node{\small $A$};
\draw (3.5,1.5) node{\small $B$};
\draw (6,1.5) node{\small $C$};
\begin{scope}[xshift=10cm]
\draw[gray] (0,0) grid (1.9,3);
\draw[gray,xshift=1cm] (2,0) grid (5,3);
\draw[gray, xshift=2cm] (5.1,0) grid (7,3);
\draw[black, thick, xshift=1cm] (2,0) rectangle (5,3);
\draw[|-|] (3,-0.5) -- (4.5, -0.5) 
        node[below, black] {\small $\ell$} -- (6,-0.5);
\draw (1,1.5) node{\small $A$};
\draw (4.5,1.5) node{\small $B$};
\draw (8,1.5) node{\small $C$};
\end{scope}
\end{tikzpicture}
\end{equation}
so that $\RR_{1}=AB$ and $\RR_{2}=BC$ are proper rectangles and $\RR_{1}\cap \RR_{1}=B$ is a rectangle containing at least $\ell$ plaquettes along the splitting direction, it holds that
\[ \| \Pi_{ABC} - \Pi_{AB}\Pi_{BC}\| \leq \delta(\ell)\,. \]
\item[(ii)] For every cylinder $\RR$ split along the wrapping direction into  four disjoint parts $\RR=ABCB'$ as in the next picture
\begin{equation}\label{equa:decompCylinder}
\begin{tikzpicture}[equation, scale=0.35]

\begin{scope}[yshift=1cm]
\node [cylinder, draw, minimum height=1.4cm, minimum width=1cm, fill=white, rotate=90, thick, black, fill=white] {};
\end{scope}

\begin{scope}[xshift=5cm]

\draw[thin, gray] (0,0) grid (8,3);
\draw[thick, gray, postaction=torus vertical] (0,0) -- (0,3); 
\draw[thick, gray, postaction=torus vertical] (8,0) -- (8,3); 

\draw[black, thick] (2,0) rectangle (4,3);
\draw[black, thick] (6,0) rectangle (8,3);

\draw (1,1.5) node{\small $A$};
\draw (3,1.5) node{\small $B$};
\draw (5,1.5) node{\small $C$};
\draw (7,1.5) node{\small $B'$};
\end{scope}

\begin{scope}[xshift=17cm]
\draw[step=1,white!50!black,thin] (0.1,0) grid (1.9,3);
\draw[step=1,white!50!black,thin,xshift=1cm] (2,0) grid (4,3);
\draw[step=1,white!50!black,thin,xshift=2cm] (4.1,0) grid (5.9,3);
\draw[step=1,white!50!black,thin,xshift=3cm] (6,0) grid (8,3);

\draw (1,1.5) node{\small $A$};
\draw[black,thick,xshift=1cm] (2,0) rectangle (4,3);
\draw[|-|] (3,-0.5) -- (4, -0.5) 
        node[below, black] {\small $\ell$} -- (5,-0.5);
\draw (4,1.5) node{\small $B$};
\draw (7,1.5) node{\small $C$};
\draw[black,thick,xshift=3cm] (6,0) rectangle (8,3);
\draw[|-|] (9,-0.5) -- (10, -0.5) 
        node[below, black] {\small $\ell$} -- (11,-0.5);
\draw (10,1.5) node{\small $B'$};
\end{scope}
\end{tikzpicture}
\end{equation}
so that $\RR_{1}=B'AB$ and $\RR_{2}=BCB'$ are proper rectangles, and $B$ and $B'$ are rectangles containing at least $\ell$ plaquettes along the wrapping direction, it holds that
\[ \| \Pi_{ABCB'} - \Pi_{B'AB}\Pi_{BCB'}\| \leq \delta(\ell)\,. \]

\item[(iii)] For every torus split along any of the two wrapping directions into four disjoint parts $\EE_{N}=ABCB'$ as in the next picture
\begin{equation}\label{equa:decompTorus}
\begin{tikzpicture}[equation, scale=0.35]
\begin{scope}[yshift=1cm,scale=2.2]
\path[rounded corners=24pt, blue] (-.9,0)--(0,.6)--(.9,0) (-.9,0)--(0,-.56)--    (.9,0);
\draw[rounded corners=28pt, black] (-1.1,.1)--(0,-.6)--(1.1,.1);
\draw[rounded corners=24pt, black] (-.9,0)--(0,.6)--(.9,0);
\draw[black] (0,0) ellipse (1.5 and 0.8);
\end{scope}   
  
\begin{scope}[xshift=6cm]
\draw[gray,thin] (0,0) grid (8,3);
\draw[thick, gray, postaction=torus horizontal] (0,0) -- (8,0); 
\draw[thick, gray, postaction=torus horizontal] (0,3) -- (8,3);
\draw[thick, gray, postaction=torus vertical] (0,0) -- (0,3); 
\draw[thick, gray, postaction=torus vertical] (8,0) -- (8,3); 
\draw (1,1.5) node{\small $A$};
\draw[black,thick] (2,0) rectangle (4,3);
\draw (3,1.5) node{\small $B$};
\draw (5,1.5) node{\small $C$};
\draw[black,thick] (6,0) rectangle (8,3);
\draw (7,1.5) node{\small $B'$};

\end{scope}
\hspace{1cm}  
\begin{scope}[xshift=14cm]

\draw[step=1,white!50!black,thin] (0.1,0) grid (1.9,3);
\draw[thick, gray, postaction=torus horizontal] (0.1,0) -- (1.9,0); 
\draw[thick, gray, postaction=torus horizontal] (0.1,3) -- (1.9,3); 

\draw[step=1,white!50!black,thin,xshift=1cm] (2,0) grid (4,3);
\draw[thick, gray, postaction=torus horizontal,xshift=1cm] (2,0) -- (4,0); 
\draw[thick, gray, postaction=torus horizontal,xshift=1cm] (2,3) -- (4,3); 

\draw[step=1,white!50!black,thin,xshift=2cm] (4.1,0) grid (5.9,3);
\draw[thick, gray, postaction=torus horizontal,xshift=2cm] (4.1,0) -- (5.9,0); 
\draw[thick, gray, postaction=torus horizontal,xshift=2cm] (4.1,3) -- (5.9,3); 
\draw[step=1,white!50!black,thin,xshift=3cm] (6,0) grid (8,3);
\draw[thick, gray, postaction=torus horizontal,xshift=3cm] (6,0) -- (8,0); 
\draw[thick, gray, postaction=torus horizontal,xshift=3cm] (6,3) -- (8,3); 
\draw (1,1.5) node{\small $A$};
\draw[black,thick,xshift=1cm] (2,0) rectangle (4,3);
\draw[|-|] (3,-0.5) -- (4, -0.5)   node[below, black] {\small $\ell$} -- (5,-0.5);
\draw (4,1.5) node{\small $B$};
\draw (7,1.5) node{\small $C$};
\draw[black,thick,xshift=3cm] (6,0) rectangle (8,3);
\draw[|-|] (9,-0.5) -- (10, -0.5) node[below, black] {\small $\ell$} -- (11,-0.5);
\draw (10,1.5) node{\small $B'$};
\end{scope}
\end{tikzpicture}
\hspace{1cm} 
\end{equation}

so that $\RR_{1}=B'AB$ and $\RR_{2}=BCB'$ are cylinders whose intersection consists of two cylinders $B$ and $B'$ containing at least $\ell$ plaquettes along the splitting direction, it holds that
\[ \| \Pi_{ABCB'} - \Pi_{B'AB}\Pi_{BCB'}\| \leq \delta(\ell)\,. \]

\end{enumerate}
\end{Defi}

\subsection*{Estimating the gap from below}

Let us fix a local interaction $\Phi$ on the torus $\EE_{N}$ defining a frustration-free Hamiltonian.  For each rectangular region $X \subset \EE_{N}$ let us denote by $P_{X}$ the orthogonal projector onto the ground space of $H_{X}$. Let us moreover assume that the family of projectors $(P_{X})_{X}$ satisfies the Martingale Condition for a decay function $\delta(\ell)$ as in Definition \ref{Defi:MartingaleCondition}.

\begin{Theo}\label{Theo:fromPeriodictoOpenBoundary}
If $\delta(\lfloor N/2-1 \rfloor)<1/2$, then
\[ \gap(\mathcal{F}_{N}^{torus}) \, \geq \,  \frac{1}{4} \gap(\mathcal{F}_{N}^{cylin}) \, \geq \, \frac{1}{16} \gap(\mathcal{F}_{N,N}^{rect})\,.\]
\end{Theo}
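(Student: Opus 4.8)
The plan is to apply Theorem~\ref{Theo:RecursiveGapEstimate} twice, once for each of the two inequalities, choosing the families $\mathcal{F}, \mathcal{F}'$ and the decomposition data $(A_i, B_i)$ so that the hypotheses (i)--(iii) of that theorem follow from the Martingale Condition on $(P_X)_X$. For the first inequality, $\gap(\mathcal{F}_N^{torus}) \geq \tfrac14 \gap(\mathcal{F}_N^{cylin})$, I would take $\mathcal{F} = \mathcal{F}_N^{cylin}$ and $\mathcal{F}' = \mathcal{F}_N^{cylin} \cup \mathcal{F}_N^{torus}$, so that the only element of $\mathcal{F}' \setminus \mathcal{F}$ is the whole torus $\EE_N$. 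I then need $s$ pairs of cylinders $(A_i, B_i)$ covering the torus. Using the torus splitting of Definition~\ref{Defi:MartingaleCondition}(iii) along one wrapping direction, write $\EE_N = ABCB'$ with $A_1 = B'AB$ and $B_1 = BCB'$ two cylinders whose intersection is $B \sqcup B'$; to get property (ii) of Theorem~\ref{Theo:RecursiveGapEstimate} (disjoint intersections) I take $s=2$ with a second such splitting shifted so that $(A_2 \cap B_2)$ is disjoint from $(A_1 \cap B_1)$ — e.g.\ rotate the cut by $N/2$. Property (iii) of Theorem~\ref{Theo:RecursiveGapEstimate} requires $\|P_{A_i} P_{B_i} - P_{\EE_N}\| \leq \delta$ with $\delta < 1$; by Definition~\ref{Defi:MartingaleCondition}(iii) this holds with $\delta = \delta(\ell)$ where $\ell$ is the number of plaquettes of $B$ (and $B'$) along the splitting direction, and one can arrange $B, B'$ to each have at least $\lfloor N/2 - 1\rfloor$ plaquettes, so the hypothesis $\delta(\lfloor N/2-1\rfloor) < 1/2$ gives $\delta < 1/2$. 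Then Theorem~\ref{Theo:RecursiveGapEstimate} yields $\gap(\mathcal{F}') \geq \tfrac{1-\delta}{1+1/s}\gap(\mathcal{F}) > \tfrac{1/2}{3/2}\gap(\mathcal{F}_N^{cylin}) = \tfrac13 \gap(\mathcal{F}_N^{cylin})$; to land exactly on the factor $\tfrac14$ claimed I would either be slightly more generous (the bound $\tfrac14$ is weaker than $\tfrac13$ so it certainly holds) or note $\delta \le 1/2$ gives $\tfrac14$ on the nose.

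For the second inequality, $\gap(\mathcal{F}_N^{cylin}) \geq \tfrac{1}{16}\gap(\mathcal{F}_{N,N}^{rect})$, the idea is the same but now $\mathcal{F}' \setminus \mathcal{F}$ consists of cylinders, which must be built up from proper rectangles. I would likely do this in two stages (hence the worse constant $\tfrac{1}{16} = \tfrac14 \cdot \tfrac14$): a cylinder $\mathbb{S}_N \times [a,b]$ is covered by two proper rectangles using Definition~\ref{Defi:MartingaleCondition}(ii), analogously to the torus case — split the wrapping circle into $ABCB'$, set $A_1 = B'AB$, $B_1 = BCB'$, both proper rectangles, take $s=2$ with a $N/2$-shifted second splitting for disjoint intersections, and invoke (ii) to get $\|P_{A_i}P_{B_i} - P_{\RR}\| \le \delta(\ell) \le \delta(\lfloor N/2-1\rfloor) < 1/2$. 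That gives a factor $\tfrac14$ going from the family of all proper rectangles $\mathcal{F}_{N,N}^{rect}$ to the cylinders. If an intermediate step is needed — e.g.\ first going from rectangles of bounded size to all proper rectangles via the row/column splitting of Definition~\ref{Defi:MartingaleCondition}(i) — that contributes another factor, but since $\mathcal{F}_{N,N}^{rect}$ already contains proper rectangles of all admissible sizes up to $N$, I expect a single application of Theorem~\ref{Theo:RecursiveGapEstimate} with the cylinder-to-rectangle data suffices, and the factor $\tfrac{1}{16}$ is then a (non-tight) bound that absorbs the constants. Composing the two gives $\gap(\mathcal{F}_N^{torus}) \geq \tfrac14 \gap(\mathcal{F}_N^{cylin}) \geq \tfrac14 \cdot \tfrac14 \gap(\mathcal{F}_{N,N}^{rect}) = \tfrac1{16}\gap(\mathcal{F}_{N,N}^{rect})$.

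The main obstacle I anticipate is bookkeeping the geometry of the coverings so that all three conditions of Theorem~\ref{Theo:RecursiveGapEstimate} hold \emph{simultaneously} with the \emph{same} $\delta < 1/2$: one must check that on a torus (or cylinder) of side $N$ there is genuinely enough room to choose two cuts $ABCB'$ and $A'B'C'B''$ whose middle strips $B \sqcup B'$ and $B' \sqcup B''$ are pairwise disjoint while each strip still contains $\geq \lfloor N/2 - 1 \rfloor$ plaquettes — this is where the precise threshold $\delta(\lfloor N/2 - 1\rfloor) < 1/2$ enters, and it needs a careful count of plaquettes versus edges on $\mathbb{S}_N$. A secondary point is to confirm that the rectangles $A_i, B_i$ produced all lie in the relevant family (they must have at least two plaquettes per row and column, which constrains how small $A$ and $C$ can be, but for $N$ large this is automatic). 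Once the combinatorics of the decomposition is pinned down, the analytic content is entirely supplied by Theorem~\ref{Theo:RecursiveGapEstimate} and the Martingale Condition, so the proof is short modulo this geometric verification.
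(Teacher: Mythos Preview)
Your overall strategy—apply Theorem~\ref{Theo:RecursiveGapEstimate} once with the Martingale Condition~(iii) to pass from torus to cylinders, and once with condition~(ii) to pass from cylinders to proper rectangles—is exactly what the paper does. The difference is that you take $s=2$, whereas the paper takes $s=1$. With $s=1$ condition~(ii) of Theorem~\ref{Theo:RecursiveGapEstimate} is vacuous (there is no pair $i\neq j$), so a \emph{single} decomposition $\EE_N = ABCB'$ into two cylinders $\mathcal{C}_1 = B'AB$, $\mathcal{C}_2 = BCB'$ with $B,B'$ each of width $\ge \lfloor N/2-1\rfloor$ already suffices; the Martingale Condition gives $\|P_{\mathcal{C}_1}P_{\mathcal{C}_2}-P_{\EE_N}\|\le \delta(\lfloor N/2-1\rfloor)<1/2$, and Theorem~\ref{Theo:RecursiveGapEstimate} with $s=1$, $\delta=1/2$ yields the factor $\tfrac{1-1/2}{1+1}=\tfrac14$ exactly. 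The cylinder-to-rectangle step is identical.

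Your anticipated obstacle is not just bookkeeping, it is fatal to the $s=2$ approach as written: two decompositions whose overlap strips are pairwise disjoint and each of width $\ge\lfloor N/2-1\rfloor$ would require at least $4\lfloor N/2-1\rfloor\approx 2N$ plaquettes along a circle of length $N$, which is impossible. You could rescue $s=2$ by shrinking the strips to width $\sim N/4$ and invoking $\delta(\lfloor N/4\rfloor)$ instead, but then the hypothesis $\delta(\lfloor N/2-1\rfloor)<1/2$ is no longer the right one. The clean fix is simply to drop to $s=1$; that removes the geometric constraint entirely and lands on the stated constant. (Also, a minor misreading: the second inequality in the chain is $\tfrac14\gap(\mathcal{F}_N^{cylin})\ge\tfrac1{16}\gap(\mathcal{F}_{N,N}^{rect})$, i.e.\ $\gap(\mathcal{F}_N^{cylin})\ge\tfrac14\gap(\mathcal{F}_{N,N}^{rect})$, so only one application is needed there, not two; your final composition gets this right.)
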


\begin{proof}
 Let us first compare $\gap(\mathcal{F}_{N}^{torus})$ and $\gap(\mathcal{F}_{N}^{cylin})$. We consider a decomposition of the torus $\Lambda_{N}$ into four regions $A,B,C,B'$ as in \eqref{equa:torusDecomposition}, so that $\mathcal{C}_{1}:=B'AB$ and $\mathcal{C}_{2}:=BCB'$ are cylinders having dimensions $N-1$ and $N$ belonging to $\mathcal{F}_{N}^{cylin}$ and whose intersection consists of two cylinders $B$ and $B'$, each having $N$ plaquettes per column and at least $\lfloor N/2-1\rfloor$ plaquettes per row.
 \begin{equation}\label{equa:torusDecomposition}
\begin{tikzpicture}[equation, scale=0.9]

\begin{scope}[scale=0.9, thick]

\path[rounded corners=24pt, blue] (-.9,0)--(0,.6)--(.9,0) (-.9,0)--(0,-.56)--    (.9,0);
\draw[rounded corners=28pt, black] (-1.1,.1)--(0,-.6)--(1.1,.1);
\draw[rounded corners=24pt, black] (-.9,0)--(0,.6)--(.9,0);
\draw[black] (0,0) ellipse (1.5 and 0.8);
\end{scope}   
    
   
\begin{scope}[xshift=2.5cm, scale=0.5, yshift=-1cm]

\draw[step=1,white!70!black] (0,0) grid (8,3);

\draw[gray, postaction=torus horizontal] (0,0) -- (8,0); 
\draw[gray, postaction=torus horizontal] (0,3) -- (8, 3);
\draw[gray, postaction=torus vertical] (0,0) -- (0, 3);
\draw[gray, postaction=torus vertical] (8,0) -- (8, 3);

\end{scope}   
   
    
\begin{scope}[xshift=8cm, scale=0.5, yshift=-1cm]

\draw[step=1,white!70!black] (0,0) grid (8,3);

\draw[black] (0,0) rectangle (1,3);
\draw (0.5,1.5) node{\scriptsize  $A$};
\draw[very thick, blue] (1,0) grid (4,3);
\draw (2.5,1.5) node{\scriptsize  $B$};
\draw[black] (4,0) rectangle (5,3);
\draw (4.5,1.5) node{\scriptsize  $C$};
\draw[very thick, blue] (5,0) grid (8,3);
\draw (6.5,1.5) node{\scriptsize  $B'$};

\draw[gray, postaction=torus horizontal] (0,0) -- (8,0); 
\draw[gray, postaction=torus horizontal] (0,3) -- (8, 3);
\draw[gray, postaction=torus vertical] (0,0) -- (0, 3);
\draw[gray, postaction=torus vertical] (8,0) -- (8, 3);


\draw[<->] (1,-0.7) -- (4,-0.7);
\draw (2.5,-1.3) node{\scriptsize $\geq \lfloor N / 2-1 \rfloor$};
\draw[<->] (5,-0.7) -- (8,-0.7);
\draw (7,-1.3) node{\scriptsize $\geq \lfloor N / 2-1 \rfloor$};

\end{scope}

\end{tikzpicture} 
\end{equation}
 Applying the Martingale Condition from Definition \ref{Defi:MartingaleCondition}.(iii) we deduce that
\begin{equation}\label{equa:comparingGapTorusCylin} 
\| P_{\mathcal{C}_{1}} P_{\mathcal{C}_{2}} - P_{\Lambda_{N}} \| \, \leq \, \delta(\lfloor N/2-1 \rfloor)  < \frac{1}{2}\,, 
\end{equation}
and so, by Theorem~\ref{Theo:RecursiveGapEstimate} with $s=1$ and $\delta=1/2$, we can estimate
\[ \gap(\mathcal{F}_{N}^{torus}) \, \geq \, \frac{1}{4} \, \gap(\mathcal{F}_{N}^{cylin}) \,. \]
Next, we compare $\mathcal{F}_{N}^{cylin}$ and $\mathcal{F}_{N,N}^{rect}$. Given a cylinder $\mathcal{C} \in \mathcal{F}_{N}^{cylin}$, we can split it along the wrapping direction into four regions $A,B,C,B'$ as in \eqref{equa:cylinderDecomposition}, so that $\mathcal{R}_{1}:=B'AB$ and $\mathcal{R}_{2}:=BCB'$ are proper rectangles having $N$ plaquettes along the splitting direction, belonging to $\mathcal{F}_{N}^{rect}$, and whose intersection consists of two proper rectangles $B$ and $B'$ having at least $\lfloor N/2-1 \rfloor$ plaquettes along the splitting direction.
\begin{equation}\label{equa:cylinderDecomposition}
\begin{tikzpicture}[equation, scale=0.8]

\begin{scope}[xshift=-2cm, yshift=1cm]
\node [cylinder, draw, minimum height=2cm, minimum width=0.8cm, fill=white, rotate=90, thick, black, fill=white] {};
\end{scope}

\begin{scope}[ scale=0.5]


\draw[white!70!black] (0,0) grid (8,4);


\draw[thick, gray, postaction=torus vertical] (0,0) -- (0,4); 
\draw[thick, gray, postaction=torus vertical] (8,0) -- (8,4); 


\draw[<->] (0,-0.7) -- (8,-0.7);
\draw (4,-1.3) node{\scriptsize $N$};

\end{scope}


\begin{scope}[xshift=6cm, scale=0.5]


\draw[white!70!black] (0,0) grid (8,4);

\draw[blue, very thick] (1,0) grid (4,4);
\draw[blue, very thick] (5,0) grid (8,4);

\draw (0.5,1.5) node{\small $A$};
\draw (2.5,1.5) node{\small $B$};
\draw (4.5,1.5) node{\small $C$};
\draw (6.5,1.5) node{\small $B'$};


\draw[thick, gray, postaction=torus vertical] (0,0) -- (0,4); 
\draw[thick, gray, postaction=torus vertical] (8,0) -- (8,4); 


\draw[<->] (1,-0.7) -- (4,-0.7);
\draw (2.5,-1.3) node{\scriptsize $\geq \lfloor N / 2-1 \rfloor$};
\draw[<->] (5,-0.7) -- (8,-0.7);
\draw (7,-1.3) node{\scriptsize $\geq \lfloor N / 2-1 \rfloor$};

\end{scope}

\end{tikzpicture}
\end{equation}
Thus, applying the Martingale Condition from Definition \ref{Defi:MartingaleCondition}.(ii) we deduce that
\[ \| P_{\mathcal{R}_{1}} P_{\mathcal{R}_{2}} - P_{\mathcal{C}} \| \, \leq \, \delta(\lfloor N/2-1 \rfloor) < \frac{1}{2}\,, \]
and so, by Theorem~\ref{Theo:RecursiveGapEstimate} with $s=1$ and $\delta=1/2$, we can estimate
\begin{equation}\label{equa:comparingGapRectCylin} 
\gap(\mathcal{F}_{N}^{cylin}) \, \geq \, \frac{1}{4} \, \gap(\mathcal{F}_{N,N}^{rect})\,. 
\end{equation}
Combining \eqref{equa:comparingGapTorusCylin} and \eqref{equa:comparingGapRectCylin} we conclude the result.
\end{proof}

\begin{Theo} \label{Theo:spectralGapOpenBoundary}
For fixed integers $N \geq r \geq 16$, let us denote  $\delta_{k}:= \delta(\lfloor \tfrac{r}{4}(\sqrt{9/8}\,)^{k}\rfloor\,)$ and $s_k:=\lfloor (\sqrt{4/3}\,)^{k} \rfloor$ for each integer $k \geq 0$. Then, we can estimate from below
\[ \gap(\mathcal{F}_{N,N}^{rect}) \, \geq \, \left(\, \prod_{k=0}^{\infty} \frac{1-\delta_{k}}{1+\frac{1}{s_{k}}}\right) \, \gap(\mathcal{F}^{rect}_{N,r})\,. \]
\end{Theo}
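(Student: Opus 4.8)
The plan is to iterate Theorem~\ref{Theo:RecursiveGapEstimate} starting from the family $\mathcal{F}^{rect}_{N,r}$ of small rectangles and doubling the allowed size at each step until we exhaust all of $\mathcal{F}_{N,N}^{rect}$. Concretely, I would define an increasing sequence of families $\mathcal{G}_k := \mathcal{F}^{rect}_{N,\,r_k}$ where $r_0 = r$ and $r_{k+1} = \lceil \tfrac{3}{2} r_k \rceil$ (or a similar geometric growth with ratio close to $3/2$), stopping at the first index $K$ with $r_K \ge N$, at which point $\mathcal{G}_K = \mathcal{F}_{N,N}^{rect}$. For each transition $\mathcal{G}_k \to \mathcal{G}_{k+1}$ I would verify the hypotheses (i)--(iii) of Theorem~\ref{Theo:RecursiveGapEstimate} with a suitable number of overlapping pairs $s = s_k$ and overlap parameter $\delta = \delta_k$, and then chain the resulting bounds
\[
\gap(\mathcal{G}_{k+1}) \;\ge\; \frac{1-\delta_k}{1+\frac{1}{s_k}}\, \gap(\mathcal{G}_k)
\]
into the infinite product (the product is finite, but extending it to $k=\infty$ with all factors $\le 1$ only decreases the bound, which is why the statement uses $\prod_{k=0}^\infty$).

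The heart of the argument is the geometric construction: given a proper rectangle $Y \in \mathcal{G}_{k+1} \setminus \mathcal{G}_k$, say with $m \le r_{k+1}$ plaquettes along its longer side, I must exhibit $s_k$ pairs $(A_i, B_i)$ of rectangles in $\mathcal{G}_k$ (so each has at most $r_k$ plaquettes per row and column, and at least two) with $Y = A_i \cup B_i$, with the intersections $A_i \cap B_i$ pairwise disjoint, and with $\|P_{A_i}P_{B_i} - P_Y\| \le \delta_k$. The natural choice is to split $Y$ along its long direction into a left piece $A_i$ and a right piece $B_i$ overlapping in a band $B$; since $r_k \ge \tfrac{r}{4}(\sqrt{9/8})^k$ roughly and $m \le r_{k+1} \approx \tfrac{3}{2}r_k$, one can arrange each piece to fit in $\mathcal{G}_k$ while keeping the overlap band of width $\gtrsim \tfrac{1}{s_k}\cdot(\text{something})$; the disjointness condition (ii) forces the $s_k$ overlap bands to be spread out, which is exactly why $s_k$ cannot be too large and must itself grow only like $(\sqrt{4/3})^k$ — this is the source of the two competing geometric rates $\sqrt{9/8}$ (overlap length, controlling $\delta_k$) and $\sqrt{4/3}$ (number of pairs, controlling the $1+1/s_k$ factor), which multiply to $3/2$, the size-doubling-per-step we need. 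The Martingale Condition (Definition~\ref{Defi:MartingaleCondition}) then supplies $\|P_{A_i}P_{B_i}-P_Y\| \le \delta(\ell_i) \le \delta_k$ once the overlap $\ell_i$ along the splitting direction is at least $\lfloor \tfrac{r}{4}(\sqrt{9/8})^k\rfloor$. One must run this both for splits along rows and along columns (treating a rectangle's longer dimension, or doing one direction then the other), but the combinatorics is symmetric.

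I would then check the bookkeeping: that the recursion indeed reaches $\mathcal{F}_{N,N}^{rect}$ in finitely many steps (since $r_k \to \infty$ geometrically), that at every step $s_k \ge 1$ and the pieces genuinely lie in $\mathcal{G}_k$ (needs $r \ge 16$ so that there is enough room for the first split to produce rectangles with $\ge 2$ plaquettes per side and the geometric estimates $\lfloor\sqrt{4/3}^k\rfloor$, $\lfloor \tfrac{r}{4}\sqrt{9/8}^k\rfloor$ behave monotonically), and that telescoping the per-step inequalities yields exactly the stated product. The main obstacle I anticipate is the precise verification of condition (ii) — arranging $s_k$ pairwise-disjoint overlap bands inside a rectangle of length $\lesssim \tfrac32 r_k$, each of width $\ge \tfrac{r}{4}\sqrt{9/8}^k$ — which is a tight packing argument and is exactly what pins down the constants $\sqrt{9/8}$ and $\sqrt{4/3}$; everything else (invoking Theorem~\ref{Theo:RecursiveGapEstimate}, the Martingale Condition, and telescoping) is routine.
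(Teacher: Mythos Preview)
Your overall strategy---iterate Theorem~\ref{Theo:RecursiveGapEstimate} along a geometrically growing filtration of rectangle families, invoke the Martingale Condition for the overlap estimate, and telescope---is exactly what the paper does. However, your choice of filtration $\mathcal{G}_k = \mathcal{F}^{rect}_{N,r_k}$ by \emph{maximal side length} has a genuine gap: for a nearly-square rectangle $Y$ with \emph{both} dimensions in $(r_k, r_{k+1}]$, a single split along the longer side produces pieces whose \emph{other} dimension is still larger than $r_k$, so the pieces do not lie in $\mathcal{G}_k$ and Theorem~\ref{Theo:RecursiveGapEstimate} cannot be applied with $\mathcal{F}=\mathcal{G}_k$. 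Your parenthetical ``doing one direction then the other'' does not resolve this within the stated framework: it would require an intermediate family between $\mathcal{G}_k$ and $\mathcal{G}_{k+1}$, doubling the number of steps and (at best) squaring each factor $(1-\delta_k)/(1+1/s_k)$ in the product, which no longer yields the stated bound.

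The paper's fix is to filter by \emph{area} rather than maximal side: $\mathcal{G}_k = \{ Y : a\cdot b \le r(3/2)^k,\ a,b\ge 2 \}$. Splitting any $Y \in \mathcal{G}_{k+1}\setminus\mathcal{G}_k$ along its longer side $b$ into pieces of length at most $2b/3$ always drops the area by a factor $2/3$, landing the pieces in $\mathcal{G}_k$ \emph{regardless of aspect ratio}. Since $a\cdot b > r(3/2)^k$ and $a \le b$, one has $b \ge \sqrt{r}\,(\sqrt{3/2})^k$; dividing the available overlap room (of order $b$) among $s_k = \lfloor(\sqrt{4/3})^k\rfloor$ disjoint bands gives each band width at least $\lfloor\tfrac{\sqrt{r}}{4}(\sqrt{9/8})^k\rfloor$, whence the two rates. (Note $\sqrt{9/8}\cdot\sqrt{4/3} = \sqrt{3/2}$, the growth rate of the \emph{longer side}, not the $3/2$ you wrote, which is the growth rate of the area.) The packing verification you flagged as the ``main obstacle'' then reduces to the one-line estimate $(2j+1)\ell_k \le 2s_k\ell_k \le b/2$, and the rest proceeds exactly as you outlined.
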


Notice that the infinite product that appears in the previous expression is convergent to a positive value whenever the decay function $\delta(\ell)$ decays polynomially fast, namely $\delta(\ell) = O(\ell^{-\alpha})$ for some $\alpha >0$. 

\begin{proof}
Fix $N,r$ as above, for each $k \geq 0$ let $\mathcal{G}_{k}$ be the family of all proper rectangles in $\mathcal{F}_{N,N}^{rect}$ of dimensions $a$ and $b$ satisfying 
\[ a,b \geq 2 \quad  \mbox{ and } \quad a \cdot b \leq r (3/2)^{k}\,.\]
Observe that $\mathcal{G}_{k} \subset \mathcal{G}_{k+1}$ for every $k \geq 0$ and that $\mathcal{G}_{k} = \mathcal{F}_{N,N}^{rect}$ for $k$ large enough. Next, let $Y \in \mathcal{G}_{k+1} \setminus \mathcal{G}_{k}$ of dimensions $a$ and $b$. Then
\[ r (3/2)^{k} < a \cdot b \leq r (3/2)^{k+1}\,. \]
We can assume without loss of generality that $a \leq b$, so using the previous inequalities we can estimate from below (recall that $r \geq 16$)
\begin{equation}\label{equa:auxdimensionsGap1} 
4 \leq \sqrt{r} (\sqrt{3/2}\,)^{k} \leq b\,.  
\end{equation}
Let us define
\[ s_{k}:= \lfloor (\sqrt{4/3}\,)^{k} \rfloor \quad \mbox{and} \quad  \ell_{k} := \left\lfloor  b/(4 s_{k}) \right\rfloor \,. \]
Observe that by \eqref{equa:auxdimensionsGap1}
\begin{equation}\label{equa:auxdimensionsGap2}
\ell_{k} \geq \left\lfloor\frac{b}{4s_{k}}\right\rfloor \geq \left\lfloor \frac{\sqrt{r}}{4} (\sqrt{3/2}\,)^{k} (\sqrt{3/4}\,)^{k} \right\rfloor =  \left\lfloor\frac{\sqrt{r}}{4} (\sqrt{9/8}\,)^{k}\right\rfloor \geq 1\,, 
\end{equation}
and for every $j = 0, \ldots, s_{k}-1$ 
\begin{equation}\label{equa:auxdimensionsGap3}
(2j+1) \ell_{k} \leq 2 s_{k} \ell_{k} -\ell_{k} \leq \frac{b}{4}-1\,.  
\end{equation}
Next, let us identify our rectangle $Y$ with $[0,b] \times [0,a]$, and consider the subrectangles (see Figure \ref{Fig:rectangularSplit})
\begin{align*}
A_{j} := \left[0, \lceil b/3 \rceil + (2j+1) \ell_k \, \right] \times [0,a] \quad ,\quad B_{j}  := \left[\lceil b/3 \rceil + 2j \ell_k, b \, \right] \times [0,a] \,.
\end{align*}
Note that, applying \eqref{equa:auxdimensionsGap3}, we deduce that $A_{j}$ is contained in a rectangle of dimensions $a$ and $\lceil b/3\rceil +b/4-1 \leq 2b/3$. Similarly, $B_{j}$ is contained in a rectangle of dimensions $a$ and $b-\lceil b/3 \rceil \leq 2b/3$. Since $a \geq 2$, $2b/3 \geq 2$, and $a \cdot (2b/3) \leq r(3/2)^{k}$, we conclude that $A_{j}, B_{j} \in \mathcal{G}_{k}$ by definition. Moreover, the intersections
\[ 
A_{j} \cap B_{j} = \left[ \, \lceil b/3 \rceil + 2j \ell_k, \lceil b/3 \rceil + (2j+1) \ell_k \, \right] \times [0,a] \,.
\]
are disjoint, i.e. $A_{j} \cap B_{j} \cap A_{j'} \cap B_{j'} = \emptyset$ whenever $j \neq j'$,  and have $\ell_{k}$ plaquettes per row and $a$ plaquettes per column. Using the Martingale Condition from Definition \ref{Defi:MartingaleCondition}.(i) and \eqref{equa:auxdimensionsGap2}
\[ \| P_{A_{j}} P_{B_{j}} - P_{Y} \| \leq \delta(\ell_{k})  \leq \delta(\lfloor\tfrac{\sqrt{r}}{4}(\sqrt{9/8})^{k}\rfloor)\,.\]

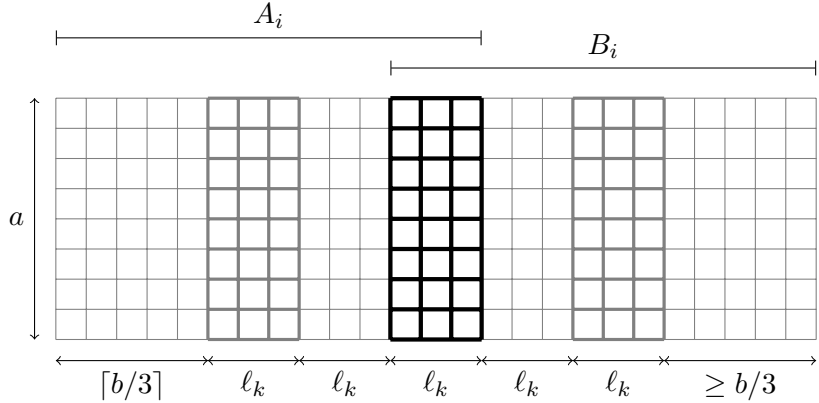
\begin{figure}[ht]
\begin{center}
\begin{tikzpicture}[scale=0.4, decoration={markings, mark=at position 0.55 with {\arrow[very thick, scale=1.5]{>}}}]
    \draw[gray, thin] (0,0) grid (25,8);
    \draw[black, very thick, gray] (5,0) grid (8,8);
    \draw[black, ultra thick] (11,0) grid (14,8);
    \draw[black, very thick, gray] (17,0) grid (20,8);


    \draw[<->] (0,-0.7) -- (5, -0.7);
    \draw[<->] (5,-0.7) -- (8, -0.7);
    \draw[<->] (8,-0.7) -- (11, -0.7);
    \draw[<->] (11,-0.7) -- (14, -0.7);
    \draw[<->] (14,-0.7) -- (17, -0.7);
    \draw[<->] (17,-0.7) -- (20, -0.7);
    \draw[<->] (20,-0.7) -- (25, -0.7);

    \draw (2.5,-0.7) node[below] {$\lceil b/3 \rceil$};
    \draw (6.5,-0.7) node[below] {$\ell_{k}$};
    \draw (9.5,-0.7) node[below] {$\ell_{k}$};
    \draw (12.5,-0.7) node[below] {$\ell_{k}$};
    \draw (15.5,-0.7) node[below] {$\ell_{k}$};
    \draw (18.5,-0.7) node[below] {$\ell_{k}$};
    \draw (22.5,-0.7) node[below] {$\geq b/3$};
    
    \draw[|-|, color = black] (0,10) -- (14, 10);
    \draw (7,10) node[above] {$A_{i}$};
    \draw[|-|, color = black] (11,9) -- (25, 9);
    \draw (18,9) node[above] {$B_{i}$};


\draw[<->] (-0.7,0) -- (-0.7, 8);
\draw (-0.7,4) node[left] {$a$};
    
    \end{tikzpicture}
    \end{center}
    \caption{Split of the rectangle $Y$ in the proof of Theorem \ref{Theo:spectralGapOpenBoundary}. The thick subrectangles correspond to the intersections $A_{i} \cap B_{i}$, being disjoint and having width equal to $\ell_{k}$.}
\label{Fig:rectangularSplit}
\end{figure}

\noindent Applying now Theorem~\ref{Theo:RecursiveGapEstimate} we deduce that
\[ \gap(\mathcal{G}_{k+1}) \geq \frac{1-\delta_{k}}{1+\frac{1}{s_{k}}} \gap(\mathcal{G}_{k}) \geq \ldots \geq \, \left(\prod_{j=0}^{k} \frac{1-\delta_{j}}{1+\frac{1}{s_{j}}}\right) \,\, \gap(\mathcal{G}_{0})\,. \]
Finally, noticing that $\mathcal{G}_{0} \subset \mathcal{F}_{r,N}^{rect}$ we conclude the result.
\end{proof}


\section{Projected Entangled-Pair States (PEPS)}
\label{sec:PEPS}

In Section~\ref{sec:quantum-spin-systems}, we introduced some general results that allow us to estimate the spectral gap of a quantum spin Hamiltonian. In particular, in order to apply Theorem~\ref{Theo:RecursiveGapEstimate}, we need to verify condition $(iii)$ for a specific family of local ground state projections. Projected Entangled-Pair States (PEPS for short) constitute a class of quantum spin models for which there exists tools that allow to control the bound in condition $(iii)$.
In the current section, we will briefly recall their definition, and how to evaluate condition $(iii)$ of Theorem~\ref{Theo:RecursiveGapEstimate} in the case of a specific kind of local Hamiltonian, known as a \emph{PEPS parent Hamiltonian}.

\subsection{Tensor notation}\label{sec:tensornotation}

Let us denote $[d]:=\{ 0,1,\ldots, d-1\}$ for each $d \in \mathbb{N}$. Recall that a \emph{tensor} with $n$ indices is simply an element $T \in \mathbb{C}^{[d_{1}] \times \ldots \times [d_{n}]} $ where each $d_{j} \in \mathbb{N}$ is called the \emph{dimension} of the $j$-th index. We will employ the usual notation
\[ T_\alpha =  T_{\alpha_{1} \ldots \alpha_{n}} \quad , \quad \alpha = (\alpha_{1}, \ldots, \alpha_{n}) \in [d_{1}] \times \ldots \times [d_{n}]\,. \]
By definition, a tensor with zero indices will be a scalar $T \in  \mathbb{C}$. We can represent tensors in the form of a ball with a leg for each index:

\begin{equation*}
    \begin{tikzpicture}[equation,very thick,decoration={
    markings,
    mark=at position 0.5 with {\arrow{>}}}
    ]
    
    \node (B1) at (30:1) {};
    \node  at (30:1.1) {\small $\alpha_{1}$};
    \node (B2) at (120:1) {};
    \node  at (120:1.1) {\small $\alpha_{2}$};
    \node (B3) at (185:1) {};
    \node  at (185:1.1) {\small $\alpha_{3}$};
    \node (B4) at (260:1) {};
    \node  at (260:1.1) {\small $\alpha_{4}$};
    \node (B5) at (-45:1) {};
    \node  at (-45:1.1) {\small $\alpha_{5}$};

    \node[fill=black, circle, scale=0.5] (A) at (0,0) {};
    \draw  (B1) --  (A);
    \draw (A) -- (B2);
    \draw (B3) -- (A);
    \draw (A) -- (B4);
    \draw (A) -- (B5);
    \end{tikzpicture}
     \quad \quad T = (T_{\alpha_{1} \ldots \alpha_{n}})
\end{equation*}

Let us describe the basic operations that we will perform with tensors. Given a tensor with (at least) two indices, say $\alpha_{1}$ and $\alpha_{2}$ with dimensions $d_{1}$ and $d_{2}$ respectively, we can combine them into one index $\gamma$ of dimension $d_{1}\cdot d_{2}$, so that the resulting tensor has one index less. This process can be iterated to combine several indices into one index. Graphically, we have the following example:
\begin{equation*}
    \begin{tikzpicture}[equation,very thick,decoration={
    markings,
    mark=at position 0.5 with {\arrow{>}}}
    ]
    \node (B1) at (30:1) {};
    \node  at (30:1.1) {\small $\alpha_{1}$};
    \node (B2) at (120:1) {};
    \node  at (120:1.1) {\small $\alpha_{2}$};
    \node (B3) at (185:1) {};
    \node  at (185:1.1) {\small $\alpha_{3}$};
    \node (B4) at (260:1) {};
    \node  at (260:1.1) {\small $\alpha_{4}$};
    \node (B5) at (-45:1) {};
    \node  at (-45:1.1) {\small $\alpha_{5}$};
    \node  at (-10:2.1) {\small $(T_{\alpha_{1} \alpha_{2} \alpha_{3}\alpha_{4}\alpha_{5}})$};    
        
    \node[fill=black, circle, scale=0.5] (A) at (0,0) {};
    \draw[blue]  (B1) --  (A);
    \draw[blue] (A) -- (B2);
    \draw (B3) -- (A);
    \draw (A) -- (B4);
    \draw (A) -- (B5);
    \end{tikzpicture}
     \quad  \equiv \quad
     \begin{tikzpicture}[equation,very thick,decoration={
    markings,
    mark=at position 0.5 with {\arrow{>}}}
    ]
    \node (B1) at (75:1) {};
    \node  at (75:1.1) {\small $\gamma$};
    \node (B3) at (185:1) {};
    \node  at (185:1.1) {\small $\alpha_{3}$};
    \node (B4) at (260:1) {};
    \node  at (260:1.1) {\small $\alpha_{4}$};
    \node (B5) at (-45:1) {};
    \node  at (-45:1.1) {\small $\alpha_{5}$};
   \node  at (0:2) {\small $(T_{\gamma \alpha_{3} \alpha_{4} \alpha_{5}})$};
   
    \node[fill=black, circle, scale=0.5] (A) at (0,0) {};
    \draw[very thick, blue]  (B1) --  (A);
    \draw (B3) -- (A);
    \draw (A) -- (B4);
    \draw (A) -- (B5);
    \end{tikzpicture}
\end{equation*}
Given two tensors $T=(T_{\alpha})$ and $S=(S_{\beta})$ with $m$ and $n$ indices, respectively, we define its \emph{tensor product} as the unique tensor with $n+m$ indices given by $T \otimes S :=(T_{\alpha} \cdot S_{\beta})$. For instance:
\begin{equation*}
    \begin{tikzpicture}[equation,very thick,decoration={
    markings,
    mark=at position 0.5 with {\arrow{>}}}
    ]
    \node (B1) at (30:1) {};
    \node  at (30:1.1) {\small $\alpha_{1}$};
    \node (B2) at (120:1) {};
    \node  at (120:1.1) {\small $\alpha_{2}$};
    \node (B3) at (185:1) {};
    \node  at (185:1.1) {\small $\alpha_{3}$};
      \node  at (270:0.8) {\small $(T_{\alpha_{1} \alpha_{2} \alpha_{2}})$};  
    \node[fill=black, circle, scale=0.5] (A) at (0,0) {};
    \draw  (B1) --  (A);
    \draw (A) -- (B2);
    \draw (B3) -- (A);
    \end{tikzpicture}
      \otimes 
     \begin{tikzpicture}[equation,very thick,decoration={
    markings,
    mark=at position 0.5 with {\arrow{>}}}
    ]
    \node (B4) at (260:1) {};
    \node  at (260:1.1) {\small $\alpha_{4}$};
    \node (B5) at (-45:1) {};
    \node  at (-45:1.1) {\small $\alpha_{5}$};
    \node  at (90:0.7) {\small $(S_{\alpha_{4} \alpha_{5}})$};
    
    \node[fill=black, circle, scale=0.5] (A) at (0,0) {};
    \draw (A) -- (B4);
    \draw (A) -- (B5);
    \end{tikzpicture}
    \quad \equiv  \quad
    \begin{tikzpicture}[equation,very thick,decoration={
    markings,
    mark=at position 0.5 with {\arrow{>}}}
    ]
    \node (B1) at (30:1) {};
    \node  at (30:1.1) {\small $\alpha_{1}$};
    \node (B2) at (120:1) {};
    \node  at (120:1.1) {\small $\alpha_{2}$};
    \node (B3) at (185:1) {};
    \node  at (185:1.1) {\small $\alpha_{3}$};
    \node (B4) at (260:1) {};
    \node  at (260:1.1) {\small $\alpha_{4}$};
    \node (B5) at (-45:1) {};
    \node  at (-45:1.1) {\small $\alpha_{5}$};
    \node  at (-5:2.6) {\small $(T_{\alpha_{1}\alpha_{2}\alpha_{3}} \cdot S_{\alpha_{4} \alpha_{5}})$};    
        
    \node[fill=black, circle, scale=0.5] (A) at (0,0) {};
    \draw (B1) --  (A);
    \draw (A) -- (B2);
    \draw (B3) -- (A);
    \draw (A) -- (B4);
    \draw (A) -- (B5);
    \end{tikzpicture}
\end{equation*}

If a tensor $(T_{\alpha})$ has two indices with the same dimension, say $\alpha_{1}$ and $\alpha_{2}$  then we can \emph{contract} them, resulting in a tensor with $n-2$ indices $S_{\alpha_{3} \ldots \alpha_{n}} = \sum_{j} T_{jj\alpha_{3}\ldots \alpha_{n}}$. Combining this operation with the tensor product of tensors, we define the \emph{contraction} of tensors: given two tensors $(T_\alpha)$ and $(S_\beta)$ having both an index with the same dimension, say $\alpha_{1}$ and $\beta_{1}$, we can contract these indices to generate a new tensor $(\sum_{j}T_{j\alpha_{2}\ldots \alpha_{n}} \cdot S_{j\beta_{2}\ldots \beta_{m}})$. Graphically, we have the following example:

\begin{equation*}
    \begin{tikzpicture}[equation,very thick,decoration={
    markings,
    mark=at position 0.5 with {\arrow{>}}}
    ]
    
    \node (B1) at (0:1) {};
    \node  at (0:1.1) {\small $\alpha_{1}$};
    \node (B2) at (120:1) {};
    \node  at (120:1.1) {\small $\alpha_{2}$};
    \node (B3) at (185:1) {};
    \node  at (185:1.1) {\small $\alpha_{3}$};
        \node  at (-90:1) {\small $(T_{\alpha_{1}\alpha_{2}\alpha_{3}})$};  
        
    \node[fill=black, circle, scale=0.5] (A) at (0,0) {};
    \draw[blue]  (B1) --  (A);
    \draw (A) -- (B2);
    \draw (A) -- (B3);
    \end{tikzpicture}
     \quad , \quad
      \begin{tikzpicture}[equation,very thick,decoration={
    markings,
    mark=at position 0.5 with {\arrow{>}}}
    ]
    
    \node (D1) at (170:1) {};
    \node  at (170:1.1) {\small $\beta_{1}$};
    \node (D2) at (260:1) {};
    \node  at (260:1.1) {\small $\beta_{2}$};
    \node (D3) at (-45:1) {};
    \node  at (-45:1.1) {\small $\beta_{3}$};
    \node  at (80:1) {\small $(S_{\beta_{1}\beta_{2}\beta_{3}})$};      
        
    \node[fill=black, circle, scale=0.5] (A) at (0,0) {};
    \draw[blue] (D1) -- (A);
    \draw (A) -- (D2);
    \draw (A) -- (D3);
    \end{tikzpicture}
     \quad\quad \longrightarrow \quad\quad 
         \begin{tikzpicture}[equation,very thick,decoration={
    markings,
    mark=at position 0.5 with {\arrow{>}}}
    ]
    
    \node (B2) at (120:1) {};
    \node  at (120:1.1) {\small $\alpha_{2}$};
    \node (B3) at (185:1) {};
    \node  at (185:1.1) {\small $\alpha_{3}$};

    \node  at (20:2) {\small $(\sum_{j}T_{j\alpha_{2}\alpha_{3}} S_{j \beta_2 \beta_3})$}; 
    
    \draw[blue, very thick] (0,0) -- (0.5,0);
    
    \node[fill=black, circle, scale=0.5] (A) at (0,0) {};
    \draw (A) -- (B2);
    \draw (A) -- (B3);
    
    \begin{scope}[xshift=0.5cm]
    \node[fill=black, circle, scale=0.5] (A) at (0,0) {};
        \node (D2) at (260:1) {};
    \node  at (260:1.1) {\small $\beta_{2}$};
    \node (D3) at (-45:1) {};
    \node  at (-45:1.1) {\small $\beta_{3}$};
    \draw (A) -- (D2);
    \draw (A) -- (D3);
    \end{scope}

    \end{tikzpicture}
\end{equation*}

If two tensors have the same indices with the same dimensions $(T_{\alpha})_\alpha$ and $(S_{\alpha})_\alpha$, we can define their tensor sum $T+S:=(T_{\alpha}+S_{\alpha})_\alpha$. Given a tensor $(T_{\alpha})_\alpha$ and a scalar $\lambda \in \mathbb{C}$ we define $\lambda T := (\lambda T_{\alpha})_{\alpha}$.

If we identify each index $j$ with a Hilbert space $\mathbb{C}^{d_{j}}$, we can interpret a tensor $T$ as the coefficients of a ket $\ket{\psi}$ in the computational basis
\[ \ket{\psi} \in \otimes_{j=1}^{n} \mathbb{C}^{d_{1}} \quad , \quad \ket{\psi} = \sum_{\alpha_{1}, \ldots, \alpha_{n}} T_{\alpha_{1} \ldots \alpha_{n}} \ket{\alpha_{1} \ldots \alpha_{n}}\,.  \]
More generally, if we split the set of indices into two subsets $A$ and $B$ called \emph{input} and \emph{output} indices, respectively, we can then associate to $T$ the operator
\[ T: \otimes_{j \in A} \mathbb{C}^{d_{j}} \longrightarrow \otimes_{j \in B}\mathbb{C}^{d_{j}} \quad , \quad \sum_{\alpha} T_{\alpha} \ket{\alpha_{B}} \bra{\alpha_{A}} \,. \]
where $\ket{\alpha_{J}} = \otimes_{j \in J} \ket{\alpha_{j}}$ for any subset of indices $J$. Following the graphical description, we  will represent input (resp. output) indices with arrows that will point at (away from) the ball. For instance, in the case of a tensor with five legs $(T_{\alpha_{1} \ldots \alpha_{5}})$ we can consider
\begin{equation*}
    \begin{tikzpicture}[equation,very thick,decoration={
    markings,
    mark=at position 0.5 with {\arrow{>}}}
    ]
    
    \node (B1) at (30:1) {};
    \node  at (30:1.1) {\small $\alpha_{1}$};
    \node (B2) at (120:1) {};
    \node  at (120:1.1) {\small $\alpha_{2}$};
    \node (B3) at (185:1) {};
    \node  at (185:1.1) {\small $\alpha_{3}$};
    \node (B4) at (260:1) {};
    \node  at (260:1.1) {\small $\alpha_{4}$};
    \node (B5) at (-45:1) {};
    \node  at (-45:1.1) {\small $\alpha_{5}$};

    \node[fill=black, circle, scale=0.5] (A) at (0,0) {};
    \draw[postaction={decorate}]  (B1) --  (A);
    \draw[postaction={decorate}] (A) -- (B2);
    \draw[postaction={decorate}] (B3) -- (A);
    \draw[postaction={decorate}] (A) -- (B4);
    \draw[postaction={decorate}] (A) -- (B5);

    \node at (5,0) {$\sum\limits_{\alpha_{1}, \alpha_{2}, \alpha_{3}, \alpha_{4}, \alpha_{5}} T_{\alpha_{1} \ldots \alpha_{5}} \ket{\alpha_{2} \alpha_{4} \alpha_{5}} \bra{\alpha_{1} \alpha_{3}}$};
        
    \end{tikzpicture}\,.
\end{equation*}
We will take advantage of these multiple interpretations to find easy descriptions of tensors. For instance, the tensor $T$ with four indices of dimension two given by $T_{0 0 0 0} = T_{1 1 1 1} =1, T_{0 1 1 0} = T_{1 0 0 1}=-1$ and $T_{\alpha_{1} \alpha_{2} \alpha_{3} \alpha_{4}}=0$ otherwise, can be represented as
\[
\begin{tikzpicture}[equation, very thick ,decoration={
    markings,
    mark=at position 0.5 with {\arrow{>}}}
   ]

        \node (B1) at (60:1) {$\alpha_{1}$};
        \node (B2) at (120:1) {$\alpha_{2}$};
        \node (B3) at (240:1) {$\alpha_{3}$};
        \node (B4) at (300:1) {$\alpha_{4}$};

        \node[fill=black, circle, scale=0.5] (A) at (0,0) {};
        \draw[postaction={decorate}] (A) -- (B1);
        \draw[postaction={decorate}] (B2) -- (A);
        \draw[postaction={decorate}] (A) -- (B3);
        \draw[postaction={decorate}] (B4) -- (A);
        
        \node at (0:0.5) {$T$};
        
    \end{tikzpicture}
    \quad = \quad
\begin{tikzpicture}[equation, very thick ,decoration={
    markings,
    mark=at position 0.5 with {\arrow{>}}}
   ]

        \node (B2) at (120:1) {};
        \node (B3) at (240:1) {};

        \node[fill=black, circle, scale=0.5] (A) at (0,0) {};
        \draw[postaction={decorate}] (B2) -- (A);
        \draw[postaction={decorate}] (A) -- (B3);
        
        \node at (180:0.5) {$Z$};
        
    \end{tikzpicture}   
    \,\, \otimes \,\,
    \begin{tikzpicture}[equation, very thick ,decoration={
    markings,
    mark=at position 0.5 with {\arrow{>}}}
   ]

        \node (B1) at (60:1) {};
        \node (B4) at (300:1) {};

        \node[fill=black, circle, scale=0.5] (A) at (0,0) {};
        \draw[postaction={decorate}] (A) -- (B1);
        \draw[postaction={decorate}] (B4) -- (A);
        
        \node at (0:0.5) {$Z$};
        
    \end{tikzpicture}   
\]
Simple descriptions of tensor can also be obtained by taking linear combinations of tensors having the same number of indexes and the same dimensions. For instance, the tensor $T$ with four indexes of dimension two given by $T_{0 0 0 0} = T_{1 1 1 1} =1$ and $T_{\alpha_{1} \alpha_{2} \alpha_{3} \alpha_{4}}=0$ otherwise, can be represented as
\begin{equation*}
    \begin{tikzpicture}[equation, very thick ,decoration={
    markings,
    mark=at position 0.5 with {\arrow{>}}}
   ]

        \node (B1) at (60:1) {$\alpha_{1}$};
        \node (B2) at (120:1) {$\alpha_{2}$};
        \node (B3) at (240:1) {$\alpha_{3}$};
        \node (B4) at (300:1) {$\alpha_{4}$};

        \node[fill=black, circle, scale=0.5] (A) at (0,0) {};
        \draw[postaction={decorate}] (A) -- (B1);
        \draw[postaction={decorate}] (B2) -- (A);
        \draw[postaction={decorate}] (A) -- (B3);
        \draw[postaction={decorate}] (B4) -- (A);
        \node at (0:0.5) {$T$};
    \end{tikzpicture}
     \quad = \quad \frac{1}{2} \,
\begin{tikzpicture}[equation, very thick ,decoration={
    markings,
    mark=at position 0.5 with {\arrow{>}}}
   ]

        \node (B2) at (120:1) {};
        \node (B3) at (240:1) {};

        \node[fill=black, circle, scale=0.5] (A) at (0,0) {};
        \draw[postaction={decorate}] (B2) -- (A);
        \draw[postaction={decorate}] (A) -- (B3);
        
        \node at (180:0.5) {$\mathbbm{1}$};
        
    \end{tikzpicture}   
    \,\, \otimes \,\,
    \begin{tikzpicture}[equation, very thick ,decoration={
    markings,
    mark=at position 0.5 with {\arrow{>}}}
   ]

        \node (B1) at (60:1) {};
        \node (B4) at (300:1) {};

        \node[fill=black, circle, scale=0.5] (A) at (0,0) {};
        \draw[postaction={decorate}] (A) -- (B1);
        \draw[postaction={decorate}] (B4) -- (A);
        
        \node at (0:0.5) {$\mathbbm{1}$};
        
    \end{tikzpicture} 
    \quad + \quad \frac{1}{2}  \,
   \begin{tikzpicture}[equation, very thick ,decoration={
    markings,
    mark=at position 0.5 with {\arrow{>}}}
   ]

        \node (B2) at (120:1) {};
        \node (B3) at (240:1) {};

        \node[fill=black, circle, scale=0.5] (A) at (0,0) {};
        \draw[postaction={decorate}] (B2) -- (A);
        \draw[postaction={decorate}] (A) -- (B3);
        
        \node at (180:0.5) {$Z$};
        
    \end{tikzpicture}   
    \,\, \otimes \,\,
    \begin{tikzpicture}[equation, very thick ,decoration={
    markings,
    mark=at position 0.5 with {\arrow{>}}}
   ]

        \node (B1) at (60:1) {};
        \node (B4) at (300:1) {};

        \node[fill=black, circle, scale=0.5] (A) at (0,0) {};
        \draw[postaction={decorate}] (A) -- (B1);
        \draw[postaction={decorate}] (B4) -- (A);
        
        \node at (0:0.5) {$Z$};
    \end{tikzpicture} 
\end{equation*}
Finally, we can also represent the previous tensor as the contraction of two tensors, namely
\begin{equation}\label{equa:axampleTensorNotation}
    \begin{tikzpicture}[equation, very thick ,decoration={
    markings,
    mark=at position 0.5 with {\arrow{>}}}
   ]

        \node (B1) at (60:1) {$\alpha_{1}$};
        \node (B2) at (120:1) {$\alpha_{2}$};
        \node (B3) at (240:1) {$\alpha_{3}$};
        \node (B4) at (300:1) {$\alpha_{4}$};

        \node[fill=black, circle, scale=0.5] (A) at (0,0) {};
        \draw[postaction={decorate}] (A) -- (B1);
        \draw[postaction={decorate}] (B2) -- (A);
        \draw[postaction={decorate}] (A) -- (B3);
        \draw[postaction={decorate}] (B4) -- (A);
        \node at (0:0.5) {$T$};
    \end{tikzpicture}
= \begin{tikzpicture}[equation, very thick ,decoration={
    markings,
    mark=at position 0.5 with {\arrow{>}}}
   ]
        \draw[very thick, blue] (0,0) -- (0.5,0);
        \begin{scope}[xshift=0.5cm]
        \node (B1) at (60:1) {};
        \node (B4) at (300:1) {};
        \node[fill=black, circle, scale=0.5] (A) at (0,0) {};
        \draw[postaction={decorate}] (A) -- (B1);
        \draw[postaction={decorate}] (B4) -- (A);
        \end{scope}
        
        \node (B2) at (120:1) {};
        \node (B3) at (240:1) {};
        \node[fill=black, circle, scale=0.5] (A) at (0,0) {};
        \draw[postaction={decorate}] (B2) -- (A);
        \draw[postaction={decorate}] (A) -- (B3);
        
    \end{tikzpicture}
\quad \text{where} \quad
    \begin{tikzpicture}[equation, very thick ,decoration={
    markings,
    mark=at position 0.5 with {\arrow{>}}}
   ]
        \draw[thick, blue, postaction={decorate}] (1,0) -- (0,0);
        \node (B2) at (120:1) {};
        \node (B3) at (240:1) {};
        \node[fill=black, circle, scale=0.5] (A) at (0,0) {};
        \draw[postaction={decorate}] (B2) -- (A);
        \draw[postaction={decorate}] (A) -- (B3);
        
    \end{tikzpicture}
     = \frac{1}{\sqrt{2}} \mathbbm{1} \otimes \bra{0} + \frac{1}{\sqrt{2}} Z \otimes \bra{1} \,\,.
\end{equation}

\subsection{PEPS}

Let us recall the notation and main concepts for Projected Entangled Pair States, or PEPS for short~\cite{Cirac2019, Cirac2021}. Let us consider a finite graph consisting of a finite set of vertices $\Lambda$ and a set of edges $E$. At each vertex $x \in \Lambda$ consider a tensor in the form of an operator

\[
\begin{tikzpicture}[scale=0.3, equation]
\draw (-6.1,0) node{$V_{x} \,  :  (\mathbb{C}^{D})^{\otimes \partial x}  \,\,  \longrightarrow \,\,  \mathbb{C}^{d}$};

\draw (0,-4) node{$V_{x} \, = \, \sum\limits_{k=1}^{d} \,\, \sum\limits_{j_{1}, j_{2}, j_{3}, j_{4}=1}^{D} T_{j_{1}, j_{2}, j_{3}, j_{4}}^{k} \ket{k} \bra{j_{1} j_{2} j_{3} j_{4}}$}; 

\begin{scope}[xshift=-22cm, yshift=-2.5cm]
\draw[ thick] (0,-3) -- (0,3) ;  
\draw[thick] (-3,0) -- (3,0);  
\draw[fill=black!20] (-0.9,-0.9) rectangle (0.9,0.9);  

\draw (0,0) node{$k$}; 
\draw (2.5,1) node{$j_1$}; 
\draw (-1,2.5) node{$j_2$}; 
\draw (-2.5,-1) node{$j_3$}; 
\draw (1,-2.5) node{$j_4$}; 
\end{scope}
\end{tikzpicture}
\]

\noindent Here, $\mathbb{C}^{d}$ is the \emph{physical} space associated with $x$ and each $\mathbb{C}^{D}$ is the \emph{virtual} space corresponding to an edge $e \in \partial x$, the set of edges incident to $x$ (in the figure, we have shown the case $\abs{\partial x} =4$).


For a finite region $X \subset \Lambda$, we can assign the tensors $V_{x}$ to each site $x \in X$ and perform contractions between pairs of sites $x, y \in X$ that share an edge $e$. Specifically, we contract the two virtual indices from $V_{x}$ and $V_{y}$ that are associated with the same edge $e$. The contraction of the PEPS tensors gives a linear map from the virtual edges $\partial X$ connecting $X$ with its complement to the bulk physical Hilbert space, which we denote as
\[ 
V_X : \mathcal{H}_{\partial X} \longrightarrow \mathcal{H}_{X}\,,
\]
where we can identify  $\mathcal{H}_{\partial X} \equiv (\mathbb{C}^{D})^{\otimes \partial X}$ and $\mathcal{H}_{X} \equiv (\mathbb{C}^{d})^{\otimes X}$. The image of $V_X$ is the space of physical states which can be represented by the PEPS with an appropriate choice of boundary condition. 
\[
\begin{tikzpicture}[scale=0.35, equation]
    \begin{scope}[scale=1.5, xshift=6cm,yshift=-0.5cm]
        \draw[step=1cm,black] (-1.9,-1.9) grid (4.9,2.9);
        \draw[step=1cm,blue, very thick] (-1,-1) grid (4,2);
    \foreach \x in {-1,...,4} \foreach \y in {-1,...,2}
    \draw[fill=black!20] (\x-0.15,\y-0.15) rectangle     (\x+0.15,\y+0.15);  
    \end{scope}

    \begin{scope}[scale=1.5, xshift=15.5cm,yshift=-0.5cm]
    \draw[step=1cm,black] (-1.9,-1.9) grid (4.9,2.9);
    \draw[fill=black!20] (-1.15,-1.15) rectangle (4.15,2.15);  
    \draw (1.65,0.5) node{$\mathcal{H}_{X}$}; 
    \end{scope}

    \begin{scope}[scale=1.5, xshift=25cm, yshift=-0.5cm]
    \draw[step=1cm,black!20] (-1.9,-1.9) grid (4.9,2.9);
    \draw (-1.5,-1.5) --  ++(6,0) --  ++(0,4) --  ++(-6,0) --  cycle  {};
    \foreach \x in {-1.5, 4.5} \foreach \y in {-1,...,2}
    \draw[fill=black] (\x,\y) circle [radius = 0.1];   
    \foreach \x in {-1,...,4} \foreach \y in {-1.5,2.5}
    \draw[fill=black] (\x,\y) circle [radius = 0.1];   
    \draw (1.65,0.5) node{$\mathcal{H}_{\partial X}$}; 
    \end{scope}
\end{tikzpicture}
\]
\noindent The \emph{boundary state} on region $X$ is defined as
\begin{equation}
    \rho_{\partial X} = V_X^\dag V_X \in \mathcal{B}(\mathcal{H}_{\partial X}).
\end{equation}
If $\rho_{\partial X}$ has full rank, we say that the PEPS is \emph{injective} on region $X$. If the PEPS is injective on every sufficiently large region $X$, we will simply say that it is injective. We will denote by $J_{\partial X}$ the orthogonal projector onto the support of $\rho_{\partial X}$.

If we replace the physical space $\mathbb{C}^d$ with a space of operators $\mathcal{B}(\mathbb{C}^d)$, then we talk of Projected Entangled-Pair Operators (PEPO) instead. If we fix a basis of matrix units $\{ \dyad{i}{j} \}_{i,j=1}^{d}$ for $\mathcal{B}(\mathbb{C}^d)$ (although we will find convenient to sometimes work with different bases), then the single-site tensor $V_x$ takes the form:
\[
\begin{tikzpicture}[scale=0.3, equation]
\draw (-6,0) node{$V_{x} \,  :  (\mathbb{C}^{D})^{\otimes \partial x}  \,\,  \longrightarrow \,\,  \mathcal{B}(\mathbb{C}^{d})$};

\draw (1,-4) node{$V_{x} \, = \, \sum\limits_{i,j=1}^{d} \dyad{i}{j} \, \otimes \, \sum\limits_{k_{1}, k_{2}, k_{3}, k_{4}=1}^{D} T_{k_{1}, k_{2}, k_{3}, k_{4}}^{i,j} \bra{k_{1} k_{2} k_{3} k_{4}}$}; 

\begin{scope}[xshift=-22cm, yshift=-2.5cm]
\end{scope}
\end{tikzpicture}
\]
In the formula, we have separated the physical indices, given by the matrix units $\dyad{i}{j}$, from the virtual indices. In the figures, we add arrows on the physical indices to indicate which spaces correspond to the ``ket'' and ``bra'' part of $\dyad{i}{j}$, see e.g. equation \eqref{equa:axampleTensorNotation}. A particular simple case of a PEPO is when the lattice is simply a 1D ring of $n$ sites, in which case it is also called a Matrix Product Operator (MPO). These will be the building blocks for our PEPO constructions. A MPO is any operator that can be written as
\[
    \sum_{i_1,\dots,i_n=1}^{d^2} \Tr[ M^{(1)}_{i_1} \cdots M^{(n)}_{i_n} ] B^{(1)}_{i_1} \otimes \cdots \otimes B^{(n)}_{i_n},
\]
where for each site $k=1,\dots, n$ we have fixed a basis $\{ B^{(k)}_i \}_{i=1}^{d^2}$ of $\mathcal{B}(\mathbb{C}^d)$ and a set of $d^2$ matrices $\{ M^{(k)}_i \}_{i=1}^{d^2}$ of dimension $D\times D$. To pass from a MPO to a PEPO representation, it is sufficient to check that the single site tensor
\[ 
    V_{k} = \sum_{i=1}^{d^2}  B^{(k)}_i \otimes \sum_{\alpha, \beta = 1}^D (M^{(k)}_{i})_{\alpha,\beta} \bra{\alpha, \beta},
\]
where $(M^{(k)}_{i})_{\alpha,\beta}$ denotes the matrix units of $M^{(k)}_{i}$, represents the same operator. It will be convenient to have a ``hybrid'' representation of a PEPO, in which the virtual level is still represented as a matrix, i.e. by writing 
\[ 
    V_{k} = \sum_{i=1}^{d^2}  B^{(k)}_i \otimes  M^{(k)}_{i},
\]
as this allows a more direct calculation of the tensor contractions. 

Note that we can always represent a PEPO as a PEPS with physical space $\mathbb{C}^d \otimes \mathbb{C}^d$, by choosing an orthonormal basis of $\mathcal{B}(\mathbb{C}^d)$ in order to identify that space with $\mathbb{C}^d \otimes \mathbb{C}^d$. If we choose the basis given by matrix units $\{ \dyad{i}{j} \}_{i,j=1}^d$, then this identification can be done as follows. Let $\ket{\Psi} = (\sum_{i=1}^d \ket{i,i})^{\otimes \abs{\RR}}$ be a maximally entangled state on $\mathcal{H}_{\RR} \otimes \mathcal{H}_{\RR}$. Each operator $Q \in \mathcal{B}(\mathcal{H}_{\RR})$ can be represented in ``vector form'' by 
\[\ket{Q} := (Q\otimes \mathbbm{1}) \ket{\Psi} \in \mathcal{H}_{\RR} \otimes \mathcal{H}_{\RR}\,.\] 
The map $Q \mapsto \ket{Q}$ is an isometry between $\mathcal{B}(\mathcal{H}_{\RR})$ with the Hilbert-Schmidt scalar product and $\mathcal{H}_{\RR} \otimes \mathcal{H}_{\RR}$.

We should mention that, if $\rho \in \mathcal{B}(\mathcal{H}_\RR)$ is a positive PEPO (in the sense that it is a positive operator in the range of the map $V_{\RR}$ for some region $\RR$), it is not always true that it admits a \emph{local purification}, in the sense that there exists a PEPS on a doubled physical space $\mathcal{H}_{\RR} \otimes \mathcal{H}_{\RR}$ such that we recover $\rho$ when we trace out one of the copies of $\mathcal{H}_{\RR}$. 
On the other hand, in the case when $\rho^{1/2}$ is a PEPO, then $\ket{\rho^{1/2}}$ is a PEPS and a purification of $\rho$. Therefore, when studying the case in which $\rho_\beta$ is the Gibbs state of a local, commuting Hamiltonian at inverse temperature $\beta$, we will write a PEPO representation for $\rho_\beta^{1/2}$ (which is proportional to $\rho_{\beta/2}$ up to normalization), and from it we will obtain the PEPS representation for the thermofield double state $\ket*{\rho_{\beta}^{1/2}}$.

Let us briefly discuss some characteristics that the PEPS description of the thermofield double state will have.
It will be convenient for us to consider a more general definition of PEPS in which the underlying graph $(\Lambda, E)$ can be a multigraph. This means that there might be multiple edges joining the same pair of different of vertices of $\Lambda$. Of course, we might combine virtual indices joining the same pair of sites into only one virtual index,  adhering to the original definition of PEPS on graphs, as we represent in the next picture: 
\[
\begin{tikzpicture}[equation,scale=1, decoration={markings, mark=at position 0.65 with {\arrow[very thick, scale=1.5]{>}}}]

\begin{scope}

    \node  (A) at (0,0) {};
    \node  (B) at (1,0.5) {};
    \node  (C) at (1.2,-0.6) {};
    \node  (D) at (2.2,-0.2) {};
    \node  (E) at (3,0.8) {};
    \node  (F) at (3.6,-0.1) {};

    \draw[fill=black!20] ($(A)+(-0.1,-0.1)$) rectangle ($(A)+(0.1,0.1)$);
    \draw[fill=black!20] ($(B)+(-0.1,-0.1)$) rectangle ($(B)+(0.1,0.1)$);
    \draw[fill=black!20] ($(C)+(-0.1,-0.1)$) rectangle ($(C)+(0.1,0.1)$);
    \draw[fill=black!20] ($(D)+(-0.1,-0.1)$) rectangle ($(D)+(0.1,0.1)$);
    \draw[fill=black!20] ($(E)+(-0.1,-0.1)$) rectangle ($(E)+(0.1,0.1)$);
    \draw[fill=black!20] ($(F)+(-0.1,-0.1)$) rectangle ($(F)+(0.1,0.1)$);
    
    \path [-, thick] (A) edge[bend right=30] node {} (B);
    \path [-, thick] (A) edge[bend left=30] node {} (B);
    \path [-, thick] (A) edge[bend right=10] node {} (C);
    \path [-, thick] (B) edge[bend left=0] node {} (C);
    \path [-, thick] (B) edge[bend left=10] node {} (D);
    \path [-, thick] (B) edge[bend left=30] node {} (E);
    \path [-, thick] (C) edge[bend left=30] node {} (D);
    \path [-, thick] (C) edge[bend right=20] node {} (D);
    \path [-, thick] (C) edge[bend right=30] node {} (F);
    \path [-, thick] (D) edge[bend right=60] node {} (E);
    \path [-, thick] (D) edge[bend right=10] node {} (E);
    \path [-, thick] (D) edge[bend left=40] node {} (E);
    \path [-, thick] (E) edge[bend left=50] node {} (F);
    
\end{scope}

\begin{scope}[xshift=6cm]
    
    \node  (A) at (0,0) {};
    \node  (B) at (1,0.5) {};
    \node  (C) at (1.2,-0.6) {};
    \node  (D) at (2.2,-0.2) {};
    \node  (E) at (3,0.8) {};
    \node  (F) at (3.6,-0.1) {};
    
    \draw[fill=black!20] ($(A)+(-0.1,-0.1)$) rectangle ($(A)+(0.1,0.1)$);
    \draw[fill=black!20] ($(B)+(-0.1,-0.1)$) rectangle ($(B)+(0.1,0.1)$);
    \draw[fill=black!20] ($(C)+(-0.1,-0.1)$) rectangle ($(C)+(0.1,0.1)$);
    \draw[fill=black!20] ($(D)+(-0.1,-0.1)$) rectangle ($(D)+(0.1,0.1)$);
    \draw[fill=black!20] ($(E)+(-0.1,-0.1)$) rectangle ($(E)+(0.1,0.1)$);
    \draw[fill=black!20] ($(F)+(-0.1,-0.1)$) rectangle ($(F)+(0.1,0.1)$);
    
    \path [-, ultra  thick, blue] (A) edge[bend left=10] node {} (B);
    \path [-, thick] (A) edge[bend right=10] node {} (C);
    \path [-, thick] (B) edge[bend left=0] node {} (C);
    \path [-, thick] (B) edge[bend left=10] node {} (D);
    \path [-, thick] (B) edge[bend left=30] node {} (E);
    \path [-, ultra  thick, blue] (C) edge[bend right=10] node {} (D);
    \path [-, thick] (C) edge[bend right=30] node {} (F);
    \path [-, ultra thick, blue] (D) edge[bend left=10] node {} (E);
    \path [-, thick] (E) edge[bend left=50] node {} (F);
    
\end{scope}

\end{tikzpicture}
\]
However, as we will see in Section \ref{sec:peps-quantum-double} when finding the PEPS description of the thermofield double state, it is more \emph{natural} and useful using the representation with the multigraph, especially when applying results such as Theorem \ref{theo:non-injective-approx-fact}, that require a suitable arrangement of the set of all virtual indices of $V_{X}$ for a region $X$ into several subsets.  In the setting of the Quantum Double Model, recall that $\Lambda_{N}$ is the set of edges $\EE_{N}$ of the squared lattice on the torus. It is important not to confuse this set of edges with the set of edges $E_{N}$ joining the sites in $\Lambda_{N}$ and define the PEPS. In this case, every site $x \in \Lambda_{N}$ will be connected to other four sites via two edges of $E_{N}$ as in the next picture:

\[
\begin{tikzpicture}[equation,scale=0.9]
    
    \def\r{0.4}
    
\begin{scope}[xshift=-6cm]
 \draw[gray, thin] (0,0) grid (3,3);
    \draw[gray, postaction=torus horizontal] (0,0) -- (3,0);
    \draw[gray, postaction=torus horizontal] (0,3) -- (3,3);
    \draw[gray, postaction=torus vertical] (0,0) -- (0,3);
    \draw[gray, postaction=torus vertical] (3,0) -- (3,3);
    
    
    \foreach \x in {0.5,1.5,2.5} {
         \foreach \y in {0,1,2,3} {
         \draw[fill=black!20,xshift=\x cm,yshift=\y cm] (-0.1,-0.1) rectangle (0.1,0.1);
     }}
     
     \foreach \x in {0,1,2,3} {
         \foreach \y in {0.5,1.5,2.5} {
         \draw[fill=black!20,xshift=\x cm,yshift=\y cm] (-0.1,-0.1) rectangle (0.1,0.1);
     }}
     
\end{scope}
    
\begin{scope}
 \draw[gray, thin] (0,0) grid (3,3);
    \draw[gray, postaction=torus horizontal] (0,0) -- (3,0);
    \draw[gray, postaction=torus horizontal] (0,3) -- (3,3);
    \draw[gray, postaction=torus vertical] (0,0) -- (0,3);
    \draw[gray, postaction=torus vertical] (3,0) -- (3,3);
    
    
    \foreach \x in {0.5,1.5,2.5} {
         \foreach \y in {0,1,2,3} {
         \draw[fill=black!20,xshift=\x cm,yshift=\y cm] (-0.1,-0.1) rectangle (0.1,0.1);
     }}
     
     \foreach \x in {0,1,2,3} {
         \foreach \y in {0.5,1.5,2.5} {
         \draw[fill=black!20,xshift=\x cm,yshift=\y cm] (-0.1,-0.1) rectangle (0.1,0.1);
     }}
    
    
    \draw[black, thick,xshift=0 cm,yshift=0 cm] ({\r},{0}) -- ({0},{\r});
    
    \draw[black, thick,xshift=3 cm,yshift=0 cm] ({0},{\r}) -- ({-\r},{0});
    
    \draw[black, thick,xshift=3 cm,yshift=3 cm] ({-\r},{0}) -- ({0},{-\r});
    
    \draw[black, thick,xshift=0 cm,yshift=3 cm] ({0},{-\r}) -- ({\r},{0});
    
    
     \foreach \x in {1,2} {
     
        \draw[black, thick,xshift=\x cm,yshift=0 cm] ({\r},{0}) -- ({0},{\r}) -- ({-\r},{0}) ;
        
        \draw[black, thick,xshift=\x cm,yshift=3 cm] ({\r},{0}) -- ({0},{-\r}) -- ({-\r},{0});
     }
     
     \foreach \y in {1,2} {
        \draw[black, thick,xshift=0 cm,yshift=\y cm] ({0},{\r}) -- ({\r},{0}) -- ({0},{-\r}) ;
        
        \draw[black, thick,xshift=3 cm,yshift=\y cm] ({0},{\r}) -- ({-\r},{0}) -- ({0},{-\r});
     }
     
     \foreach \x in {1,2} {
         \foreach \y in {1,2} {
         \draw[black, thick,xshift=\x cm,yshift=\y cm] ({\r},{0}) -- ({0},{-\r}) -- ({-\r},{0}) -- ({0},{\r}) -- cycle;
     }}
    
     \foreach \x in {0.5,1.5,2.5} {
         \foreach \y in {0.5,1.5,2.5} {
         \draw[black, thick,xshift=\x cm,yshift=\y cm] ({\r},{0}) -- ({0},{-\r}) -- ({-\r},{0}) -- ({0},{\r}) -- cycle;
     }}
    
\end{scope}

\end{tikzpicture}
\]
If $G$ is the finite group from which the Quantum Double Model will be constructed, then the individual tensor $V_{x}$ will have as physical space $\mathbb{C}^{\abs{G}}$, and each virtual space will correspond to $\mathbb{C}^{\abs{G}}$:
\[
\begin{tikzpicture}[equation,scale=0.9, decoration={markings, mark=at position 0.65 with {\arrow[very thick, scale=1.5]{>}}}]
    
    \def\r{0.4}
    
    \begin{scope}[xshift=-2cm, yshift=1cm]
    
    \draw (-0.5,1) node {$V_{x}:(\mathbb{C}^{\abs{G}})^{\otimes 8} \longrightarrow \mathbb{C}^{\abs{G}}$};
    
     \draw (-1,0) node {$V_{x} \,\, \equiv$}; 
    
    \draw[black, thick,xshift=-0.5cm] ({\r/2},{-\r/2}) -- ({\r},{0}) -- ({\r/2},{\r/2});
    
    \draw[black, thick,yshift=-0.5cm] ({\r/2},{\r/2}) -- ({0},{\r}) -- ({-\r/2},{\r/2});
    
    \draw[black, thick,xshift=0.5cm] ({-\r/2},{\r/2}) -- ({-\r},{0}) -- ({-\r/2},{-\r/2});
    
    \draw[black, thick,yshift=0.5cm] ({-\r/2},{-\r/2}) -- ({0},{-\r}) -- ({\r/2},{-\r/2});
    
    \draw[fill=black!20] (-0.1,-0.1) rectangle (0.1,0.1);
    \end{scope}


\begin{scope}[xshift=3cm, yshift=0.5cm]

\draw (-1,1) node {$V_{X} \,\, \equiv$}; 

 \draw[gray, thin] (0,0) grid (2,2);

    
    \foreach \x in {0.5,1.5} {
         \foreach \y in {0,1,2} {
         
         \begin{scope}[xshift=\x cm, yshift=\y cm]
          \draw[black, thick,xshift=-0.5cm] ({\r/2},{-\r/2}) -- ({\r},{0}) -- ({\r/2},{\r/2});
    
        \draw[black, thick,yshift=-0.5cm] ({\r/2},{\r/2}) -- ({0},{\r}) -- ({-\r/2},{\r/2});
    
        \draw[black, thick,xshift=0.5cm] ({-\r/2},{\r/2}) -- ({-\r},{0}) -- ({-\r/2},{-\r/2});
    
        \draw[black, thick,yshift=0.5cm] ({-\r/2},{-\r/2}) -- ({0},{-\r}) -- ({\r/2},{-\r/2});
    
        \draw[fill=black!20] (-0.1,-0.1) rectangle (0.1,0.1);
        \end{scope}
    
     }}
     
     \foreach \x in {0,1,2} {
         \foreach \y in {0.5,1.5} {
         
         \begin{scope}[xshift=\x cm, yshift=\y cm]
          \draw[black, thick,xshift=-0.5cm] ({\r/2},{-\r/2}) -- ({\r},{0}) -- ({\r/2},{\r/2});
    
        \draw[black, thick,yshift=-0.5cm] ({\r/2},{\r/2}) -- ({0},{\r}) -- ({-\r/2},{\r/2});
    
        \draw[black, thick,xshift=0.5cm] ({-\r/2},{\r/2}) -- ({-\r},{0}) -- ({-\r/2},{-\r/2});
    
        \draw[black, thick,yshift=0.5cm] ({-\r/2},{-\r/2}) -- ({0},{-\r}) -- ({\r/2},{-\r/2});
    
        \draw[fill=black!20] (-0.1,-0.1) rectangle (0.1,0.1);
        \end{scope}
        
     }}
 
\end{scope}

\end{tikzpicture}
\]

\subsection{Parent Hamiltonian}\label{subsec:parentHam}
A \emph{parent Hamiltonian} of a PEPS is a local, frustration-free Hamiltonian whose local ground state spaces $W_{\RR}$ coincide with the range of $V_{\RR}$ for all sufficiently large regions $\RR$.

There are well-known conditions that can be imposed on the local PEPS tensors $V_x$ that ensure that a parent Hamiltonian exists (in which case, it will not be unique), one of which is injectivity, and which can be generalized to \emph{$G$-injectivity} and \emph{MPO-injectivity}. Since the PEPS we will consider will not satisfy any of them, we omit further details in this direction, and we will  prove directly the existence of a parent Hamiltonian in Section~\ref{sec:parent-hamiltonian}.

For a given PEPS, there is a canonical construction  of a local and frustration-free Hamiltonian whose local ground spaces contain the range of $V_{\mathcal{R}}$. For every finite subset $X$ of $\Lambda$, let $P_{X}:\mathcal{H}_{X} \longrightarrow \mathcal{H}_{X}$ be the orthogonal projector onto $\Im(V_{X})$. Note that if $X \subset Y$ then \mbox{$\Im(V_{Y}) \subset \Im(V_{X}) \otimes \mathcal{H}_{Y \setminus X}$}. Thus, the projectors satisfy the frustration-free condition $P_{X}P_{Y}=P_{Y}P_{X}=P_{Y}$, or also, $P_{X}^{\perp} \geq P_{Y}^{\perp}$.

Next, let us fix a certain family $\XXX$ of subsets of $\Lambda$ having small range, e.g. rectangular regions in $\Lambda = \mathbb{Z}^{2}$ of dimensions $r \times r$ for a fixed value $r$. Then, consider the local interaction defined by the operators  $P_{X}^{\perp}:=\mathbbm{1}_{X} - P_{X} \, , \, X \in \mathcal{X}$. Note that for each finite region $R$, the local Hamiltonian
\[ H_{\RR} = \sum_{X \in \XXX \, , \, X \subset \RR} P_{X}^{\perp}\]
satisfies $\Im(V_{R}) \subset \ker{H_{R}}$, so that the PEPS is in the ground space of this Hamiltonian. For this to be a parent Hamiltonian, we need a condition ensuring that this is indeed an equality for large enough regions. We will now show that such condition can be obtained starting from the Martingale Condition (see Definition \ref{Defi:MartingaleCondition}). As the result is not specific to PEPS, we will state it here in its full generality.

Let us consider the setting of a quantum spin system over a finite set $\Lambda$ that we presented at the beginning of Section \ref{subsec:torusSettingAndGap}. Let us assume that for each finite subset $X$ we have an orthogonal projector $P_{X}$ onto a subspace of $\mathcal{H}_{X}$ satisfying the frustration-free condition $P_{X}P_{Y} = P_{Y}P_{X}=P_{Y}$ for every pair of subsets $X \subset Y$. In the case of a PEPS, these projections will be the ones onto $\Im(V_{X})$. 

\begin{Lemm}\label{Prop:KernelImageEqualityRises}
Under the setting just described, fix a family $\mathcal{X}$ of finite subsets of $\Lambda$, and consider the local interactions $P_{X}^{\perp} = \mathbbm{1} - P_{X}$, $X \in \mathcal{X}$ defining for each finite $\RR \subset \Lambda$ the local Hamiltonian  
\[ H_{\RR} = \sum_{\substack{X \in \mathcal{X}\, , \,  X \subset \RR}} P_{X}^{\perp}\,. \]
If $\RR_{1}, \RR_{2} \subset \Lambda$ satisfy the following properties: 
\begin{enumerate}
\item[(i)] $\ker(H_{\RR_{i}}) = \Im(P_{\RR_{i}})$ for $i=1,2$,
    \item[(ii)]  $\|P_{\RR_{1} \cup \RR_{2}} - P_{\RR_{1}} P_{\RR_{2}}\| < 1$,
    \item[(iii)] for every $X \in \mathcal{X}$ with $X \subset \RR_{1} \cup \RR_{2}$ we have that $X \subset \RR_{1}$ or $X \subset \RR_{2}$,
\end{enumerate}
then $\ker(H_{\RR_{1} \cup \RR_{2}}) = \Im(P_{\RR_{1} \cup \RR_{2}})$.
\end{Lemm}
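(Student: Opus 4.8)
The plan is to establish the two inclusions making up the identity $\ker(H_{\RR_1 \cup \RR_2}) = \Im(P_{\RR_1 \cup \RR_2})$ separately. Writing $\RR := \RR_1 \cup \RR_2$, the inclusion $\Im(P_\RR) \subseteq \ker(H_\RR)$ is immediate: for every $X \in \mathcal{X}$ with $X \subseteq \RR$ the frustration-free condition gives $P_X P_\RR = P_\RR$, hence $P_X^\perp P_\RR = 0$, and summing over such $X$ yields $H_\RR P_\RR = 0$. The content of the lemma is therefore the reverse inclusion, and this is where hypotheses (i)--(iii) are used.

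First I would use (iii) to compare $H_\RR$ with $H_{\RR_1}$ and $H_{\RR_2}$. Setting $\mathcal{X}_i := \{ X \in \mathcal{X} : X \subseteq \RR_i\}$, hypothesis (iii) says that $\{X \in \mathcal{X} : X \subseteq \RR\} = \mathcal{X}_1 \cup \mathcal{X}_2$, so that
\[ H_\RR \;=\; \sum_{X \in \mathcal{X}_1 \cup \mathcal{X}_2} P_X^\perp \;\geq\; \sum_{X \in \mathcal{X}_1} P_X^\perp \;=\; H_{\RR_1}\,, \]
the inequality being valid because all the $P_X^\perp$ are positive semidefinite; symmetrically $H_\RR \geq H_{\RR_2}$. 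Consequently, if $\ket{\phi}$ lies in $\ker(H_\RR)$ then $\expval{H_{\RR_i}}{\phi} \leq \expval{H_\RR}{\phi} = 0$, and since $H_{\RR_i} \geq 0$ this forces $\ket{\phi} \in \ker(H_{\RR_1}) \cap \ker(H_{\RR_2})$. Invoking hypothesis (i) as an equality of spaces, we get $\ket{\phi} \in \Im(P_{\RR_1}) \cap \Im(P_{\RR_2})$.

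The final step is to identify $\Im(P_{\RR_1}) \cap \Im(P_{\RR_2})$ with $\Im(P_\RR)$, and here I would apply Lemma~\ref{Lemm:MartingaleProjector} with $U := \Im(P_{\RR_1})$, $V := \Im(P_{\RR_2})$ and $W := \Im(P_\RR)$. The frustration-free relations $P_\RR P_{\RR_i} = P_{\RR_i} P_\RR = P_\RR$ show that $W \subseteq U \cap V$, so the lemma applies; its final assertion states that $c := \| P_{\RR_1} P_{\RR_2} - P_\RR \|$ lies in $[0,1)$ if and only if $U \cap V = W$. Hypothesis (ii) is exactly the statement $c < 1$, whence $\Im(P_{\RR_1}) \cap \Im(P_{\RR_2}) = \Im(P_\RR)$ and therefore $\ket{\phi} \in \Im(P_\RR)$, as required.

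There is no genuinely hard step: the proof is a short assembly of the three hypotheses together with the already-proven Lemma~\ref{Lemm:MartingaleProjector}. The only places that need a little care are the use of (iii) to split the interaction terms of $H_\RR$ exactly into those of $H_{\RR_1}$ and $H_{\RR_2}$ (with a harmless overlap on $\mathcal{X}_1 \cap \mathcal{X}_2$, which only adds further positive terms), and making sure that (i) is used as the equality $\ker(H_{\RR_i}) = \Im(P_{\RR_i})$ rather than merely the trivial inclusion.
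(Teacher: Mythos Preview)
Your proof is correct and follows essentially the same approach as the paper's own proof: use (iii) to reduce $\ker(H_\RR)$ to $\ker(H_{\RR_1}) \cap \ker(H_{\RR_2})$, invoke (i) to identify these with $\Im(P_{\RR_i})$, and then apply Lemma~\ref{Lemm:MartingaleProjector} together with (ii) to conclude $\Im(P_{\RR_1}) \cap \Im(P_{\RR_2}) = \Im(P_\RR)$. The only cosmetic difference is that the paper writes $\ker(H_\RR) = \bigcap_{X} \ker(P_X^\perp)$ directly as an intersection, whereas you obtain the inclusion $\ker(H_\RR) \subseteq \ker(H_{\RR_i})$ via the operator inequality $H_\RR \geq H_{\RR_i}$; both arguments are equivalent.
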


\begin{proof}
Let us denote $\RR:= \RR_{1} \cup \RR_{2}$. Applying $(iii)$ we get
\begin{align*}
\ker{(H_{\RR})} & = \bigcap_{X \in \XXX \, , \, X \subset \RR} \ker{(P_{X}^{\perp})} \\[2mm]
& = \bigcap_{X \in \XXX \, , \, X \subset \RR_{1}} \ker{(P_{X}^{\perp})} \,\, \cap \,\, \bigcap_{X \in \XXX \, , \, X \subset \RR_{2}} \ker{(P_{X}^{\perp})}  \,\, = \,\, \ker (H_{\RR_{1}})  \cap \ker (H_{\RR_{2}})\,.
\end{align*}
Combining the previous equality with the frustration-free condition $P_{\RR} P_{\RR_{i}} = P_{\RR}$ and with $(i)$, we obtain that
\begin{equation*}
  \operatorname{Im}(P_{\RR}) \subset  \ker (H_{\RR}) = \operatorname{Im}(P_{\RR_{1}}) \cap \operatorname{Im}(P_{\RR_{2}})\,.
\end{equation*}
It remains to prove that this is indeed a chain of equalities, which is a consequence of the last statement of Lemma \ref{Lemm:MartingaleProjector} together with condition $(ii)$.
\end{proof}

Let us now restrict to the case in which $\Lambda$ is the torus $\Lambda_{N}$ and the family of projectors $P_X$ satisfies the Martingale Condition from Definition \ref{Defi:MartingaleCondition} with decay function $\delta(\ell)$.
Let $3 \leq r \in \mathbb{N}$ with $N \geq 2(1+r)$ and such that $\delta(\ell) <1/2$ for every $\ell \geq r-2$. Then, let us consider the family $\XXX = \mathcal{F}_{N,r}^{rect}$ of all proper rectangles in $\mathcal{F}_{N}$ having at most $r$ plaquettes per row and per column
\[
    \begin{tikzpicture}[equation, scale=0.3, baseline={([yshift=0.5ex]current bounding box.center)}]
    \draw[black, very thick] (0,0) grid (4,3);
    \draw[<->] (-0.6,0) -- (-0.6,1.5) node[left, black] {\small a} -- (-0.6, 3); 
    \draw[<->] (0,-0.6) -- (2, -0.6) 
        node[below, black] {\small $b$} -- (4,-0.6);
    \end{tikzpicture}
    \hspace{2cm} 1 \leq a,b \leq r\,,
\]
and the set of local interactions $(P_{X}^{\perp})_{X \in \XXX}$ where $P_{X}^{\perp} = \mathbbm{1} - P_{X}$.

\begin{Prop}\label{Prop:parentHamiltonianProperty}
Under the previous hypothesis, for every rectangular region $\RR \in \mathcal{F}_{N}$ containing at least $r$ plaquettes per row and per column we have that the associated Hamiltonian  
\[ H_{\RR} = \sum_{X \in \XXX , X \subset \RR} P_{X}^{\perp} \quad \quad \mbox{satisfies} \quad \quad \ker(H_{\RR}) = \operatorname{Im}(P_{\RR})\,. \]
In other words, $P_{\RR}$ is the orthogonal projector onto the ground state space of $H_{\RR}$.
\end{Prop}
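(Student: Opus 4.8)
The plan is to prove, more generally, that $\ker(H_\RR)=\Im(P_\RR)$ for \emph{every} $\RR\in\mathcal{F}_N$, which exhausts the cases, since such an $\RR$ is a proper rectangle, a cylinder, or the whole torus (the hypothesis that $\RR$ have at least $r$ plaquettes per row and column is not even needed). Throughout, $P_X$ denotes the projector onto $\Im(V_X)$, and the frustration-free relations $P_XP_Y=P_YP_X=P_Y$ for $X\subset Y$ give $\Im(P_\RR)\subseteq\Im(P_X)=\ker(P_X^\perp)$ for every $X\subset\RR$, hence $\Im(P_\RR)\subseteq\ker(H_\RR)$ always; the work lies in the reverse inclusion, which Lemma~\ref{Prop:KernelImageEqualityRises} provides once its three hypotheses are checked. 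I grow the region in three stages, invoking at each step the corresponding part of the Martingale Condition~\ref{Defi:MartingaleCondition} that the family $(P_X)_X$ is assumed to satisfy.

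\emph{Proper rectangles.} If $X\in\XXX$ then $P_X^\perp$ is one of the summands of $H_X$, so $\ker(H_X)\subseteq\ker(P_X^\perp)=\Im(P_X)$ and equality holds. For a general proper rectangle $\RR$ of dimensions $N_1\times N_2$ with $N_1,N_2\ge 2$, I argue by strong induction on $N_1+N_2$. If $N_1,N_2\le r$ then $\RR\in\XXX$; otherwise, say $N_1>r$, identify $\RR$ with $[0,N_1]\times[0,N_2]$ and split it as $\RR=ABC$ with $A=[0,1]\times[0,N_2]$, $B=[1,r]\times[0,N_2]$, $C=[r,N_1]\times[0,N_2]$, so that $\RR_1:=AB=[0,r]\times[0,N_2]$ and $\RR_2:=BC=[1,N_1]\times[0,N_2]$ are proper rectangles with $\RR_1\cup\RR_2=\RR$ and $\RR_1\cap\RR_2=B$ having $r-1$ plaquettes along the splitting direction. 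Hypothesis (i) of Lemma~\ref{Prop:KernelImageEqualityRises} holds for $\RR_1,\RR_2$ by the induction hypothesis (their dimension sums are $r+N_2<N_1+N_2$ and $(N_1-1)+N_2<N_1+N_2$); hypothesis (ii) holds since the Martingale Condition~\ref{Defi:MartingaleCondition}(i) gives $\|P_\RR-P_{\RR_1}P_{\RR_2}\|\le\delta(r-1)<\tfrac{1}{2}<1$, using $\delta(\ell)<\tfrac{1}{2}$ for $\ell\ge r-2$; and hypothesis (iii) holds because any $X\in\XXX$ has at most $r$ plaquettes in each direction, so an $X\subset\RR$ contained in neither $\RR_1$ nor $\RR_2$ would have to extend past both $x=r$ and $x=1$, requiring at least $r+1$ plaquettes in the first direction — impossible. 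Hence $\ker(H_\RR)=\Im(P_\RR)$.

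\emph{Cylinders and the torus.} For a cylinder $\mathcal{C}=\mathbb{S}_N\times[a,b]$ (the other orientation being symmetric), split the wrapping circle cyclically into arcs $A,B,C,B'$ with $|B|=|B'|=r-1$ and $|A|,|C|\ge 1$, which is possible because $N\ge 2(1+r)$; then $\RR_1:=(B'AB)\times[a,b]$ and $\RR_2:=(BCB')\times[a,b]$ are proper rectangles with $\RR_1\cup\RR_2=\mathcal{C}$, so Lemma~\ref{Prop:KernelImageEqualityRises}(i) holds by the previous stage, (ii) holds by the Martingale Condition~\ref{Defi:MartingaleCondition}(ii) which gives $\|P_\mathcal{C}-P_{\RR_1}P_{\RR_2}\|\le\delta(r-1)<1$, and (iii) holds because an $X\in\XXX$ occupies a single arc of at most $r$ plaquettes in the wrapping direction, hence cannot meet both $A$ and $C$ (separated on each side by an arc of $r-1$ plaquettes, so that meeting both would take $\ge r+1$ plaquettes); thus $\ker(H_\mathcal{C})=\Im(P_\mathcal{C})$. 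The torus is identical: splitting one wrapping direction into arcs $A,B,C,B'$ with $|B|=|B'|=r-1$, $|A|,|C|\ge 1$, the regions $\RR_1,\RR_2$ are now cylinders, handled by the cylinder stage, with hypothesis (ii) coming from the Martingale Condition~\ref{Defi:MartingaleCondition}(iii) and (iii) from the same no-straddling argument. Since $H_\RR\ge 0$ and $\{0\}\ne\Im(P_\RR)=\ker(H_\RR)$, the ground space of $H_\RR$ is exactly $\Im(P_\RR)$, as claimed.

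The argument is organizational — the analytic content sits in Lemmas~\ref{Lemm:MartingaleProjector} and~\ref{Prop:KernelImageEqualityRises} and in the assumed Martingale Condition. The step needing care is the geometric bookkeeping for hypothesis (iii) of Lemma~\ref{Prop:KernelImageEqualityRises} in each of the three geometries: one must pick overlaps whose intersection has exactly $r-1$ plaquettes in the relevant direction, so that no element of $\XXX$ (which has at most $r$ plaquettes per side) can straddle it, while keeping that intersection large enough that $\delta$ on it is $<\tfrac{1}{2}$; the bound $N\ge 2(1+r)$ is precisely what leaves room for the two cyclic splits. One should also verify the minor points that the proper-rectangle recursion actually bottoms out in $\XXX$ and that all regions produced remain valid rectangles of dimensions $\ge 2$.
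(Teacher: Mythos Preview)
Your proof is correct and follows essentially the same strategy as the paper: repeatedly invoke Lemma~\ref{Prop:KernelImageEqualityRises} with the Martingale Condition supplying hypothesis~(ii) and a plaquette-count argument supplying hypothesis~(iii). The only differences are organizational—you handle proper rectangles, cylinders, and the torus in three separate stages with a fixed overlap of $r-1$ plaquettes (and thereby obtain the conclusion for every $\RR\in\mathcal{F}_N$, not just those with at least $r$ plaquettes per side), whereas the paper runs a single induction on the sum of the two dimensions using variable overlap sizes ($b-2$ for proper rectangles, $\lfloor N/2-1\rfloor$ for cylinders and the torus).
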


\begin{proof}
We are going to prove that every rectangle $\RR$ having $a$ plaquettes per row and $b$ per column with $a,b \geq r$ satisfies $\ker{(H_{\RR})} = \operatorname{Im}(P_{\RR})$ arguing by induction on $a+b$. The first case is $a+b=2r$, for which we necessarily have $a=b=r$ and so $\RR \in \XXX$. In this case, the frustration-free condition of the projectors and the fact that $P_{\RR}^{\perp}$ is one of the summands of $H_{\RR}$ immediately yields the equality.

Let us assume that $a+b>2r$ and that the claim holds for all rectangular regions $\mathcal{R'}$ with dimensions $a',b'$ satisfying $a'+b' < a+b$. We claim that there exist rectangular subregions $\mathcal{R}_{1}, \mathcal{R}_{2} \subset \mathcal{R}$ such that:
\begin{itemize}
    \item[(i)] $\mathcal{R} = \mathcal{R}_{1} \cup \mathcal{R}_{2}$,
    \item[(ii)]  $\mathcal{R}_{j}$ has dimensions $a_{j}, b_{j}$ satisfying $a_{j} + b_{j} < a+b$ for each $j=1,2$,
    \item[(iii)] If $X \in \XXX$ is contained in $\mathcal{R}$, then $X \subset \mathcal{R}_{1}$ or $X \subset \mathcal{R}_{2}$,
    \item[(iv)] $\| P_{\mathcal{R}_{1}} P_{\mathcal{R}_{2}} - P_{\mathcal{R}} \| < 1$.
\end{itemize}
If this claim holds, then by Lemma \ref{Prop:KernelImageEqualityRises} we immediately conclude that $\ker(H_{\RR}) = \Im(P_{\RR})$, and so the proof is finished. Let us the show the validity of the claim distinguishing three possible cases according to whether the region $\RR$ is a proper rectangle, a cylinder or the whole torus.

\emph{Case 1}: If $\RR$ is a proper rectangle, we can assume without loss of generality that $b>r$ where recall that $b$ is the number of plaquettes of each row. Then, we split $\RR$ along the rows into  three disjoint parts $A,B,C$ as in the next picture

\[
\begin{tikzpicture}[equation, scale=0.8]

\begin{scope}[scale=0.5]
\draw[thick, white!70!black] (0,0) grid (6,4);

\draw[<->] (0,-0.5) -- (3,-0.5) node[below]{$b$} -- (6,-0.5);

\draw[<->] (-0.5,0) -- (-0.5,2) node[left]{$a$} -- (-0.5,4);

\end{scope}

\begin{scope}[scale=0.5, xshift=10cm]
\draw[thick, white!70!black] (0,0) grid (6,4);
\draw[very thick, blue] (1,0) grid (5,4);
\draw (0.5,1.5) node{\footnotesize $A$};
\draw (3.5,1.5) node{\footnotesize $B$};
\draw (5.5,1.5) node{\footnotesize $C$};

\draw[<->] (1,-0.5) -- (5,-0.5);
\draw (3,-1) node{$b-2$};
\end{scope}

\end{tikzpicture}
\]
so that $\mathcal{R}_{1}=AB$ and $\mathcal{R}_{2} = BC$ are proper rectangles of dimensions $a$ and $b-1$, and thus they satisfy (i)-(ii). They also satisfy (iii), since if $X \subset \RR$ is a rectangle contained neither in $\RR_{1}$ nor in $\RR_{2}$, it must have $b$ plaquettes per row, but $b>r$, so $X$ cannot belong to $\XXX$.  Since  $\mathcal{R}_{1} \cap \mathcal{R}_{2} = B$ is again a rectangle of dimensions $a$ and $b-2$, we deduce from the Martingale Condition that
\[ \| P_{\RR_{1}} P_{\RR_{2}} - P_{\RR_{1} \cup \RR_{2}}  \| \, \leq \, \delta(b-2) < 1  \,. \]

\emph{Case 2:} If $\mathcal{R}$ is a cylinder, we can assume without loss of generality that its border lies on the horizontal sides, so that it contains $N$ plaquettes per row. We then split $\mathcal{R}$ along the rows into four disjoint regions $A, B, C, B'$ as in the next picture, where $A$ and $C$ correspond to columns of horizontal edges, and $B, B'$ have horizontal dimension greater than $\lfloor N/2-1 \rfloor$ 

\begin{equation*}
\begin{tikzpicture}[equation, scale=0.8]

\begin{scope}[xshift=-2cm, yshift=1cm]
\node [cylinder, draw, minimum height=2cm, minimum width=0.8cm, fill=white, rotate=90, thick, black, fill=white] {};
\end{scope}

\begin{scope}[scale=0.5]


\draw[white!70!black] (0,0) grid (8,4);


\draw[thick, black, postaction=torus vertical] (0,0) -- (0,4); 
\draw[thick, black, postaction=torus vertical] (8,0) -- (8,4); 


\draw[<->] (0,-0.7) -- (8,-0.7);
\draw (4,-1.3) node{\scriptsize $N$};

\end{scope}


\begin{scope}[xshift=6cm, scale=0.5]


\draw[white!70!black] (0,0) grid (8,4);

\draw[blue, very thick] (1,0) grid (4,4);
\draw[blue, very thick] (5,0) grid (8,4);

\draw (0.5,1.5) node{\small $A$};
\draw (2.5,1.5) node{\small $B$};
\draw (4.5,1.5) node{\small $C$};
\draw (6.5,1.5) node{\small $B'$};


\draw[thick, black, postaction=torus vertical] (0,0) -- (0,4); 
\draw[thick, black, postaction=torus vertical] (8,0) -- (8,4); 


\draw[<->] (1,-0.7) -- (4,-0.7);
\draw (2.5,-1.3) node{\scriptsize $\geq \lfloor N / 2-1 \rfloor$};
\draw[<->] (5,-0.7) -- (8,-0.7);
\draw (7,-1.3) node{\scriptsize $\geq \lfloor N / 2-1 \rfloor$};

\end{scope}

\end{tikzpicture}
\end{equation*}
Taking $\mathcal{R}_{1}=B'AB$ and $\mathcal{R}_{2}=BCB'$ we immediately get that these are proper rectangles satisfying (i) and (ii). They also satisfy (iii) since $\lfloor N/2-1\rfloor  \geq r$ by the hypothesis. Property (iv) follows from the Martingale Condition since it yields that
\[ \| P_{B'AB} P_{BCB'} - P_{ABCB'} \| \, \leq \, \delta(\lfloor N / 2-1 \rfloor) < 1\,. \]

\emph{Case 3}: Assume $\mathcal{R}$ is the whole torus $\Lambda_{N}$. Then, we can split it into four regions $A,B,C,B'$ as in the next picture, where $B$ and $B'$ have dimensions $a = N$ and $b \geq \lfloor N/2-1\rfloor$  
\begin{equation*}
\begin{tikzpicture}[equation, scale=0.9]

\begin{scope}[scale=0.9, thick]

\path[rounded corners=24pt, blue] (-.9,0)--(0,.6)--(.9,0) (-.9,0)--(0,-.56)--    (.9,0);
\draw[rounded corners=28pt, black] (-1.1,.1)--(0,-.6)--(1.1,.1);
\draw[rounded corners=24pt, black] (-.9,0)--(0,.6)--(.9,0);
\draw[black] (0,0) ellipse (1.5 and 0.8);
\end{scope}   
    
   
\begin{scope}[xshift=2.5cm, scale=0.5, yshift=-1cm]

\draw[step=1,white!70!black] (0,0) grid (8,3);

\draw[black, postaction=torus horizontal] (0,0) -- (8,0); 
\draw[black, postaction=torus horizontal] (0,3) -- (8, 3);
\draw[black, postaction=torus vertical] (0,0) -- (0, 3);
\draw[black, postaction=torus vertical] (8,0) -- (8, 3);

\end{scope}   
   
    
\begin{scope}[xshift=8cm, scale=0.5, yshift=-1cm]

\draw[step=1,white!70!black] (0,0) grid (8,3);

\draw[black] (0,0) rectangle (1,3);
\draw (0.5,1.5) node{\scriptsize  $A$};
\draw[very thick, blue] (1,0) grid (4,3);
\draw (2.5,1.5) node{\scriptsize  $B$};
\draw[black] (4,0) rectangle (5,3);
\draw (4.5,1.5) node{\scriptsize  $C$};
\draw[very thick, blue] (5,0) grid (8,3);
\draw (6.5,1.5) node{\scriptsize  $B'$};

\draw[black, postaction=torus horizontal] (0,0) -- (8,0); 
\draw[black, postaction=torus horizontal] (0,3) -- (8, 3);
\draw[black, postaction=torus vertical] (0,0) -- (0, 3);
\draw[black, postaction=torus vertical] (8,0) -- (8, 3);


\draw[<->] (1,-0.7) -- (4,-0.7);
\draw (2.5,-1.3) node{\scriptsize $\geq \lfloor N / 2-1 \rfloor$};
\draw[<->] (5,-0.7) -- (8,-0.7);
\draw (7,-1.3) node{\scriptsize $\geq \lfloor N / 2-1 \rfloor$};

\end{scope}

\end{tikzpicture} 
\end{equation*}

\noindent Taking $\mathcal{R}_{1} = B'AB$ and $\mathcal{R}_{2}=BCB'$ as rectangular subregions, we can argue analogously to the previous cases to deduce that they satisfy (i)-(iii) and
\begin{align*}
\| P_{\mathcal{R}_{1}} P_{\mathcal{R}_{2}} - P_{\Lambda_{N}} \| \, &\leq \, \delta(\lfloor N / 2-1 \rfloor) < 1\,. 
\end{align*}
This concludes the proof of the claim.
\end{proof}

\subsection{Spectral gap of a parent Hamiltonian}
We will now recall the relationship established in \cite{KaLuPeGa19} between boundary states and the spectral gap of a parent Hamiltonian of a PEPS.

\subsubsection{Boundary states and approximate factorization}
Let us consider three (connected) regions $A,B,C \subset \Lambda$ and assume that $B$ shields $A$ from $C$, so that there is no edge joining vertices from $A$ and $C$ (see Figure~\ref{Fig:three-regions}). Let us consider the boundary states $\rho_{\partial ABC}, \rho_{\partial AB}, \rho_{\partial BC}$ and $\rho_{\partial B}$. In the case where they are all full rank, the approximate factorization condition is defined as follows.

\begin{figure}[hbt]
\begin{center}
\begin{tikzpicture}[scale=0.75]
    \draw[step=1cm,black] (-1.9,-1.9) grid (5.9,2.9);
    \foreach \x in {-1,...,5} \foreach \y in {-1,...,2}
    \draw[fill=black!20] (\x-0.15,\y-0.15) rectangle (\x+0.15,\y+0.15); 
    
    \foreach \x in {0, 1} \foreach \y in {0,1}
        \draw[fill=orange!80] (\x-0.15,\y-0.15) rectangle (\x+0.15,\y+0.15); 
    \foreach \x in {2} \foreach \y in {0,1}
        \draw[fill=magenta!80] (\x-0.15,\y-0.15) rectangle (\x+0.15,\y+0.15); 
    \foreach \x in {3,4} \foreach \y in {0,1}
        \draw[fill=cyan!80] (\x-0.15,\y-0.15) rectangle (\x+0.15,\y+0.15); 
    
    \draw[orange!90,dashed] (-0.25,-0.25) rectangle (1.25, 1.25);
    \node at (0.5, 0.5) {$A$};
    \draw[magenta!90,dashed] (1.75,-0.25) rectangle (2.25, 1.25);
    \node at (2, 0.5) {$B$};
    \draw[cyan!90,dashed] (2.75,-0.25) rectangle (4.25, 1.25);
    \node at (3.5, 0.5) {$C$};
    
\end{tikzpicture}
\caption{An example of three regions $A$, $B$, $C$.}
\label{Fig:three-regions}
\end{center}
\end{figure}
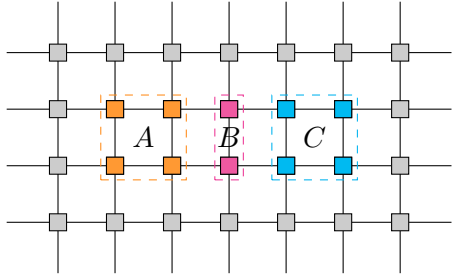

\begin{Defi}[Approximate factorization for injective PEPS \cite{KaLuPeGa19}]
\label{def:approx-fact-injective}
Let $\varepsilon>0$. We will say that the boundary states are $\varepsilon$-approximately factorizable, if we can divide the regions

\[
\begin{tikzpicture}[equation]
\pgfmathsetmacro{\base}{3.5}
\pgfmathsetmacro{\altu}{1.6}

\begin{scope}[scale=1,xshift=6cm,yshift=-4cm]

\begin{scope}[ultra thick]

\draw (-\base,\altu) -- ({\base},\altu);   
\draw (-\base,-\altu) -- ({\base},-\altu);   
\draw (-\base,\altu) -- (-\base,-\altu);   
\draw (\base,-\altu) -- (\base,\altu); 

\draw (-2.7,0) node{$A$}; 
\draw (0,0) node{$B$}; 
\draw (2.7,0) node{$C$}; 

\draw (-{\base/2},\altu) -- (-{\base/2},-\altu);   
\draw ({\base/2},\altu) -- ({\base/2},-\altu);   

\end{scope}



\begin{scope}[scale=1.1]
\draw [|-|] (-{\base/4},\altu) -- (-{\base}+0.1,\altu) -- (-{\base}+0.1,-\altu) -- (-{\base/4},-\altu);
\draw (-{\base-0.1},0) node{$a$}; 
\end{scope}

\begin{scope}[scale=1.1]
\draw [|-|] ({\base/4},\altu) -- ({\base-0.1},\altu) -- ({\base-0.1},-\altu) -- ({\base/4},-\altu);
\draw ({\base+0.1},0) node{$c$}; 
\end{scope}

\begin{scope}[scale=1.1]
\draw [|-|] (-{\base/4+0.1},\altu) -- ({\base/4-0.1},\altu);
\draw (0,{\altu+0.2}) node{$z$}; 
\end{scope}

\begin{scope}[scale=1.1]
\draw [|-|] (-{\base/4+0.1},-\altu) -- ({\base/4-0.1},-\altu);
\draw (0,{-\altu-0.2}) node{$z$}; 
\end{scope}

\begin{scope}[scale=0.9]
\draw [|-|] ({-\base/4-0.2},\altu) -- ({-\base/2},\altu) -- ({-\base/2},-\altu) -- ({-\base/4-0.2},-\altu);
\draw ({-\base/2+0.2},0) node{$\alpha$}; 
\end{scope}

\begin{scope}[scale=0.9]
\draw [|-|] ({\base/4+0.2},\altu) -- ({\base/2},\altu) -- ({\base/2},-\altu) -- ({\base/4+0.2},-\altu);
\draw ({\base/2-0.2},0) node{$\gamma$}; 
\end{scope}



\end{scope}

\end{tikzpicture}    
\]

\noindent and find invertible matrices $\Delta_{az}$, $\Delta_{zc}$, $\Omega_{\alpha z}$, $\Omega_{z \gamma}$ with support in the regions indicated by the respective subindices such that the boundary observables
\begin{align*}
& \sigma_{\partial AB} = \Omega_{z \gamma} \Delta_{az} \quad \quad \sigma_{\partial BC} = \Delta_{z c} \Omega_{\alpha z}\\[2mm]
& \sigma_{\partial ABC} = \Delta_{z c} \Delta_{az} \quad \quad \sigma_{\partial B} = \Omega_{z \gamma} \Omega_{\alpha z}
\end{align*}
approximate the boundary states 
\begin{align*}
& \| \rho_{\partial \RR}^{1/2} \sigma_{\partial \RR}^{-1} \rho_{\partial \RR}^{1/2} - \mathbbm{1} \|  \leq  \varepsilon \quad \mbox{for each } \RR \in \{ ABC, AB, BC \},\\
& \| \rho_{\partial B}^{-1/2} \sigma_{\partial B} \rho_{\partial B}^{-1/2} - \mathbbm{1} \|  \leq  \varepsilon.
\end{align*}
\end{Defi}

The approximate factorization of the boundary states implies a small norm of the overlaps of ground space projections.

\begin{Theo}[{\cite[Theorem 10]{KaLuPeGa19}}]\label{Theo:approximateFactInjectivePEPS}
If the boundary states are $\varepsilon$-approximately factorizable for some $\varepsilon \leq 1$, then 
\[ \| P_{AB}P_{BC} - P_{ABC} \| < 8 \varepsilon\,. \]
\end{Theo}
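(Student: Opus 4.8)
The plan is to reduce the inequality to one \emph{exact} algebraic identity together with a handful of first–order perturbation estimates: the identity expresses a suitable ``surrogate'' of $P_{ABC}$ as a product of surrogates of $P_{AB}$ and $P_{BC}$ once the true boundary states are replaced by the factorized approximations $\sigma$, and the perturbation estimates control the cost of undoing each such replacement. The factor $8$ will come from bounding four replacements, each of which may cost up to $2\varepsilon$. The starting dictionary is the injective-case one: for a region $X$ with $\rho_{\partial X}$ of full rank, $U_X:=V_X\rho_{\partial X}^{-1/2}$ is an isometry, $P_X=U_XU_X^\dagger=V_X\rho_{\partial X}^{-1}V_X^\dagger$, and for any invertible $\sigma$ on $\mathcal{H}_{\partial X}$ the operator $\widehat P_X^{\sigma}:=V_X\sigma^{-1}V_X^\dagger$ equals $U_X\big(\rho_{\partial X}^{1/2}\sigma^{-1}\rho_{\partial X}^{1/2}\big)U_X^\dagger$; writing $\rho_{\partial X}^{1/2}\sigma^{-1}\rho_{\partial X}^{1/2}=\mathbbm{1}+E$ gives $\widehat P_X^\sigma=P_X+U_XEU_X^\dagger$, hence $\|\widehat P_X^\sigma-P_X\|\le\|E\|$ and $\|\widehat P_X^\sigma\|\le 1+\varepsilon\le 2$ (using $\varepsilon\le 1$). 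The analogous computation with $\sigma_{\partial B}$ in the numerator, fed by the estimate for $\rho_{\partial B}$ in Definition~\ref{def:approx-fact-injective}, handles the middle region: inside the expressions that will appear, replacing $\sigma_{\partial B}$ by $\rho_{\partial B}$ changes the value by at most $\varepsilon$ times the norms of the two adjacent (contractive up to $1+\varepsilon$) factors.

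Next I would set up the gluing relations. Because $B$ shields $A$ from $C$, contracting the PEPS tensors on $C$ yields a linear ``glue map'' $G_C:\mathcal{H}_{\partial ABC}\to\mathcal{H}_{\partial AB}\otimes\mathcal{H}_C$ with $V_{ABC}=(V_{AB}\otimes\mathbbm{1}_C)G_C$, and symmetrically $G_A:\mathcal{H}_{\partial ABC}\to\mathcal{H}_A\otimes\mathcal{H}_{\partial BC}$ with $V_{ABC}=(\mathbbm{1}_A\otimes V_{BC})G_A$. The two properties that the support bookkeeping of Definition~\ref{def:approx-fact-injective} is designed to make usable are: $G_C$ acts trivially on the part of $\partial ABC$ lying over $A$ and over $B$, hence commutes with every operator supported on $a\cup z\cup\gamma$ (in particular with $\sigma_{\partial AB}^{-1}=\Delta_{az}^{-1}\Omega_{z\gamma}^{-1}$), and dually $G_A$ commutes with operators supported on $\alpha\cup z\cup c$; and gluing $G_C$ to $G_A^\dagger$ through the $B$-tensors reproduces the genuine boundary state of $B$, i.e.\ $G_C\,\rho_{\partial B}\,G_A^\dagger=(V_{AB}^\dagger\otimes\mathbbm{1}_C)(\mathbbm{1}_A\otimes V_{BC})$. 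Finally the factorized forms give the purely algebraic identity $\sigma_{\partial ABC}^{-1}=\sigma_{\partial AB}^{-1}\,\sigma_{\partial B}\,\sigma_{\partial BC}^{-1}$, immediate from $\Delta_{zc}\Delta_{az}=(\Delta_{zc}\Omega_{\alpha z})(\Omega_{\alpha z}^{-1}\Omega_{z\gamma}^{-1})(\Omega_{z\gamma}\Delta_{az})$.

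Now I would assemble these. Starting from $V_{ABC}\sigma_{\partial ABC}^{-1}V_{ABC}^\dagger$, substitute the algebraic identity, use $V_{ABC}=(V_{AB}\otimes\mathbbm{1}_C)G_C$ and $V_{ABC}^\dagger=G_A^\dagger(\mathbbm{1}_A\otimes V_{BC}^\dagger)$ to push $\sigma_{\partial AB}^{-1}$ across $G_C$ and $\sigma_{\partial BC}^{-1}$ across $G_A^\dagger$, and then replace $\sigma_{\partial B}$ by $\rho_{\partial B}$ so as to invoke $G_C\rho_{\partial B}G_A^\dagger=(V_{AB}^\dagger\otimes\mathbbm{1}_C)(\mathbbm{1}_A\otimes V_{BC})$. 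The outcome is an exact identity of the form
\[ V_{ABC}\big(\sigma_{\partial AB}^{-1}\rho_{\partial B}\sigma_{\partial BC}^{-1}\big)V_{ABC}^\dagger=\widehat P_{AB}^{\sigma}\,\big(\mathbbm{1}_A\otimes\widehat P_{BC}^{\sigma}\big),\qquad \widehat P_{AB}^\sigma=V_{AB}\sigma_{\partial AB}^{-1}V_{AB}^\dagger,\ \ \widehat P_{BC}^\sigma=V_{BC}\sigma_{\partial BC}^{-1}V_{BC}^\dagger. \]
To pass from here to $\|P_{AB}P_{BC}-P_{ABC}\|_{\infty}$ one undoes four replacements --- $\rho_{\partial B}\!\to\!\sigma_{\partial B}$ and $\sigma_{\partial ABC}^{-1}\!\to\!\rho_{\partial ABC}^{-1}$ on the left, $\widehat P_{AB}^\sigma\!\to\!P_{AB}$ and $\widehat P_{BC}^\sigma\!\to\!P_{BC}$ on the right --- each producing a cross term in which one surviving factor has operator norm at most $1+\varepsilon\le 2$ by the first paragraph, while the quadratic remainders are absorbed using $\varepsilon\le 1$; summing the four contributions gives $\|P_{AB}P_{BC}-P_{ABC}\|_{\infty}<8\varepsilon$.

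The main obstacle is the construction and analysis of the glue maps in the second step: one has to pin down exactly which virtual indices make up each of $\partial AB,\partial BC,\partial ABC,\partial B$ and how the sub-chains $a,\alpha,z,\gamma,c$ sit inside them, and then verify both the commutation of $G_C,G_A$ with the support-local factors $\Delta,\Omega$ and the contraction identity $G_C\rho_{\partial B}G_A^\dagger=(V_{AB}^\dagger\otimes\mathbbm{1}_C)(\mathbbm{1}_A\otimes V_{BC})$ --- essentially a tensor-network diagram chase. Once those geometric facts are in hand the remainder is routine norm bookkeeping; alternatively, the whole statement may simply be quoted from \cite[Theorem~10]{KaLuPeGa19}.
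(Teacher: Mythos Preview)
Your proposal is correct and follows essentially the same strategy as the paper. The paper does not give its own proof of this theorem (it is cited from \cite{KaLuPeGa19}), but it proves the non-injective analogue, Theorem~\ref{theo:non-injective-approx-fact}, by exactly the scheme you describe: define surrogate projections $Q_{\RR}=V_{\RR}\sigma_{\partial\RR}^{-1}V_{\RR}^\dagger$, use the isometry $V_{\RR}\rho_{\partial\RR}^{-1/2}$ to get $\|Q_{\RR}-P_{\RR}\|<\varepsilon$, exploit the algebraic factorization of the $\sigma$'s together with the gluing maps (denoted $V_{A\to B},V_{C\to B}$ there, your $G_A,G_C$) so that $Q_{AB}Q_{BC}$ and $Q_{ABC}$ differ only through the replacement $\rho_{\partial B}\leftrightarrow\sigma_{\partial B}$, and then collect four error terms to reach $8\varepsilon$.
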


\subsubsection{Approximate factorization for locally non injective PEPS}
\label{sec:non-injective-peps}
In \cite{KaLuPeGa19}, the approximate factorization condition was extended to non-injective PEPS satisfying what is known as the \emph{pulling through condition}, which holds in the case of $G$-injective and MPO-injective PEPS. Unfortunately the PEPS representing the thermofield double state $\ket*{\rho_\beta^{1/2}}$ will neither be injective nor satisfy such a condition. At the same time, it will turn out to have some stronger property which will make up for the lack of it: it can be well approximated by a tensor product operator.
We will now present the necessary modifications to the results of \cite{KaLuPeGa19} required to treat this case.


There are three geometrical cases we need to consider in our decomposition of the torus $\mathbb{Z}_{N}\times \mathbb{Z}_{N}$ into sub-regions: two cylinders to cover the torus, two rectangles to cover a cylinder, and two rectangles to cover a rectangle. The following theorem is an adaptation of  \cite[Theorem 10]{KaLuPeGa19} that covers each of these three cases.


\begin{Theo}\label{theo:non-injective-approx-fact}
Let $A,B,C$ be three disjoint regions of $\Lambda$ such that $A$ and $C$ do not share mutually contractible boundary indices. Let us moreover assume that the (boundary) virtual indices of $ABC$, $AB$, $BC$ and $B$ can be arranged into four sets $a ,c, \alpha, \gamma$ as in the next picture so that
\[
\begin{tikzpicture}[equation,scale=0.8, decoration={markings, mark=at position 0.65 with {\arrow[very thick, scale=1.5]{>}}}]

\begin{scope}

\path [-, thick] (0,0.5) edge node {} (6,0.5);
\path [-, thick] (2.5,0.5) edge node {} (2.5,-0.5);
\path [-, thick] (3.5,0.5) edge node {} (3.5,-0.5);

\draw[fill=black!20] (0.5,0) rectangle (1.5,1);
\draw[fill=black!20] (2,0) rectangle (4,1);
\draw[fill=black!20] (4.5,0) rectangle (5.5,1);

\draw node at (1,0.5) {$A$};
\draw node at (3,0.5) {$B$};
\draw node at (5,0.5) {$C$};

\draw [|-|] (-0.5,0.7) -- (-0.5,-1) -- (2.7,-1);
\draw [|-|] (3.3,-1) -- (6.5,-1) -- (6.5,0.7);

\draw node at (1,-1.2) {$a$};
\draw node at (5,-1.2) {$c$};
\end{scope}

\begin{scope}[xshift=-5cm]
\path [-, thick] (-0.5,0.5) edge node {} (2.5,0.5);
\path [-, thick] (0.5,0.5) edge node {} (0.5,-0.5);
\path [-, thick] (1.5,0.5) edge node {} (1.5,-0.5);
\draw[fill=black!20] (0,0) rectangle (2,1);
\draw node at (1,0.5) {$B$};

\draw [|-|] (-1,0.7) -- (-1,-1)  -- (0.7,-1);
\draw [|-|] (1.3,-1) -- (3,-1)  -- (3,0.7);
\draw node at (-0.2,-1.2) {$\alpha$};
\draw node at (2.3,-1.2) {$\gamma$};
\end{scope}

\end{tikzpicture}
\]

\begin{align*}
    & \partial A \setminus \partial B \subset a \quad , \quad \partial C \setminus \partial B \subset c  \quad , \quad   \partial A \cap \partial B \subset \alpha \quad , \quad \partial C \cap \partial B \subset \gamma \\[2mm]
    & \quad  \partial ABC = a c  \quad , \quad \partial AB = a \gamma \quad , \quad \partial BC = \alpha c \quad , \quad \partial B = \alpha \gamma\,.
\end{align*}
Let us also assume that the orthogonal projections $J_{\partial \mathcal{R}}$ onto the support of $\rho_{\partial \mathcal{R}}$ admit a factorization in terms of projections $J_{a}, J_{c}, J_{\alpha}, J_{\gamma}$ (subindices indicate their corresponding support), namely
\[ J_{\partial ABC} = J_{a} \otimes J_{c} \quad , \quad J_{\partial AB} = J_{a} \otimes J_{\gamma} \quad , \quad J_{\partial BC} = J_{\alpha} \otimes J_{c} \quad , \quad J_{\partial B} = J_{\alpha} \otimes J_{\gamma}\,,\]
and there also exist positive semi-definite operators $\sigma_{a}, \sigma_{c}, \sigma_{\alpha}, \sigma_{\gamma}$ with full-rank on $J_{\alpha}, J_{a}, J_{\gamma}, J_{c}$ such that 
\[ \sigma_{\partial ABC}:= \sigma_{a} \otimes \sigma_{c} \quad , \quad \sigma_{\partial AB}:= \sigma_{a} \otimes \sigma_{\gamma} \quad , \quad \sigma_{\partial BC}:= \sigma_{\alpha} \otimes \sigma_{c} \quad , \quad \sigma_{\partial B}:= \sigma_{\alpha} \otimes \sigma_{\gamma} \]
satisfy for some $0 \le \varepsilon \leq 1$
\begin{align*} 
& \|  J_{\partial \mathcal{R}} - \rho_{\partial \mathcal{R}}^{1/2} \sigma_{\partial \mathcal{R}}^{-1}  \rho_{\partial \mathcal{R}}^{1/2}\| < \varepsilon \quad , \quad \mathcal{R} \in \{ ABC, AB, BC\}\,, \\[2mm]
& \| J_{\partial B}  - \rho_{\partial B}^{-1/2} \sigma_{\partial B} \rho_{\partial B}^{-1/2} \|\, < \varepsilon.
\end{align*}
Here, inverses are taken on the corresponding support. Then,
\[ \| P_{AB} P_{BC} - P_{ABC} \| \, \leq \, 8\varepsilon  \,.\]
\end{Theo}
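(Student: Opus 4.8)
The plan is to adapt the proof of \cite[Theorem~10]{KaLuPeGa19} to this non-injective setting, replacing everywhere the full-rank boundary states used there by the pair $(J_{\partial\RR},\sigma_{\partial\RR})$ and carrying out all virtual-space computations on the ranges of $J_{a},J_{c},J_{\alpha},J_{\gamma}$. The first step is to move the hypotheses from the boundary into the bulk. Writing the polar decomposition $V_{\RR}=U_{\RR}\rho_{\partial\RR}^{1/2}$, the partial isometry $U_{\RR}$ satisfies $U_{\RR}^{\dagger}U_{\RR}=J_{\partial\RR}$ and $U_{\RR}U_{\RR}^{\dagger}=P_{\RR}$. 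Since each $\sigma$ is full rank exactly on the range of the corresponding $J$, the operators
\[ \widetilde P_{\RR}:=V_{\RR}\,\sigma_{\partial\RR}^{-1}\,V_{\RR}^{\dagger}=U_{\RR}\bigl(\rho_{\partial\RR}^{1/2}\sigma_{\partial\RR}^{-1}\rho_{\partial\RR}^{1/2}\bigr)U_{\RR}^{\dagger},\qquad \RR\in\{ABC,AB,BC\}, \]
are well defined (inverses on the supports), and conjugating the hypotheses for these three regions by the contraction $U_{\RR}$ gives $\|P_{\RR}-\widetilde P_{\RR}\|<\varepsilon$, whence $\|\widetilde P_{\RR}\|\le 1+\varepsilon\le 2$. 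The fourth hypothesis states that $\rho_{\partial B}^{-1/2}\sigma_{\partial B}\rho_{\partial B}^{-1/2}$ is $\varepsilon$-close to $J_{\partial B}$.

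The core of the argument is an \emph{exact} tensor-network identity for the $\sigma$-modified projectors. Because $A$ and $C$ share no contractible boundary indices, $V_{ABC}$ is the contraction along the $\alpha$- and $\gamma$-bonds of the single-region maps $V_{A}\colon\mathcal{H}_{a}\otimes\mathcal{H}_{\alpha}\to\mathcal{H}_{A}$, $V_{B}\colon\mathcal{H}_{\alpha}\otimes\mathcal{H}_{\gamma}\to\mathcal{H}_{B}$, $V_{C}\colon\mathcal{H}_{\gamma}\otimes\mathcal{H}_{c}\to\mathcal{H}_{C}$ (as the index decomposition in the hypotheses permits), while $V_{AB}$ and $V_{BC}$ contract only the $\alpha$-, resp.\ the $\gamma$-bond. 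Expanding $\widetilde P_{AB}\widetilde P_{BC}$ on $\mathcal{H}_{A}\otimes\mathcal{H}_{B}\otimes\mathcal{H}_{C}$ as a network and contracting the intermediate copy of $\mathcal{H}_{B}$ fuses two interior copies of $V_{B}$ into a factor $\rho_{\partial B}=V_{B}^{\dagger}V_{B}$; the product structures $\sigma_{\partial AB}=\sigma_{a}\otimes\sigma_{\gamma}$, $\sigma_{\partial BC}=\sigma_{\alpha}\otimes\sigma_{c}$ and the factorizations $J_{\partial ABC}=J_{a}\otimes J_{c}$, $J_{\partial AB}=J_{a}\otimes J_{\gamma}$, $J_{\partial BC}=J_{\alpha}\otimes J_{c}$ then make the $A$- and $C$-blocks of $\widetilde P_{AB}\widetilde P_{BC}$ coincide with those of $\widetilde P_{ABC}$ (the $\sigma_{a}^{-1}$ and $\sigma_{c}^{-1}$ pieces recombining into $\sigma_{\partial ABC}^{-1}$), leaving a single discrepancy in the $B$-region: an operator built from $\rho_{\partial B}$ flanked by $\sigma_{\alpha}^{-1}$ and $\sigma_{\gamma}^{-1}$. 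Using $\sigma_{\partial B}=\sigma_{\alpha}\otimes\sigma_{\gamma}$, this operator collapses to $J_{\partial B}=J_{\alpha}\otimes J_{\gamma}$ by the fourth hypothesis, and $V_{B}$ absorbs it since $V_{B}J_{\partial B}=V_{B}$; consequently $\widetilde P_{AB}\widetilde P_{BC}$ agrees with $\widetilde P_{ABC}$ up to an error of order $\varepsilon$ (after bounding the surrounding operators).

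Finally one telescopes: $P_{AB}P_{BC}\to\widetilde P_{AB}P_{BC}\to\widetilde P_{AB}\widetilde P_{BC}$ (two steps, each at most $\varepsilon$ times the norm $\le 2$ of the untouched factor), then $\widetilde P_{AB}\widetilde P_{BC}\to\widetilde P_{ABC}$ (the $B$-step above), then $\widetilde P_{ABC}\to P_{ABC}$ (cost $<\varepsilon$); collecting the constants exactly as in \cite{KaLuPeGa19} yields $\|P_{AB}P_{BC}-P_{ABC}\|\le 8\varepsilon$. The step I expect to be the real obstacle is the exact identity above: without injectivity the local tensors are not invertible, so one must verify that every intermediate contraction lands on the correct support — this is precisely the purpose of the factorization hypotheses on the $J_{\partial\RR}$ and of the containments $\partial A\setminus\partial B\subset a$, $\partial C\setminus\partial B\subset c$, $\partial A\cap\partial B\subset\alpha$, $\partial C\cap\partial B\subset\gamma$ — and one must keep the bookkeeping of which $\sigma$- and $J$-factor sits on which bond consistent throughout, so that the $B$-region residue is exactly the operator appearing in the fourth hypothesis.
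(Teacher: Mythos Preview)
Your strategy is correct and matches the paper's proof essentially step for step: the paper defines the same approximate projections $Q_{\RR}=V_{\RR}\sigma_{\partial\RR}^{-1}V_{\RR}^{\dagger}$ (your $\widetilde P_{\RR}$), obtains $\|P_{\RR}-Q_{\RR}\|<\varepsilon$ via the partial isometry $V_{\RR}\rho_{\partial\RR}^{-1/2}$, rewrites both $Q_{ABC}$ and $Q_{AB}Q_{BC}$ using tensors $V_{A\to B},V_{C\to B}$ so that they differ only by having $\sigma_{\partial B}$ versus $\rho_{\partial B}$ in the middle, and then telescopes exactly as you describe.

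The one point you leave vague that the paper handles cleanly is ``bounding the surrounding operators'' in the $B$-step. The paper inserts $\rho_{\partial B}^{1/2}\rho_{\partial B}^{-1/2}$ on each side of the middle discrepancy to write
\[
Q_{ABC}-Q_{AB}Q_{BC}=\Delta_{AB}\bigl(\rho_{\partial B}^{-1/2}\sigma_{\partial B}\rho_{\partial B}^{-1/2}-J_{\partial B}\bigr)\Delta_{BC}^{\dagger},
\]
with $\Delta_{AB}=V_{AB}\sigma_{\partial AB}^{-1}V_{A\to B}^{\dagger}\rho_{\partial B}^{1/2}$ and similarly $\Delta_{BC}$; the key observation is the identity $\Delta_{AB}\Delta_{AB}^{\dagger}=Q_{AB}^{2}$, which immediately gives $\|\Delta_{AB}\|\le\|Q_{AB}\|\le 1+\varepsilon$. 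This is what turns ``order $\varepsilon$'' into the precise $\varepsilon(1+\varepsilon)^{2}$ needed for the final constant~$8$.
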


In case the region $\mathcal{R}$ consists of the whole lattice (e.g. torus) then the sets $a$ and $c$ would be empty. Consequently, $\sigma_{a}$ and $\sigma_{c}$ would be simply scalars.

\begin{proof}
Let us define the approximate projections
\[ Q_{\RR} := V_{\RR} \sigma_{\partial \RR}^{-1} V_{\RR}^{\dagger} \quad , \quad \RR \in \{ BC, AB, ABC \}\,.  \]
Note that $V_{\RR}\rho_{\RR}^{-1/2}$ is a partial isometry from  the support of $\rho_{\partial \RR}$ to $\Im (V_{\RR})$, since
\[
(V_{\RR} \rho_{\partial \RR}^{-1/2} )^\dag V_{\RR} \rho_{\partial \RR}^{-1/2}  =
 \rho_{\partial \RR}^{-1/2} V_{\RR}^\dag V_{\RR} \rho_{\partial \RR}^{-1/2} =  
 \rho_{\partial \RR}^{-1/2} \, \rho_{\partial \RR}\, \rho_{\partial \RR}^{-1/2} = J_{\partial \RR},
\]
and 
\[
V_{\RR}\rho_{\partial \RR}^{-1/2} (V_{\RR} \rho_{\partial \RR}^{-1/2} )^\dag = V_{\RR} \rho_{\partial \RR}^{-1} V_{\RR}^\dag = P_{\RR},
\]
where the last equality is a consequence of the fact that $V_{\RR} \rho_{\partial \RR}^{-1} V_{\RR}^\dag$ is a self-adjoint projection whose image is exactly $\Im (V_\RR)$.
As a consequence
\begin{equation}\label{equa:MartingaleCylinderAux0}
\begin{split}
 \| P_{\RR} - Q_{\RR}\| & =  \|  V_{\RR} \rho_{\partial \RR}^{-1} V_{\RR}^\dag  -  V_{\RR} \sigma_{\partial \RR}^{-1} V_{\RR}^{\dagger}\|   \\[2mm]
 & =\|V_{\RR} \rho_{\partial \RR}^{-1/2} ( J_{\partial \RR} - \rho_{\partial \RR}^{1/2} \sigma_{\partial \RR}^{-1} \rho_{\partial \RR}^{1/2}) \rho_{\partial \RR}^{-1/2} V_{\RR}^\dag \|\\[2mm] 
 &= \| J_{\partial \RR} - \rho_{\partial \RR}^{1/2} \sigma_{\partial \RR}^{-1} \rho_{\partial \RR}^{1/2} \| < \varepsilon\,. 
\end{split} 
\end{equation}

\noindent We are going to denote by $V_{C \rightarrow B}$ the tensor obtained from $V_{C}$ by taking all input indices that connect with $B$ into output indices, so that
\[ V_{ABC} = V_{AB} V_{C \rightarrow B} \quad \text{and} \quad V_{BC} = V_{B} \, V_{C \rightarrow B} \,. \]
Analogously, we define $V_{A \rightarrow B}$ satisfying
\[ V_{ABC} = V_{BC} V_{A \rightarrow B} \quad \text{and} \quad V_{AB} = V_{B} V_{A \rightarrow B}\,. \]
Then, we can rewrite
\begin{align*}
Q_{ABC} & = V_{ABC} \sigma_{\partial ABC}^{-1} V_{ABC}^{\dagger} =    V_{ABC} \sigma_{a}^{-1}  \sigma_{c}^{-1}  V_{ABC}^{\dagger} \\[2mm]
& = V_{AB} V_{C \rightarrow B} \sigma_{a}^{-1}  \sigma_{c}^{-1} V_{A \rightarrow B}^{\dagger} V_{BC}^{\dagger}\\[2mm]
& = V_{AB} \sigma_{a}^{-1}   V_{C \rightarrow B}  \, V_{A \rightarrow B}^{\dagger} \, \sigma_{c}^{-1} V_{BC}^{\dagger}\,.
\end{align*}
At this point, we can use the local structure of the projections to write $V_{AB} = V_{AB} J_{\partial AB} = V_{AB} J_{\partial AB} J_{\gamma} = V_{AB} J_{\gamma} = V_{AB} \sigma_{\gamma} \sigma_{\gamma}^{-1}$. Analogously, $V_{BC}= V_{BC} \sigma_{\alpha} \sigma_{\alpha}^{-1}$. Inserting both identities above, we can rewrite
\begin{align*}
Q_{ABC} & = V_{AB} \sigma_{a}^{-1}  \sigma_{\gamma}^{-1}  \, V_{A \rightarrow B}^{\dagger} \sigma_{\gamma} \sigma_{\alpha}  \, V_{C \rightarrow B} \, \sigma_{\alpha}^{-1}  \sigma_{c}^{-1} V_{BC}^{\dagger}\\[2mm]
& = V_{AB} \, \sigma_{\partial AB}^{-1} \, V_{A \rightarrow B}^{\dagger} \, \sigma_{\partial B}  \, V_{C \rightarrow B} \, \sigma_{\partial BC}^{-1} \, V_{BC}^{\dagger}\,.
\end{align*}
Similarly, we handle
\begin{align*}
Q_{AB} Q_{BC} & = V_{AB} \sigma_{\partial AB}^{-1} V_{AB}^\dagger \, V_{BC} \sigma_{\partial BC}^{-1} V_{BC}^\dagger \\[2mm]
& = V_{AB} \sigma_{\partial AB}^{-1} V_{A \rightarrow B}^{\dagger} \, V_{B}^\dagger \, V_{B} \, V_{C \rightarrow B}\, \sigma_{\partial BC}^{-1} V_{BC}^\dagger \\[2mm]
& = V_{AB} \sigma_{\partial AB}^{-1} V_{A \rightarrow B}^{\dagger} \, \rho_{\partial B} \, V_{C \rightarrow B}\, \sigma_{\partial BC}^{-1} V_{BC}^\dagger \,.
\end{align*}

\noindent To compare the expressions for $Q_{ABC}$ and $Q_{AB} \, Q_{BC}$ we introduce
\begin{align*} 
\Delta_{AB} & := V_{AB} \, \sigma_{\partial AB}^{-1} \, V_{A \rightarrow B}^{\dagger}  \, \rho_{\partial B}^{1/2}  \,,\\[2mm] 
\Delta_{BC} & := V_{BC} \, \sigma_{\partial BC}^{-1} \, V_{C \rightarrow B}^{\dagger}  \, \rho_{\partial B}^{1/2} \,.
\end{align*}
It is easy to check that
\begin{equation}\label{equa:MartingaleCylinderAux1}
Q_{ABC}  - Q_{AB} Q_{BC}  = \Delta_{AB}\big( \, (\rho_{\partial B}^{-1/2} \sigma_{\partial B} \rho_{\partial B}^{-1/2} \, - J_{\partial B} \,)  \, \big) \Delta_{BC}^{\dagger}\,\,. 
\end{equation}

\noindent Since $Q_{BC}^2 = \Delta_{BC} \Delta_{BC}^{\dagger}$, we can apply \eqref{equa:MartingaleCylinderAux0} to estimate 
\[ \| \Delta_{BC} \|^{2}  = \| \Delta_{BC}\Delta_{BC}^\dag \| = \| Q_{BC}^2\| \leq \| Q_{BC}\|^2 \leq (1 + \varepsilon)^2\,.  \]
Analogously $Q_{AB}^2 = \Delta_{AB} \Delta_{AB}^{\dagger}$ and so $\| \Delta_{AB}\| \leq 1 + \varepsilon$. Combining these inequalities with \eqref{equa:MartingaleCylinderAux1}, we get
\begin{multline}\label{equa:MartingaleCylinderAux1l} 
\| Q_{ABC} - Q_{AB} Q_{BC} \| \leq (1+ \varepsilon)^2 \, \| J_{\partial B}  - (\rho_{\partial B}^{-1/2} \sigma_{\partial B} \rho_{\partial B}^{-1/2}) \, \|\,  \leq \varepsilon(1+\varepsilon)^2\,.
\end{multline}

\noindent Finally, we combine the previous inequality with \eqref{equa:MartingaleCylinderAux0} to conclude
\begin{align*} 
\| P_{ABC} - P_{AB} P_{BC}\| \, &  \leq \, \|P_{ABC} - Q_{ABC} \| + \| Q_{ABC} - Q_{AB} Q_{BC} \|\\
& \quad + \| P_{AB} P_{BC} - Q_{AB} Q_{BC} \|  \\[2mm]
& \leq \varepsilon + \varepsilon (1+\varepsilon)^2 + \| P_{AB} - Q_{AB}\| \, \| P_{BC}\|\\ 
& \quad \quad+ \|Q_{AB} \| \, \| P_{BC} - Q_{BC}\|\\[2mm]
& \leq \varepsilon + \varepsilon (1+\varepsilon)^2 + \varepsilon + (1+\varepsilon) \varepsilon  = \, (\varepsilon^2 +3\varepsilon+4) \varepsilon \leq 8 \varepsilon\,,
\end{align*}
which gives the result.
\end{proof}

\subsubsection{Gauge invariance of the approximate factorization condition}\label{sec:gauge-invariance}
An interesting observation omitted in \cite{KaLuPeGa19}, is that the property of $\varepsilon$-approximately factorization is gauge invariant if the transformation does not change the support of the boundary state. Indeed, for every vertex $x \in \Lambda$ and every edge $e \in E$ incident to $x$, let us fix an invertible matrix $\mathcal{G}(e,x) \in \mathbb{C}^{D} \otimes \mathbb{C}^{D}$. We assume that for every edge $e$ with vertices $x,y$ we have
\begin{equation}\label{equa:gaugeCancellation} 
\mathcal{G}(e,x) = \mathcal{G}(e,y)^{-1}\,. 
\end{equation}
\noindent We will simply write $\mathcal{G}(e)$ when the site is clear from the context. Let us assume that we have two PEPS related via this gauge, namely for every site $x \in \Lambda$ we have that the local tensors $\widetilde{V}_{x}$ and $V_{x}$ are related via (see Figure~\ref{fig:gauge})
\[ V_{x} =  \widetilde{V}_{x} \, \circ  \, \mathcal{G}_{x} \quad \quad \mbox{ where } \quad  \mathcal{G}_{x}:=\bigotimes_{e \in \partial x}{\mathcal{G}(e,x)} \,. \]

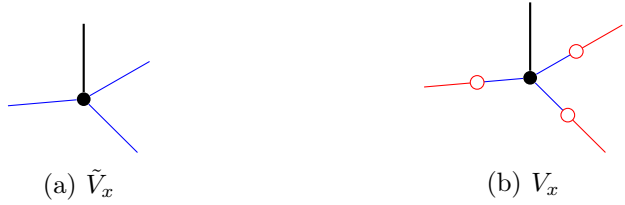
\begin{figure}[hbt]
    \centering
    \begin{subfigure}[t]{0.4\textwidth}
    \centering
        \begin{tikzpicture}
            \node[fill=black, circle, scale=0.5] (A) at (0,0) {};
            \draw[-, thick] (A) -- +(0,1);
            \draw[blue] (A) -- +(30:1);
            \draw[blue] (A) -- +(-45:1);
            \draw[blue] (A) -- +(185:1);
        \end{tikzpicture}
    \caption{$\tilde V_x$}
    \end{subfigure}
    ~
    \begin{subfigure}[t]{0.4\textwidth}
        \centering
        \begin{tikzpicture}
            \node[fill=black, circle, scale=0.5] (A) at (0,0) {};
            \draw[-, thick] (A) -- +(0,1);
            \draw[blue] (A) -- +(30:0.7) node[circle,fill=white, draw=red, scale=0.5] (B) {};
            \draw[red]  (B) -- +(30:0.7);
            \draw[blue] (A) -- +(-45:0.7)
                node[circle,fill=white, draw=red, scale=0.5] (C) {};
            \draw[red]  (C) -- +(-45:0.7);
            \draw[blue] (A) -- +(185:0.7)
                node[circle,fill=white, draw=red, scale=0.5] (D) {};
            \draw[red]  (D) -- +(185:0.7);
        \end{tikzpicture}
    \caption{$V_x$}
    \end{subfigure}
    \caption{Tensors with (right) and without (left) gauge.}
    \label{fig:gauge}
\end{figure}

For a region $\RR \subset \Lambda$, when contracting indices to construct $V_{\RR}$ we have, as a consequence of \eqref{equa:gaugeCancellation}, that contracting inner edges of $\mathcal{R}$ cancel the gauge matrices. Thus $\widetilde{V}_{\RR}$ and $V_{\RR}$ are related via (see Figure~\ref{fig:gauge-2}):
\begin{equation}\label{equa:GaugeRegion} 
V_{\RR} =  \widetilde{V}_{\RR} \, \circ  \, \mathcal{G}_{\partial \RR} \quad \quad \mbox{ with } \quad  \mathcal{G}_{\partial \RR}:=\bigotimes_{e \in \partial \RR}{\mathcal{G}(e)} \, 
\end{equation}

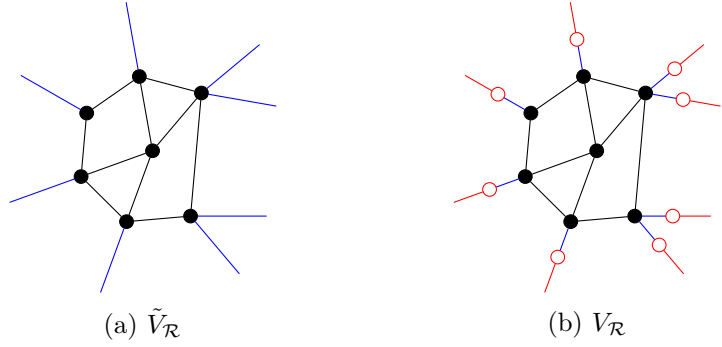
\begin{figure}[hbt]
    \centering
    \begin{subfigure}[t]{0.4\textwidth}
    \centering
        \begin{tikzpicture}[tensor/.style={circle, draw, fill=black, scale=0.5}]
        \node[tensor] (A) at (0,0) {};
        \foreach \c/\a in { B/50, C/100, D/150, E/200, F/250, G/300}
            \node[tensor] (\c) at (\a:1) {};
        \foreach \c in {B, C, E, F}
            \draw[-] (A) -- (\c);
        \draw[-] (B) -- (C) -- (D) -- (E) -- (F) -- (G) -- (B);
            
        \foreach \c/\a in {B/-10, B/40, C/100, D/150, E/200, F/250, G/0, G/-50}
            \draw[blue] (\c) -- +(\a:1);
        \end{tikzpicture}
    \caption{$\tilde V_{\RR}$}
    \end{subfigure}
    ~
    \begin{subfigure}[t]{0.4\textwidth}
        \centering
        \begin{tikzpicture}[tensor/.style={circle, draw, fill=black, scale=0.5}]
        \node[tensor] (A) at (0,0) {};
        \foreach \c/\a in { B/50, C/100, D/150, E/200, F/250, G/300}
            \node[tensor] (\c) at (\a:1) {};
        \foreach \c in {B, C, E, F}
            \draw[-] (A) -- (\c);
        \draw[-] (B) -- (C) -- (D) -- (E) -- (F) -- (G) -- (B);
            
        \foreach \c/\a in {B/-10, B/40, C/100, D/150, E/200, F/250, G/0, G/-50}{
            \draw[blue] (\c) -- +(\a:0.5)
            node[circle,fill=white, draw=red, scale=0.5] (\c w) {};  
            \draw[red]  (\c w) -- +(\a:0.5);
        }
        \end{tikzpicture}
    \caption{$V_{\RR}$}
    \end{subfigure}
    \caption{Tensor network with (right) and without (left) gauge (physical indices are not shown).}
    \label{fig:gauge-2}
\end{figure}

\noindent The boundary state after the change of gauge is transformed as
\[ \rho_{\partial \RR} = \mathcal{G}_{\partial \RR}^{\dagger} \widetilde{\rho}_{\partial \RR} \mathcal{G}_{\partial \RR}\,, \]
where $\widetilde{\rho}_{\partial \RR} = \widetilde{V}_{\RR}^{\dagger} \widetilde{V}_{\RR}$.

\begin{Prop}\label{prop:gauge-invariance}
    Assume that $\comm{J_{\partial \RR}}{\mathcal{G}_{\partial \RR}}=0$. 
    Let $\widetilde \sigma_{\partial \RR}$ supported on $J_{\partial \RR}$, and define
    \[
        {\sigma}_{\partial \RR} = \mathcal{G}_{\partial \RR}^{\dagger}  \, \widetilde{\sigma}_{\partial \RR} \, \mathcal{G}_{\partial \RR}.
    \]
    Then, ${\sigma}_{\partial \RR}$ is also supported on $J_{\partial \RR}$ and it holds that
    \[ \left\| \rho_{\partial \RR}^{1/2} \sigma_{\partial \RR}^{-1} \rho_{\partial \RR}^{1/2} - J_{\partial R} \right\| \, = \, \left\| \widetilde{\rho}_{\partial \RR}^{1/2} \widetilde{\sigma}_{\partial \RR}^{-1}  \widetilde{\rho}_{\partial \RR}^{1/2} - J_{\partial R} \right\|\,. \]
\end{Prop}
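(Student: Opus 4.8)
The plan is to reduce the claimed identity to the elementary fact that conjugation by an invertible operator preserves the spectrum, after first sorting out on which subspace all the operators involved actually live. Throughout, write $\mathcal{K}:=\Im(J_{\partial\RR})$ for the support of $\rho_{\partial\RR}$, and recall $\rho_{\partial\RR}=\mathcal{G}_{\partial\RR}^{\dagger}\widetilde{\rho}_{\partial\RR}\mathcal{G}_{\partial\RR}$ with $\mathcal{G}_{\partial\RR}$ invertible and $\comm{J_{\partial\RR}}{\mathcal{G}_{\partial\RR}}=0$.

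First I would record a preliminary support lemma. Since $\mathcal{G}_{\partial\RR}$ commutes with the projector $J_{\partial\RR}$, it is block diagonal with respect to $\mathcal{K}\oplus\mathcal{K}^{\perp}$, say $\mathcal{G}_{\partial\RR}=\mathcal{G}_{0}\oplus\mathcal{G}_{1}$ with $\mathcal{G}_{0}$ and $\mathcal{G}_{1}$ invertible. Plugging this into $\rho_{\partial\RR}=\mathcal{G}_{\partial\RR}^{\dagger}\widetilde{\rho}_{\partial\RR}\mathcal{G}_{\partial\RR}$ and using that $\langle v,\rho_{\partial\RR}v\rangle=\langle\mathcal{G}_{\partial\RR}v,\widetilde{\rho}_{\partial\RR}\mathcal{G}_{\partial\RR}v\rangle$ vanishes exactly on $\mathcal{K}^{\perp}$, together with the invertibility of $\mathcal{G}_{0},\mathcal{G}_{1}$, one gets that $\widetilde{\rho}_{\partial\RR}$ has support exactly $\mathcal{K}$. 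The very same block-diagonal computation, applied to $\sigma_{\partial\RR}=\mathcal{G}_{\partial\RR}^{\dagger}\widetilde{\sigma}_{\partial\RR}\mathcal{G}_{\partial\RR}$ with $\widetilde{\sigma}_{\partial\RR}$ positive semidefinite and of full rank on $\mathcal{K}$, shows that $\sigma_{\partial\RR}$ is positive semidefinite with support exactly $\mathcal{K}$ — which is already the first assertion of the proposition — and moreover that, restricting everything to $\mathcal{K}$, $\rho_{\partial\RR}|_{\mathcal{K}}=\mathcal{G}_{0}^{\dagger}\,\widetilde{\rho}_{\partial\RR}|_{\mathcal{K}}\,\mathcal{G}_{0}$, $\sigma_{\partial\RR}|_{\mathcal{K}}=\mathcal{G}_{0}^{\dagger}\,\widetilde{\sigma}_{\partial\RR}|_{\mathcal{K}}\,\mathcal{G}_{0}$, and hence $\sigma_{\partial\RR}^{-1}|_{\mathcal{K}}=\mathcal{G}_{0}^{-1}\,\widetilde{\sigma}_{\partial\RR}^{-1}|_{\mathcal{K}}\,(\mathcal{G}_{0}^{\dagger})^{-1}$, all inverses being taken on $\mathcal{K}$.

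Next I would rewrite both sides of the identity spectrally. Set $M:=\rho_{\partial\RR}^{1/2}\sigma_{\partial\RR}^{-1}\rho_{\partial\RR}^{1/2}$. This operator is self-adjoint, vanishes on $\mathcal{K}^{\perp}$ and is positive definite on $\mathcal{K}$ (because $\rho_{\partial\RR}^{1/2}|_{\mathcal{K}}$ is invertible on $\mathcal{K}$), and $J_{\partial\RR}$ likewise acts as the identity on $\mathcal{K}$ and as zero on $\mathcal{K}^{\perp}$; therefore $\|M-J_{\partial\RR}\|_{\infty}=\max\{\,|\mu-1|:\mu\in\operatorname{spec}(M|_{\mathcal{K}})\,\}$. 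Conjugating $M|_{\mathcal{K}}$ by the invertible operator $\rho_{\partial\RR}^{1/2}|_{\mathcal{K}}$ shows that $M|_{\mathcal{K}}$ is similar to $\sigma_{\partial\RR}^{-1}|_{\mathcal{K}}\,\rho_{\partial\RR}|_{\mathcal{K}}$, so they have the same spectrum; and substituting the block identities from the previous step,
\[
\sigma_{\partial\RR}^{-1}|_{\mathcal{K}}\,\rho_{\partial\RR}|_{\mathcal{K}}=\mathcal{G}_{0}^{-1}\,\widetilde{\sigma}_{\partial\RR}^{-1}|_{\mathcal{K}}\,(\mathcal{G}_{0}^{\dagger})^{-1}\mathcal{G}_{0}^{\dagger}\,\widetilde{\rho}_{\partial\RR}|_{\mathcal{K}}\,\mathcal{G}_{0}=\mathcal{G}_{0}^{-1}\big(\widetilde{\sigma}_{\partial\RR}^{-1}|_{\mathcal{K}}\,\widetilde{\rho}_{\partial\RR}|_{\mathcal{K}}\big)\mathcal{G}_{0},
\]
so $\sigma_{\partial\RR}^{-1}|_{\mathcal{K}}\rho_{\partial\RR}|_{\mathcal{K}}$ and $\widetilde{\sigma}_{\partial\RR}^{-1}|_{\mathcal{K}}\widetilde{\rho}_{\partial\RR}|_{\mathcal{K}}$ are similar, hence cospectral. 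Running the same argument with $\widetilde{M}:=\widetilde{\rho}_{\partial\RR}^{1/2}\widetilde{\sigma}_{\partial\RR}^{-1}\widetilde{\rho}_{\partial\RR}^{1/2}$ gives $\|\widetilde{M}-J_{\partial\RR}\|_{\infty}=\max\{\,|\mu-1|:\mu\in\operatorname{spec}(\widetilde{\sigma}_{\partial\RR}^{-1}|_{\mathcal{K}}\widetilde{\rho}_{\partial\RR}|_{\mathcal{K}})\,\}$, and chaining the three spectral equalities yields $\|M-J_{\partial\RR}\|_{\infty}=\|\widetilde{M}-J_{\partial\RR}\|_{\infty}$, which is the assertion.

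The only genuinely delicate point is bookkeeping: one must be careful that "inverses are taken on the support" is consistent across all four operators $\rho_{\partial\RR},\widetilde{\rho}_{\partial\RR},\sigma_{\partial\RR},\widetilde{\sigma}_{\partial\RR}$, and this is exactly what the hypothesis $\comm{J_{\partial\RR}}{\mathcal{G}_{\partial\RR}}=0$ buys us, forcing all of them to have the common support $\mathcal{K}$ and $\mathcal{G}_{\partial\RR}$ to restrict to an invertible endomorphism of $\mathcal{K}$. Once that is in place, the proof is purely formal, the essential input being that $\rho^{1/2}\sigma^{-1}\rho^{1/2}$ is similar to $\sigma^{-1}\rho$ and that similarity (in particular, gauge conjugation) does not move the spectrum — note that one cannot simply substitute $\rho_{\partial\RR}^{1/2}=\mathcal{G}_{\partial\RR}^{\dagger}\widetilde{\rho}_{\partial\RR}^{1/2}$, which is false in general, so passing to spectra is what makes the square roots harmless.
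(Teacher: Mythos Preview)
Your proof is correct. It differs from the paper's in a meaningful way, so let me compare.

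The paper does not stay on the virtual boundary space. Instead it pushes everything up to the physical space via the PEPS maps: using $V_{\RR}=\widetilde{V}_{\RR}\,\mathcal{G}_{\partial\RR}$ it observes that
\[
V_{\RR}\,\sigma_{\partial\RR}^{-1}\,V_{\RR}^{\dagger}
=\widetilde{V}_{\RR}\,\widetilde{\sigma}_{\partial\RR}^{-1}\,\widetilde{V}_{\RR}^{\dagger},
\qquad
V_{\RR}\,\rho_{\partial\RR}^{-1}\,V_{\RR}^{\dagger}
=\widetilde{V}_{\RR}\,\widetilde{\rho}_{\partial\RR}^{-1}\,\widetilde{V}_{\RR}^{\dagger}=P_{\RR},
\]
and then conjugates $\rho_{\partial\RR}^{1/2}\sigma_{\partial\RR}^{-1}\rho_{\partial\RR}^{1/2}-J_{\partial\RR}$ by the partial isometry $W_{\RR}=V_{\RR}\rho_{\partial\RR}^{-1/2}$ (and similarly with tildes). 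Since conjugation by a partial isometry from the support preserves the operator norm, the two norms coincide. In spirit this is also a ``similarity preserves spectrum'' argument, but the similarity is implemented by going through the physical Hilbert space and using the PEPS map as the intertwiner.

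Your argument is more self-contained: you never invoke $V_{\RR}$, working entirely on $\mathcal{K}=\Im(J_{\partial\RR})$ and exploiting the chain of similarities
\[
\rho_{\partial\RR}^{1/2}\sigma_{\partial\RR}^{-1}\rho_{\partial\RR}^{1/2}\big|_{\mathcal{K}}
\ \sim\ \sigma_{\partial\RR}^{-1}\rho_{\partial\RR}\big|_{\mathcal{K}}
\ \sim\ \widetilde{\sigma}_{\partial\RR}^{-1}\widetilde{\rho}_{\partial\RR}\big|_{\mathcal{K}}
\ \sim\ \widetilde{\rho}_{\partial\RR}^{1/2}\widetilde{\sigma}_{\partial\RR}^{-1}\widetilde{\rho}_{\partial\RR}^{1/2}\big|_{\mathcal{K}}.
\]
This buys you a proof that uses strictly less structure (only the hypotheses in the statement, not the ambient PEPS), and it makes transparent why the square roots are harmless, exactly as you remark. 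The paper's route, on the other hand, fits more naturally into its surrounding machinery (the same isometries $W_{\RR}$ recur in the approximate-factorization bounds), so for the paper it is the economical choice. Both establish the common-support bookkeeping from $\comm{J_{\partial\RR}}{\mathcal{G}_{\partial\RR}}=0$ in the same way.
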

\begin{proof}
Since $J_{\partial \RR}$ and $\mathcal G_{\partial \RR}$ commute, we have that $\rho_{\partial \RR}$, $\widetilde{\rho}_{\partial \RR}$, $\sigma_{\partial \RR}$ and $\widetilde{\sigma}_{\partial \RR}$ all have the same support, namely $J_{\partial \RR}$, and so
\[ {\sigma}_{\partial \RR}^{-1} = \mathcal{G}^{-1}_{\partial \RR} \, \widetilde{\sigma}_{\partial \RR}^{-1} \, \mathcal{G}_{\partial \RR}^{\dagger \, -1}\,. \]
Therefore, if $P_{\RR}$ denotes the orthogonal projection onto $\operatorname{Im}(V_{\RR}) = \operatorname{Im}(\widetilde{V}_{\RR})$, then we can write
\[  P_{\RR}  \, = \, V_{\RR} \, \rho_{\partial \RR}^{-1} \, V_{\RR}^{\dagger}  \, = \, \widetilde{V}_{\RR} \, \widetilde{\rho}_{\partial \RR}^{\,\, -1} \, \widetilde{V}_{\RR}^{\dagger}\,. \]
From the definition of $\widetilde{\sigma}_{\partial \RR}$, we similarly see that
\[
V_{\RR} \sigma_{\partial \RR}^{-1} \, V_{\RR}^{\dagger}  \, = \, \widetilde{V}_{\RR} \, \widetilde{\sigma}_{\partial \RR}^{\,\, -1} \, \widetilde{V}_{\RR}^{\dagger}.
\]
The statement then follows from the fact that
\begin{align*}
 W_{\RR}(\rho_{\partial \RR}^{1/2} \sigma_{\partial \RR}^{-1} \rho_{\partial \RR}^{1/2} - J_{\partial R})W_{\RR}^\dag 
& = V_{\RR} \sigma_{\partial \RR}^{-1} V_{\RR}^{\dagger} - V_{\RR} \rho_{\partial \RR}^{-1} V_{\RR} \\[2mm]
& = \widetilde{V}_{\RR} \widetilde{\sigma}_{\partial \RR}^{-1} \widetilde{V}_{\RR}^{\dagger} - \widetilde{V}_{\RR} \widetilde{\rho}_{\partial \RR}^{-1} \widetilde{V}_{\RR} \\[2mm]
&=  \widetilde{W}_{\RR}(\widetilde{\rho}_{\partial \RR}^{1/2} \widetilde{\sigma}_{\partial \RR}^{-1} \widetilde{\rho}_{\partial \RR}^{1/2} - J_{\partial R})\widetilde{W}_{\RR}^{\dagger} \,,
\end{align*}
where $W_{\RR} = V_{\RR} \rho_{\partial \RR}^{-1/2}$ and $\widetilde{W}_{\RR} = \widetilde{V}_{\RR} \widetilde{\rho}_{\partial \RR}^{-1/2}$ are isometries.
\end{proof}

\begin{Coro}
Let $A, B,C$ be three regions of $\Lambda$ as in the definition of approximate factorization (Definition \ref{def:approx-fact-injective}), and assume that $\comm{J_{\partial \RR}}{G_{\partial \RR}}=0$ for $\RR \in \{ABC, AB, BC, B\}$. If the PEPS generated by $\widetilde{V}_{x}$ is $\varepsilon$-approximately factorizable, then so does the PEPS generated by $V_{x}$.
\end{Coro}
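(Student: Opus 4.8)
The plan is to carry the $\varepsilon$-approximate factorization data of the $\widetilde V_{x}$-PEPS over to the $V_{x}$-PEPS by absorbing the gauge matrices into the factors $\Delta$ and $\Omega$, and then to read off the error bounds directly from Proposition~\ref{prop:gauge-invariance}.

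First I would record how the gauge interacts with the index decomposition of Definition~\ref{def:approx-fact-injective}. By \eqref{equa:GaugeRegion} one has $V_{\RR}=\widetilde V_{\RR}\circ\mathcal{G}_{\partial\RR}$, hence $\rho_{\partial\RR}=\mathcal{G}_{\partial\RR}^{\dagger}\widetilde\rho_{\partial\RR}\mathcal{G}_{\partial\RR}$, for every region $\RR$. Since each $\mathcal{G}(e)$ lives on the single virtual edge $e$ and the index sets $a,z,c,\alpha,\gamma$ of Definition~\ref{def:approx-fact-injective} are pairwise disjoint with $\partial ABC=a\cup z\cup c$, $\partial AB=a\cup z\cup\gamma$, $\partial BC=\alpha\cup z\cup c$ and $\partial B=\alpha\cup z\cup\gamma$, on setting $\mathcal{G}_{X}:=\bigotimes_{e\in X}\mathcal{G}(e)$ we obtain $\mathcal{G}_{\partial AB}=\mathcal{G}_{a}\mathcal{G}_{z}\mathcal{G}_{\gamma}$, $\mathcal{G}_{\partial ABC}=\mathcal{G}_{a}\mathcal{G}_{z}\mathcal{G}_{c}$, $\mathcal{G}_{\partial BC}=\mathcal{G}_{\alpha}\mathcal{G}_{z}\mathcal{G}_{c}$ and $\mathcal{G}_{\partial B}=\mathcal{G}_{\alpha}\mathcal{G}_{z}\mathcal{G}_{\gamma}$, where factors supported on disjoint index sets commute.

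Next, let $\widetilde\Delta_{az},\widetilde\Delta_{zc},\widetilde\Omega_{\alpha z},\widetilde\Omega_{z\gamma}$ be invertible matrices witnessing the $\varepsilon$-approximate factorization of the $\widetilde V_{x}$-PEPS, with associated boundary observables $\widetilde\sigma_{\partial\RR}$, and define
\[
\Delta_{az}:=\mathcal{G}_{z}^{-1}\mathcal{G}_{a}^{\dagger}\,\widetilde\Delta_{az}\,\mathcal{G}_{a}\mathcal{G}_{z},\qquad
\Omega_{\alpha z}:=\mathcal{G}_{z}^{-1}\mathcal{G}_{\alpha}^{\dagger}\,\widetilde\Omega_{\alpha z}\,\mathcal{G}_{\alpha}\mathcal{G}_{z},
\]
\[
\Delta_{zc}:=\mathcal{G}_{c}^{\dagger}\mathcal{G}_{z}^{\dagger}\,\widetilde\Delta_{zc}\,\mathcal{G}_{z}\mathcal{G}_{c},\qquad
\Omega_{z\gamma}:=\mathcal{G}_{\gamma}^{\dagger}\mathcal{G}_{z}^{\dagger}\,\widetilde\Omega_{z\gamma}\,\mathcal{G}_{z}\mathcal{G}_{\gamma}.
\]
Each of these is a product of invertible matrices, hence invertible, and is supported on the index set named by its subscript (and conjugation by $\mathcal{G}$ preserves positivity of the resulting $\sigma$'s). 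Writing $\sigma_{\partial\RR}:=\mathcal{G}_{\partial\RR}^{\dagger}\widetilde\sigma_{\partial\RR}\mathcal{G}_{\partial\RR}$ and expanding — commuting $\mathcal{G}$-pieces past operators with disjoint support and inserting $\mathbbm{1}=\mathcal{G}_{z}\mathcal{G}_{z}^{-1}$ between the two building blocks — I would check that
\[
\sigma_{\partial AB}=\Omega_{z\gamma}\Delta_{az},\quad
\sigma_{\partial BC}=\Delta_{zc}\Omega_{\alpha z},\quad
\sigma_{\partial ABC}=\Delta_{zc}\Delta_{az},\quad
\sigma_{\partial B}=\Omega_{z\gamma}\Omega_{\alpha z}.
\]
Thus the $\sigma_{\partial\RR}$ produced by the transported data are exactly the gauge conjugates of the $\widetilde\sigma_{\partial\RR}$, and Proposition~\ref{prop:gauge-invariance} applies: its hypothesis $\comm{J_{\partial\RR}}{\mathcal{G}_{\partial\RR}}=0$ is precisely our assumption (and holds trivially in the injective case, where $J_{\partial\RR}=\mathbbm{1}$), so each $\sigma_{\partial\RR}$ is supported on $J_{\partial\RR}$ and $\|\rho_{\partial\RR}^{1/2}\sigma_{\partial\RR}^{-1}\rho_{\partial\RR}^{1/2}-J_{\partial\RR}\|_{\infty}=\|\widetilde\rho_{\partial\RR}^{1/2}\widetilde\sigma_{\partial\RR}^{-1}\widetilde\rho_{\partial\RR}^{1/2}-J_{\partial\RR}\|_{\infty}\le\varepsilon$ for $\RR\in\{ABC,AB,BC\}$; the corresponding bound for $\RR=B$, in the form $\|\rho_{\partial B}^{-1/2}\sigma_{\partial B}\rho_{\partial B}^{-1/2}-J_{\partial B}\|_{\infty}\le\varepsilon$, follows from the same gauge cancellation together with the isometry argument in the proof of that proposition. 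Hence the $V_{x}$-PEPS is $\varepsilon$-approximately factorizable.

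The step I expect to need the most care is the one in the previous paragraph: the index set $z=\partial B\setminus(\partial A\cup\partial C)$ is shared by a $\Delta$-type factor (support $a\cup z$, resp.\ $\alpha\cup z$) and an $\Omega$-type factor (support $z\cup c$, resp.\ $z\cup\gamma$), so a uniform recipe ``conjugate every $\mathcal{G}$-piece by itself'' does not reproduce the required products, since the single $\mathcal{G}_{z}$ appearing in $\mathcal{G}_{\partial\RR}$ cannot be grafted onto both building blocks at once unless $\mathcal{G}_{z}$ is unitary. The asymmetric choice above — conjugating the $\{az,\alpha z\}$-family by $\mathcal{G}_{z}^{-1}(\,\cdot\,)\mathcal{G}_{z}$ after the $a$- (resp.\ $\alpha$-)conjugation, and the $\{zc,z\gamma\}$-family by the full $\mathcal{G}_{z}^{\dagger}(\,\cdot\,)\mathcal{G}_{z}$ — is designed precisely so that the $\mathcal{G}_{z}$'s telescope consistently across all four products; confirming that one and the same quadruple $\Delta_{az},\Delta_{zc},\Omega_{\alpha z},\Omega_{z\gamma}$ works simultaneously for $ABC$, $AB$, $BC$ and $B$ is the only genuinely fiddly point, after which the statement follows at once from Proposition~\ref{prop:gauge-invariance}.
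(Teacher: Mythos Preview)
Your proposal is correct and follows essentially the same route as the paper: transport the factorization data through the gauge by defining $\Delta$'s and $\Omega$'s so that the resulting $\sigma_{\partial\RR}$ equal $\mathcal{G}_{\partial\RR}^{\dagger}\widetilde\sigma_{\partial\RR}\mathcal{G}_{\partial\RR}$, and then invoke Proposition~\ref{prop:gauge-invariance}. The only difference is cosmetic: the paper distributes the shared $\mathcal{G}_{z}$ asymmetrically as $\Delta_{az}=\mathcal{G}_{a}^{\dagger}\widetilde\Delta_{az}\mathcal{G}_{az}$, $\Omega_{z\gamma}=\mathcal{G}_{z\gamma}^{\dagger}\widetilde\Omega_{z\gamma}\mathcal{G}_{\gamma}$ (and similarly for the other pair), whereas you insert an extra $\mathcal{G}_{z}^{-1}(\,\cdot\,)\mathcal{G}_{z}$ conjugation on one side; the two choices differ by a harmless $\mathcal{G}_{z}^{\pm 1}$ factor that telescopes in every product, yielding identical $\sigma_{\partial\RR}$.
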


\begin{proof}
Let us assume then that the PEPS with local tensors $\widetilde{V}_{x}$ is $\varepsilon$-approximately factorizable. 
Because of Proposition~\ref{prop:gauge-invariance}, it is sufficient to verify that ${\sigma}_{\partial \RR}$ satisfies the necessary locality properties.
If $\widetilde J_{\partial \RR}$ and $\widetilde{\sigma}_{\partial \RR}$ are product operators (as in Theorem~\ref{theo:non-injective-approx-fact}), then so are $J_{\partial \RR}$ and $\sigma_{\partial \RR}$, and there is nothing to prove.

Let us now consider the case in which $\widetilde{\sigma}_{\partial \RR}$ is not in a tensor product form (as in Definition~\ref{def:approx-fact-injective}).
Let $a,\alpha, z, \gamma, c$ be the regions dividing the boundaries  $\partial ABC = a z c$, $\partial AB = az \gamma$, $\partial BC = \alpha z c$, $\partial B = \alpha z \gamma$ and  let   $\widetilde{\Delta}_{az}, \widetilde{\Delta}_{zc}, \widetilde{\Omega}_{\alpha z}, \widetilde{\Omega}_{z \gamma}$ be the corresponding matrices. Note that the gauge matrices $\mathcal{G}_{\partial \RR}$ can be rearranged according to the boundary subregions, e.g.
\[ \mathcal{G}_{\partial ABC} = \mathcal{G}_{a zc} = \mathcal{G}_{az} \mathcal{G}_{c} = \mathcal{G}_{a} \mathcal{G}_{zc}\,. \]
If we define
\begin{equation}\label{equa:Approx_Fact_Gauge_Aux_1}
\begin{split}
& \Delta_{az} := \mathcal{G}_{a}^{\dagger} \, \widetilde{\Delta}_{az} \, \mathcal{G}_{az} , \quad
\Delta_{zc} := \mathcal{G}_{zc}^{\dagger} \, \widetilde{\Delta}_{zc} \, \mathcal{G}_{c}\,,\\
&\Omega_{\alpha z} := \mathcal{G}_{\alpha}^{\dagger} \, \widetilde{\Omega}_{\alpha z} \mathcal{G}_{\alpha z} , \quad 
\Omega_{z\gamma} := \mathcal{G}_{z \gamma}^{\dagger} \,  \widetilde{\Omega}_{z\gamma} \, \mathcal{G}_{\gamma}\,,
\end{split}
\end{equation}
then, we can directly check that $\sigma_{\partial \RR}$ satisfies
\begin{align*}
& \sigma_{\partial AB} := \Omega_{z \gamma} \Delta_{az}, 
& \sigma_{\partial BC} := \Delta_{z c} \Omega_{\alpha z},\\
& \sigma_{\partial ABC} := \Delta_{z c} \Delta_{az},
& \sigma_{\partial B} := \Omega_{z \gamma} \Omega_{\alpha z}.
\end{align*}
This finishes the proof.
\end{proof}



\section{PEPS description of the thermofield double}
\label{sec:peps-quantum-double}
\subsection{Quantum Double Models}
Let us begin by recalling the definition of the Quantum Double Models. They are defined on the lattice $\Lambda_{N}$ consisting of midpoints of the edges of the square lattice $\mathbb{Z}_N \times \mathbb{Z}_{N}$ (see Section~\ref{subsec:torusSettingAndGap}). Let us denote by $\mathcal{V}=\mathcal{V}_{N}$ the set of vertices, and by $\mathcal{E}=\mathcal{E}_{N}$ the set of edges of $\mathbb{Z}_N \times \mathbb{Z}_{N}$. Each edge is given an orientation: for simplicity, we will assume that all horizontal edges point to the left, while vertical edges point downwards.

\[
\begin{tikzpicture}[equation, scale=0.4]

    \draw[step=1.0,gray,thick] (0,0) grid (5,5);
    \draw[postaction=torus horizontal] (0,0) -- (5,0);
    \draw[postaction=torus horizontal] (0,5) -- (5,5);
    \draw[postaction=torus vertical] (5,0) -- (5,5);
    \draw[postaction=torus vertical] (0,0) -- (0,5);

    \draw[<->] (0,-1) -- node[below]{$N$} (5,-1);
    \draw[<->] (6,0) --  node[right]{$N$} (6,5);

\begin{scope}[xshift=12cm, decoration={
    markings,
    mark=at position 0.7 with {\arrow{latex}}}]
\foreach \x in {1,...,5} 
\foreach \y in {1,...,5}{

\begin{scope}[xshift=\x cm, yshift=\y cm]
\draw[postaction={decorate}, gray] (0,0)  -- (-1,0); 
\draw[postaction={decorate}, gray] (-1,0)  -- (-1,-1); 
\draw[postaction={decorate}, gray] (0,0)  -- (0,-1); 
\draw[postaction={decorate}, gray] (0,-1)  -- (-1,-1); 
\end{scope}
}
\end{scope}

\end{tikzpicture}
\]

%
%
%

Let us fix an arbitrary finite group $G$ and denote let $\ell_2(G)$ be the complex finite dimensional Hilbert space with orthonormal basis given by $\{ \ket{g} \mid g \in G\}$. At each edge $e \in \mathcal{E}$ we have a local Hilbert space $\mathcal{H}_e$ and a space of observables $\mathcal{B}_e$ defined as
\[
\mathcal{H}_{e} = \ell_{2}(G) \quad \text{and} \quad  \mathcal{B}_{e} = \mathcal{B}(\mathcal{H}_{e}) = \mathcal{M}_{|G|}(\mathbb{C}).
\]
We will use the alternative notation $\mathcal{H}_{\Lambda} = \mathcal{H}_{\EE}$ and $\mathcal{B}_{\Lambda} = \mathcal{B}_{\EE}$. 

Given $g \in G$ we define operators on $\ell_2(G)$ by
\begin{equation}\label{eq:left-regular-repr} 
L^{g} := \sum_{h \in G} \dyad{gh}{h}.
\end{equation}
Then $g\mapsto L^g$ is a representation of the group $G$, known as the \emph{left regular} representation.

For each finite group $G$, the Quantum Double Model on $\Lambda$ is defined by a Hamiltonian $H_\Lambda^{\text{syst}}$ of the form
\[ H_\Lambda^{\text{syst}} = - \sum_{v \mbox{\tiny \,  vertex}}{A(v)} \,\, - \sum_{p \mbox{\tiny \,  plaquette }} B(p)\,;\]
where the terms $A(v)$ are \emph{star operators}, supported on the four incident edges of $v$, which we will denote as $\partial v$, while $B(p)$ are \emph{plaquette operators}, supported on the four edges forming the plaquette $p$. Both terms are projections and they commute, namely
\[ [A(v), A(v')] \,\, = \,\, [B(p), B(p')] \,\, = \,\, [A(v), B(p)] \,\, = \,\, 0  \]
for all vertices $v,v'$ and plaquettes $p,p'$.
We will now explicitly define these terms, and a straightforward calculation will show that they satisfy these properties.

Let $v$ be a vertex and $e$ an edge incident to $v$. For each $g \in G$ we define the operator $T^{g}(v,e)$ acting on $\mathcal{H}_{e}$ according to the orientation given to $e$ as 
\[
  \begin{tikzpicture}[equation, decoration={
    markings,
    mark=at position 0.65 with {\arrow{latex}}}]   
    \draw (-2,0.5) node {$T^{g}(v,e) \quad =$};  
    \draw[black, thick] (0,-0.1)  -- (0,0.1); 
    \draw[postaction={decorate}, black, thick] (-0.1,0)  -- (1,0); 
    \draw (0.5,0.7) node {\small $\displaystyle \sum_{h \in G} \dyad{gh}{h}$};  
    \draw (2.4,0.5) node {or};    
    \draw[black, thick] (4.1,-0.1)  -- (4.1,0.1);
    \draw[postaction={decorate}, black, thick] (5.1,0)  -- (4,0); 
    \draw (4.5,0.7) node {\small $\displaystyle \sum_{h \in G} \dyad{hg^{-1}}{h}$};  
\end{tikzpicture}
\,.
\]
In other words, the operator $T^{g}(v,e)$ acts on the basis vector of $\mathcal{H}_{e}$ by taking $h$ into $g h$ (resp. $hg^{-1}$) if the oriented edge $e$ points away from (resp. to) $v$. It is easily checked that
\begin{equation}\label{equa:TgRelations} 
T^{g}(v,e) \, T^{h}(v,e) = T^{gh}(v,e) \quad \mbox{ and }  \quad T^{g}(v,e)^{\dagger} = T^{g^{-1}}(v,e) 
\end{equation}
for every $g,h \in G$.

With this definition, the vertex operator $A(v)$ is given by
\[
  \begin{tikzpicture}[equation]   
  \draw[step=1.0,gray,thin] (-1,-1) grid (1,1);
  \draw[ultra thick] (-0.5,0) -- (0.5, 0);
  \draw[ultra thick] (0,-0.5) -- (0,0.5);
  \shade[ball color=black] (-0.5,0) circle (0.8ex);
  \shade[ball color=black] (0.5,0) circle (0.8ex);
  \shade[ball color=black] (0,-0.5) circle (0.8ex);
  \shade[ball color=black] (0,0.5) circle (0.8ex);
      \draw (6,0) node {$A(v) \, = \, \displaystyle \frac{1}{|G|} \, \sum\limits_{g \in G} \, \bigotimes\limits_{e \in \partial v} T^{g}(v,e)$};
\end{tikzpicture}
\,.
\]
Using  \eqref{equa:TgRelations}, it is easy to verify that $A(v)$ is a projection.

The plaquette operator $B(p)$ is defined as follows. Let us enumerate the four edges of $p$ as $e_{1}, e_{2}, e_{3}, e_{4}$ following counterclockwise order starting from the upper horizontal edge.  The plaquette operator on $p$ acts on $\otimes_{j=1}^{4}{\mathcal{H}_{e_{j}}}$ and is defined as the orthogonal projection $B(p)$ onto the subspace spanned by basis vectors of the form $\ket{g_{1} g_{2}g_{3}g_{4}}$ with $\sigma_{p}(g_{1})  \sigma_{p}(g_{2}) \sigma_{p}(g_{3}) \sigma_{p}(g_{4}) = 1$. Here, $\sigma_{p}(g)$ is equal to $g$ if the orientation of the corresponding edge agrees with the counter-clockwise labelling, otherwise it is equal to $g^{-1}$. 

For the orientation we have previously fixed, we can give an explicit expression in terms of the regular character $g \mapsto  \chi^{reg}(g) = \operatorname{Tr}(L^{g}) = |G| \, \delta_{g,1}\,$, namely 

\[
  \begin{tikzpicture}[equation]   
  \draw[step=1.0,gray,thin] (-0.5,-0.5) grid (1.5,1.5);
  \draw[ultra thick] (0,0) -- (1, 0) -- (1,1) -- (0,1) -- (0,0);
  \shade[ball color=black] (0.5,0) circle (0.8ex);
  \shade[ball color=black] (1,0.5) circle (0.8ex);
  \shade[ball color=black] (0.5,1) circle (0.8ex);
  \shade[ball color=black] (0,0.5) circle (0.8ex);
  
    \draw (7.5,0.5) node {$B(p) \, = \, \displaystyle\frac{1}{|G|} \, \sum_{g_{1}, g_{2}, g_{3}, g_{4} \in G} \chi^{reg}(g_{1} g_{2} g_{3}^{-1} g_{4}^{-1}) \,\, \bigotimes_{j=1}^{4} \dyad{g_{j}} $};
\end{tikzpicture}.
\]


\noindent Fixed $\beta < \infty$, the associated Gibbs state at inverse temperature $\beta$ is given by
\[ \rho_{\beta} = e^{- \beta H^{\text{syst}}_{\Lambda}}/\operatorname{Tr}(e^{- \beta H^{\text{syst}}_{\Lambda}})\,. \]
Since the star and plaquette operators commute, we can decompose
\begin{equation}\label{equa:GibbsQDMdecomposition1} 
e^{-\frac{\beta}{2} H^{\text{syst}}_{\Lambda}} =  \prod_{v \mbox{\tiny \,  vertex}}{e^{\frac{\beta}{2}A(v)}}  \prod_{p \mbox{\tiny \,  plaquette }} e^{\frac{\beta}{2} B(p)}\,. 
\end{equation}

\noindent Using the fact that $A(v)$ and $B(p)$ are projections, we can rewrite last expression as
\[
    e^{-\frac{\beta}{2} H^{\text{syst}}_{\Lambda}} = \prod_{v \mbox{\tiny \,  vertex}}
    \qty( \operatorname{Id} + (e^{\beta/2} - 1) \, A(v)) 
    \prod_{p \mbox{\tiny \,  plaquette }} 
    \qty( \operatorname{Id} + (e^{\beta/2} - 1) \, B(p)) .
\]

\subsection{PEPO elementary tensors}
We will now construct a PEPO representation of the interactions $A(v)$ and $B(p)$ (which will actually be a MPO representation). From this we will obtain a very similar PEPO representation for $\operatorname{Id} + (e^{\beta/2} - 1) \, A(v) $ and $ \operatorname{Id} + (e^{\beta/2} - 1) \, B(p)$. Combining the single-site tensors of each we will derive the PEPO representation of $ e^{-\frac{\beta}{2} H^{\text{syst}}_{\Lambda}} $ and the corresponding PEPS representation of $\ket*{\rho_\beta^{1/2}}$. We will use the notation
\[ \gamma_{\beta}:= \frac{e^{\beta}-1}{|G|} \]
along the section. It is also recommended to review the tensor notation that was introduced in Section \ref{sec:tensornotation}, as we will be using it extensively in the upcoming sections.
\subsubsection{Star  operator as a PEPO}
The star operator $A(v)$ admits an easy representation as a PEPO, namely
\[
A(v) \,\,\, = \,\,\, \frac{1}{|G|} \, \sum\limits_{g \in G} \, \bigotimes\limits_{e \in \partial v} T^{g}(e,v)  \,\,\, = \,\,\, \frac{1}{|G|}\, \cdot \,\, 
 \begin{tikzpicture}[equation, scale=0.8,decoration={
    markings,
    mark=at position 0.3 with {\arrow{latex}},
    mark=at position 0.9 with {\arrow{latex}}}, rotate around y=12]
    
       \begin{scope}[canvas is zx plane at y=0]
     \draw[red, thick] (0,0) circle (1);

        \end{scope}
        
        \draw[thick, postaction={decorate}] (1,-1,0) -- (1,0.8,0);
        \draw[thick, postaction={decorate}] (0,-1,1) -- (0,0.8,1);
        \draw[thick, postaction={decorate}] (-1,-1,0) -- (-1,0.8,0);
        \draw[thick,postaction={decorate}] (0,-1,-1) -- (0,0.8,-1);
        
     \shade[ball color=black] (1,0,0) circle (0.7ex);
     \shade[ball color=black] (0,0,1) circle (0.7ex);
     \shade[ball color=black] (-1,0,0) circle (0.7ex);
     \shade[ball color=black] (0,0,-1) circle (0.7ex);
    
    \draw[gray, thick]     (0,-1,-1.5)  -- (0,-1,1.5); 
    \draw[gray, thick]   (-1.5,-1,0)  -- (1.5,-1,0);
    
    
    \end{tikzpicture}
    \,,
\]

\noindent where each individual tensor consists of four indices: two physical indices colored in black and two virtual indices colored in red. All indices have the same dimension and are identified with $\ell_{2}(G)$. These tensors can be explicitly described using the notation introduced in Section \ref{sec:tensornotation} as follows:

\[
  \begin{tikzpicture}[equation, decoration={
    markings,
    mark=at position 0.2 with {\arrow{latex}},
    mark=at position 0.93 with {\arrow{latex}}}]
    \draw[postaction={decorate}, thick] (0,-0.8) -- (0,0.8);
  \draw [postaction={decorate}, red, thick,domain=45:135] plot ({cos(\x)}, {sin(\x)-1});
   \shade[ball color=black] (0,0) circle (1ex);
\draw[white] (0,1) circle (0.1);
    \end{tikzpicture}
    \,\,\, = \,\,\, \sum_{g \in G}  \quad T_{g} \,\,
     \begin{tikzpicture}[equation, decoration={
    markings,
    mark=at position 0.2 with {\arrow{latex}},
    mark=at position 0.93 with {\arrow{latex}}}]
\draw[white] (0,1) circle (0.1);
  \draw[postaction={decorate}, thick] (0,-0.8) -- (0,0.8);
  \shade[ball color=black] (0,0) circle (1ex);

\end{tikzpicture}
\,\, \otimes \,\,
    \begin{tikzpicture}[equation, decoration={
    markings,
    mark=at position 0.2 with {\arrow{latex}},
    mark=at position 0.93 with {\arrow{latex}}}]
\draw[postaction={decorate}, red, thick]
(2.5,-0.5) -- (2,0) -- (1.5,-0.5);
\draw (2,0.4) node {\small $\dyad{g}{g}$};
    \end{tikzpicture}\,.
\]

\noindent Since $A(v)$ is a projection,  
\[ e^{\frac{\beta}{2} A(v)} \,  = \, \operatorname{Id} + (e^{\beta/2} - 1) \, A(v) \, = \, \operatorname{Id} + \gamma_{\beta/2} \, |G| \, A(v) \,,  \] 
or equivalently
\begin{align*} 
e^{\frac{\beta}{2} A(v)} \, & = \,  \left( \, 1 + \gamma_{\beta/2} \right) \, \bigotimes\limits_{e \in \partial v} T^{1}(e,v)  + \, \left( \, \gamma_{\beta/2} \right) \, \sum\limits_{\substack{g \in G\\ g \neq 1}} \, \bigotimes\limits_{e \in \partial v} T^{g}(e,v) \,.
\end{align*}
Comparing with $A(v)$, we find a natural description 
as a PEPO
\[
e^{\frac{\beta}{2}A(v)} \, = \,\, 
 \begin{tikzpicture}[equation, scale=0.8,decoration={
    markings,
    mark=at position 0.4 with {\arrow{latex}},
    mark=at position 0.9 with {\arrow{latex}}}, rotate around y=12]
    
   \begin{scope}[canvas is zx plane at y=0]
     \draw[red, thick] (0,0) circle (1);

        \end{scope}
        
           \draw[red, fill=white] ({cos(-40)},0,{sin(-40}) circle (0.5ex);  
     \draw[red, fill=white] ({cos(40)},0,{sin(40)}) circle (0.5ex);   
     
         \draw[red, fill=white] ({cos(65)},0,{sin(65)}) circle (0.5ex);   
     \draw[red, fill=white] ({cos(110)},0,{sin(110)}) circle (0.5ex); 
     
     \draw[red, fill=white] ({cos(135)},0,{sin(135)}) circle (0.5ex);   
     \draw[red, fill=white] ({cos(215)},0,{sin(215)}) circle (0.5ex);

     \draw[red, fill=white] ({cos(240)},0,{sin(240)}) circle (0.5ex);   
     \draw[red, fill=white] ({cos(290)},0,{sin(290)}) circle (0.5ex);

        \draw[thick, postaction={decorate}] (1,-1,0) -- (1,0.8,0);
        \draw[thick, postaction={decorate}] (0,-1,1) -- (0,0.8,1);
        \draw[thick, postaction={decorate}] (-1,-1,0) -- (-1,0.8,0);
        \draw[thick,postaction={decorate}] (0,-1,-1) -- (0,0.8,-1);
        
     \shade[ball color=black] (1,0,0) circle (0.7ex);
     \shade[ball color=black] (0,0,1) circle (0.7ex);
     \shade[ball color=black] (-1,0,0) circle (0.7ex);
     \shade[ball color=black] (0,0,-1) circle (0.7ex);

    
    \draw[gray, thick]     (0,-1,-1.5)  -- (0,-1,1.5); 
    \draw[gray, thick]   (-1.5,-1,0)  -- (1.5,-1,0);
    
   \end{tikzpicture}
\]

\noindent where we are adding to the above representation for $A(v)$ suitable \emph{weights}
\begin{equation}\label{equa:redwhiteweight}
 \begin{tikzpicture}[equation, decoration={
    markings,
    mark=at position 0.2 with {\arrow{latex}},
    mark=at position 0.93 with {\arrow{latex}}}]

\draw[postaction={decorate}, red, thick] (0.8,0) -- (-0.8,0);
\draw[red, very thick, fill=white] (0,0) circle (1ex); 
\draw (0,0.5) node {$\weightS$};
\end{tikzpicture} 
\,\, = \,\, (1+ \gamma_{\beta/2})^{1/8} \,\,
\begin{tikzpicture}[equation, decoration={
    markings,
    mark=at position 0.2 with {\arrow{latex}},
    mark=at position 0.93 with {\arrow{latex}}}]
    
    \draw[postaction={decorate}, red, thick] (1.6,0) -- (0,0);
\draw (0.8,0.4) node {\small $\dyad{1}{1}$};

\end{tikzpicture}    
\,\, + \,\,  \left( \gamma_{\beta/2}\right)^{1/8} \,\,
\begin{tikzpicture}[equation, decoration={
    markings,
    mark=at position 0.2 with {\arrow{latex}},
    mark=at position 0.93 with {\arrow{latex}}}]
    
    \draw[postaction={decorate}, red, thick] (1.6,0) -- (0,0);
\draw (0.8,0.4) node {\small $\sum_{g \neq 1} \dyad{g}{g}$};

\end{tikzpicture}  \,.
\end{equation}

\noindent Therefore, we have a PEPO decomposition of $e^{\frac{\beta}{2} A(v)}$ into four identical tensors acting individually on each edge 

\[
  \begin{tikzpicture}[equation, decoration={
    markings,
    mark=at position 0.2 with {\arrow{latex}},
    mark=at position 0.93 with {\arrow{latex}}}]
    \draw[postaction={decorate}, thick] (0,-0.8) -- (0,0.8);
  \draw [postaction={decorate}, red, thick,domain=45:135] plot ({cos(\x)}, {sin(\x)-1});
   \shade[ball color=black] (0,0) circle (1ex);
\draw[red,  fill=white] ({cos(70)}, {sin(70)-1}) circle (0.5ex);
\draw[red,  fill=white] ({cos(110)}, {sin(110)-1}) circle (0.5ex); 
\draw[white] (0,1) circle (0.1);
    \end{tikzpicture}
    \,\,\, = \,\,\, \sum_{g \in G} \quad T^{g} \,\,
     \begin{tikzpicture}[equation, decoration={
    markings,
    mark=at position 0.2 with {\arrow{latex}},
    mark=at position 0.93 with {\arrow{latex}}}]
\draw[white] (0,1) circle (0.1);
  \draw[postaction={decorate}, thick] (0,-0.8) -- (0,0.8);
  \shade[ball color=black] (0,0) circle (1ex);

\end{tikzpicture}
\,\,\,\, \otimes \,\,
    \begin{tikzpicture}[equation, decoration={
    markings,
    mark=at position 0.2 with {\arrow{latex}},
    mark=at position 0.93 with {\arrow{latex}}}]
\draw[postaction={decorate}, red, thick]
(2.5,-0.5) -- (2,0) -- (1.5,-0.5);
\draw (2,0.4) node {\small $\weightS  \dyad{g}{g}  \weightS$};
    \end{tikzpicture}\,,
\]
where we can expand
\[ \weightS  \dyad{g}{g}  \weightS = \left( \delta_{g,1} + \gamma_{\beta/2} \right)^{1/4} \, \dyad{g}{g}\,. \]

\subsubsection{Plaquette operator as a PEPO}

The plaquette operator
\[  B(p) \,\, = \,\, \dfrac{1}{|G|} \, \displaystyle\sum_{g_{1}, g_{2}, g_{3}, g_{4} \in G} \chi^{reg}(g_{1} g_{2} g_{3}^{-1} g_{4}^{-1}) \,\, \bigotimes_{j=1}^{4} \dyad{g_{j}}  \]
admits an easy PEPO representation.  Using that
\[ \chi^{reg}(g_{1} g_{2} g_{3}^{-1} g_{4}^{-1}) =  \operatorname{Tr}(L^{g_{1}} L^{g_{2}}  L^{g_{3}^{-1}}  L^{g_{4}^{-1}})\,,
\]
we can decompose

\[
B(p) \, = \, \frac{1}{|G|}\, \cdot \,\, 
 \begin{tikzpicture}[equation, scale=0.8,decoration={
    markings,
    mark=at position 0.37 with {\arrow{latex}},
    mark=at position 0.9 with {\arrow{latex}}}, rotate around y=12]
    
       \begin{scope}[canvas is zx plane at y=0]
     \draw[blue, thick] (0,0) circle (1);

        \end{scope}
        
        \draw[thick, postaction={decorate}] (1,-1,0) -- (1,0.8,0);
        \draw[thick, postaction={decorate}] (0,-1,1) -- (0,0.8,1);
        \draw[thick, postaction={decorate}] (-1,-1,0) -- (-1,0.8,0);
        \draw[thick,postaction={decorate}] (0,-1,-1) -- (0,0.8,-1);
        
     \shade[ball color=black] (1,0,0) circle (0.7ex);
     \shade[ball color=black] (0,0,1) circle (0.7ex);
     \shade[ball color=black] (-1,0,0) circle (0.7ex);
     \shade[ball color=black] (0,0,-1) circle (0.7ex);
    
    \draw[gray, thick]     (-1,-1,-1)  -- (1,-1,-1)  -- (1,-1,1)  -- (-1,-1,1)  -- (-1,-1,-1);
    
    
    \end{tikzpicture}
\]

\noindent where
\[
  \begin{tikzpicture}[equation, decoration={
    markings,
    mark=at position 0.2 with {\arrow{latex}},
    mark=at position 0.93 with {\arrow{latex}}}]
    \draw[postaction={decorate}, thick] (0,-0.8) -- (0,0.8);
  \draw [postaction={decorate}, blue, thick,domain=45:135] plot ({cos(\x)}, {sin(\x)-1});
   \shade[ball color=black] (0,0) circle (1ex);
\draw[white] (0,1) circle (0.1);
    \end{tikzpicture}
    \,\,\, = \,\,\, \sum_{g \in G}  \quad \dyad{g}{g} \,\,
     \begin{tikzpicture}[equation, decoration={
    markings,
    mark=at position 0.2 with {\arrow{latex}},
    mark=at position 0.93 with {\arrow{latex}}}]
\draw[white] (0,1) circle (0.1);
  \draw[postaction={decorate}, thick] (0,-0.8) -- (0,0.8);
  \shade[ball color=black] (0,0) circle (1ex);

\end{tikzpicture}
\,\, \otimes \,\,
    \begin{tikzpicture}[equation, decoration={
    markings,
    mark=at position 0.2 with {\arrow{latex}},
    mark=at position 0.93 with {\arrow{latex}}}]
\draw[postaction={decorate}, blue, thick]
(2.5,-0.5) -- (2,0) -- (1.5,-0.5);
\draw (2.3,0.4) node {$L^{g^{\pm}}$};
    \end{tikzpicture}
\]
and $L^{g^{\pm}}$ is a shorthand for $L^{\sigma_p(g)}$: it is either $L^g$ or $L^{g^{-1}} = (L^g)^\dag$ depending on whether the orientation of the edge agrees with the counter-clockwise orientation of the plaquette ($L^g$), or is the opposite ($L^{g^{-1}}$).

\noindent As in the case of the star operator, $B(p)$ is a projection so 
\begin{align*} 
e^{\frac{\beta}{2} B(p)} & = \operatorname{Id} + (e^{\beta/2} - 1) \, B(p)\\[2mm]
& = \sum_{g_{1}, g_{2}, g_{3}, g_{4} \in G} \left( 1 + \, \gamma_{\beta/2} \,\,  \chi^{reg}(g_{1} g_{2} g_{3}^{-1} g_{4}^{-1}) \right)\,\, \bigotimes_{j=1}^{4} \dyad{g_{j}}
\end{align*}

\noindent Recall that left regular representation is in general not irreducible \cite{James}. Let us denote
\begin{equation}\label{eq:p1-projection}
    P_1 = \frac{1}{|G|} \sum_{g\in G} L_g.
\end{equation}
By direct calculation, one can verify that $P_1$ is an orthogonal projection on a subspace of dimension 1 (since $\Tr P_1 =1$). Moreover, since $L_g P_1 = P_1$ for every $g\in G$, the regular representation acts trivially on this subspace ($P_1$ is the orthogonal projection onto $V_{1}$ the unique trivial irreducible sub-representation of the regular representation).
As a consequence of this, we get that for every $g \in G$
\[ \Tr(P_1 L^g) = \Tr P_1 = 1 .\]

\noindent If we denote $P_{0}:= P_{1}^{\perp} = \mathbbm{1} - P_{1}$, then we have for every $g \in G$
\begin{equation}\label{equa:decompLgIrrepProjectors}
    L^{g} = P_{1}L^{g} P_{1} + P_{0} L^{g}P_{0} = P_{1} + P_{0}L^{g} P_{0}\,.
\end{equation}
and so
\begin{equation}\label{equa:RegularCharDecompProjectors}
\begin{split} 
1 + (\gamma_{\beta/2}) \, \chi^{reg}(g_{1} g_{2} g_{3}^{-1} g_{4}^{-1})& =1 + (\gamma_{\beta/2}) \, \chi^{reg}(g_{4} g_{3} g_{2}^{-1} g_{1}^{-1})\\[2mm]
&= 1 + (\gamma_{\beta/2}) \, \operatorname{Tr}(L^{g_{4}} L^{g_{3}}  L^{g_{2}^{-1}}  L^{g_{1}^{-1}})\\[2mm]
& = \left( 1 + \gamma_{\beta/2} \right) \operatorname{Tr}(P_{1} \, L^{g_{4}} L^{g_{3}}  L^{g_{2}^{-1}}  L^{g_{1}^{-1}})\\[2mm]
& \quad + \, (\gamma_{\beta/2}) \, \operatorname{Tr}( \, P_{0} \, L^{g_{4}} L^{g_{3}}  L^{g_{2}^{-1}}  L^{g_{1}^{-1}})\,.
\end{split}
\end{equation}
Comparing with $B(p)$, we have then the following decomposition

\[
e^{\frac{\beta}{2}B(p)} \, = \,\, 
 \begin{tikzpicture}[equation, scale=0.8,decoration={
    markings,
    mark=at position 0.4 with {\arrow{latex}},
    mark=at position 0.9 with {\arrow{latex}}}, rotate around y=12]
    
   \begin{scope}[canvas is zx plane at y=0]
     \draw[blue, thick] (0,0) circle (1);

        \end{scope}
        
           \draw[blue, fill=white] ({cos(-40)},0,{sin(-40}) circle (0.5ex);  
     \draw[blue, fill=white] ({cos(40)},0,{sin(40)}) circle (0.5ex);   
     
         \draw[blue, fill=white] ({cos(65)},0,{sin(65)}) circle (0.5ex);   
     \draw[blue, fill=white] ({cos(110)},0,{sin(110)}) circle (0.5ex); 
     
     \draw[blue, fill=white] ({cos(135)},0,{sin(135)}) circle (0.5ex);   
     \draw[blue, fill=white] ({cos(215)},0,{sin(215)}) circle (0.5ex);

     \draw[blue, fill=white] ({cos(240)},0,{sin(240)}) circle (0.5ex);   
     \draw[blue, fill=white] ({cos(290)},0,{sin(290)}) circle (0.5ex);

        \draw[thick, postaction={decorate}] (1,-1,0) -- (1,0.8,0);
        \draw[thick, postaction={decorate}] (0,-1,1) -- (0,0.8,1);
        \draw[thick, postaction={decorate}] (-1,-1,0) -- (-1,0.8,0);
        \draw[thick,postaction={decorate}] (0,-1,-1) -- (0,0.8,-1);
        
     \shade[ball color=black] (1,0,0) circle (0.7ex);
     \shade[ball color=black] (0,0,1) circle (0.7ex);
     \shade[ball color=black] (-1,0,0) circle (0.7ex);
     \shade[ball color=black] (0,0,-1) circle (0.7ex);

    
     \draw[gray, thick]     (-1,-1,-1)  -- (1,-1,-1)  -- (1,-1,1)  -- (-1,-1,1)  -- (-1,-1,-1);
    
   \end{tikzpicture}
\]
\noindent where we are adding to the above representation for $B(p)$ suitable weights

\begin{equation}\label{equa:bluewhiteweight}
 \begin{tikzpicture}[equation, decoration={
    markings,
    mark=at position 0.2 with {\arrow{latex}},
    mark=at position 0.93 with {\arrow{latex}}}]

\draw[postaction={decorate}, blue, thick] (0.8,0) -- (-0.8,0);
\draw[blue, very thick, fill=white] (0,0) circle (1ex); 
\draw (0,0.5) node {$\weightP$};
\end{tikzpicture} 
\,\, = \,\, (1+ \gamma_{\beta/2})^{1/8} \,\,
\begin{tikzpicture}[equation, decoration={
    markings,
    mark=at position 0.2 with {\arrow{latex}},
    mark=at position 0.93 with {\arrow{latex}}}]
    
    \draw[postaction={decorate}, blue, thick] (1.6,0) -- (0,0);
\draw (0.8,0.4) node {\small $P_{1}$};

\end{tikzpicture}    
\,\, + \,\,  \left( \gamma_{\beta/2}\right)^{1/8} \,\,
\begin{tikzpicture}[equation, decoration={
    markings,
    mark=at position 0.2 with {\arrow{latex}},
    mark=at position 0.93 with {\arrow{latex}}}]
    
    \draw[postaction={decorate}, blue, thick] (1.6,0) -- (0,0);
\draw (0.8,0.4) node {\small $P_{0}$};

\end{tikzpicture}  
\end{equation}

Therefore, we have a PEPO decomposition of $e^{\frac{\beta}{2} B(p)}$ into four identical tensors acting individually on each edge and given by

\[
  \begin{tikzpicture}[equation, decoration={
    markings,
    mark=at position 0.2 with {\arrow{latex}},
    mark=at position 0.93 with {\arrow{latex}}}]
    \draw[postaction={decorate}, thick] (0,-0.8) -- (0,0.8);
  \draw [postaction={decorate}, blue, thick,domain=45:135] plot ({cos(\x)}, {sin(\x)-1});
   \shade[ball color=black] (0,0) circle (1ex);
\draw[blue,  thick, fill=white] ({cos(70)}, {sin(70)-1}) circle (0.5ex);
\draw[blue, thick,  fill=white] ({cos(110)}, {sin(110)-1}) circle (0.5ex); 
\draw[white] (0,1) circle (0.1);
    \end{tikzpicture}
    \,\,\, = \,\,\, \sum_{g \in G}  \quad \dyad{g}{g} \,\,
     \begin{tikzpicture}[equation, decoration={
    markings,
    mark=at position 0.2 with {\arrow{latex}},
    mark=at position 0.93 with {\arrow{latex}}}]
\draw[white] (0,1) circle (0.1);
  \draw[postaction={decorate}, thick] (0,-0.8) -- (0,0.8);
  \shade[ball color=black] (0,0) circle (1ex);

\end{tikzpicture}
\,\, \otimes \,\,
    \begin{tikzpicture}[equation, decoration={
    markings,
    mark=at position 0.2 with {\arrow{latex}},
    mark=at position 0.93 with {\arrow{latex}}}]
\draw[postaction={decorate}, blue, thick]
(2.5,-0.5) -- (2,0) -- (1.5,-0.5);
\draw (2,0.4) node {\small $\weightP \, L^{g^{\pm}} \, \weightP$};
    \end{tikzpicture}
\]
where we can expand
\[ \weightP \, L^{g^{\pm}} \, \weightP =  (1+ \gamma_{\beta/2})^{1/4}  \, P_{1} \, + \, \left( \gamma_{\beta/2}\right)^{1/4} \, P_{0} \, L^{g^{\pm}} \, P_{0}\,.\]
The choice of the sign is given according to the next picture
\[
\begin{tikzpicture}[equation, scale=0.6]
\draw [black!30!white,very thick] (-1,-1) -- (1,-1) -- (1,1) -- (-1,1) -- cycle;
\begin{scope}[scale=0.8, decoration={markings, mark=at position 0.55 with {\arrow{stealth}}}]
\draw [blue,very thick,postaction={decorate}] (0,-1) -- (1,0);
\draw [blue,very thick,postaction={decorate}] (1,0) -- (0,1);
\draw [blue,very thick,postaction={decorate}] (0,1) -- (-1,0);
\draw [blue,very thick,postaction={decorate}] (-1,0) -- (0,-1);
\end{scope}
\draw (0,1) node[above] {$g_{1}^{-1}$}; 
\draw (-1,0) node[left] {$g_{2}^{-1}$}; 
\draw (0,-1) node[below] {$g_{3}$}; 
\draw (1,0) node[right] {$g_{4}$}; 
\end{tikzpicture}
\]
so that the order in the composition of the maps fits with \eqref{equa:RegularCharDecompProjectors}.

\subsubsection{PEPS tensor on an edge}

We have decomposed each star  operator $e^{\frac{\beta}{2} A(s)}$, resp. plaquette operator $e^{\frac{\beta}{2} B(p)}$, into four tensors acting respectively on the incident, resp. surrounding, edges. Let us now fix an edge $e$ with orientation:
\begin{center}
\begin{tikzpicture}[scale=0.5]

\begin{scope}
\foreach \x in {0,1} 
\foreach \y in {0,1,2}
\draw[gray, thick] (\x,\y) -- (\x+1,\y) -- (\x+1,\y+1) -- (\x,\y+1) -- cycle;
\end{scope}

\fill[gray, thick] (0.8,1.6) -- (1,1.2) -- (1.2,1.6);

\end{tikzpicture}
\end{center} 

\noindent On this edge, we will have four tensors acting, two coming from the plaquettes and two coming from the stars which the edge belongs to. Each of these four tensors has a component acting on the physical space $\mathcal{H}_e$, which in the graphs has been denoted as a a solid black ball, and a component acting on some virtual space (which is also isomorphic to $\ell_2(G)$), which we denoted as colored lines (either red or blue). All four tensors act on the same physical space, but each of them has a separate virtual space distinct from the others. We want now to contract the indices of these tensors acting on the physical space. As the weights tensors (the white dots in our graphical notation) act only on the virtual indices, we can for the moment ignore them. The resulting tensor will be called \emph{slim} (as it is lacking the weights), and will be denoted by $\widetilde V_e \in \mathcal{B}(\mathcal{H}_e) \otimes \mathcal{B}(\ell_2(G))^{\otimes 4}$.

\begin{figure}[ht]
\begin{center}
\begin{tikzpicture}[decoration={
    markings,
    mark=at position 0.6 with {\arrow{latex}}}, scale=0.8]


\begin{scope}

\draw (-1,-0.4) -- (0,-0.4) -- (1,0.4) -- (0,0.4) -- (-1,-0.4);

\draw (0,-0.4) -- (1,-0.4) -- (2,0.4) -- (1,0.4) -- (0,-0.4);

\draw (1,-0.4) -- (2,-0.4) -- (3,0.4) -- (2,0.4) -- (1,-0.4);

\draw (-0.5,0) -- (2.5,0);

\fill[gray, thick] (1.3,0) -- (0.9,0.1) -- (0.7,-0.1);


\draw[very thick] (1,0) -- (1,4);

\draw[ultra thick,blue] ({0+0.25*1},{1+-0.4 + 0.25*0.8}) -- (1,1) -- ({1+0.25*1},{1 -0.4 + 0.25*0.8});
\shade[ball color=black] (1,1) circle (0.8ex);

\draw[ultra thick, blue] ({0+0.75*1},{1.75+-0.4 + 0.75*0.8}) -- (1,1.75) -- ({1+0.75*1},{1.75 -0.4 + 0.75*0.8});
\shade[ball color=black] (1,1.75) circle (0.8ex);

\draw[ultra thick, red] ({1+0.25*1},{2.5+-0.4 + 0.25*0.8}) -- (1,2.5) -- ({1+0.75*1},{2.5 -0.4 + 0.75*0.8});
\shade[ball color=black] (1,2.5) circle (0.8ex);

\draw[ultra thick, red] ({0+0.25*1},{3.25+-0.4 + 0.25*0.8}) -- (1,3.25) -- ({0+0.75*1},{3.25+-0.4 + 0.75*0.8});
\shade[ball color=black] (1,3.25) circle (0.8ex);

\end{scope}


\begin{scope}[xshift=7cm, yshift=2cm, scale=1.5]

\draw[thick, gray]  (0,-1.2) -- (0,1.2) ;
\draw[thick, gray]  (-0.4,-1) -- (0.4,-1) ;
\draw[thick, gray]  (-0.4,1) -- (0.4,1) ;

\fill[black] (-0.1,-0.1) rectangle (0.1,0.1);

\begin{scope}[xshift=-0.3cm, very thick]
  \draw[postaction={decorate}, blue] (-0.5,-0.5)  -- (0,0); 
    \draw[postaction={decorate}, blue] (0,0)  -- (-0.5,0.5); 
\end{scope}

\begin{scope}[xshift=0.3cm, very thick]    
      \draw[postaction={decorate}, blue] (0.5,0.5)  -- (0,0); 
    \draw[postaction={decorate}, blue] (0,0)  -- (0.5,-0.5); 
\end{scope}

\begin{scope}[yshift=0.3cm, very thick]
\draw[postaction={decorate}, red] (-0.5,0.5)  -- (0,0); 
    \draw[postaction={decorate}, red] (0,0)  -- (0.5,0.5); 
\end{scope}

\begin{scope}[yshift=-0.3cm, very thick]
\draw[postaction={decorate}, red] (0,0) -- (-0.5,-0.5); 
    \draw[postaction={decorate}, red] (0.5,-0.5) -- (0,0); 
\end{scope}

\end{scope}

\end{tikzpicture} 
\end{center}
\caption{The contraction defining the slim edge tensor. Note that the resulting tensor depends on the ordering of the product of the four original tensors. Here we have convened to set the two tensors coming from the plaquettes below the two tensors coming from the vertices.}
\label{figure:edge-contraction}
\end{figure}
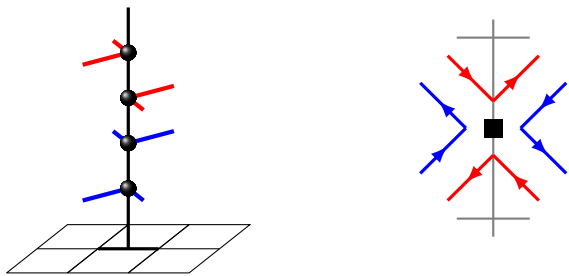

Graphically, the situation is represented in Figure~\ref{figure:edge-contraction}: on the left side, we have shown (in perspective) the four tensors acting on the given edge following the specific order for their contraction (see Remark \ref{Rema:orderContractionisImportant}). On the right-hand side, the graphical representation of a  tensor in $\mathcal{B}(\mathcal{H}_e) \otimes \mathcal{B}(\ell_2(G))^{\otimes 4}$ (seen from a top-down view): here the black square represents an element of $\mathcal{B}(\mathcal{H}_e)$, while each of the four colored lines is an element of $\mathcal{B}(\ell_2(G))$.
Performing the contraction, we obtain the following decomposition of the tensor $\widetilde V_e$.
\begin{equation}\label{equa:slimtensor1} 
\widetilde{V}_{e} \, \equiv \, \sum_{g,h,k \in G} \,\,\,\,  \dyad{hgk^{-1}}{g} \,\,
     \begin{tikzpicture}[equation, decoration={
    markings,
    mark=at position 0.3 with {\arrow{latex}},
    mark=at position 0.93 with {\arrow{latex}}}]
\draw[white] (0,1) circle (0.1);
  \draw[postaction={decorate}, thick] (0,-0.8) -- (0,0.8);
  \shade[ball color=black] (0,0) circle (1ex);

\end{tikzpicture}
\quad \otimes \quad
\begin{tikzpicture}[equation, decoration={markings, mark=at position 0.7 with {\arrow{latex}}}]
\begin{scope}[xshift=0.7cm]    
    
\draw[white] (0,1) circle (0.2);
    
\draw (0.9,0.1) node {\small $L^{g^{-1}}$}; 
\draw (-0.7,0) node {\small $L^{g}$};     
    
\draw (0,0.8) node { \footnotesize $\dyad{h}{h}$}; 
\draw (0,-0.8) node {\footnotesize $\dyad{k}{k}$};         

\edgePEPO[0];   

\end{scope}

\end{tikzpicture}
\end{equation}

\noindent This tensor defines a PEPO, and as discussed in Section~\ref{sec:PEPS}, we can realize a local purification to obtain a PEPS with physical space $\mathcal{H}_e^2 = \mathcal{H}_e \otimes \mathcal{H}_e$, via the purification map $Q \mapsto \ket{Q} = Q \otimes \mathbbm{1} \ket{\Psi}$, where $\ket{\Psi} = \sum_{h} \ket{hh}$.
It will be convenient to also represent the virtual spaces $\mathcal{B}(\ell_2(G))$ as $\ell_2(G)\otimes \ell_2(G)$, using the same purification map.
With a minor abuse of notation, we will denote also by $\widetilde V_e$ the linear map from the virtual space $\mathcal{H}_{\partial e} = \ell_{2}(G)^{\otimes 8}$ to the physical space $\mathcal{H}_e^2$ given by
\begin{equation}\label{equa:slimtensor2}  
\widetilde{V}_{e}: \mathcal{H}_{\partial e} \longrightarrow \mathcal{H}_{e}^{2} \quad , \quad \quad \widetilde{V}_{e}  = \sum_{g,h,k \in G} \ket{hgk^{-1}} \ket{g} \,\, 
\begin{tikzpicture}[equation]
\draw[white] (0,1.2) circle (0.2);
\draw (1,0) node {\small $\bra{L^{\smash{g^{-1}}}}$}; 
\draw (-0.9,0) node {\small $\bra{L^{g}}$};   
\draw (0,0.8) node { \footnotesize $\bra{h \, h} $}; 
\draw (0,-0.8) node {\footnotesize $\bra{k \, k} $};         
\edgePEPO[0];   
\end{tikzpicture}
\end{equation}

The \emph{full} tensor $V_{e}$ is then constructed from $\widetilde{V}_{e}$ by adding the corresponding weights $\weightS$ and $\weightP$ on the boundary indices. Indeed, adding the weights to the representation \eqref{equa:slimtensor1} leads to the corresponding representation for $V_{e}$ simply replacing
\[ \dyad{h} \,\, \mapsto \,\, \weightS  \dyad{h}  \weightS \quad \quad , \quad \quad L^{g} \,\, \mapsto \,\, \weightP  L^{g}  \weightP\,, \]
whereas adding the weights to \eqref{equa:slimtensor2} leads to the same expression but replacing
\[ \bra{h \, h} \,\, \mapsto \,\, \bra{h \, h} (\weightS \otimes \weightS)  \quad , \quad \bra{L^{g}} \, \mapsto \,   \bra{L^{g}}  (\weightP \otimes \weightP)\,. \]
Using this last representation, we can relate
 \[ V_{e} = \widetilde{V}_{e} \, \mathcal{G}_{\partial e}\,,   \]
 where $\mathcal{G}_{\partial e}$ is a suitable tensor product of (positive and invertible) operators of the form $\weightP \otimes \weightP$ and $\weightS \otimes \weightS$.

We will use a simpler picture for this \emph{slim} tensor as well as for the \emph{full} edge-tensor:
\[
\begin{tikzpicture}[equation]    

\begin{scope}
\draw (-1.5,0.1) node {$\widetilde{V}_{e} \,\, =$}; 
\edgePEPO[0];   
\fill[black] (-0.1,-0.1) rectangle (0.1,0.1);
\end{scope}

\begin{scope}[xshift=5cm]
\draw (-1.5,0.1) node {$V_{e} \,\, =$}; 
\edgePEPO[1];    
\fill[black] (-0.1,-0.1) rectangle (0.1,0.1);
\end{scope}

\end{tikzpicture}  
\]

\begin{Rema}\label{Rema:orderContractionisImportant}
The tensor $V_e$ we constructed depends on the way we have ordered the four components coming from the plaquette and star terms. Choosing a different order would have given us a different PEPO tensor representing the same operator $e^{-\frac{\beta}{2} H^{\text{syst}}_{\Lambda}} $. This choice will be irrelevant for our purposes, as they all give rise to equivalent boundary states.
\end{Rema}

\subsection{Boundary states}

\subsubsection{Edge}

In the previous subsection we have described the \emph{slim}  and \emph{full} tensors of the PEPS associated to an edge, namely $\widetilde{V}_{e}$ and $V_{e}$, both depending on the prefixed orientation. Next we are going to construct the corresponding boundary states $\widetilde{\rho}_{\partial e}$ and $\rho_{\partial e}$ by contracting the physical indices. Let us first consider an edge with fixed orientation:

\[
\begin{tikzpicture}[equation, scale=0.5]

\begin{scope}
\foreach \x in {0,1} 
\foreach \y in {0,1,2}
\draw[gray, thick] (\x,\y) -- (\x+1,\y) -- (\x+1,\y+1) -- (\x,\y+1) -- cycle;
\end{scope}

\fill[gray, thick] (0.8,1.6) -- (1,1.2) -- (1.2,1.6);

\end{tikzpicture}
\]
\noindent The \emph{slim} boundary state is constructed as $\widetilde{\rho}_{\partial e} = \widetilde{V}_{e}^{\dagger} \widetilde{V}_{e}$, or equivalently, by considering $\widetilde{V}_{e} \otimes \widetilde{V}_{e}^{\dagger}$ and contracting the physical indices:
\[
\widetilde{\rho}_{\partial e} :=
\sum\limits_{\substack{g, h, k \in G\\ g', h', k' \in G}}  \mbox{\small $\big(\bra{h' g' k'^{-1}} \ket{hgk^{-1}} \bra{g'}\ket{g} \big)$} \,\,
\begin{tikzpicture}[equation]

\begin{scope}[xshift=1.2cm]    
\TransferOperator[1,0];
\draw (-0.8,-0.5) node {$\scriptstyle L^{ g}$};
\draw (-0.8,0.5) node {$\scriptstyle L^{ g'}$};
\end{scope}

 \draw (1.7,0) node {$\otimes$};  

\begin{scope}[xshift=2.2cm]
\begin{scope}[rotate=180]
\TransferOperator[1,0];
    \end{scope}
\draw (1,-0.5) node {$\scriptstyle L^{g^{-1}}$};
\draw (1,0.6) node {$\scriptstyle L^{g'^{-1}}$};
\end{scope}
 
 
 \draw (3.5,0) node {$\otimes$};   
 
\begin{scope}[xshift=4.3cm, yshift=0cm]    
\begin{scope}[rotate=180]
\TransferOperator[2,0];
\end{scope}    
\draw (0,0.8) node {$\scriptstyle \dyad{h'}{h'}$};
\draw (0, -0.8) node {$\scriptstyle \dyad{h}{h}$};
\end{scope}

 \draw (5.0,0) node {$\otimes$};

\begin{scope}[xshift=5.8cm, yshift=0cm]    
\TransferOperator[2,0];
\draw (0,0.8) node {$\scriptstyle \dyad{k'}{k'}$};
\draw (0, -0.8) node {$\scriptstyle \dyad{k}{k}$};
\end{scope} 

\end{tikzpicture} \,.
\]

\noindent Note that the scalar factor given in terms of the scalar product will be zero or one, the latter if and only if
\[ g = g' = (h^{-1} \, h')^{-1} \, g \, (k^{-1} \, k')\,, \]
so if we denote $b:=k^{-1} k'$ and $a:=h^{-1} h'$, then we can rewrite 
\[
\widetilde{\rho}_{\partial e} = 
\sum\limits_{\substack{g, a, b \, \in \,  G \colon \\  a=gbg^{-1}}} \,\,
  \begin{tikzpicture}[equation]
  
\begin{scope}[xshift=3cm]    
\TransferOperator[1,0];
\draw (-0.8,-0.5) node {$\scriptstyle L^{ g}$};
\draw (-0.8,0.5) node {$\scriptstyle L^{ g}$};
\end{scope}

 \draw (3.5,0) node {$\otimes$};  

\begin{scope}[xshift=4cm]
\begin{scope}[rotate=180]
\TransferOperator[1,0];
    \end{scope}
\draw (1,-0.5) node {$\scriptstyle L^{g^{-1}}$};
\draw (1,0.6) node {$\scriptstyle L^{g^{-1}}$};
\end{scope}
 
 
 \draw (5.5,0) node {$\otimes$};   
 
 \draw (6.3,-0.2) node {$\displaystyle \sum_{h \in G}$};    
 
 \begin{scope}[xshift=7.4cm, yshift=-0.2cm]    
 \begin{scope}[rotate=180]
\TransferOperator[2,0];
\end{scope} 
\draw (0,0.8) node {$\scriptstyle \dyad{ha}{ha}$};
\draw (0, -0.8) node {$\scriptstyle \dyad{h}{h}$};
\end{scope}

 \draw (8.3,0) node {$\otimes$};   
 
\draw (9.1,-0.2) node {$\displaystyle \sum_{k \in G}$};    
 
 \begin{scope}[xshift=10.2cm, yshift=0cm]    
\TransferOperator[2,0];
\draw (0,0.8) node {$\scriptstyle \dyad{kb}{kb}$};
\draw (0, -0.8) node {$\scriptstyle \dyad{k}{k}$};
\end{scope}
\end{tikzpicture} \,\,\,.
\]
Let us introduce some notation for $a,g \in G$
\[ \widetilde{\phi}_{a} := \sum_{h \in G}
\begin{tikzpicture}[equation]

\TransferOperator[2,0];
\draw (0,0.8) node {$\scriptstyle \dyad{ha}{ha}$};
\draw (0, -0.8) node {$\scriptstyle \dyad{h}{h}$};

\end{tikzpicture}
\quad \quad , \quad \quad  \widetilde{\psi}_{g} := 
\begin{tikzpicture}[equation]
\TransferOperator[1,0];
\draw (-0.8,-0.5) node {$\scriptstyle L^{ g}$};
\draw (-0.8,0.5) node {$\scriptstyle L^{ g}$};
\end{tikzpicture}\,\,\,.
\]
This allows us to rewrite the \emph{slim} boundary state in the simpler form:
\begin{equation}\label{equa:SlimBoundaryEdge1}
\widetilde{\rho}_{\partial e} \,\, = \sum_{\substack{ a,b,g \in G \\ a=g b g^{-1}}} \,\,
\begin{tikzpicture}[equation]

\begin{scope}[xshift=4cm,yshift=0.2cm]    
    
\edgefour[0,0];       
    
\draw (0.9,0) node {\small $\widetilde{\psi}_{g^{-1}}$}; 
\draw (-0.7,0) node {\small $\widetilde{\psi}_{g}$};     
    
\draw (0,0.8) node { \footnotesize $\widetilde{\phi}_{a}$}; 
\draw (0,-0.8) node {\footnotesize $\widetilde{\phi}_{b}$};         

\end{scope}

\end{tikzpicture}\,.
\end{equation}
\noindent It will also be convenient to introduce the weighted version of these tensors. For that, we introduce the analog of $\widetilde{\phi}_{a}$ and $\widetilde{\psi}_{g}$ when contracting with the weights of the PEPO representation of $e^{\frac{\beta}{2} A(s)}$ and $e^{\frac{\beta}{2} B(p)}$, namely
\[
\phi_{a} \,
:= \,\, \displaystyle \sum_{h \in G} \quad \left( \delta_{h,1} + \gamma_{\beta/2}\right)^{1/4} \,\,\left( \delta_{ha,1} + \gamma_{\beta/2}\right)^{1/4} \,\,
\begin{tikzpicture}[equation]

\TransferOperator[2,0];
\draw (0,0.8) node {$\scriptstyle \dyad{ha}{ha}$};
\draw (0, -0.8) node {$\scriptstyle \dyad{h}{h}$};

\end{tikzpicture}    
\]
and
\[
 \psi_{g} := \,\, \sum_{m,n \in \{ 0,1\}} \quad \left( n + \gamma_{\beta/2}\right)^{1/4} \,\,\left( m + \gamma_{\beta/2}\right)^{1/4} 
\begin{tikzpicture}[equation]

\TransferOperator[1,0];
\draw (0.5,-0.8) node {$\scriptstyle P_{m}L^{ g}P_{m}$};
\draw (0.5,0.8) node {$\scriptstyle P_{n}L^{ g}P_{n}$};

\end{tikzpicture}\,\,\,.
\]

\noindent Thus, analogously to the \emph{slim} case, we can represent the \emph{full} boundary state
\begin{equation}\label{equa:FullBoundaryEdge1}
\rho_{\partial e} \,\, := \,\, \displaystyle \sum_{\substack{ a,b,g \in G \\ a=g b g^{-1}}} \,\,    
\begin{tikzpicture}[equation]    
    
\edgefour[0,0];
\draw (0.9,0) node {\small $\psi_{g^{-1}}$}; 
\draw (-0.7,0) node {\small $\psi_{g}$};     
    
\draw (0,0.8) node { \footnotesize $\phi_{a}$}; 
\draw (0,-0.8) node {\footnotesize $\phi_{b}$};

\end{tikzpicture}\,\,\,.
\end{equation}

For the sake of applying the theory relating boundary states and the gap property of the parent Hamiltonian of a PEPS (see Section \ref{sec:PEPS}), we should look at the boundary states $\widetilde{\rho}_{e}$ and $\rho_{\partial e}$ as maps $\mathcal{H}_{\partial e} \longrightarrow \mathcal{H}_{\partial e}$. We could have taken the PEPS expressions for $\widetilde{V}_{e}$ and $V_{e}$ described in the previous section as maps $\mathcal{H}_{\partial e} \longrightarrow \mathcal{H}_{e}^{2}$, see \eqref{equa:slimtensor2}, and calculated $\widetilde{\rho}_{\partial e} = \widetilde{V}_{e}^{\dagger} \widetilde{V}_{e}$ and $\rho_{\partial e} = V^{\dagger}_{e}V_{e}$. This can be obtained also from \eqref{equa:SlimBoundaryEdge1} and \eqref{equa:FullBoundaryEdge1} by reinterpreting for $a,g \in G$ 
\[
\widetilde{\phi}_{a} = \sum_{h \in G} \ket{ha} \hspace{-0.7mm} \ket{ha} \bra{h} \hspace{-0.7mm} \bra{h} \quad \quad , \quad \quad \widetilde{\psi}_{g}= \ket{L^{g}} \bra{L^{g}}\,,
\]
and so
\begin{align*}
\phi_{a}  & = \, \sum_{h \in G} \left( \delta_{h,1} + \gamma_{\beta/2} \right)^{1/4} \left( \delta_{ha,1} + \gamma_{\beta/2} \right)^{1/4} \,\, \ket{ha} \hspace{-0.7mm} \ket{ha} \bra{h} \hspace{-0.7mm} \bra{h}\,,\\ 
\psi_{g} & = \sum_{n,m \in \{ 0,1\}} (n+ \gamma_{\beta/2})^{1/4} (m+ \gamma_{\beta/2})^{1/4} \ket{\smash{P_{n}L^{g}P_{n}}}\bra{\smash{P_{m}L^{g}P_{m}}}\,.
\end{align*}
We will use this notation for the rest of the paper. Recall that
\[   \phi_{a}= (\weightS \otimes \weightS) \, \widetilde{\phi}_{a} \, (\weightS \otimes \weightS) \quad , \quad \psi_{g}= (\weightP \otimes \weightP) \, \widetilde{\psi}_{g} \, (\weightP \otimes \weightP)\,. \]
Defining $\weight_{\partial e}$ as $\weightS \otimes \weightS \otimes \weightP \otimes \weightP$, up to a reordering of the tensor factors, we obtain that  
$\rho_{\partial e} = \weight_{\partial e} \, \widetilde{\rho}_{e} \, \weight_{\partial e}$.

\subsubsection{Plaquette}
\label{sec:PlaquetteBoundaryState}

Let us next describe the boundary state of a plaquette, constructed by placing the boundary state of each edge, as it was described in the previous subsection, and contracting indices accordingly:

\[
\rho_{p} \,\, = \,\,
\begin{tikzpicture}[equation, scale=1.5]
\plaquettefive[1,1](4,1,1,0);
\end{tikzpicture}
\]

\noindent For a more precise description, let us first label the edges and vertices of the plaquette $e_{1}, e_{2}, e_{3}, e_{4}$ and $v_{1}, v_{2}, v_{3}, v_{4}$ counterclockwise as in the next picture

\[
    \begin{tikzpicture}[equation, scale=1.2]
    \draw[very thick, gray] (0,0) rectangle (1,1);
    \draw (1.2,1.2) node {$v_{1}$}; 
    \draw (-0.2,1.2) node {$v_{2}$}; 
    \draw (-0.2,-0.2) node {$v_{3}$};
    \draw (1.2,-0.2) node {$v_{4}$};
    
    \draw (0.5,1.2) node {$e_{1}$}; 
    \draw (-0.2,0.5) node {$e_{2}$}; 
    \draw (0.5,-0.2) node {$e_{3}$};
    \draw (1.2,0.5) node {$e_{4}$};
    \end{tikzpicture}
\]

\noindent  At each vertex $v_{j}$, when contracting the indices of $\phi_{a}$ and $\phi_{a'}$ coming from the two incident edges we have
 \[
\begin{split}
\begin{tikzpicture}[equation]
\draw[thick, gray] (-0.5,0) -- (0.5,0);
\draw[thick, gray] (0,-0.5) -- (0,0.5);
\centerarc[red, ultra thick](0,0)(135:315:0.3);
\draw (0,-0.7) node {$\phi_{a'}$}; 
\draw (-0.7,0) node {$\phi_{a}$}; 
\end{tikzpicture} \,\,  = \,\, \sum_{h \in G} \, \sum_{h' \in G}\, & (\delta_{h,1} + \gamma_{\beta/2})^{1/4} \,  (\delta_{h',1} + \gamma_{\beta/2})^{1/4}  \, (\delta_{ha, 1} + \gamma_{\beta/2})^{1/4}  \cdot \\
&   (\delta_{h'a', 1} + \gamma_{\beta/2})^{1/4} \cdot \, \braket{ha}{h'a'}\, \braket{h}{h'} \, \cdot \, \ket{ha} \hspace{-0.7mm}  \ket{h'a'} \, \bra{h} \hspace{-0.7mm} \bra{h'}\,.
\end{split}
\]

\noindent Note that this contraction will be zero unless $a = a'$, and in that case only the summands with $h=h'$ will be nonzero. We will denote this tensor as $\phi^{(2)}_{a}$, being
\[  
\begin{tikzpicture}[equation]
\draw[thick, gray] (-0.5,0) -- (0.5,0);
\draw[thick, gray] (0,-0.5) -- (0,0.5);
\centerarc[red, ultra thick](0,0)(135:315:0.3);
\end{tikzpicture}\,\, \phi^{(2)}_{a} \,\, = \,\, \sum_{h \in G} \, (\delta_{h,1} + \gamma_{\beta/2})^{1/2} \, (\delta_{ha, 1} + \gamma_{\beta/2})^{1/2} \,\, \ket{ha} \hspace{-0.7mm} \ket{ha} \, \bra{h} \hspace{-0.7mm} \bra{h}  \,.
\]
Hence, we can expand

\[
\rho_{p} \,\, = \,\, \sum_{\substack{ g_{1}, g_{2}, g_{3}, g_{4} \\[0.5mm] a_{1}, a_{2}, a_{3}, a_{4} }} \,\,
    \begin{tikzpicture}[equation, scale=1.3]

    \plaquettefive[1,1](4,1,1,0);
    
    \draw (1.2,1.2) node {$\phi_{a_{1}}^{(2)}$}; 
    \draw (-0.2,1.2) node {$\phi_{a_{2}}^{(2)}$}; 
    \draw (-0.2,-0.2) node {$\phi_{a_{3}}^{(2)}$};
    \draw (1.2,-0.2) node {$\phi_{a_{4}}^{(2)}$};
    
    \draw (0.5,1.5) node {$\psi_{g_{1}}$}; 
    \draw (-0.5,0.5) node {$\psi_{g_{2}}$}; 
    \draw (0.5,-0.5) node {$\psi_{g_{3}^{-1}}$};
    \draw (1.6,0.5) node {$\psi_{g_{4}^{-1}}$};
    
    \end{tikzpicture} \,\,\,.
\]

\noindent  Remark that the elements $g_{j}$ and $a_{j}$ satisfy some \emph{compatibility conditions} for the corresponding summand to be nonzero, namely
\begin{equation}\label{eq:compatibility-condition}
a_{2} = g_{1}^{-1} a_{1} g_{1} \quad , \quad a_{3} = g_{2}^{-1} a_{2} g_{2} \quad , \quad a_{4} = g_{3} a_{3} g_{3}^{-1} \quad , \quad a_{1} = g_{4} a_{4} g_{4}^{-1}\,.
\end{equation}
The whole circle corresponds to the full contraction of the inner $\psi_{g}$'s placed on the edges, so it is a constant factor 
\[
\begin{split}
\begin{tikzpicture}[equation,scale=0.5]
\centerarc[blue,thick](0,0)(0:360:1);
\draw [blue,-{latex}] (1,0) -- (1,0.1);
\draw [blue,-{latex}] (0,1) -- (-0.1,1);
\draw [blue,-{latex}] (-1,0) -- (-1,-0.1);
\draw [blue,-{latex}] (0,-1) -- (0.1,-1);
\draw (0,1.7) node {$\psi_{g_{1}^{-1}}$}; 
\draw (-1.9,0) node {$\psi_{g_{2}^{-1}}$}; 
\draw (0,-1.7) node {$\psi_{g_{3}}$};
\draw (1.8,0) node {$\psi_{g_{4}}$};
\end{tikzpicture}
& = \,\, \sum_{\substack{m_{1}, \ldots m_{4} \in \{ 0,1\} \\ \substack{n_{1}, \ldots n_{4} \in \{ 0,1\} }}} \,\,  \prod_{j=1}^{4} (n_{j}+\gamma_{\beta/2})^{1/4}  \cdot (m_{j}+\gamma_{\beta/2})^{1/4}  \\
& \quad \quad \quad \quad \cdot \operatorname{Tr}(P_{n_{4}}L^{g_{4}}P_{n_{4}}  P_{n_{3}}L^{g_{3}}P_{n_{3}} 
P_{n_{2}}L^{g_{2}^{-1}}P_{n_{2}} P_{n_{1}}L^{g_{1}^{-1}}P_{n_{1}}  ) \\
& \quad \quad  \cdot \operatorname{Tr}(P_{m_{4}}L^{g_{4}}P_{m_{4}}
P_{m_{3}}L^{g_{3}}P_{m_{3}} 
P_{m_{2}}L^{g_{2}^{-1}}P_{m_{2}} 
P_{m_{1}}L^{g_{1}^{-1}}P_{m_{1}} )\,.
\end{split}
\]
Since $P_{k}P_{k'} = 0$ whenever $k \neq k'$ and $P_{k}L^{g} = L^{g}P_{k}$, we can simplify
\[
\begin{split}
\begin{tikzpicture}[equation,scale=0.5]
\centerarc[blue,thick](0,0)(0:360:1);
\draw [blue,-{latex}] (1,0) -- (1,0.1);
\draw [blue,-{latex}] (0,1) -- (-0.1,1);
\draw [blue,-{latex}] (-1,0) -- (-1,-0.1);
\draw [blue,-{latex}] (0,-1) -- (0.1,-1);
\draw (0,1.7) node {$\psi_{g_{1}^{-1}}$}; 
\draw (-1.9,0) node {$\psi_{g_{2}^{-1}}$}; 
\draw (0,-1.7) node {$\psi_{g_{3}}$};
\draw (1.8,0) node {$\psi_{g_{4}}$};
\end{tikzpicture}
& = \,\, \sum_{m, n \in \{ 0,1\}} \,\,  (n+\gamma_{\beta/2}) \cdot (m_{j}+\gamma_{\beta/2}) \cdot \\[-3mm]
& \quad \quad \cdot \operatorname{Tr}(P_{n}L^{g_{4}}L^{g_{3}}L^{g_{2}^{-1}}L^{g_{1}^{-1}})\cdot   \operatorname{Tr}(P_{m}L^{g_{4}}
L^{g_{3}}L^{g_{2}^{-1}}L^{g_{1}^{-1}} )\,.
\end{split}
\]
And the last expression can be simplified further using \eqref{equa:RegularCharDecompProjectors} 
\begin{equation}\label{equa:fullbluecontraction1}
\begin{split}
\begin{tikzpicture}[equation,scale=0.5]
\centerarc[blue,thick](0,0)(0:360:1);
\draw [blue,-{latex}] (1,0) -- (1,0.1);
\draw [blue,-{latex}] (0,1) -- (-0.1,1);
\draw [blue,-{latex}] (-1,0) -- (-1,-0.1);
\draw [blue,-{latex}] (0,-1) -- (0.1,-1);
    \draw (0,1.7) node {$\psi_{g_{1}^{-1}}$}; 
    \draw (-1.9,0) node {$\psi_{g_{2}^{-1}}$}; 
    \draw (0,-1.7) node {$\psi_{g_{3}}$};
    \draw (1.8,0) node {$\psi_{g_{4}}$};
\end{tikzpicture}
\,\, & = \,\, \left( 1 +  \gamma_{\beta/2} \, \chi^{reg}(g_{1}g_{2}g_{3}^{-1}g_{4}^{-1}) \right)^{2} \,.
\end{split}
\end{equation}

\noindent Note that, since $\chi^{reg}(g)/|G|$ is equal to zero or one, we have for every $g \in G$
\begin{equation}\label{equa:fullbluecontraction2} 
\left( 1+ \frac{e^{\beta/2} -1}{|G|} \, \chi^{reg}(g) \right)^{2} \,\,  = \,\, 1 + \frac{e^{\beta} - 1}{|G|} \chi^{reg}(g)\,.
\end{equation}

\noindent Therefore, we can take the scalar out and rewrite
\[
\rho_{p} \,\, = \,\, \sum_{\substack{ g_{1}, g_{2}, g_{3}, g_{4} \\[0.5mm] a_{1}, a_{2}, a_{3}, a_{4} }} \,\, \left( 1 +  \gamma_{\beta} \, \chi^{reg}(g_{1}g_{2}g_{3}^{-1}g_{4}^{-1}) \right) \,\,
    \begin{tikzpicture}[equation, scale=1.3]

    \plaquettefive[1,1](4,0,1,0);
    
    \draw (1.2,1.2) node {$\phi_{a_{1}}^{(2)}$}; 
    \draw (-0.2,1.2) node {$\phi_{a_{2}}^{(2)}$}; 
    \draw (-0.2,-0.2) node {$\phi_{a_{3}}^{(2)}$};
    \draw (1.2,-0.2) node {$\phi_{a_{4}}^{(2)}$};
    
    \draw (0.5,1.5) node {$\psi_{g_{1}}$}; 
    \draw (-0.5,0.5) node {$\psi_{g_{2}}$}; 
    \draw (0.5,-0.5) node {$\psi_{g_{3}^{-1}}$};
    \draw (1.6,0.5) node {$\psi_{g_{4}^{-1}}$};
    
    \end{tikzpicture}\,\,\,.
\]

\noindent Again, it should be noted that the sum expands over elements $g_{j}$ and $a_{j}$ satisfying the compatibility conditions~\eqref{eq:compatibility-condition}. They yield, in particular, that knowing $g_{j}$ for $j=1, 2,3,4$ and one of the $a_{j}$'s, we can determine the rest. As a consequence, we can rewrite 
\[
\rho_{p} \,\, = \,\,  \sum_{a \in G} \, \sum_{g_{1}, g_{2}, g_{3}, g_{4}} \,\, \left( 1 +  \gamma_{\beta} \chi^{reg}(g_{1}g_{2}g_{3}^{-1}g_{4}^{-1}) \right) \,\, 
    \begin{tikzpicture}[equation, scale=1.3]

    \plaquettefive[1,1](4,0,1,0);
    
        \draw (1.2,1.2) node {${\phi}_{a}^{(2)}$}; 
    \draw (-0.2,1.2) node {${\phi}_{\cdot}^{(2)}$}; 
    \draw (-0.2,-0.2) node {${\phi}_{\cdot}^{(2)}$}; 
    \draw (1.2,-0.2) node {${\phi}_{\cdot}^{(2)}$}; 
    
    \draw (0.5,1.5) node {${\psi}_{g_{1}}$}; 
    \draw (-0.5,0.5) node {${\psi}_{g_{2}}$}; 
    \draw (0.5,-0.5) node {${\psi}_{g_{3}^{-1}}$};
    \draw (1.6,0.5) node {${\psi}_{g_{4}^{-1}}$};
    
    \end{tikzpicture}\,\,\,,
\]

\noindent where the sum is taken over $a, g_{j}$ satisfying
\begin{equation}\label{equa:compatibilityCondition2} 
(g_{1} g_{2} g_{3}^{-1} g_{4}^{-1}) \, a \, (g_{1} g_{2} g_{3}^{-1} g_{4}^{-1})^{-1} = a \,. \end{equation}

\noindent It will also be useful to consider the \emph{slim} version $\widetilde{\rho}_{p}$ of $\rho_{p}$ obtained by ``removing'' the weights from the boundary virtual indices:
\[
\widetilde{\rho}_{p} \,\, = \,\, \sum_{a \in G} \, \sum_{g_{1}, g_{2}, g_{3}, g_{4}} \,  \left( 1 +  \gamma_{\beta} \, \chi^{reg}(g_{1}g_{2}g_{3}^{-1}g_{4}^{-1}) \right) \,
    \begin{tikzpicture}[equation, scale=1.3]
    
    \plaquettefive[1,1](4,0,1,0);
    
        \draw (1.2,1.2) node {$\widetilde{\phi}_{a}^{(2)}$}; 
    \draw (-0.2,1.2) node {$\widetilde{\phi}_{\cdot}^{(2)}$}; 
    \draw (-0.2,-0.2) node {$\widetilde{\phi}_{\cdot}^{(2)}$}; 
    \draw (1.2,-0.2) node {$\widetilde{\phi}_{\cdot}^{(2)}$}; 
    
    \draw (0.5,1.5) node {$\widetilde{\psi}_{g_{1}}$}; 
    \draw (-0.5,0.5) node {$\widetilde{\psi}_{g_{2}}$}; 
    \draw (0.5,-0.5) node {$\widetilde{\psi}_{g_{3}^{-1}}$};
    \draw (1.6,0.5) node {$\widetilde{\psi}_{g_{4}^{-1}}$};
    
    \end{tikzpicture}\,\,\,,
\]
where, once again, the sum is taken over elements satisfying the compatibility condition \eqref{equa:compatibilityCondition2}.


\subsubsection{Rectangular region}
\label{sec:PEPSrectangle}
 We aim at describing the boundary of the rectangular regions $\mathcal{R} \subset \Lambda \equiv \EE$. We are going to make the details in the case of a proper rectangle, since the cylinder case follows analogously with few adaptations. First we need to introduce some further notation regarding the edges and vertices that form $\mathcal{R}$:

\begin{footnotesize}
\begin{align*}
\EE_{\RR} & := \text{edges contained in $\RR$,}\\[2mm]
\EE_{\partial \RR} & := \text{edges with only one adjacent plaquette inside $\RR$,}\\[2mm]
\EE_{\mathring{\RR}} &:= \text{edges with both adjacent plaquettes inside $\RR$,}\\[2mm]
\VV_{\RR} & := \text{vertices contained in $\RR$,}\\[2mm]
\VV_{\partial \RR} & := \text{vertices with some but not all incident edges in $\RR$,}\\[2mm]
\VV_{\mathring{\RR}} &:= \text{vertices with all four incident edges in $\RR$,}\\[2mm]
n_{\RR} &:= \text{ number of plaquettes contained in $\RR$ }\,.
\end{align*}
\end{footnotesize}
To formally construct the transfer operator or boundary state $\rho_{\RR}$ on $\RR$, we must place at each edge $e$ contained in $\RR$ the transfer operator $\rho_{e}$  that was constructed in the previous subsection and contract the indices:

\[
\rho_{e} \,\, = \,\, V_{e}^{\dagger} V_{e} \,\, = \,\, \sum_{\substack{ a,b,g \in G \\ a=g b g^{-1}}}
\begin{tikzpicture}[equation]

\draw (0.9,0) node {\small $\psi_{g^{-1}}$}; 
\draw (-0.7,0) node {\small $\psi_{g}$};     
    
\draw (0,0.8) node { \footnotesize $\phi_{a}$}; 
\draw (0,-0.8) node {\footnotesize $\phi_{b}$};         
\edgefour[0,0];

\end{tikzpicture}\,\,\,.
\]

The philosophy is similar to the plaquette case, although more cumbersome to formalize. We are going to expand the expression for $\rho_{\partial \RR}$ 
\[
\rho_{\partial \RR} \,\, = \,\,
\begin{tikzpicture}[equation, scale=0.7]
\draw[white] (0,2.8) circle (0.1);
\plaquettefive[3,2](4,1,1,0);

\end{tikzpicture}
\,\,\,\, = \,\, \sum_{\substack{ \widehat{g}: \EE_{\RR} \rightarrow G \\[0.5mm] \widehat{a}:\VV_{\RR} \rightarrow G }} \,\,\,\,\,
\begin{tikzpicture}[equation, scale=0.7]
\draw (3.5,2.5) node {$\psi_{\, \widehat{g}}$}; 
\plaquettefive[3,2](1,1,1,0);

\end{tikzpicture}
\,\, \otimes \,\,\,\,
\begin{tikzpicture}[equation, scale=0.7]
\draw (3.5,2.5) node {$\phi_{\, \widehat{a}}$}; 
\plaquettefive[3,2](2,1,1,0);

\end{tikzpicture}    
\]

\noindent as a sum over maps
\[ 
\widehat{g}: \EE_{\RR} \longrightarrow G \quad , \quad \widehat{a}: \VV_{\RR} \longrightarrow G \,
\]
satisfying a certain compatibility condition. Let us explain the notation: at each edge $e \in \EE_{\RR}$, we set a value $g:=\widehat{g}(e)$ according to whether we have
\[
\begin{tikzpicture}[equation, thick,scale=0.7]

\def\R{0.8};

\begin{scope}[rotate=90]
\draw[thick, black!30!white]  (0,-1) -- (0,1) ;

\centerarc[blue,thick](-1,0)(-45:45:\R);
\draw [blue,-{latex}] (-1+\R,0) -- (-1+\R,0.1);

\centerarc[blue,thick](1,0)(135:225:\R);
\draw [blue,-{latex}] (1-\R,0) -- (1-\R,-0.1);

\draw (0,1.2) node {$e$};
\draw (0.8,0) node {$\psi_g$};
\draw (-0.8,0) node {$\psi_{g^{-1}}$};
\end{scope}
\end{tikzpicture}
\hspace{1cm}, \hspace{1cm}
\begin{tikzpicture}[equation, scale=0.7]

\def\R{0.8};

\draw[thick, black!30!white]  (0,-1) -- (0,1) ;

\centerarc[blue,thick](-1,0)(-45:45:\R);
\draw [blue,-{latex}] (-1+\R,0) -- (-1+\R,0.1);

\centerarc[blue,thick](1,0)(135:225:\R);
\draw [blue,-{latex}] (1-\R,0) -- (1-\R,-0.1);

\draw (0,1.2) node {$e$};
\draw (1,0) node {$\psi_{g^{-1}}$};
\draw (-0.8,0) node {$\psi_{g}$};

\end{tikzpicture}\,\,\,.
\]

At each vertex $v \in \VV_{\RR}$, the contraction of indices coming from incident edges and corresponding to say $\phi_{a}$ and $\phi_{a'}$ will be zero unless $a = a'$. Thus, a nonzero contraction will be determined by a unique $\widehat{a}(v) = a = a'$:

\[  
\begin{tikzpicture}[equation]
\draw[thick, gray] (-0.5,0) -- (0.5,0);
\draw[thick, gray] (0,-0.5) -- (0,0.5);
\centerarc[red, ultra thick](0,0)(135:315:0.3);
\end{tikzpicture}\,\, \phi^{(2)}_{a} \,\, = \,\, \sum_{h \in G} \, (\delta_{h,1} + \gamma_{\beta/2})^{1/2} \, (\delta_{ha, 1} + \gamma_{\beta/2})^{1/2} \,\, \ket{ha} \hspace{-0.7mm} \ket{ha} \, \bra{h} \hspace{-0.7mm} \bra{h}  \,,
\]

\[  
\begin{tikzpicture}[equation]
\draw[thick, gray] (-0.5,0) -- (0.5,0);
\draw[thick, gray] (0,-0.5) -- (0,0.5);
\centerarc[red, ultra thick](0,0)(45:315:0.3);
\end{tikzpicture}\,\, \phi^{(3)}_{a} \,\, = \,\, \sum_{h \in G} \, (\delta_{h,1} + \gamma_{\beta/2})^{3/4} \, (\delta_{ha, 1} + \gamma_{\beta/2})^{3/4} \,\, \ket{ha} \hspace{-0.7mm} \ket{ha} \, \bra{h} \hspace{-0.7mm} \bra{h}  \,,
\]

\begin{equation}\label{equa:fullredcontraction}
\begin{split}    
    \begin{tikzpicture}[equation]
    \draw (0,0.7) node {};
    \centerarc[red,ultra thick](0,0)(0:360:0.3);
    \draw[gray, thick] (0,-0.5) -- (0,0.5);
    \draw[gray, thick] (-0.5,0) -- (0.5,0);
    \end{tikzpicture}
    \,\, & = \sum_{h \in G} \left( \delta_{h,1} + \gamma_{\beta/2} \right) \, \left( \delta_{h a, 1} + \gamma_{\beta/2} \right) \, = \, \delta_{a,1} + 2 \gamma_{\beta/2} + |G| \, (\gamma_{\beta/2})^{2}\\
    & = \, \delta_{a,1} + 2 \left( \tfrac{e^{\beta/2}-1}{|G|} \right) + |G| \, \left( \tfrac{e^{\beta/2} - 1}{|G|}\right)^{2} \, = \,  \delta_{a,1} +  \frac{e^{\beta} - 1}{|G|} \, = \, \delta_{a,1}  +  \gamma_{\beta}\,.
\end{split}    
\end{equation}

\noindent Moreover, the sum in the aforementioned expression for $\rho_{\partial \RR}$ is taken over maps $\widehat{g}$ and $\widehat{a}$ satisfying the compatibility condition: for every edge $e \in \EE_{\RR}$

\[
\begin{tikzpicture}[equation, scale=0.7]
\def\r{0.5};

\begin{scope}[rotate=90]

\draw[thick, black!30!white]  (0,-1) -- (0,1) ;

\centerarc[red,thick](0,-1)(45:135:\r);
\centerarc[red,thick](0,1)(-135:-45:\r);

\draw (0,1.2) node {$e$};
\draw (0.3,0) node {$g$};
\draw (-0.5,0.5) node {$b$};
\draw (-0.5,-0.5) node {$a$};
\end{scope}
\end{tikzpicture}
\hspace{1cm} , \hspace{1cm}
\begin{tikzpicture}[equation, scale=0.7]

\def\r{0.5};

\draw[thick, black!30!white]  (0,-1) -- (0,1) ;

\centerarc[red,thick](0,-1)(45:135:\r);
\centerarc[red,thick](0,1)(-135:-45:\r);

\draw (0,1.2) node {$e$};
\draw (0.2,0) node {$g$};
\draw (-0.5,0.5) node {$a$};
\draw (-0.5,-0.5) node {$b$};

\end{tikzpicture}
\hspace{1cm}
\mbox{ such that } \,\, a=gbg^{-1}\,.
\]

\noindent The notation for the rectangles with $\psi_{\widehat{g}}$ and $\phi_{\widehat{a}}$ is the obvious.They specifically refer to the tensor product of operators $\phi_{a}^{(j)}$ and $\psi_{g}$ (respectively) that are present in the boundary. They do not encompass the entire circles (constants) that are observed within the plaquettes:
\[
\begin{tikzpicture}[equation]

\plaquettefive[3,2](1,1,1,0);
\foreach \x/\y [count=\i] in { 2.5/2.6, 1.5/2.6, 0.5/2.6, -0.6/1.5, -0.6/0.5} {
    \draw (\x,\y) node {$\psi_{g_{\i}}$}; 
}
\foreach \x/\y [count=\i from 6] in { 0.5/-0.6, 1.5/-0.6, 2.5/-0.6, 3.8/0.5, 3.8/1.5 } {
    \draw (\x,\y) node {$\psi_{g_{\i}^{-1}}$}; 
}
\end{tikzpicture}
\hspace{2cm}
\begin{tikzpicture}[equation]

\plaquettefive[3,2](2,1,1,0);
\foreach \x/\y/\d [count=\i] in { 3.3/2.3/2, 2/2.5/3, 1/2.5/3, -0.3/2.3/2, -0.7/1/3, -0.3/-0.3/2, 1/-0.5/3, 2/-0.5/3, 3.3/-0.3/2, 3.7/1/3 } {
    \draw (\x,\y) node {$\phi^{(\d)}_{a_{\i}}$}; 
}
\end{tikzpicture}    
\]

\noindent The compatibility condition yields an interesting consequence that we also pointed out for plaquettes in Section \ref{sec:PlaquetteBoundaryState}. For each pair $(\widehat{a}, \widehat{g})$ the map $\widehat{a}$ can be reconstructed knowing only its value at one vertex using $\widehat{g}$. In particular, if we fix a vertex, namely the lower right corner, we can rewrite the above expression of $\rho_{\partial R}$ as
\[
\rho_{\partial \RR} \,\, = \,\, \sum_{a \in G} \,\, \sum_{ \widehat{g}: \EE_{\RR} \rightarrow G } \quad
\begin{tikzpicture}[equation, scale=0.7]
\plaquettefive[3,2](1,1,0,0);
\draw (3.5,2.5) node {$\psi_{\, \widehat{g}}$}; 
\end{tikzpicture}
 \otimes \quad
\begin{tikzpicture}[equation, scale=0.7]
\plaquettefive[3,2](2,1,0,0);
\draw (3.5,2.5) node {$\phi_{\, \widehat{a}}$}; 
\draw (3.5,-0.2) node {$a$}; 
\end{tikzpicture}    
\]

\noindent Here $\widehat{a}$ is the only map compatible with $\widehat{g}$ and the choice $a$ in the prefixed vertex. In particular, all the elements $\widehat{a}(v) ,\, v \in \VV_{\RR}$ belong to the same conjugation class. This means that we have two possibilities: If $a =1$, resp. $a \neq 1$, then $\widehat{a}(v) = 1$, resp. $\widehat{a}(v) \neq 1 $, for every vertex of $\VV_{\RR}$, and thus by \eqref{equa:fullredcontraction}
\[
    \begin{tikzpicture}[equation, scale=0.7]
\plaquettefive[3,2](2,1,0,0)
\draw (3.5,2.5) node {$\phi_{\, \widehat{a}}$}; 
\draw (3.5,-0.2) node {$a$}; 
\end{tikzpicture}
= \,\, \left( \delta_{a,1} + \gamma_{\beta} \right)^{\# \VV_{\mathring{\RR}}} \,\, \cdot \quad
\begin{tikzpicture}[equation, scale=0.7]

\plaquettefive[3,2](2,0,0,0);
\draw (3.5,2.5) node {$\phi_{\, \widehat{a}}$}; 
\draw (3.5,-0.2) node {$a$}; 

    \end{tikzpicture}\,\,\,.
\]

\noindent where we have extracted the constant factor resulting from full contraction in the inner vertices of the rectangle. Analogously, we can argue with the $\psi_{\widehat{g}}$ factor, where now we get a constant factor for each inner plaquette by \eqref{equa:fullbluecontraction1} and \eqref{equa:fullbluecontraction2}:
\[
    \begin{tikzpicture}[equation, scale=0.7]

\plaquettefive[3,2](1,1,0,0);
\draw (3.5,2.5) node {$\psi_{\, \widehat{g}}$}; 

\end{tikzpicture}
= \,\, \displaystyle \prod_{ \substack{\text{$p$ inner}\\ \text{plaquette}} } \left( 1 + \gamma_{\beta} \, \chi^{reg}(\widehat{g}|_{p}) \, \right) \,\, \cdot
\begin{tikzpicture}[equation, scale=0.7]

\plaquettefive[3,2](1,0,0,0);
\draw (3.5,2.5) node {$\psi_{\, \widehat{g}}$};

\end{tikzpicture}\,\,\,,
\]

\noindent where we are denoting for each plaquette

\[
    \begin{tikzpicture}[baseline={([yshift=-.5ex]current bounding box.center)}]
    \draw (0,0) rectangle (1,1);
    \draw (0.5,0.5) node {$p$};
    \foreach \x/\y [count=\i] in {0.5/1.2, -0.3/0.5, 0.5/-0.2, 1.3/0.5}
     \draw (\x,\y) node {$e_{\i}$};
    \end{tikzpicture}
    \qquad 
    \chi^{reg}(\widehat{g}|_{p}) = \chi^{reg}(g_1 g_2 g_3^{-1} g_4^{-1}), \quad g_i = \widehat{g}(e_i).
\]

Note that the definition is independent of the enumeration of the edges as long as it is done counterclockwise ($\chi^{reg}$ is invariant under cyclic permutations) and respect the inverses on the lower and right edges.

We have then the following representation for the boundary state $\rho_{\RR}$ of the rectangular region $\RR$
\[
\rho_{\partial \RR} \,\, = \,\, \sum_{a \in G} \,\, \sum_{ \widehat{g}: \EE_{\RR} \rightarrow G } \,\, \left( \delta_{a,1} + \gamma_{\beta} \right)^{\abs{\VV_{\mathring{\RR}}}} \,\, c_{\beta}( \, \widehat{g} \, ) \quad
\begin{tikzpicture}[equation, scale=0.4]

\plaquettefive[3,2](1,0,0,0)
\draw (3.5,2.5) node {\scriptsize $\psi_{\, \widehat{g}}$}; 

\end{tikzpicture}
 \otimes \quad 
\begin{tikzpicture}[equation, scale=0.4]

\plaquettefive[3,2](2,0,0,0)

\draw (3.5,2.5) node {\scriptsize $\phi_{\, \widehat{a}}$}; 
\draw (3.5,-0.2) node {\scriptsize $a$}; 

\end{tikzpicture}    \,\,\,,
\]

\noindent where we are denoting

\begin{equation}\label{eq:c-beta-def}
c_{\beta}( \, \widehat{g} \, )  \, := \, \prod_{ \substack{\text{$p$ inner}\\ \text{plaquette}} } \left( 1 + \gamma_{\beta} \chi^{reg}(\widehat{g}|_{p}) \, \right)\,.
\end{equation}

We are going to deal with the \emph{slim} version of the boundary state resulting from removing the weights acting on the boundary indices:
\begin{equation}\label{equa:BoundaryProperRectangle1}
\widetilde{\rho}_{\partial \RR} \,\, = \,\, \sum_{a \in G} \,\, \sum_{ \widehat{g}: \EE_{\RR} \rightarrow G } \,\, \left( \delta_{a,1} + \gamma_{\beta} \right)^{\abs {\VV_{\mathring{\RR}}}}  c_{\beta}( \, \widehat{g} \, ) \quad 
\begin{tikzpicture}[equation, scale=0.4]

\plaquettefive[3,2](1,0,0,0)

\draw (3.5,2.5) node {\scriptsize $\widetilde{\psi}_{\, \widehat{g}}$}; 
\end{tikzpicture}
\otimes \quad
\begin{tikzpicture}[equation, scale=0.4]

\plaquettefive[3,2](2,0,0,0)

\draw (3.5,2.5) node {\scriptsize $\widetilde{\phi}_{\, \widehat{a}}$};
\draw (3.5,-0.2) node {\scriptsize $a$}; 

\end{tikzpicture}\,\,\,. 
\end{equation}
Defining $\weight_{\partial \RR}$  to be the suitable tensor product of  
$\weightS^{2}$, $\weightS^{3}$  and $\weightP$ 
corresponding to the boundary indices appearing in $\partial \RR$, the two versions of the boundary state are related by
\[ \rho_{\partial \RR} = \weight_{\partial \RR} \widetilde{\rho}_{\partial \RR} \weight_{\partial \RR} \,.\]

\subsubsection{Leading Term and Approximate factorization}

Finding a short explicit formula for the boundary states $\rho_{\partial \mathcal{R}}$ and $\widetilde{\rho}_{\partial \mathcal{R}}$ is not going to be feasible. But we will show that, after rearranging summands, there is a dominant term that we can explicitly describe in a short way. For that let us introduce
\[ 
\widetilde{\Delta}  :=  \frac{1}{|G|} \, \sum_{g \in G} \,\, \widetilde{\psi}_{g} \, = \, \frac{1}{|G|} \sum_{g \in G} \ket{L^{g}} \bra{L^{g}} \,\,
 \]
and consider the (product) operator 
\begin{equation}\label{equa:LeadingTermDefinition} 
\widetilde{\mathcal{S}}_{\partial \RR} \,\, := \,\,  
\begin{tikzpicture}[equation,scale=0.4]
    \plaquettefive[3,2](1,0,0,0);
    \draw (4.5,2.6) node {\scriptsize $\widetilde \Delta^{\otimes \EE_{\partial \RR}}$};
    \end{tikzpicture}
    \otimes \quad
    \begin{tikzpicture}[equation,scale=0.4]
    \plaquettefive[3,2](2,0,0,0);
    \draw (4,2.3) node {\scriptsize $\widetilde \phi_{\widehat{1}}$};
    \end{tikzpicture}\,.
\end{equation}
We will later show that this is indeed an orthogonal projection on $\mathcal{H}_{\partial \RR}$ made of local projections $\widetilde{\Delta}$ and $\widetilde{\phi}_{\widehat{1}}$ on $\ell_{2}(G) \otimes \ell_{2}(G)$. We next state the first main (and most involved) result of the section.

\begin{Theo}[Leading Term of the boundary]\label{Theo:leadingTerm}
Let us define the scalars
\[ \kappa_{\mathcal{R}}:= (1 + \gamma_{\beta})^{|\VV_{\mathring{\RR}}| + n_{\mathcal{R}}} \, |G|^{|\EE_{\RR}|} \quad , \quad \epsilon_{\RR}:= 3 \, |G|^{2} \, \left( \frac{\gamma_{\beta}}{1+ \gamma_{\beta}}\right)^{|\VV_{\mathring{\RR}}|}\,. \]
Then, we can decompose
\[ \widetilde{\rho}_{\partial \RR} = \kappa_{\RR} \, \left( \widetilde{\mathcal{S}}_{\partial \RR} + \widetilde{\mathcal{S}}^{rest}_{\partial \RR} \right) \]
for some observable $\widetilde{\mathcal{S}}^{rest}_{\partial \RR}$ with $\| \widetilde{\mathcal{S}}^{rest}_{\partial \RR}\| \leq \epsilon_{\RR}$. 
\end{Theo}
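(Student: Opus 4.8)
The plan is to start from the explicit expansion \eqref{equa:BoundaryProperRectangle1} of $\widetilde{\rho}_{\partial\RR}$ and isolate the leading contribution by a cluster‑expansion‑type argument. The first ingredient is the elementary factorization
\[
1+\gamma_\beta\,\chi^{reg}(g)\;=\;(1+\gamma_\beta)\bigl(1+\gamma'\,\chi'(g)\bigr),\qquad \gamma':=\tfrac{\gamma_\beta}{1+\gamma_\beta}\in[0,1),\quad \chi'(g):=\chi^{reg}(g)-1=\operatorname{Tr}(P_0L^g),
\]
which one checks by evaluating both sides at $g=1$ and at $g\neq1$. Substituting into \eqref{eq:c-beta-def} turns $c_\beta(\widehat g)$ into $(1+\gamma_\beta)^{n_\RR}\prod_p\bigl(1+\gamma'\chi'(\widehat g|_p)\bigr)$, so that $(\delta_{a,1}+\gamma_\beta)^{|\VV_{\mathring\RR}|}c_\beta(\widehat g) = (1+\gamma_\beta)^{|\VV_{\mathring\RR}|+n_\RR}\bigl([a=1]+(\gamma')^{|\VV_{\mathring\RR}|}[a\neq1]\bigr)\prod_p\bigl(1+\gamma'\chi'(\widehat g|_p)\bigr)$. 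Pulling out $\kappa_\RR$ (the missing $|G|^{|\EE_\RR|}$ will be produced by the sum over $\widehat g$) and expanding $\prod_p\bigl(1+\gamma'\chi'(\widehat g|_p)\bigr)=\sum_{S\subseteq\{\text{plaquettes of }\RR\}}(\gamma')^{|S|}\prod_{p\in S}\chi'(\widehat g|_p)$, the task becomes to decide which terms survive the sum over $\widehat g$.

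The second step is a vanishing lemma: for $S$ a \emph{proper nonempty} subset of the plaquettes of $\RR$, the operator $\sum_{\widehat g\colon\EE_\RR\to G}\prod_{p\in S}\chi'(\widehat g|_p)\cdot(\text{boundary }\widetilde\psi\text{-tensor of }\widehat g)$ vanishes. Indeed, this $\widetilde\psi$-tensor depends only on $\widehat g|_{\EE_{\partial\RR}}$; since the plaquette–adjacency graph of a rectangle is connected, there is an internal edge $e_0\in\EE_{\mathring\RR}$ separating a plaquette $p_0\in S$ from a plaquette not in $S$, so $\widehat g(e_0)$ appears only in the factor $\chi'(\widehat g|_{p_0})$ and nowhere else; carrying out the sum over $\widehat g(e_0)\in G$ first gives $\sum_{x\in G}\chi'(uxv)=\sum_{y\in G}\chi'(y)=|G|-|G|=0$. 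Hence only $S=\emptyset$ and $S=\{\text{all plaquettes}\}$ contribute. For $S=\emptyset$ one computes $\sum_{\widehat g}(\text{boundary }\widetilde\psi\text{-tensor})=|G|^{|\EE_\RR|}\,\widetilde\Delta^{\otimes\EE_{\partial\RR}}$ (the sum over internal edges contributes $|G|^{|\EE_{\mathring\RR}|}$ and $\sum_{x\in G}\widetilde\psi_x=|G|\widetilde\Delta$ on each boundary edge), and since $a=1$ forces $\widehat a\equiv1$ this term is precisely $\kappa_\RR\,\widetilde{\mathcal S}_{\partial\RR}$ with $\widetilde{\mathcal S}_{\partial\RR}$ as in \eqref{equa:LeadingTermDefinition}.

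The third step prepares the estimate of the remainder. The key structural fact is that $\widetilde\Delta$ fixes every $|L^g\rangle$ (because $\langle L^g|L^h\rangle=|G|\delta_{g,h}$), hence $\widetilde\Delta\,\widetilde\psi_g\,\widetilde\Delta=\widetilde\psi_g$; likewise $\widetilde\phi^{(k)}_{\widehat1}$ is the orthogonal projection onto $\operatorname{span}\{\,|hh\rangle\,\}$ and $\widetilde\phi^{(k)}_{\widehat1}\,\widetilde\phi^{(k)}_a\,\widetilde\phi^{(k)}_{\widehat1}=\widetilde\phi^{(k)}_a$. Consequently $\widetilde{\mathcal S}_{\partial\RR}$ is an orthogonal projection (a tensor product of the projections $\widetilde\Delta$ and $\widetilde\phi^{(k)}_{\widehat1}$) and every summand of $\widetilde{\rho}_{\partial\RR}$ is fixed by it on both sides, so $\widetilde{\rho}_{\partial\RR}=\widetilde{\mathcal S}_{\partial\RR}\,\widetilde{\rho}_{\partial\RR}\,\widetilde{\mathcal S}_{\partial\RR}$; in particular $\widetilde{\mathcal S}^{rest}_{\partial\RR}$ is supported on $\operatorname{ran}\widetilde{\mathcal S}_{\partial\RR}$ and it suffices to bound the norm of the two remaining pieces. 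The $a\neq1$ part carries the factor $(\gamma')^{|\VV_{\mathring\RR}|}$ and at most $|G|-1$ choices of $a$ (and $\prod_p(1+\gamma'\chi'(\widehat g|_p))$ does not depend on $a$), while the $S=\{\text{all}\}$ term carries $(\gamma')^{n_\RR}\le(\gamma')^{|\VV_{\mathring\RR}|}$, using $n_\RR\ge|\VV_{\mathring\RR}|$ for any rectangle or cylinder. Using that the boundary $\widetilde\psi$-tensors are mutually orthogonal rank‑one operators of norm $|G|^{|\EE_{\partial\RR}|}$ and that on $\operatorname{span}\{\,|hh\rangle\,\}$ the $\widetilde\phi^{(k)}_a$ act as unitaries (right translations), both norms reduce to the scalar quantity $C(\widehat g|_{\EE_{\partial\RR}}):=\sum_{\widehat g|_{\EE_{\mathring\RR}}}\prod_p\bigl(1+\gamma'\chi'(\widehat g|_p)\bigr)$, equivalently to $\mathbb E_{\widehat g|_{\EE_{\mathring\RR}}}\bigl[\prod_p\chi'(\widehat g|_p)\bigr]$ (all intermediate $S$ dropping out again by the vanishing lemma), and the target bound $\|\widetilde{\mathcal S}^{rest}_{\partial\RR}\|\le 3|G|^2(\gamma')^{|\VV_{\mathring\RR}|}$ follows once one knows $\bigl|\mathbb E_{\widehat g|_{\EE_{\mathring\RR}}}[\prod_p\chi'(\widehat g|_p)]\bigr|\le|G|-1$, uniformly over the boundary configuration.

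I expect this last scalar estimate to be the main obstacle. A naive term‑by‑term bound only yields $(|G|-1)^{n_\RR}$, which together with the $(\gamma')^{n_\RR}$ prefactor blows up as $\beta\to\infty$, so one must exploit the cancellation in the internal gauge sum. The clean route is a character (heat‑kernel) expansion: writing $\chi'=\sum_{\pi\neq\mathrm{triv}}d_\pi\chi^\pi$ and $1+\gamma'\chi'(g)=\chi^{\mathrm{triv}}(g)+\gamma'\sum_{\pi\neq\mathrm{triv}}d_\pi\chi^\pi(g)$, then summing out the internal edges one plaquette at a time via $\tfrac1{|G|}\sum_{g\in G}\chi^\pi(xg^{-1})\chi^{\pi'}(gy)=\delta_{\pi\pi'}\,\chi^\pi(xy)/d_\pi$, which forces all plaquette labels to coincide (connectedness once more) and collapses the sum to $\sum_\pi(\text{weight}_\pi)^{n_\RR}\cdot(\text{polynomial in }d_\pi)\cdot\chi^\pi(U_{\partial\RR})$; the trivial irrep reproduces the $(1+\gamma_\beta)^{n_\RR}$ already absorbed into $\kappa_\RR$, every other irrep is suppressed by $(\text{weight}_\pi/\text{weight}_{\mathrm{triv}})^{n_\RR}<1$, and $\sum_\pi d_\pi^2=|G|$ together with $|\chi^\pi(U_{\partial\RR})|\le d_\pi$ controls the sum over $\pi$, giving the required $O(|G|)$ bound. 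Finishing is then bookkeeping: collecting the $S=\emptyset$ term as $\kappa_\RR\widetilde{\mathcal S}_{\partial\RR}$, the $S=\{\text{all}\}$ term (which is $\le|G|(\gamma')^{|\VV_{\mathring\RR}|}$) and the $a\neq1$ part (which is $\le 2|G|^2(\gamma')^{|\VV_{\mathring\RR}|}$), and adding.
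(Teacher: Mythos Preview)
Your cluster-expansion organization (expand $\prod_p(1+\gamma'\chi')$ over subsets $S$, kill all proper nonempty $S$ by summing over one internal edge) is correct and is essentially a repackaging of the paper's argument. The paper does not expand over subsets; instead it first sums out \emph{all} internal edges for fixed boundary data $\widehat f$, obtaining the closed formula (Proposition~\ref{Prop:arranging_plaquettes})
\[
\sum_{\widehat g|_{\EE_{\partial\RR}}=\widehat f}\ \prod_p\bigl(1+\gamma_\beta\,\chi^{reg}(\widehat g|_p)\bigr)
\;=\;|G|^{|\EE_{\mathring\RR}|}\Bigl[(1+\gamma_\beta)^{n_\RR}+\gamma_\beta^{\,n_\RR}\,\widetilde\chi^{reg}(\widehat f)\Bigr],
\]
and then splits the resulting expression into four pieces $\mathcal S_1,\ldots,\mathcal S_4$ that correspond exactly to your $S=\emptyset$ term, your $S=\{\text{all}\}$ term, and your $a\neq 1$ contribution. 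The norm bounds are then read off from the mutual orthogonality of the $\widetilde\psi_{\widehat f}$ (your observation too) and from the fact that $\frac{1}{|G|}\sum_a\widetilde\phi_{\widehat a}$ is a projection (Proposition~\ref{Prop:BoundaryProjector2}); your unitarity argument for the $\widetilde\phi_{\widehat a}$'s is an equally good substitute.

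Where you overestimate the difficulty is your ``main obstacle'': the scalar $\mathbb E_{\widehat g|_{\EE_{\mathring\RR}}}\bigl[\prod_p\chi'(\widehat g|_p)\bigr]$ is not merely \emph{bounded} by $|G|-1$, it is \emph{equal} to $\widetilde\chi^{reg}(\widehat f)\in\{-1,|G|-1\}$. The paper proves this by an elementary two-line identity (Lemma~\ref{Lemm:GatheringPlaquettes}, the case $a_0=a_1=0$, $b_0=b_1=1$),
\[
\sum_{g\in G}\widetilde\chi^{reg}(ug)\,\widetilde\chi^{reg}(g^{-1}v)=|G|\,\widetilde\chi^{reg}(uv),
\]
iterated inductively by always splitting the rectangle into two smaller \emph{rectangles} along a straight side (so the shared edges $g_1,\ldots,g_m$ occur consecutively in both holonomies and cancel cleanly). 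This is the same ``sum over one internal edge'' mechanism you use for the vanishing lemma, just applied to the $S=\{\text{all}\}$ case. Your proposed character/heat-kernel expansion also works, but it is more machinery than needed, and if you merge plaquettes one at a time (rather than rectangles) you will have to handle loops in the dual graph via Schur's lemma; the paper's rectangle-by-rectangle induction sidesteps that entirely.

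One small caveat worth flagging (it affects the paper's presentation as much as yours): for $a\neq 1$ the sum over $\widehat g$ in \eqref{equa:BoundaryProperRectangle1} carries an implicit compatibility constraint (each plaquette holonomy must centralize the corresponding $\widehat a$-value), so the coefficient of $\widetilde\phi_{\widehat a}$ is not literally the unconstrained $c_\beta(\widehat f)$. This does not affect the final bound, since $c_\beta(\widehat g)\ge 0$ and $\|\widetilde\phi_{\widehat a}\|=1$ let you estimate the $a\neq 1$ piece by the triangle inequality over $a$ and over internal $\widehat g$, recovering $\le(|G|-1)\,c_\beta(\widehat f)$ anyway; but your parenthetical ``$\prod_p(1+\gamma'\chi')$ does not depend on $a$'' is slightly misleading for non-abelian $G$.
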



Before proving the theorem, let us discuss two useful consequences. Recall that the \emph{full} boundary state $\rho_{\partial \RR}$ and its \emph{slim} version $\widetilde{\rho}_{\partial \RR}$ are related via a transformation $\mathcal{G}_{\partial \RR}$ consisting of a tensor product of the (positive and invertible) weight-operators (see Section \ref{sec:PEPSrectangle}):
\[ \rho_{\partial \RR} = V_{\mathcal{R}}^{\dagger} V_{\mathcal{R}} = \mathcal{G}_{\partial \RR} \, \widetilde{V}_{\mathcal{R}}^{\dagger} \widetilde{V}_{\mathcal{R}} \, \mathcal{G}_{\partial \RR}= \mathcal{G}_{\partial \RR} \, \widetilde{\rho}_{\partial \RR} \, \mathcal{G}_{\partial \RR} \,\]

Denote by $J_{\partial \RR}$ and $\widetilde{J}_{\partial \RR}$ the orthogonal projections onto $(\ker \rho_{\partial \RR})^{\perp}$ and $(\ker \widetilde{\rho}_{\partial \RR})^{\perp}$, respectively.

\begin{Theo}[Approximate factorization of the boundary]\label{Coro:approx-factorization-bound}
Following the notation of the previous theorem, let us assume that $\epsilon_{\RR} < 1$. Then, the following assertions hold:
\begin{enumerate}
    \item[(i)] $J_{\partial \RR} = \widetilde{J}_{\partial \RR} = \widetilde{S}_{\partial \RR}$. In other words, $\widetilde{\mathcal{S}}_{\partial \RR}$ is the orthogonal projection onto the support of the \emph{slim} boundary state  and the \emph{full} boundary state. 
    \item[(ii)] The operator
    \[
    \sigma_{\partial \RR} := \kappa_{\RR}\, (\mathcal{G}_{\partial \RR} J_{\partial \RR}\, \mathcal{G}_{\partial \RR})  ,
\]
  satisfies  
    \[ 
    \left\| \rho_{\partial \RR}^{1/2} \sigma_{\partial \RR}^{-1} \rho_{\partial \RR}^{1/2} - J_{\partial R} \right\| < \epsilon_{\RR} \,\, , \,\, \left\| \rho_{\partial \RR}^{-1/2} \sigma_{\partial \RR} \rho_{\partial \RR}^{-1/2} - J_{\partial R} \right\|\, \le \frac{\epsilon_{\RR}}{1 - \epsilon_{\RR}}\,.
    \]
    where here the inverses are taken in the corresponding support.
\end{enumerate}
In particular, $J_{\partial \RR}$, $\widetilde{J}_{\partial \RR}$ and $\sigma_{\partial \RR}$ inherit the tensor product structure of $\widetilde{\mathcal{S}}_{\partial \RR}$, $\mathcal{G}_{\partial \RR}$.
\end{Theo}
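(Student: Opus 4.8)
The plan is to derive Theorem~\ref{Coro:approx-factorization-bound} as a fairly direct consequence of the Leading Term decomposition in Theorem~\ref{Theo:leadingTerm}, together with the gauge-invariance result (Proposition~\ref{prop:gauge-invariance}). The first task is to establish that $\widetilde{\mathcal{S}}_{\partial \RR}$ is an orthogonal projection. By construction it is a tensor product of the local operators $\widetilde{\Delta} = \frac{1}{|G|}\sum_g \ket{L^g}\bra{L^g}$ placed on the boundary edges and the local operators $\widetilde{\phi}_{\widehat{1}}$ placed on the boundary vertices. For $\widetilde{\Delta}$ one checks directly that $\{ |G|^{-1/2} L^g \}_{g\in G}$ behaves like an (overcomplete, but after orthonormalization still projecting) family; more precisely $\widetilde\Delta^2 = \widetilde\Delta$ follows from $\langle L^g, L^h\rangle_{HS} = \Tr((L^g)^\dagger L^h) = |G|\delta_{g,h}$, so $\widetilde\Delta$ is the orthogonal projection onto $\mathrm{span}\{\ket{L^g}\}$. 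Similarly $\widetilde{\phi}_{\widehat{1}} = \sum_h \ket{hh}\ket{hh}\bra{h}\bra{h}$ is readily seen to be a rank-one-per-$h$ orthogonal projection (the vectors $\ket{h}\ket{h}$ are orthonormal). Being a tensor product of orthogonal projections, $\widetilde{\mathcal{S}}_{\partial \RR}$ is an orthogonal projection.

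Next, to prove part (i), I would use the decomposition $\widetilde{\rho}_{\partial \RR} = \kappa_{\RR}(\widetilde{\mathcal{S}}_{\partial \RR} + \widetilde{\mathcal{S}}^{rest}_{\partial \RR})$ with $\|\widetilde{\mathcal{S}}^{rest}_{\partial\RR}\| \le \epsilon_{\RR} < 1$. The key observation is that $\widetilde{\mathcal{S}}^{rest}_{\partial \RR}$ has the \emph{same support} as $\widetilde{\mathcal{S}}_{\partial\RR}$: indeed every summand defining $\widetilde{\rho}_{\partial\RR}$ in \eqref{equa:BoundaryProperRectangle1} factors through the $\widetilde\phi$'s and $\widetilde\psi$'s, and $\widetilde{\psi}_g = \ket{L^g}\bra{L^g}$ always has range inside $\mathrm{range}(\widetilde\Delta)$, while the $\widetilde\phi$ contributions have range inside $\mathrm{range}(\widetilde\phi_{\widehat 1})$ only when $a = 1$ — but the terms with $a\neq 1$ are precisely the ones that have been absorbed into the small remainder (controlled by the factor $(\gamma_\beta/(1+\gamma_\beta))^{|\VV_{\mathring\RR}|}$). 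So one argues that $\widetilde{\mathcal S}_{\partial\RR}\,\widetilde{\mathcal S}^{rest}_{\partial\RR} = \widetilde{\mathcal S}^{rest}_{\partial\RR}\,\widetilde{\mathcal S}_{\partial\RR} = \widetilde{\mathcal S}^{rest}_{\partial\RR}$, i.e.\ the remainder lives entirely on $\mathrm{range}(\widetilde{\mathcal S}_{\partial\RR})$. Consequently $\widetilde\rho_{\partial\RR} = \kappa_\RR \widetilde{\mathcal S}_{\partial\RR}(I + \widetilde{\mathcal S}^{rest}_{\partial\RR})\widetilde{\mathcal S}_{\partial\RR}$, and since $\|\widetilde{\mathcal S}^{rest}_{\partial\RR}\|<1$ the operator $\widetilde{\mathcal S}_{\partial\RR} + \widetilde{\mathcal S}^{rest}_{\partial\RR}$ restricted to $\mathrm{range}(\widetilde{\mathcal S}_{\partial\RR})$ is invertible; hence $(\ker\widetilde\rho_{\partial\RR})^\perp = \mathrm{range}(\widetilde{\mathcal S}_{\partial\RR})$, i.e.\ $\widetilde J_{\partial\RR} = \widetilde{\mathcal S}_{\partial\RR}$. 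Because $\rho_{\partial\RR} = \mathcal G_{\partial\RR}\widetilde\rho_{\partial\RR}\mathcal G_{\partial\RR}$ with $\mathcal G_{\partial\RR}$ positive and invertible, conjugation by it preserves the support, so $J_{\partial\RR} = \widetilde J_{\partial\RR} = \widetilde{\mathcal S}_{\partial\RR}$, provided one also checks $\comm{J_{\partial\RR}}{\mathcal G_{\partial\RR}}=0$ — which holds since both are tensor products over the same decomposition of boundary indices into edges and vertices, and on each such factor $\widetilde\Delta$ (resp.\ $\widetilde\phi_{\widehat 1}$) commutes with the corresponding weight $\weightP\otimes\weightP$ (resp.\ $\weightS\otimes\weightS$), because the weights are diagonal in the relevant bases.

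For part (ii), set $\sigma_{\partial\RR} := \kappa_\RR(\mathcal G_{\partial\RR} J_{\partial\RR}\mathcal G_{\partial\RR})$. First, $\sigma_{\partial\RR}$ is supported on $J_{\partial\RR}$ because $J_{\partial\RR}$ commutes with $\mathcal G_{\partial\RR}$, so $\sigma_{\partial\RR} = \kappa_\RR J_{\partial\RR}\mathcal G_{\partial\RR}^2 J_{\partial\RR}$, a positive operator with support exactly $J_{\partial\RR}$ (since $\mathcal G_{\partial\RR}^2$ is invertible). To estimate the first norm, compute on the support: using $\rho_{\partial\RR} = \mathcal G_{\partial\RR}\widetilde\rho_{\partial\RR}\mathcal G_{\partial\RR}$ and $\sigma_{\partial\RR}^{-1} = \kappa_\RR^{-1}\mathcal G_{\partial\RR}^{-1}J_{\partial\RR}\mathcal G_{\partial\RR}^{-1}$, one gets (after the $\mathcal G$'s cancel in pairs, using that they commute with $J_{\partial\RR}$) that $\rho_{\partial\RR}^{1/2}\sigma_{\partial\RR}^{-1}\rho_{\partial\RR}^{1/2}$ is unitarily equivalent to $\widetilde\rho_{\partial\RR}^{1/2}(\kappa_\RR \widetilde J_{\partial\RR})^{-1}\widetilde\rho_{\partial\RR}^{1/2} = \kappa_\RR^{-1}\widetilde\rho_{\partial\RR} = \widetilde{\mathcal S}_{\partial\RR} + \widetilde{\mathcal S}^{rest}_{\partial\RR}$ (this is essentially the content of Proposition~\ref{prop:gauge-invariance} applied with $\widetilde\sigma_{\partial\RR} = \kappa_\RR\widetilde J_{\partial\RR}$). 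Hence $\|\rho_{\partial\RR}^{1/2}\sigma_{\partial\RR}^{-1}\rho_{\partial\RR}^{1/2} - J_{\partial\RR}\| = \|\widetilde{\mathcal S}^{rest}_{\partial\RR}\| \le \epsilon_\RR < 1$. For the reverse bound, on the support we have $\rho_{\partial\RR}^{-1/2}\sigma_{\partial\RR}\rho_{\partial\RR}^{-1/2}$ unitarily equivalent to $(\kappa_\RR^{-1}\widetilde\rho_{\partial\RR})^{-1} = (\widetilde{\mathcal S}_{\partial\RR} + \widetilde{\mathcal S}^{rest}_{\partial\RR})^{-1}$, and since $X := \widetilde{\mathcal S}^{rest}_{\partial\RR}$ (restricted to the support) satisfies $\|X\|\le\epsilon_\RR$, a Neumann series gives $\|(I+X)^{-1} - I\| \le \|X\|/(1-\|X\|) \le \epsilon_\RR/(1-\epsilon_\RR)$, which is exactly the claimed estimate. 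Finally, the tensor product structure of $J_{\partial\RR}$, $\widetilde J_{\partial\RR}$, $\sigma_{\partial\RR}$ is inherited from that of $\widetilde{\mathcal S}_{\partial\RR}$ and $\mathcal G_{\partial\RR}$: $J_{\partial\RR} = \widetilde{\mathcal S}_{\partial\RR} = \bigotimes_{e\in\EE_{\partial\RR}}\widetilde\Delta \otimes \bigotimes_{v\in\VV_{\partial\RR}}\widetilde\phi_{\widehat 1}$ and $\sigma_{\partial\RR} = \kappa_\RR\bigotimes_e (\weightP^{\otimes 2}\widetilde\Delta\,\weightP^{\otimes 2}) \otimes \bigotimes_v (\weightS^{\otimes\ast}\widetilde\phi_{\widehat 1}\,\weightS^{\otimes\ast})$, which is the form needed to feed into Theorem~\ref{theo:non-injective-approx-fact}.

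The main obstacle I anticipate is the commutation claim $\comm{J_{\partial\RR}}{\mathcal G_{\partial\RR}} = 0$ together with the careful bookkeeping that the support of the small remainder $\widetilde{\mathcal S}^{rest}_{\partial\RR}$ is genuinely contained in $\mathrm{range}(\widetilde{\mathcal S}_{\partial\RR})$ — this requires inspecting the structure of every summand in \eqref{equa:BoundaryProperRectangle1} and verifying that the terms surviving on the complement of $\mathrm{range}(\widetilde{\mathcal S}_{\partial\RR})$ are exactly those killed by the decay factor $(\gamma_\beta/(1+\gamma_\beta))^{|\VV_{\mathring\RR}|}$ in $\epsilon_\RR$. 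The rest is linear-algebraic boilerplate (Neumann series, unitary equivalence through the partial isometry $\widetilde V_\RR\widetilde\rho_{\partial\RR}^{-1/2}$), already packaged in Proposition~\ref{prop:gauge-invariance}.
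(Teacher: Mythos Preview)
Your overall strategy matches the paper's, and the structure of both (i) and (ii) is correct. However, there is a self-contradictory passage in your justification of the key support claim for part (i).

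You write that ``the $\widetilde\phi$ contributions have range inside $\mathrm{range}(\widetilde\phi_{\widehat 1})$ only when $a = 1$'' and then try to salvage this by saying the $a\neq 1$ terms are absorbed into the small remainder. But this is internally inconsistent: if the $a\neq 1$ terms (which do appear in $\widetilde{\mathcal S}^{rest}_{\partial\RR}$, via the $\mathcal S_3$ piece of the decomposition) had range outside $\mathrm{range}(\widetilde{\mathcal S}_{\partial\RR})$, then your very next claim $\widetilde{\mathcal S}_{\partial\RR}\,\widetilde{\mathcal S}^{rest}_{\partial\RR} = \widetilde{\mathcal S}^{rest}_{\partial\RR}$ would be false. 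The correct observation is simpler and stronger: $\widetilde\phi_1\widetilde\phi_a = \widetilde\phi_a = \widetilde\phi_a\widetilde\phi_1$ holds for \emph{every} $a\in G$, because $\widetilde\phi_a$ maps the diagonal subspace $\mathrm{span}\{\ket{h}\ket{h}:h\in G\}$ to itself (sending $\ket{h}\ket{h}\mapsto\ket{ha}\ket{ha}$). Combined with $\widetilde\Delta\widetilde\psi_g = \widetilde\psi_g$ for every $g$, this gives $\widetilde{\mathcal S}_{\partial\RR}\,\widetilde\rho_{\partial\RR} = \widetilde\rho_{\partial\RR} = \widetilde\rho_{\partial\RR}\,\widetilde{\mathcal S}_{\partial\RR}$ directly for the full boundary state, not just the $a=1$ part; the paper packages exactly this as Proposition~\ref{Prop:BoundaryProjector3}. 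The support identification then follows from the elementary fact you invoke: a self-adjoint $T$ with $\Pi T = T\Pi = T$ and $\|T-\Pi\|<1$ has support projection exactly $\Pi$.

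A second point: the commutation $[\mathcal G_{\partial\RR}, \widetilde{\mathcal S}_{\partial\RR}] = 0$ needs slightly more than ``the weights are diagonal in the relevant bases''. On the $\widetilde\phi_1$ factors this is fine (both $\weightS$ and $\widetilde\phi_1$ are diagonal in the group basis), but $\weightP$ is diagonal in the $P_0/P_1$ block decomposition, not the group basis, so one must check $(\weightP\otimes\weightP)\widetilde\Delta = \widetilde\Delta(\weightP\otimes\weightP)$ by writing $\ket{L^g} = \ket{P_1} + \ket{P_0 L^g P_0}$ and verifying each summand is an eigenvector of $\weightP\otimes\weightP$. This is the paper's Lemma~\ref{Lemm:DeltaProjectorProperties}. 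With these two corrections, your argument for (i) and your Neumann-series argument for (ii) coincide with the paper's proof.
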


In the rest of the subsection, we will develop the proofs of the above results. We have divided the whole argument into four parts. The first two parts are \emph{Tool Box 1} and \emph{Tool Box 2}, that contain some auxiliary results. Then, we will first prove Theorem \ref{Theo:leadingTerm}, and finally Theorem \ref{Coro:approx-factorization-bound}. 

\subsubsection*{Tool Box 1: Boundary projections}

We start with a few useful observations on the operators $\widetilde{\phi}$ and $\widetilde{\psi}$. Using the explicit formula $\widetilde{\phi}_{a} = \sum_{h \in G} \ket{ha}\hspace{-0.7mm}\ket{ha} \bra{h}\hspace{-0.7mm}\bra{h}$, it is easy to check that
\begin{itemize}
    \item[$\triangleright$] $\widetilde{\phi}_{a} \widetilde{\phi}_{a'} = \widetilde{\phi}_{a'a}$ for each $a,a' \in G$,
    \item[$\triangleright$] $\widetilde{\phi}_{1}$ is a projection.
\end{itemize}
For every $g,g' \in G$
\begin{equation}\label{equa:toolBox1Aux1}
 \braket{\smash{L^{g}}}{\smash{L^{g'}}} = \operatorname{Tr}(L^{g^{-1}} L^{g'}) = \operatorname{Tr}(L^{g^{-1}g'}) = \delta_{g,g'} |G| 
\end{equation}
This means that vectors $\frac{1}{\sqrt{|G|}} \ket{L^{g}}$ are orthonormal in $\ell_{2}[G] \otimes \ell_{2}[G]$, so
\begin{itemize}
    \item[$\triangleright$] $\frac{1}{|G|}\widetilde{\psi}_{g} = \frac{1}{|G|} \ket{L^{g}} \bra{L^{g}}$ is a one-dimensional orthogonal projection,
    \item[$\triangleright$] $\frac{1}{|G|}\widetilde{\psi}_{g}$ and $\frac{1}{|G|}\widetilde{\psi}_{g'}$ are mutually orthogonal if $g$ and $g'$ are distinct.
\end{itemize}

\noindent These facts will be used throughout the forthcoming results.

\begin{Prop}\label{Prop:BoundaryProjector1}
For each $\widehat{f}:\EE_{\partial \RR} \longrightarrow G$ let
\[ 
\mathcal{P}_{\partial \RR}(\widehat{f}):=\frac{1}{\abs{G}^{\abs{\EE_{\partial \RR}}}}  \,\,\,\,
    \begin{tikzpicture}[equation,scale=0.4]
    \plaquettefive[3,2](1,0,0,0)
    \draw (4,2.7) node {\scriptsize $\widetilde{\psi}_{\,     \widehat{f}}$};
    \end{tikzpicture}
    \otimes \,\,\,\, \begin{tikzpicture}[equation,scale=0.4]
    \plaquettefive[3,2](2,0,0,0);
    \draw (3.9,2.8) node {\scriptsize $\widetilde \phi_{\widehat{1}}$}; 
    \end{tikzpicture} \,.
\]    
Then, $\mathcal{P}_{\partial \RR}(\widehat{f})$ is an orthogonal projection on $\mathcal{H}_{\partial \RR}$.  Moreover if $\widehat{f}_{1}$ and $\widehat{f}_{2}$ are different, then $\mathcal{P}_{\partial \RR}(\widehat{f}_{1})$ and $\mathcal{P}_{\partial \RR}(\widehat{f}_{2})$ are mutually orthogonal, that is
\[ \mathcal{P}_{\partial \RR}(\widehat{f}_{1}) \, \mathcal{P}_{\partial \RR}(\widehat{f}_{2}) \, = \,  \mathcal{P}_{\partial \RR}(\widehat{f}_{2}) \, \mathcal{P}_{\partial \RR}(\widehat{f}_{1}) = 0 \]
\end{Prop}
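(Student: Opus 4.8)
The plan is to prove that $\mathcal{P}_{\partial \RR}(\widehat{f})$ factorizes as a tensor product of local projections on each $\ell_2(G)\otimes \ell_2(G)$ sitting on the boundary edges, and then deduce both the projection property and the mutual orthogonality from the corresponding statements for those local pieces, which were collected in Tool Box~1.

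First I would unpack the definition. The boundary Hilbert space $\mathcal{H}_{\partial \RR}$ decomposes as a tensor product over the boundary virtual indices: one factor $\ell_2(G)\otimes \ell_2(G)$ for each edge $e\in\EE_{\partial\RR}$ (the ``$\psi$'' legs) and one factor $\ell_2(G)\otimes \ell_2(G)$ for each boundary vertex leg (the ``$\phi$'' legs). In the graphical expression for $\mathcal{P}_{\partial\RR}(\widehat f)$, the left tensor places $\frac{1}{|G|}\widetilde\psi_{\widehat f(e)} = \frac{1}{|G|}\ket{L^{\widehat f(e)}}\bra{L^{\widehat f(e)}}$ on each boundary edge (with no inner contractions surviving since the picture uses the ``slim, no inner'' variant, so there are no internal $\psi$'s to contract — all $\psi$'s shown are boundary ones), and the right tensor is $\widetilde\phi_{\widehat 1}$, which by its explicit formula $\widetilde\phi_1 = \sum_h \ket{hh}\bra{hh}$ already factors as a tensor product of the local projections $\widetilde\phi_1$ on each vertex leg. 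So I would write
\[
\mathcal{P}_{\partial \RR}(\widehat f) = \Big(\bigotimes_{e\in\EE_{\partial\RR}} \tfrac{1}{|G|}\widetilde\psi_{\widehat f(e)}\Big) \otimes \Big(\bigotimes_{v} \widetilde\phi_{\widehat 1}\Big),
\]
up to the reordering of tensor factors. The normalization $1/|G|^{|\EE_{\partial\RR}|}$ is exactly what is needed to turn each $\widetilde\psi_{\widehat f(e)}$ into the rank-one projection $\frac{1}{|G|}\widetilde\psi_{\widehat f(e)}$.

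Once this factorization is established, the first claim is immediate: each factor is an orthogonal projection (Tool Box~1 states $\frac{1}{|G|}\widetilde\psi_g$ is a one-dimensional orthogonal projection and $\widetilde\phi_1$ is a projection — and $\widetilde\phi_1$ is self-adjoint and idempotent on each leg), and a tensor product of orthogonal projections is an orthogonal projection. For the second claim, suppose $\widehat f_1\neq \widehat f_2$; then there is some edge $e_0\in\EE_{\partial\RR}$ with $\widehat f_1(e_0)\neq \widehat f_2(e_0)$. On that factor, Tool Box~1 tells us $\frac{1}{|G|}\widetilde\psi_{\widehat f_1(e_0)}$ and $\frac{1}{|G|}\widetilde\psi_{\widehat f_2(e_0)}$ are mutually orthogonal, i.e.\ their product (in either order) is zero. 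Since the remaining factors commute (they act on disjoint tensor factors) and one pair of factors multiplies to zero, the whole products $\mathcal{P}_{\partial\RR}(\widehat f_1)\mathcal{P}_{\partial\RR}(\widehat f_2)$ and $\mathcal{P}_{\partial\RR}(\widehat f_2)\mathcal{P}_{\partial\RR}(\widehat f_1)$ vanish.

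The main obstacle, such as it is, is bookkeeping rather than mathematics: one must be careful to check that in the ``slim, no-inner-contraction'' picture used in the statement there genuinely are no surviving inner $\psi$ or $\phi$ contractions that would couple different edge factors — this is where the choice of the $(1,0,0,0)$ and $(2,0,0,0)$ variants of the $\plaquettefive$ picture matters — and to match the number of $1/|G|$ factors in the normalization with the number of boundary edges $|\EE_{\partial\RR}|$. I would also note explicitly that the argument is identical for proper rectangles and cylinders (as remarked in Section~\ref{sec:PEPSrectangle}), the only difference being which edges and vertices count as boundary. No delicate estimate is needed here; this proposition is purely structural and sets up the identification $\widetilde{\mathcal S}_{\partial\RR} = \sum_{\widehat f} \mathcal{P}_{\partial\RR}(\widehat f)\cdot(\text{appropriate weights})$ that will be used in the proof of Theorem~\ref{Theo:leadingTerm}.
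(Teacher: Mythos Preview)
Your proposal is correct and follows essentially the same approach as the paper: you factor $\mathcal{P}_{\partial\RR}(\widehat f)$ as a tensor product of the rank-one projections $\tfrac{1}{|G|}\widetilde\psi_{\widehat f(e)}$ and the projections $\widetilde\phi_1$, then use the mutual orthogonality of $\widetilde\psi_g$ and $\widetilde\psi_{g'}$ for $g\neq g'$ (Tool Box~1) at any edge where $\widehat f_1$ and $\widehat f_2$ differ. The paper's proof is just a terser version of exactly this argument.
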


\begin{proof}
The first statement is clear since $\mathcal{P}_{\partial \RR}(\widehat{f})$ is by definition a tensor product of projections $\widetilde{\phi}_{1}$ and $\frac{1}{|G|}\widetilde{\psi}_{g}$. Moreover, if $\widehat{f}_{1}$ and $\widehat{f}_{2}$ are different, then there is a boundary edge $e \in \EE_{\partial \RR}$ such that $\widehat{f}_{1}(e) \neq \widehat{f}_{2}(e)$. Thus, 
\[ 
\begin{tikzpicture}[equation,scale=0.4]
    \plaquettefive[3,2](1,0,0,0);
    \draw (4,2.6) node {\scriptsize $\widetilde \psi_{\widehat{f}_{1}}$};
    \end{tikzpicture} \,\, \circ \,\,\,\, 
    \begin{tikzpicture}[equation,scale=0.4]
    \plaquettefive[3,2](1,0,0,0);
    \draw (4,2.6) node {\scriptsize $\widetilde \psi_{\widehat{f}_{2}}$};
    \end{tikzpicture} \,\, = \,\, 0\,.
    \]
since $\widetilde{\psi}_{\widehat{f}_{1}(e)}$ and $\widetilde{\psi}_{\widehat{f}_{2}(e)}$ are mutually orthogonal by the above observations.
\end{proof}

\begin{Prop}\label{Prop:BoundaryProjector2}
For each $\widehat{f}:\EE_{\partial \RR} \longrightarrow G$ let us define
\[
\mathcal{Q}_{\partial \RR}(\widehat{f}) \,\, := \,\,  \frac{1}{\abs{G}^{\abs{\EE_{\partial \RR}}}} \,\,\,
\begin{tikzpicture}[equation,scale=0.4]
    \plaquettefive[3,2](1,0,0,0);
    \draw (4,2.6) node {\scriptsize $\widetilde \psi_{\widehat{f}}$};
    \end{tikzpicture}
    \otimes \quad \frac{1}{|G|} \,\sum_{a \in G} \,\,\,
    \begin{tikzpicture}[equation,scale=0.4]
    \plaquettefive[3,2](2,0,0,0);
    \draw (4,2.3) node {\scriptsize $\widetilde \phi_{\, \widehat{a}}$};
     \draw (3.7,0) node {\scriptsize $a$};
    \end{tikzpicture}
\]
Recall that for each $a \in G$  the element $\widehat{a}$ that appears on the right-hand side is the only choice compatible with $\widehat{f}$ and the fixed $a$. Then, $\mathcal{Q}_{\partial \RR}(\widehat{f})$ is an orthogonal projection. Moreover if $\widehat{f}_{1}$ and $\widehat{f}_{2}$ are different, then $\mathcal{Q}_{\partial \RR}(\widehat{f}_{1})$ and $\mathcal{Q}_{\partial \RR}(\widehat{f}_{2})$ are mutually orthogonal.
\end{Prop}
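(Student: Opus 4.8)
The plan is to show that $\mathcal{Q}_{\partial \RR}(\widehat{f})$ is an orthogonal projection by splitting it into a tensor product of two factors — one supported on the ``blue'' (plaquette) boundary indices and one on the ``red'' (vertex) boundary indices — and arguing that each factor is an orthogonal projection.

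First I would observe that the $\psi$-part, namely $\frac{1}{|G|^{|\EE_{\partial \RR}|}} \,\widetilde{\psi}_{\widehat{f}}$ restricted to the boundary blue edges, is a tensor product of the one-dimensional orthogonal projections $\frac{1}{|G|}\widetilde{\psi}_{\widehat{f}(e)}$, $e \in \EE_{\partial\RR}$ (recalling from Tool Box 1 that each $\frac{1}{|G|}\widetilde{\psi}_g = \frac{1}{|G|}\ket{L^g}\bra{L^g}$ is a rank-one orthogonal projection). Hence this factor is an orthogonal projection. Note that after removing the weights and fixing $\widehat f$ on the boundary edges, the inner blue contraction produces only a fixed positive scalar (by the slim analogues of \eqref{equa:fullbluecontraction1}), which has already been absorbed; one must double-check that the normalization is exactly $|G|^{-|\EE_{\partial\RR}|}$, which follows because each inner plaquette contributes the same normalizing constant to both $\mathcal{Q}$ and its square.

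Next, the $\phi$-part is $\frac{1}{|G|}\sum_{a\in G} \widetilde\phi_{\widehat a}$, where $\widehat a$ ranges over vertices of $\RR$ and is the unique assignment compatible with $\widehat f$ and the chosen value $a$ at the fixed corner vertex. The key computation is that $\sum_a \bigotimes_{v} \widetilde\phi_{\widehat a(v)}$ behaves like $|G|$ times an orthogonal projection. Concretely, for distinct $a,a'\in G$ the corresponding assignments $\widehat a,\widehat a'$ differ at \emph{every} vertex (since $\widehat a(v)$ at one vertex determines all others via conjugation by the relevant $\widehat f$-values, and the map $a \mapsto \widehat a(v)$ at any fixed $v$ is a bijection of $G$), and using $\widetilde\phi_c\widetilde\phi_{c'} = \widetilde\phi_{c'c}$ together with the fact that $\widetilde\phi_c$ is a partial isometry which is a projection iff $c=1$, one checks $\widetilde\phi_{\widehat a(v)}\widetilde\phi_{\widehat a'(v)}$ has the structure that makes the cross-terms in $(\sum_a \bigotimes_v \widetilde\phi_{\widehat a(v)})^2$ vanish. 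More carefully: $\widetilde\phi_c^\dagger = \widetilde\phi_{c^{-1}}$, so for the object to be self-adjoint one pairs $a$ with $a^{-1}$ consistently; I would verify self-adjointness first, then compute the square. The diagonal terms give $\bigotimes_v \widetilde\phi_{\widehat a(v)}^2$; since the overall tensor contracts the two copies appropriately, $\widetilde\phi_{\widehat a(v)}^2$ acts as $\widetilde\phi_{\widehat a(v)^2}$ on each vertex but the full contraction collapses this to $|G| \cdot \bigotimes_v \widetilde\phi_{\widehat a(v)}$ after summing over $a$ — this is exactly the mechanism by which $\frac{1}{|G|}\sum_a$ produces idempotency. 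Combining the two factors, $\mathcal{Q}_{\partial\RR}(\widehat f)^2 = \mathcal{Q}_{\partial\RR}(\widehat f)$ and $\mathcal{Q}_{\partial\RR}(\widehat f)^\dagger = \mathcal{Q}_{\partial\RR}(\widehat f)$.

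For mutual orthogonality of $\mathcal{Q}_{\partial\RR}(\widehat f_1)$ and $\mathcal{Q}_{\partial\RR}(\widehat f_2)$ when $\widehat f_1 \ne \widehat f_2$, I would use the same argument as in Proposition~\ref{Prop:BoundaryProjector1}: there is a boundary edge $e$ with $\widehat f_1(e)\ne \widehat f_2(e)$, so the blue factors already annihilate each other because $\widetilde\psi_{\widehat f_1(e)}\,\widetilde\psi_{\widehat f_2(e)} = 0$; the $\phi$-parts are then irrelevant, and both products vanish. I expect the main obstacle to be the bookkeeping in the square of the $\phi$-part — precisely tracking how the contraction of the doubled virtual indices turns $\sum_a \bigotimes_v \widetilde\phi_{\widehat a(v)}$ into an idempotent after the $\frac{1}{|G|}$ normalization, and making sure the conjugation-compatibility constraint is used correctly so that distinct $a$'s really do give everywhere-distinct (hence ``orthogonal'') vertex assignments. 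Everything else reduces to the elementary identities for $\widetilde\phi$ and $\widetilde\psi$ collected in Tool Box 1.
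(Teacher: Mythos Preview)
Your overall plan is right and matches the paper: split $\mathcal{Q}_{\partial\RR}(\widehat f)$ into the $\psi$-factor and the $\phi$-factor, show each is an orthogonal projection, and use the Proposition~\ref{Prop:BoundaryProjector1} argument for mutual orthogonality. The $\psi$-factor and the orthogonality step are fine (there are no inner blue contractions in the definition of $\mathcal{Q}_{\partial\RR}(\widehat f)$, so your remark about them is unnecessary, but harmless).

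The genuine gap is in your idempotency argument for the $\phi$-factor. You claim that for $a\neq a'$ the cross-terms $\bigotimes_v \widetilde\phi_{\widehat a(v)}\,\widetilde\phi_{\widehat a'(v)}$ vanish. They do not: from Tool Box~1, $\widetilde\phi_c\,\widetilde\phi_{c'}=\widetilde\phi_{c'c}$ is always a nonzero partial isometry, so no product of $\widetilde\phi$'s is ever zero. Your subsequent ``diagonal-only'' computation then cannot produce the factor $|G|$ you need (summing $\widetilde\phi_{\widehat a(v)^2}$ over $a$ does not give $|G|$ times the original sum). The correct mechanism, which the paper uses, is the opposite of orthogonality: one keeps \emph{all} terms and observes that at each boundary vertex $\widetilde\phi_{\widehat a(v)}\,\widetilde\phi_{\widehat b(v)}=\widetilde\phi_{\widehat b(v)\widehat a(v)}$, and that the resulting assignment $v\mapsto \widehat b(v)\widehat a(v)$ is again the unique assignment compatible with $\widehat f$ and with value $ba$ at the fixed corner (because $a_2=g a_1 g^{-1}$ and $b_2=g b_1 g^{-1}$ imply $b_2 a_2 = g(b_1 a_1)g^{-1}$). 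Hence
\[
\Big(\sum_{a}\widetilde\phi_{\widehat a}\Big)\Big(\sum_{b}\widetilde\phi_{\widehat b}\Big)=\sum_{a,b}\widetilde\phi_{\widehat{ba}}=|G|\sum_{c}\widetilde\phi_{\widehat c},
\]
which is exactly the idempotency of $\frac{1}{|G|}\sum_a \widetilde\phi_{\widehat a}$. Self-adjointness follows the same way from $\widetilde\phi_a^{\dagger}=\widetilde\phi_{a^{-1}}$ and the fact that $a_2=ga_1g^{-1}$ iff $a_2^{-1}=ga_1^{-1}g^{-1}$, so the adjoint just reindexes the sum by $a\mapsto a^{-1}$.
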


\begin{proof}
Let us first check that $\mathcal{Q}_{\partial \RR}(\widehat{f})$ is an orthogonal projection. For that, it is enough to check that each of the two (tensor product) factors is a projection. The first factor is indeed a tensor product of projections $\frac{1}{|G|} \widetilde{\psi}_{g}$. For the second factor, we just need to check that it is self-adjoint and idempotent. Since for every $a$, the adjoint of $\widetilde{\phi}_{a}$ is $\widetilde{\phi}_{a^{-1}}$, it holds
\[ \Big(\quad \begin{tikzpicture}[equation,scale=0.4]
    \plaquettefive[3,2](2,0,0,0);
    \draw (4,2.3) node {\scriptsize $\widetilde \phi_{\, \widehat{a}}$};
     \draw (3.7,0) node {\scriptsize $a$};
    \end{tikzpicture} \Big)^{\dagger} \,\, = \,\, \begin{tikzpicture}[equation,scale=0.4]
    \plaquettefive[3,2](2,0,0,0);
    \draw (4,3.3) node {\scriptsize $\widetilde \phi_{\,\widehat{a^{-1}}}$};
     \draw (3.8,0) node {\scriptsize $a^{-1}$};
    \end{tikzpicture} \]
and that the latter is again compatible with $\widehat{f}$, since at each edge the condition $a_{2} = g a_{1} g^{-1}$ is equivalent to $a_{2}^{-1} = g a_{1}^{-1} g^{-1}$. Thus, summing over all $a \in G$ we get self-adjointness, and so the first statement is proved. To see the second statement, note that 
\[  
\begin{tikzpicture}[equation,scale=0.4]
    \plaquettefive[3,2](2,0,0,0);
    \draw (4,2.3) node {\scriptsize $\widetilde \phi_{\, \widehat{a}}$};
     \draw (3.7,0) node {\scriptsize $a$};
    \end{tikzpicture} 
\, \circ  \,\,\,
    \begin{tikzpicture}[equation,scale=0.4]
    \plaquettefive[3,2](2,0,0,0);
    \draw (4,2.3) node {\scriptsize $\widetilde \phi_{\, \widehat{b}}$};
     \draw (3.7,0) node {\scriptsize $b$};
\end{tikzpicture} 
\,\,\, = \,\,\,
    \begin{tikzpicture}[equation,scale=0.4]
    \plaquettefive[3,2](2,0,0,0);
    \draw (4,2.3) node {\scriptsize $\widetilde \phi_{\, \widehat{ba}}$};
     \draw (3.7,0) node {\scriptsize $ba$};
\end{tikzpicture} 
\]
and that the resulting element $\widetilde{\phi}_{\widehat{ba}}$ is compatible with $g$, since $a_{2} = g a_{1} g^{-1}$ and $b_{2} = g b_{1} g^{-1}$ yield that $a_{2}b_{2} = g a_{1}b_{1} g^{-1}$. Hence, summing over $a,b \in G$ in the previous expression we get
\[  
\sum_{a \in G} \,\,\, \begin{tikzpicture}[equation,scale=0.4]
    \plaquettefive[3,2](2,0,0,0);
    \draw (4,2.3) node {\scriptsize $\widetilde \phi_{\widehat{a}}$};
     \draw (3.7,0) node {\scriptsize $a$};
    \end{tikzpicture} 
\, \circ  \,\,\,
   \sum_{b \in G} \,\,\, \begin{tikzpicture}[equation,scale=0.4]
    \plaquettefive[3,2](2,0,0,0);
    \draw (4,2.3) node {\scriptsize $\widetilde \phi_{\widehat{b}}$};
     \draw (3.7,0) node {\scriptsize $b$};
\end{tikzpicture} 
\,\,\, = \,\,\, |G| \, \sum_{c \in G} \,\,
    \begin{tikzpicture}[equation,scale=0.4]
    \plaquettefive[3,2](2,0,0,0);
    \draw (4,2.3) node {\scriptsize $\widetilde \phi_{\widehat{c}}$};
     \draw (3.7,0) node {\scriptsize $c$};
\end{tikzpicture} 
\]
This finishes the argument that $\mathcal{Q}_{\partial \RR}(\widehat{f})$ is an orthogonal projection. Finally, if $\widehat{f}_{1}$ and $\widehat{f}_{2}$ are different, then we can check that $\mathcal{Q}_{\partial \RR}(\widehat{f}_{1})$ and $\mathcal{Q}_{\partial \RR}(\widehat{f}_{2})$ are mutually orthogonal arguing as in the proof of Proposition \ref{Prop:BoundaryProjector1}. 
\end{proof}

\begin{Lemm}\label{Lemm:DeltaProjectorProperties}
 The operator 
 \[ \widetilde{\Delta} := \frac{1}{|G|} \sum_{g \in G}\widetilde{\psi}_{g} = \frac{1}{|G|} \sum_{g \in G}\ket{L^{g}} \bra{L^{g}} \] 
 is a projection on $\ell_{2}(G) \otimes \ell_{2}(G)$ satisfying
\[   \widetilde{\Delta} \widetilde{\psi}_{g} =  \widetilde{\psi}_{g} \widetilde{\Delta} = \widetilde{\psi}_{g} \,\,\, \text{for all } g \in G \quad \quad , \quad \quad (\weightP \otimes \weightP) \widetilde{\Delta} \, = \, \widetilde{\Delta} \, (\weightP \otimes \weightP)  \,. \]
\end{Lemm}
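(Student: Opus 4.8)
The plan is to prove each of the three assertions in Lemma~\ref{Lemm:DeltaProjectorProperties} by direct computation, relying on the orthonormality relation \eqref{equa:toolBox1Aux1}, namely $\braket{L^{g}}{L^{g'}} = \delta_{g,g'}|G|$, which was already established in Tool Box~1. The key observation is that $\{ \tfrac{1}{\sqrt{|G|}}\ket{L^{g}} \}_{g \in G}$ is an orthonormal family in $\ell_2(G) \otimes \ell_2(G)$, so that $\widetilde{\Delta} = \frac{1}{|G|}\sum_{g} \ket{L^g}\bra{L^g} = \sum_{g} \big(\tfrac{1}{\sqrt{|G|}}\ket{L^g}\big)\big(\tfrac{1}{\sqrt{|G|}}\bra{L^g}\big)$ is manifestly an orthogonal projection, being a sum of mutually orthogonal rank-one projections. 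This also shows $\widetilde\Delta = \widetilde\Delta^\dagger = \widetilde\Delta^2$.

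For the identity $\widetilde{\Delta}\,\widetilde{\psi}_g = \widetilde{\psi}_g\,\widetilde{\Delta} = \widetilde{\psi}_g$, I would compute directly: $\widetilde{\Delta}\,\widetilde{\psi}_g = \frac{1}{|G|}\sum_{g'} \ket{L^{g'}}\braket{L^{g'}}{L^{g}}\bra{L^{g}} = \frac{1}{|G|}\sum_{g'} \delta_{g',g}\,|G|\,\ket{L^{g'}}\bra{L^{g}} = \ket{L^{g}}\bra{L^{g}} = \widetilde{\psi}_{g}$, where I have used $\widetilde\psi_g = \ket{L^g}\bra{L^g}$ and \eqref{equa:toolBox1Aux1}. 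The computation for $\widetilde{\psi}_g\,\widetilde{\Delta} = \widetilde{\psi}_g$ is symmetric (or follows by taking adjoints, since both $\widetilde\Delta$ and $\widetilde\psi_g$ are self-adjoint).

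For the commutation relation $(\weightP \otimes \weightP)\,\widetilde{\Delta} = \widetilde{\Delta}\,(\weightP \otimes \weightP)$, recall from \eqref{equa:bluewhiteweight} that $\weightP = (1+\gamma_{\beta/2})^{1/8} P_1 + (\gamma_{\beta/2})^{1/8} P_0$ with $P_0 = P_1^\perp$, so $\weightP$ is a function of the single projection $P_1$; hence $\weightP \otimes \weightP$ is diagonal in any basis adapted to the decomposition $\ell_2(G) = V_1 \oplus V_1^\perp$ on each tensor factor. The plan is to show that each $\ket{L^g}$ lies in a fixed joint eigenspace of $\weightP \otimes \weightP$. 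Using the decomposition \eqref{equa:decompLgIrrepProjectors}, $L^g = P_1 + P_0 L^g P_0$, one sees that $\ket{L^g} = (L^g \otimes \mathbbm{1})\ket{\Omega}$ decomposes into a $P_1 \otimes P_1$ component plus a $P_0 \otimes P_0$ component (the cross terms vanish because $P_1 \otimes \mathbbm{1}$ applied to $\ket\Omega$ and $P_0 \otimes \mathbbm{1}$ applied to $\ket\Omega$ lie in orthogonal subspaces that are also $P_1 \otimes P_1$- resp.\ $P_0\otimes P_0$-invariant — more precisely $\ket{L^g} = \ket{P_1} + \ket{P_0 L^g P_0}$ and $\ket{P_1}$ is supported on $V_1 \otimes V_1$ while $\ket{P_0 L^g P_0}$ is supported on $V_1^\perp \otimes V_1^\perp$). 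Therefore $\comm{\weightP \otimes \weightP}{\ket{L^g}\bra{L^g}} = 0$ for each $g$ — the weight operator acts as the scalar $(1+\gamma_{\beta/2})^{1/4}$ on $\ket{P_1}$ and as a self-adjoint operator preserving $V_1^\perp \otimes V_1^\perp$ on the other piece, and in fact one checks $\weightP \otimes \weightP$ commutes with each rank-one term — and summing over $g$ gives the result.

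The main obstacle I anticipate is the commutation relation: one must be careful that $\weightP \otimes \weightP$ does not merely preserve the \emph{span} of all the $\ket{L^g}$ but genuinely commutes with $\widetilde\Delta$ as operators. The cleanest route is probably to argue that $\weightP \otimes \weightP$ restricted to the range of $\widetilde\Delta$ is a scalar on the $V_1 \otimes V_1$ part and, on the orthogonal complement within $\mathrm{ran}\,\widetilde\Delta$, note that $\widetilde\Delta (P_0 \otimes P_0) = (P_0 \otimes P_0)\widetilde\Delta$ and $(\weightP\otimes\weightP)$ is a function of $(P_1 \otimes P_1, P_0 \otimes P_0, \text{mixed})$ projections — so it suffices to verify $\comm{\widetilde\Delta}{P_1 \otimes P_1} = 0$ and $\comm{\widetilde\Delta}{P_0 \otimes \mathbbm{1}}=0$, $\comm{\widetilde\Delta}{\mathbbm{1}\otimes P_0}=0$, which again reduce to the support statements about $\ket{L^g}$ above. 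All of these are short computations once the decomposition $\ket{L^g} = \ket{P_1} + \ket{P_0 L^g P_0}$ is in hand, which itself is immediate from \eqref{equa:decompLgIrrepProjectors} and linearity of $Q \mapsto \ket{Q}$.
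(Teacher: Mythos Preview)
Your treatment of the first two assertions is fine and matches the paper. The gap is in the commutation with $\weightP\otimes\weightP$: the claim that $\comm{\weightP\otimes\weightP}{\ket{L^g}\bra{L^g}}=0$ for each individual $g$ is \emph{false}. You correctly observe that $\ket{L^g}=\ket{P_1}+\ket{P_0L^gP_0}$ with $\ket{P_1}\in V_1\otimes V_1$ and $\ket{P_0L^gP_0}\in V_1^\perp\otimes V_1^\perp$, and that these are eigenvectors of $\weightP\otimes\weightP$ with eigenvalues $(1+\gamma_{\beta/2})^{1/4}$ and $(\gamma_{\beta/2})^{1/4}$. But these eigenvalues are \emph{different}, so $\ket{L^g}$ itself is \emph{not} an eigenvector of $\weightP\otimes\weightP$; and a rank-one projector $\ket{v}\bra{v}$ commutes with a self-adjoint operator only when $\ket{v}$ is an eigenvector. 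Concretely, expanding $\widetilde\psi_g=\ket{P_1}\bra{P_1}+\ket{P_1}\bra{P_0L^gP_0}+\ket{P_0L^gP_0}\bra{P_1}+\ket{P_0L^gP_0}\bra{P_0L^gP_0}$, the two cross terms fail to commute with $\weightP\otimes\weightP$.

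What rescues the argument, and what the paper does, is to sum over $g$ \emph{before} checking commutation: since $\tfrac{1}{|G|}\sum_g P_0L^gP_0 = P_0P_1P_0 = 0$, the cross terms cancel and one obtains the block-diagonal form
\[
\widetilde\Delta \;=\; \ket{P_1}\bra{P_1} \;+\; \frac{1}{|G|}\sum_{g\in G}\ket{P_0L^gP_0}\bra{P_0L^gP_0},
\]
each block now built from genuine eigenvectors of $\weightP\otimes\weightP$, so that $(\weightP\otimes\weightP)\widetilde\Delta$ is manifestly self-adjoint and hence equals $\widetilde\Delta(\weightP\otimes\weightP)$. Your fallback route via $\comm{\widetilde\Delta}{P_i\otimes P_j}=0$ can also be made to work, but it requires exactly the same cancellation $\sum_g P_0L^gP_0=0$ (equivalently, $\ket{P_1}\in\operatorname{ran}\widetilde\Delta$), which the ``support statements about $\ket{L^g}$'' alone do not provide.
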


\begin{proof}
It is clear from the above observations on $\frac{1}{|G|}\widetilde{\psi}_{g}$ that $\widetilde{\Delta}$ is actually the projection onto the vector subspace generated by vectors of the form $\ket{L^{g}}$ and that $\widetilde{\Delta} \widetilde{\psi}_{g} =  \widetilde{\psi}_{g} \widetilde{\Delta} = \widetilde{\psi}_{g} \,\,\, \text{for all } g \in G$. To prove the last identity, let us apply \eqref{equa:decompLgIrrepProjectors} to decompose
\[
\begin{split}
\ket{L^{g}} \bra{L^{g}} = \ket{P_{1}} & \bra{P_{1}} + \ket{P_{1}} \bra{\smash{P_{0} L^{g} P_{0}}}  + \ket{\smash{P_{0} L^{g} P_{0}}} \bra{P_{1}} + \ket{\smash{P_{0} L^{g} P_{0}}} \bra{\smash{P_{0} L^{g} P_{0}}} \,.
\end{split}
\]

\noindent Since $P_{1} = \frac{1}{|G|}\sum_{g \in G}{L^{g}}$ (see~\eqref{eq:p1-projection} and the subsequent discussion), we get after summing over $g \in G$ in the previous expression
\[ \widetilde{\Delta} = \frac{1}{|G|} \sum_{g \in G} \ket{L^{g}} \bra{L^{g}} = \ket{P_{1}} \bra{P_{1}} + \frac{1}{|G|} \sum_{g \in G} \ket{\smash{P_{0} L^{g} P_{0}}} \bra{\smash{P_{0} L^{g} P_{0}}} \,.\]
Note that
\begin{align*}
(\weightP \otimes \weightP) \ket{P_{1}} & = \ket{\smash{\weightP P_{1} \weightP}} = (1 +\gamma_{\beta/2})^{1/4} \ket{P_{1}}\,, \\[2mm]
(\weightP \otimes \weightP) \ket{\smash{P_{0} L^{g} P_{0}}} & = \ket{\smash{\weightP P_{0} L^{g} P_{0} \weightP}} = (\gamma_{\beta/2})^{1/4} \, \ket{\smash{P_{0} L^{g} P_{0}}},
\end{align*}
i.e. $\ket{P_{1}}$ and $\ket{P_0L^{g}P_0}$ are both eigenvectors of $\weightP \otimes \weightP$.
As a consequence
\begin{equation*}
    \widetilde{\Delta} (\weightP \otimes \weightP) \, = \, (1+ \gamma_{\beta/2})^{1/4} \ket{P_{1}} \bra{P_{1}}  + (\gamma_{\beta/2})^{1/4} \, \frac{1}{|G|} \sum_{g\in G} \, \ket{\smash{P_{0}L^{g} P_{0}}} \bra{\smash{P_{0} L^{g} P_{0}}}
\end{equation*}
and so taking adjoints we immediately get that
\[ \widetilde{\Delta} (\weightP \otimes \weightP)  = (\weightP \otimes \weightP)  \widetilde{\Delta}\,.  \]
This concludes the proof.
\end{proof}

\begin{Prop}\label{Prop:BoundaryProjector3}
The operator
    \begin{equation*} 
\widetilde{\mathcal{S}}_{\partial \RR} \,\, := \,\,  
\begin{tikzpicture}[equation,scale=0.4]
    \plaquettefive[3,2](1,0,0,0);
    \draw (4.5,2.6) node {\scriptsize $\widetilde \Delta^{\otimes \EE_{\partial \RR}}$};
    \end{tikzpicture}
    \otimes \quad
    \begin{tikzpicture}[equation,scale=0.4]
    \plaquettefive[3,2](2,0,0,0);
    \draw (4,2.3) node {\scriptsize $\widetilde \phi_{\widehat{1}}$};
    \end{tikzpicture}
\end{equation*}
is a projection on $\mathcal{H}_{\partial \mathcal{R}}$  satisfying
\[  \widetilde{\rho}_{\partial \mathcal{R}} = \widetilde{\rho}_{\partial \mathcal{R}} \, \widetilde{\mathcal{S}}_{\partial \RR} = \widetilde{\mathcal{S}}_{\partial \RR}\, \widetilde{\rho}_{\partial \mathcal{R}} \quad \quad , \quad \quad \weight_{\partial \mathcal{R}} \, \widetilde{\mathcal{S}}_{\partial \RR} = \widetilde{\mathcal{S}}_{\partial \RR}\, \weight_{\partial \mathcal{R}} \]
\end{Prop}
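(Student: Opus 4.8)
The plan is to verify the three claimed properties of $\widetilde{\mathcal{S}}_{\partial \RR}$ in turn, exploiting the product structure of all operators involved and the auxiliary facts collected in \emph{Tool Box 1}.

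\textbf{$\widetilde{\mathcal{S}}_{\partial \RR}$ is a projection.} By construction $\widetilde{\mathcal{S}}_{\partial \RR}$ is a tensor product of copies of $\widetilde{\Delta}$ (one for each boundary edge in $\EE_{\partial \RR}$) and of the operator $\widetilde{\phi}_{\widehat{1}}$ (the $\phi$-part of the boundary associated to the trivial assignment $a=1$ on all of $\VV_{\RR}$). By Lemma~\ref{Lemm:DeltaProjectorProperties}, $\widetilde{\Delta}$ is an orthogonal projection; and the $\widetilde{\phi}_{\widehat{1}}$ factor is, at each vertex, a product of the projections $\widetilde{\phi}_{1}$ (and $\widetilde{\phi}^{(2)}_{1}$, $\widetilde{\phi}^{(3)}_{1}$), which are projections by the first observation of \emph{Tool Box 1} (since $\widetilde{\phi}_a \widetilde{\phi}_{a'} = \widetilde{\phi}_{a'a}$ gives $\widetilde{\phi}_1^2 = \widetilde{\phi}_1$, and $\widetilde{\phi}_1^\dagger = \widetilde{\phi}_{1^{-1}} = \widetilde{\phi}_1$; contracting several of these along a vertex preserves idempotency and self-adjointness exactly as in the proof of Proposition~\ref{Prop:BoundaryProjector2}). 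A tensor product of orthogonal projections is an orthogonal projection, so $\widetilde{\mathcal{S}}_{\partial \RR}^2 = \widetilde{\mathcal{S}}_{\partial \RR} = \widetilde{\mathcal{S}}_{\partial \RR}^\dagger$.

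\textbf{$\widetilde{\mathcal{S}}_{\partial \RR}$ acts as identity on the range of $\widetilde{\rho}_{\partial \RR}$.} Here I would use the explicit expansion~\eqref{equa:BoundaryProperRectangle1} of $\widetilde{\rho}_{\partial \RR}$ as a sum over $\widehat{g}:\EE_{\RR}\to G$ and $a\in G$ of terms $\widetilde{\psi}_{\widehat g}\otimes\widetilde{\phi}_{\widehat a}$ (with scalar coefficients). It suffices to show that $\widetilde{\mathcal{S}}_{\partial \RR}$ fixes each such term from both sides. On the $\psi$-part: each boundary edge carries a factor $\widetilde{\psi}_{g}$ for the appropriate $g$, and Lemma~\ref{Lemm:DeltaProjectorProperties} gives $\widetilde{\Delta}\,\widetilde{\psi}_g = \widetilde{\psi}_g\,\widetilde{\Delta} = \widetilde{\psi}_g$, so the $\widetilde{\Delta}^{\otimes\EE_{\partial\RR}}$ factor of $\widetilde{\mathcal{S}}_{\partial\RR}$ absorbs into the $\psi$-part of every summand. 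On the $\phi$-part: the key point is that the compatibility condition forces all the values $\widehat{a}(v)$ on $\VV_{\RR}$ to lie in one conjugacy class, and on the \emph{boundary} vertices the relevant operators are $\widetilde{\phi}^{(2)}_{a}$ or $\widetilde{\phi}^{(3)}_{a}$; I must check that $\widetilde{\phi}_1^{(k)}\widetilde{\phi}_a^{(k)} = \widetilde{\phi}_a^{(k)}$ and $\widetilde{\phi}_a^{(k)}\widetilde{\phi}_1^{(k)} = \widetilde{\phi}_a^{(k)}$ for each weight $k$, using the explicit formulas $\widetilde{\phi}^{(k)}_a = \sum_h \ket{ha}\hspace{-0.7mm}\ket{ha}\bra{h}\hspace{-0.7mm}\bra{h}$ (which satisfy the same multiplication rule as $\widetilde{\phi}_a$). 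Thus the $\widetilde{\phi}_{\widehat 1}$ factor of $\widetilde{\mathcal{S}}_{\partial\RR}$ also absorbs, giving $\widetilde{\mathcal{S}}_{\partial\RR}\,\widetilde{\rho}_{\partial\RR} = \widetilde{\rho}_{\partial\RR} = \widetilde{\rho}_{\partial\RR}\,\widetilde{\mathcal{S}}_{\partial\RR}$.

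\textbf{$\widetilde{\mathcal{S}}_{\partial \RR}$ commutes with $\weight_{\partial\RR}$.} Recall $\weight_{\partial\RR}$ is a tensor product, over the boundary indices, of weight operators of the form $\weightS\otimes\weightS$ and $\weightP\otimes\weightP$. Since $\widetilde{\mathcal{S}}_{\partial\RR}$ is itself a tensor product over the same indices, commutation reduces to the single-index level: $\widetilde{\Delta}$ commutes with $\weightP\otimes\weightP$ by the last identity of Lemma~\ref{Lemm:DeltaProjectorProperties}, and on the $\phi$-indices one checks that $\widetilde{\phi}_1$ (and $\widetilde{\phi}_1^{(2)}$, $\widetilde{\phi}_1^{(3)}$) commute with $\weightS\otimes\weightS$ — indeed $\weightS\otimes\weightS$ acts diagonally on the basis $\ket{hh}$ with eigenvalue $(\delta_{h,1}+\gamma_{\beta/2})^{1/2}$ (or the corresponding power), and $\widetilde{\phi}_1 = \sum_h \ket{hh}\hspace{-0.7mm}\bra{hh}$ is diagonal in the same basis, so the two commute trivially. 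Assembling the factors gives $\weight_{\partial\RR}\,\widetilde{\mathcal{S}}_{\partial\RR} = \widetilde{\mathcal{S}}_{\partial\RR}\,\weight_{\partial\RR}$.

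The main obstacle I anticipate is purely bookkeeping: making precise the "one $\widetilde{\Delta}$ per boundary edge, one $\widetilde{\phi}_1^{(k)}$-type factor per boundary vertex with the right weight $k$" decomposition of both $\widetilde{\mathcal{S}}_{\partial\RR}$ and the summands of $\widetilde{\rho}_{\partial\RR}$, so that the tensor-factorwise arguments line up index by index. All the analytic content is already contained in \emph{Tool Box 1} and Lemma~\ref{Lemm:DeltaProjectorProperties}; no estimate is needed, only careful tracking of which operator sits on which virtual index.
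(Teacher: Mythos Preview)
Your proposal is correct and follows essentially the same approach as the paper: the projection property via tensor product of the projections $\widetilde{\Delta}$ and $\widetilde{\phi}_1$, the absorption $\widetilde{\mathcal{S}}_{\partial\RR}\widetilde{\rho}_{\partial\RR}=\widetilde{\rho}_{\partial\RR}$ term-by-term on the expansion~\eqref{equa:BoundaryProperRectangle1} using $\widetilde{\Delta}\widetilde{\psi}_g=\widetilde{\psi}_g$ and $\widetilde{\phi}_1\widetilde{\phi}_a=\widetilde{\phi}_a$, and the commutation with $\weight_{\partial\RR}$ reduced factor-by-factor to Lemma~\ref{Lemm:DeltaProjectorProperties} plus the diagonal computation for $(\weightS\otimes\weightS)\widetilde{\phi}_1$. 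One cosmetic slip: the eigenvalue of $\weightS\otimes\weightS$ on $\ket{h}\ket{h}$ is $(\delta_{h,1}+\gamma_{\beta/2})^{1/4}$, not $^{1/2}$, but this does not affect the commutation argument.
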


\begin{proof}
The first statement is clear as it is a tensor product of projections $\widetilde{\Delta}$ and $\widetilde{\phi}_{\widehat{1}}$. Let us check that $\widetilde{\mathcal{S}}_{\partial \RR} \, \widetilde{\rho}_{\partial \mathcal{R}} \, = \, \widetilde{\rho}_{\partial \mathcal{R}}$. In view of the representation of  $\widetilde{\rho}_{\partial \RR}$ given in \eqref{equa:BoundaryProperRectangle1}, it is enough to prove that

\begin{equation}\label{equa:DominantTermBoundaryAux1}
\widetilde{\mathcal{S}}_{\partial \mathcal{R}} \circ \left(   \begin{tikzpicture}[equation, scale=0.4]

\plaquettefive[3,2](1,0,0,0)

\draw (3.5,2.5) node {\scriptsize $\widetilde{\psi}_{\, \widehat{g}}$}; 
\end{tikzpicture}
\otimes \quad
\begin{tikzpicture}[equation, scale=0.4]

\plaquettefive[3,2](2,0,0,0)

\draw (3.5,2.5) node {\scriptsize $\widetilde{\phi}_{\, \widehat{a}}$};

\end{tikzpicture}    \right) 
\,\, = \,\,
\begin{tikzpicture}[equation, scale=0.4]

\plaquettefive[3,2](1,0,0,0)

\draw (3.5,2.5) node {\scriptsize $\widetilde{\psi}_{\, \widehat{g}}$}; 
\end{tikzpicture}
\otimes \quad
\begin{tikzpicture}[equation, scale=0.4]

\plaquettefive[3,2](2,0,0,0)

\draw (3.5,2.5) node {\scriptsize $\widetilde{\phi}_{\, \widehat{a}}$};
\draw (3.5,-0.2) node {\scriptsize $a$}; 

\end{tikzpicture}   
  \end{equation}
for all possible choices of $\widehat{g}$ and $\widehat{a}$.  The composition on the left-hand side of \eqref{equa:DominantTermBoundaryAux1} is again a tensor product of operators of the form $\widetilde{\phi}_{1} \, \widetilde{\phi}_{a} = \widetilde{\phi}_{a}$ and $\widetilde{\Delta} \, \widetilde{\psi}_{g} = \widetilde{\psi}_{g}$, so the equality \eqref{equa:DominantTermBoundaryAux1} holds. Analogously, we can argue that $\widetilde{\rho}_{\partial \mathcal{R}} \, \widetilde{\mathcal{S}}_{\partial \mathcal{R}} = \widetilde{\rho}_{\partial \mathcal{R}}$.

Finally, to see that $\weight_{\partial \mathcal{R}} \, \widetilde{\mathcal{S}}_{\partial \RR} = \widetilde{\mathcal{S}}_{\partial \RR}\, \weight_{\partial \mathcal{R}}$, note that both $\weight_{\partial \mathcal{R}} $ and $\widetilde{\mathcal{S}}_{\partial \RR}$ have a \emph{compatible} tensor product structure, so that their product $\weight_{\partial \mathcal{R}} \, \widetilde{\mathcal{S}}_{\partial \RR}$ is again a tensor product of elements of the form
\[ (\weightP \otimes \weightP)  \widetilde{\Delta} \quad , \quad (\weightS^{3} \otimes \weightS^{3})  \widetilde{\phi}_{1} \quad , \quad (\weightS^{2} \otimes \weightS^{2})  \widetilde{\phi}_{1}\,, \]
and analogously for $\widetilde{\mathcal{S}}_{\partial \RR} \, \weight_{\partial \mathcal{R}}$, 
\[ \widetilde{\Delta}  (\weightP \otimes \weightP)  \quad , \quad \widetilde{\phi}_{1}  (\weightS^{3} \otimes \weightS^{3})  \quad , \quad  \widetilde{\phi}_{1}  (\weightS^{2} \otimes \weightS^{2})\,. \]
By Lemma \ref{Lemm:DeltaProjectorProperties} we know that $(\weightP \otimes \weightP)  \widetilde{\Delta} =  \widetilde{\Delta}  (\weightP \otimes \weightP) $. For the others, we only need to check that
\[ (\weightS \otimes \weightS) \, \widetilde{\phi}_{1} = \sum_{h \in G} (\delta_{h,1} + \gamma_{\beta/2})^{1/4} \ket{h}\hspace{-0.8mm}\ket{h} \bra{h}\hspace{-0.8mm}\bra{h} \, = \, \widetilde{\phi}_{1} \, (\weightS \otimes \weightS)\,.\]
This finishes the proof.
\end{proof}

\subsubsection*{Tool Box 2: Plaquette constants}
We need a couple of auxiliary results. First, let us introduce the notation
\[ \widetilde{\chi}^{reg}(g) = \chi^{reg}(g) - 1 = |G| \delta_{g,1} - 1\,, \quad g \in G\,. \]
If $P_{1}$ is the projection onto $V_{1}$, then
 \[ \widetilde{\chi}^{reg}(g) = \operatorname{Tr}(L^{g}(\mathbbm{1} - P_{1}))\,. \]

\begin{Lemm}\label{Lemm:GatheringPlaquettes}
 Let us fix $u,v \in G$ and complex numbers $a_{0}, b_{0}, a_{1}, b_{1} \in \mathbb{C}$. Then, 
 \[ 
\begin{split} 
  \sum_{g_{1}, \ldots, g_{m} \in G} \left( a_{0} + b_{0} \, \widetilde{\chi}^{reg}(u g_{1} \ldots g_{m})\right)   \,  & \left( a_{1} +  b_{1} \widetilde{\chi}^{reg}(g_{m}^{-1} \ldots g_{2}^{-1} g_{1}^{-1} v)\right) \\ 
 &\quad\quad\quad = |G|^{m} \, \left( a_{0}a_{1} + b_{0} b_{1} \widetilde{\chi}^{reg}(uv) \right)\,.  
\end{split} 
 \]
 \end{Lemm}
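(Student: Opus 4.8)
## Proof Plan for Lemma~\ref{Lemm:GatheringPlaquettes}

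The plan is to reduce the multi-variable sum to a single-variable identity by iterated contraction, using the orthogonality relations for the left regular representation that were established in Tool Box~1. The key algebraic fact is that $\widetilde{\chi}^{reg}(g) = \operatorname{Tr}(L^g(\mathbbm{1} - P_1)) = \operatorname{Tr}(L^g P_0)$, so that $\widetilde{\chi}^{reg}$ is (up to the factor coming from $P_0$) a ``matrix coefficient'' of the regular representation restricted to the complement of the trivial subrepresentation. This is what makes the sums telescope cleanly.

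First I would expand the left-hand side into four terms according to whether each factor contributes its scalar part ($a_i$) or its character part ($b_i\widetilde{\chi}^{reg}$). The cross terms $a_0 b_1$ and $b_0 a_1$ involve a sum of the form $\sum_{g_1,\dots,g_m} \widetilde{\chi}^{reg}(w\, g_1\cdots g_m)$ for a fixed group element $w$; since $\sum_{g\in G}\widetilde{\chi}^{reg}(g) = |G|\cdot 1 - |G| = 0$ (equivalently $\sum_g L^g = |G| P_1$ and $\operatorname{Tr}(P_1 P_0)=0$), each such sum vanishes after summing over, say, $g_m$. So only the $a_0a_1$ term (giving $|G|^m a_0 a_1$ trivially) and the $b_0 b_1$ term survive, and it remains to prove
\[
\sum_{g_1,\dots,g_m\in G} \widetilde{\chi}^{reg}(u g_1\cdots g_m)\,\widetilde{\chi}^{reg}(g_m^{-1}\cdots g_1^{-1} v) = |G|^m\,\widetilde{\chi}^{reg}(uv).
\]

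For this core identity I would proceed by induction on $m$. The base case $m=0$ is trivial. For the inductive step, write $\widetilde{\chi}^{reg}(u g_1\cdots g_m) = \operatorname{Tr}(L^{g_1\cdots g_m} L^u P_0)$ and $\widetilde{\chi}^{reg}(g_m^{-1}\cdots g_1^{-1}v) = \operatorname{Tr}(L^{g_m^{-1}\cdots g_1^{-1}} L^v P_0)$, using that $P_0$ commutes with all $L^g$ and that $L^{g}$ is a $*$-representation. Summing over $g_m\in G$ first and using $\sum_{g_m} L^{g_m}\otimes L^{g_m^{-1}}$ acting appropriately — concretely, $\sum_{g_m\in G}\operatorname{Tr}(A L^{g_m})\operatorname{Tr}(B L^{g_m^{-1}}) = |G|\operatorname{Tr}(AB)$, which follows from $\braket{L^g}{L^{g'}} = |G|\delta_{g,g'}$ together with the fact that $\{\tfrac{1}{\sqrt{|G|}}\ket{L^g}\}$ is an orthonormal set whose span contains all the relevant operators, or more directly from $\sum_{g}(L^g)_{ij}(L^{g^{-1}})_{kl} = |G|\delta_{il}\delta_{jk}$ — collapses the $g_m$-sum and reduces the length-$m$ product to a length-$(m-1)$ product with $u$ replaced by $u$ and $v$ absorbing one factor, at the cost of a factor $|G|$. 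Care is needed to track that the $P_0$'s merge correctly: one uses $P_0^2 = P_0$ and that the trivial part genuinely drops out. Then apply the inductive hypothesis.

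The main obstacle I anticipate is bookkeeping the placement of the projections $P_0$ and the group elements $u,v$ through the contraction, and verifying that the ``vanishing of cross terms'' step is legitimate — i.e.\ that summing $\widetilde{\chi}^{reg}$ over one free variable inside a longer word indeed gives zero regardless of the fixed prefix/suffix. This is exactly the statement $\sum_{g\in G} L^{wg} = |G| P_1$ for any fixed $w$, combined with $\operatorname{Tr}(P_1 L^{g'} P_0) = 0$; once phrased this way it is immediate, but it must be invoked at the right moment (after peeling off the last variable $g_m$) so that the remaining variables are still free. An alternative, perhaps cleaner, route avoiding induction: interpret $\sum_{g\in G} L^g \otimes (L^g)^{-1}$-type sums as projecting onto the invariant subspace of $G\times G$ acting on $\mathcal{B}(\ell_2(G))$, so that the whole $m$-fold sum is a single application of Schur orthogonality; I would present the inductive version as it is the most elementary and self-contained given the tools already in the paper.
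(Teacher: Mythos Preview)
Your overall decomposition into four terms and the argument that the cross terms vanish is exactly what the paper does. The divergence is in how you handle the $b_0b_1$ term, and there your proposal is both more complicated than necessary and contains an error.

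The error: the identity $\sum_{g}(L^g)_{ij}(L^{g^{-1}})_{kl} = |G|\delta_{il}\delta_{jk}$ is false. That is Schur orthogonality for an \emph{irreducible} representation, and the regular representation is not irreducible. Equivalently, $\sum_g \operatorname{Tr}(AL^g)\operatorname{Tr}(L^{g^{-1}}B) = |G|\operatorname{Tr}(AB)$ fails for generic $A,B$; it only holds when one of them lies in the span of $\{L^h\}_h$. Your other justification (``the span contains all the relevant operators'') is the right one and can be made to work, since $L^w P_0 = L^w - P_1 = L^w - \tfrac{1}{|G|}\sum_h L^h$ does lie in that span. But once you notice this, the sum over $g_m$ collapses the entire product $L^{g_1\cdots g_{m-1}}L^{g_{m-1}^{-1}\cdots g_1^{-1}}$ to the identity in one shot, so there is no genuine induction left---your description of ``reducing length $m$ to length $m-1$'' is not what actually happens.

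The paper avoids all of this. It first observes that the map $(g_1,\dots,g_m)\mapsto g_1\cdots g_m$ is $|G|^{m-1}$-to-one onto $G$, so the whole sum reduces immediately to the case $m=1$. Then for $m=1$ it simply uses the explicit formula $\widetilde{\chi}^{reg}(h)=|G|\delta_{h,1}-1$ and expands $(|G|\delta_{ug,1}-1)(|G|\delta_{g^{-1}v,1}-1)$ as a polynomial in delta functions, summing over $g$ by hand. No representation theory, no projections, no orthogonality relations---just counting. Your route, once the span argument is patched, gives the same result via a more structural mechanism, which might generalize better to other models; but for this lemma the paper's direct computation is considerably shorter and more elementary.
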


 \begin{proof}
 Note that the left-hand side of the equality can be rewritten as
 \[ 
 \begin{split} 
 \sum_{g \in G} \,\, \sum_{\substack{g_{1}, \ldots, g_{m} \in G \\ g_{1} \, \ldots \, g_{m} =g}} \,  & \left(a_{0} +  b_{0} \widetilde{\chi}^{reg}(ug)\right)  \,  \left(a_{1} + b_{1} \widetilde{\chi}^{reg}(g^{-1}v)\right) \\
 & \quad \quad = |G|^{m-1} \, \sum_{g \in G} \left(a_{0} + b_{0} \widetilde{\chi}^{reg}(ug)\right) \, \left(a_{1} + b_{1} \widetilde{\chi}^{reg}(g^{-1}v)\right)
 \end{split} 
 \]
 so we can restrict ourselves to the case $m=1$. First, let us expand 
 \begin{align*}
   &  \sum_{g \in G} \left(a_{0} + b_{0} \widetilde{\chi}^{reg}(ug)\right) \,  \left(a_{1} + b_{1} \widetilde{\chi}^{reg}(g^{-1}v)\right) \\[2mm]
     & \quad \quad \quad \quad  = \, a_{0} a_{1} |G| \, + \, a_{0} b_{1} \sum_{g \in G} \widetilde{\chi}^{reg}(g^{-1}v) \, + \, b_{0}  a_{1} \, \sum_{g \in G} \widetilde{\chi}^{reg}(ug)\\[2mm] 
     & \quad \quad \quad \quad \quad \quad \quad \quad   \quad \quad \quad + \, b_{0} b_{1} \sum_{g \in G} \widetilde{\chi}^{reg}(ug) \, \widetilde{\chi}^{reg}(g^{-1}v)\,.
 \end{align*}
The second and third summands are equal to zero, since
\[ \sum_{h \in G} \widetilde{\chi}^{reg}(h) = \sum_{h \in G} \left( |G| \, \delta_{h,1} -1 \right) = |G| - |G| = 0\,. \]
Moreover, 
 \begin{align*}
 \sum_{g \in G} \widetilde{\chi}^{reg}(ug) \, \widetilde{\chi}^{reg}(g^{-1}v) & = \sum_{g \in G} (|G| \, \delta_{ug,1} - 1) \, (|G| \delta_{g^{-1}v,1} - 1)\\[2mm]
 & = \sum_{g \in G} |G|^{2} \, \delta_{ug,1} \delta_{g^{-1}v, 1}  - |G| \, \sum_{g \in G} \big(\delta_{ug,1} + \delta_{g^{-1}v,1} \big) + |G|\\[2mm]
 & = |G|^{2} \delta_{uv, 1} \, - 2|G| + |G|\\[3mm]
 & = |G| \, \big( |G| \delta_{uv,1}  - 1\big) = |G| \, \big( \chi^{reg}(uv) - 1 \big)\,,
 \end{align*}
so the fourth summand in the aforementioned expansion is equal to $b_{0} b_{1} \widetilde{\chi}^{reg}(uv)$, leading to the desired statement.
\end{proof}

\noindent Next we need a result which help us to deal with the constants
\[ c_{\beta}( \, \widehat{g} \, )  \, := \, \prod_{ \substack{\text{$p$ inner}\\ \text{plaquette}} } \left( 1 + \gamma_{\beta} \, \chi^{reg}(\widehat{g}|_{p}) \, \right) \]

\noindent that appear in $\widetilde{\rho}_{\partial \RR}$, see \eqref{equa:BoundaryProperRectangle1}. Let us first extend the notation $\chi^{reg}(\widehat{g}|_{p})$ from plaquettes to the boundary of the rectangle $\RR$: let us enumerate its boundary edges $\EE_{\partial \RR}$ counterclockwise by fixing any initial edge $e_{1}, e_{2}, \ldots, e_{k}$. Given a map $\widehat{f} : \EE_{\partial \RR} \longrightarrow G$ associating $e_{j} \longmapsto g_{j}$ let us define
\[ 
\begin{tikzpicture}[equation, scale=0.7]

\draw[gray, thick] (0,0) grid (3,2);
\foreach \x/\y [count=\i] in { 2.5/2.4, 1.5/2.4, 0.5/2.4, -0.4/1.5, -0.4/0.5} {
    \draw (\x,\y) node {\small $g_{\i}$}; 
}
\foreach \x/\y [count=\i from 6] in { 0.5/-0.4, 1.5/-0.4, 2.5/-0.4, 3.6/0.5, 3.6/1.5 } {
    \draw (\x,\y) node {\small $g_{\i}$}; 
}
\end{tikzpicture} \quad \quad
 \chi^{reg}(\widehat{f}):= \chi^{reg}(g_{1}^{\sigma_{1}} g_{2}^{\sigma_{2}} \ldots)   \]
where  $\sigma_{j} = 1$ if $e_{j}$ is in the upper or left side of the rectangle, and $\sigma_{j} = -1$ otherwise. We will also use the notation \[ \widetilde{\chi}^{reg}(\widehat{f}) := \chi^{reg}(\widehat{f})-1\,. \]
As in the case of plaquettes, note that the definition is independent of the enumeration of the edges as long as it is done counterclockwise ($\chi^{reg}$ is invariant under cyclic permutations) and respects the inverses ($\sigma_{j}=-1$) on the lower and right edges.

\begin{Prop}\label{Prop:arranging_plaquettes}
For fixed $\widehat{f}: \EE_{\partial \RR} \longrightarrow G$ we have that 

\[ 
\sum_{\substack{\widehat{g}: \EE_{\RR} \longrightarrow G\\ \widehat{g}|_{\EE_{\partial \RR}}  =  \widehat{f} }} c_{\beta}(\widehat{g}) \,\, = \,\, |G|^{\abs{\EE_{\mathring{\RR}}}} \, \left( \left(1 + \gamma_\beta \right)^{n_{\RR}}  +  \gamma_\beta^{n_{\RR}} \,\, \widetilde{\chi}^{reg}(\widehat{f}) \right)\,. \]

\end{Prop}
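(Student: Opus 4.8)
The plan is to prove Proposition~\ref{Prop:arranging_plaquettes} by induction on the number of plaquettes $n_{\RR}$, gathering plaquettes one at a time using Lemma~\ref{Lemm:GatheringPlaquettes}. The key observation is that each inner edge $e \in \EE_{\mathring{\RR}}$ is shared by exactly two plaquettes, and summing over $\widehat{g}(e) \in G$ (for all inner edges) is exactly the operation that Lemma~\ref{Lemm:GatheringPlaquettes} performs when it merges two factors $\bigl(a_0 + b_0\,\widetilde{\chi}^{reg}(\cdots)\bigr)\bigl(a_1 + b_1\,\widetilde{\chi}^{reg}(\cdots)\bigr)$ into a single factor of the same shape, picking up a power of $|G|$ in the process. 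So the strategy is: rewrite $c_\beta(\widehat{g})$ in terms of $\widetilde{\chi}^{reg}$ rather than $\chi^{reg}$, then peel off plaquettes from the boundary of $\RR$ inward.

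First I would rewrite, for each inner plaquette $p$,
\[
1 + \gamma_\beta\,\chi^{reg}(\widehat{g}|_p) = (1+\gamma_\beta) + \gamma_\beta\,\widetilde{\chi}^{reg}(\widehat{g}|_p),
\]
so that $c_\beta(\widehat{g}) = \prod_{p \text{ inner}} \bigl((1+\gamma_\beta) + \gamma_\beta\,\widetilde{\chi}^{reg}(\widehat{g}|_p)\bigr)$, which matches the factor shape in Lemma~\ref{Lemm:GatheringPlaquettes} with $a_i = 1+\gamma_\beta$, $b_i = \gamma_\beta$. The base case $n_\RR = 1$ is immediate: a single plaquette has no inner edges, $\EE_{\mathring{\RR}} = \emptyset$, and the sum is just $(1+\gamma_\beta) + \gamma_\beta\,\widetilde{\chi}^{reg}(\widehat{f})$, which is the claimed formula. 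For the inductive step, I would pick a plaquette $p_0$ at a ``corner'' of $\RR$ (so that removing it leaves a smaller region $\RR'$ that is still a rectangle, or more generally a region to which the inductive hypothesis applies after an appropriate reshaping) whose edges shared with $\RR'$ are inner edges of $\RR$. Summing over the values of $\widehat{g}$ on those shared edges, one applies Lemma~\ref{Lemm:GatheringPlaquettes} (with $m$ equal to the number of shared edges, and $u,v$ encoding the part of the boundary word coming from the non-shared edges of $p_0$ and from $\RR'$) to merge the factor for $p_0$ into the boundary contribution. This produces a factor $|G|^{m}$, reduces the product to one over the inner plaquettes of $\RR'$ with a boundary word $\widehat{f}'$ for $\RR'$, and leaves one extra factor of the form $(1+\gamma_\beta) + \gamma_\beta\,\widetilde{\chi}^{reg}(\cdots)$ that becomes precisely the $p_0$-contribution in the final gathering; applying the inductive hypothesis to $\RR'$ and tracking the exponents finishes the step. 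A careful bookkeeping of edges shows $m$ (summed over the whole process) together with the inner edges of all the $\RR'$ reconstitutes exactly $\abs{\EE_{\mathring{\RR}}}$, and the number of $(1+\gamma_\beta)$ versus $\gamma_\beta$ factors works out to give $(1+\gamma_\beta)^{n_\RR} + \gamma_\beta^{n_\RR}\,\widetilde{\chi}^{reg}(\widehat{f})$.

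The main obstacle I anticipate is purely combinatorial-geometric: making the induction clean requires choosing, at each step, a plaquette $p_0$ whose removal leaves a region to which the statement still applies, and verifying that the ``boundary word'' $\widetilde{\chi}^{reg}(\widehat{f})$ transforms correctly (up to the cyclic-invariance and inverse-on-lower-and-right-edges conventions) when $p_0$ is absorbed. One must check that the letters of $p_0$ not shared with $\RR'$ splice into the boundary word of $\RR$ in the right positions and with the right exponents $\sigma_j$, and that the shared letters are exactly the $g_1,\dots,g_m$ and $g_m^{-1},\dots,g_1^{-1}$ that Lemma~\ref{Lemm:GatheringPlaquettes} expects (which is where the counterclockwise orientation and the inverses on bottom/right edges are used). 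A safe way to avoid orientation headaches is to strip plaquettes off one full column (or row) at a time rather than one plaquette at a time, so that the shared edges form a single straight segment and the word-splicing is transparent; I would present the argument that way. Everything else is a direct application of Lemma~\ref{Lemm:GatheringPlaquettes} and elementary exponent arithmetic.
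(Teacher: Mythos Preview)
Your proposal is correct and follows essentially the same route as the paper: rewrite each plaquette factor as $(1+\gamma_\beta)+\gamma_\beta\,\widetilde{\chi}^{reg}(\cdot)$, handle the single-plaquette base case directly, and then inductively merge using Lemma~\ref{Lemm:GatheringPlaquettes}. The paper's induction splits $\RR$ into two arbitrary adjacent subrectangles $A$ and $C$ sharing a side $z$, applies the inductive hypothesis to \emph{both}, and invokes the lemma once over the shared edges; your column-stripping variant is the special case where $C$ is a single column (and, for a one-column rectangle, where $C$ is a single plaquette), which works just as well and sidesteps exactly the orientation bookkeeping you flagged.
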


\begin{proof}
If the region $\RR$ only consists of one plaquette, then the identity is trivial since there is only one summand $\widehat{g} = \widehat{f}$, so that
\[ c_{\beta}(\widehat{g}) \, = \, 1 + \gamma_{\beta} \chi^{reg}(\widehat{g}|_{p}) = \left( 1 + \gamma_{\beta}\right) + \gamma_{\beta} \,\, \widetilde{\chi}^{reg}(\widehat{g}|_{p})\,. \]

\noindent Let us denote to simplify notation
\[ a_{\beta} := 1 + \gamma_{\beta} \quad , \quad b_{\beta}:= \gamma_{\beta}\,. \]

\noindent For multiplaquette rectangular regions, the key is Lemma \ref{Lemm:GatheringPlaquettes}. Let us illustrate the case of a region $\RR$ consisting  of two (adjacent) plaquettes. 

\[
\begin{tikzpicture}[equation, scale=2, thick]
     \plaquettefive[2,1](1,1,1,0);
    \foreach \x/\y [count=\i] in {1.5/1.4, 0.5/1.4, -0.4/0.5} {
        \draw (\x,\y) node {$\psi_{g_{\i}}$}; 
    }
    \foreach \x/\y [count=\i from 4] in {0.5/-0.4, 1.5/-0.4, 2.4/0.5} {
        \draw (\x,\y) node {$\psi_{g_{\i}^{-1}}$}; 
    }
    \draw (0.7,0.5) node {$\psi_h$};
    \draw (1.4, 0.5) node {$\psi_{h^{-1}}$};
\end{tikzpicture}
\]

\noindent Then, as a consequence of the aforementioned lemma
\[ 
\begin{split}
\sum_{h \in G} & \left(1 + \gamma_{\beta} \chi^{reg} (g_{2}g_{3}g_{4}^{-1}h^{-1})\right)  \left(1 + \gamma_{\beta} \chi^{reg}(hg_{5}^{-1}g_{6}^{-1}g_{1}) \right) = \\[2mm]
& \quad \quad  = \, \sum_{h \in G}  \left(a_{\beta} + b_{\beta} \, \widetilde{\chi}^{reg} (g_{2}g_{3}g_{4}^{-1}h^{-1})\right)  \left(a_{\beta} + b_{\beta} \, \widetilde{\chi}^{reg}(hg_{5}^{-1}g_{6}^{-1}g_{1}) \right)  \\[2mm]
& \quad \quad= \, |G| \, \Big( a_{\beta}^{2}  +  \, b_{\beta}^{2} \, \widetilde{\chi}^{reg} \, (g_{2}g_{3}g_{4}^{-1}g_{5}^{-1}g_{6}^{-1}g_{1}) \Big)\,.
\end{split}
\]
The procedure is now clear and can be formalized by induction. Let $\mathcal{R}$ be a rectangular region. We can obviously split $\mathcal{R}$ into two adjacent rectangles sharing one side $\mathcal{R}=AC$ as below. The induction hypothesis yields that the identity is true for $A$ and $C$. Let us set some notation for the boundary edges of $A$ and $C$:
\[ \alpha:=\EE_{\partial A} \setminus \EE_{\partial C} \quad, \quad \gamma:= \EE_{\partial C} \setminus \EE_{\partial A} \quad , \quad z := \EE_{\partial A} \cap \EE_{\partial C}\,. \]
\[
    \begin{tikzpicture}[equation, scale=0.5, baseline={([yshift=-.5ex]current bounding box.center)}]
        \draw[gray, thick] (0,0) grid (6,4);
        \draw (3,-1) node {$\RR$};
    \end{tikzpicture}
    \quad \longrightarrow \quad 
    \begin{tikzpicture}[scale=0.5, baseline={([yshift=-.5ex]current bounding box.center)}]
    \draw[blue!80!black, thick] (0,0) grid (4,4);
    \draw[blue!80!black] (2,-1) node {$A$};
    \draw[red!80!black, thick] (6,0) grid (8,4);
    \draw[red!80!black] (7,-1) node {$C$};
    \draw[|-|] (4,-0.5) -- (-0.5, -0.5) --
        node[left, blue!80!black] {$\alpha$} (-0.5, 4.5) -- (4,4.5);
    \draw[|-|] (6, -0.5) -- (8.5, -0.5) --
        node[right, red!80!black] {$\gamma$} (8.5, 4.5) -- (6, 4.5);
    \draw[|-|] (5,0) -- 
        node[left] {$z$} (5,4);
    \end{tikzpicture}
\]

\noindent Fix a boundary function $\widehat{f} : \EE_{\partial \RR} \rightarrow G$ (note that $\EE_{\partial \RR} = \alpha \gamma$) let us consider the sum over all choices $\widehat{g}: \EE_{\RR} \longrightarrow G$ such that $\widehat{g}$ coincides with $\widehat{f}$ on the boundary edges of $\RR$
\begin{align}
  \sum_{\substack{\widehat{g}: \EE_{\RR} \rightarrow G \\ \widehat{g}|_{\partial \RR} = \widehat{f}}} c_{\beta}(\widehat{g}) \, & = \, \sum_{\substack{\widehat{g}: \EE_{\RR} \rightarrow G \\ \widehat{g}|_{\alpha \gamma} = \widehat{f}}} \, \prod_{\substack{p \subset \RR\\ \text{plaquette}}} \left( a_{\beta} + b_{\beta} \widetilde{\chi}^{reg}(\widehat{g}|_{p}) \right) = \notag \\[2mm] 
& = \sum_{\substack{\widehat{g}: \EE_{\RR} \rightarrow G \\ \widehat{g}|_{\alpha \gamma} = \widehat{f}}} \, \prod_{\substack{p \subset A\\ \text{plaquette}}} \left( a_{\beta} + b_{\beta} \widetilde{\chi}^{reg}(\widehat{g}|_{p}) \right) \, \prod_{\substack{p \subset C\\ \text{plaquette}}} \left( a_{\beta} + b_{\beta} \widetilde{\chi}^{reg}(\widehat{g}|_{p}) \right)\,.
\label{equa:PropChiBoundaryAux1} 
\end{align}
Next we want to apply the induction hypothesis on each subregion $A$ anc $C$. For that, we have to make the right expression appear. We can split the sum
\begin{equation}
 \sum_{\substack{\widehat{g}: \EE_{\RR} \rightarrow G \\[0.1mm] \widehat{g}|_{\alpha \gamma} = \widehat{f}}} \, = \, 
\sum_{\widehat{h}: z \rightarrow G }
\,\,
 \sum_{\substack{\widehat{g}: \EE_{\RR} \rightarrow G \\[0.1mm] \widehat{g}|_{\alpha \gamma} = \widehat{f} \\[0.1mm] \widehat{g}|_{z} = \widehat{h}}} \,\, = \,\, 
 \sum_{\widehat{h}: z \rightarrow G } 
 \,\,\, 
 \sum_{\substack{\widehat{g}: \EE_{A} \rightarrow G \\[0.1mm] \widehat{g}|_{\alpha} = \widehat{f}|_{\alpha} \\[0.1mm] \widehat{g}|_{z} = \widehat{h}}}  
\,\,\,
\sum_{\substack{\widehat{g}: \EE_{C} \rightarrow G \\[0.1mm] \widehat{g}|_{ \gamma} = \widehat{f}|_{\gamma} \\[0.1mm] \widehat{g}|_{z} = \widehat{h}}}  \,\,.
   \end{equation}

\noindent Hence, we can rewrite \eqref{equa:PropChiBoundaryAux1} as

\begin{align*}
 \sum_{\substack{\widehat{g}: \EE_{\RR} \rightarrow G \\[0.1mm] \widehat{g}|_{\alpha \gamma} = \widehat{f}}} \, &  \prod_{\substack{p \subset \RR\\ \text{plaquette}}} \left( a_{\beta} + b_{\beta} \widetilde{\chi}^{reg}(\widehat{g}|_{p}) \right) = \\[2mm] 
& = \sum_{\widehat{h}: z \rightarrow G } \Big( \sum_{\substack{\widehat{g}:\EE_{A} \rightarrow G \\[0.1mm] \widehat{g}|_{\alpha} = \widehat{f}|_{\alpha} \\[0.1mm] \widehat{g}|_{z}= \widehat{h}} } \,\,\prod_{\substack{p \subset A\\ \text{plaquette}}} \left( a_{\beta} + b_{\beta} \widetilde{\chi}^{reg}(\widehat{g}|_{p}) \right) \Big) 
\\
& \hspace{5cm} \cdot  \Big( \sum_{\substack{\widehat{g}:\EE_{C} \rightarrow G \\[0.1mm] \widehat{g}|_{\gamma} = \widehat{f}|_{\gamma} \\[0.1mm] \widehat{g}|_{z}= \widehat{h}} } \,\,\prod_{\substack{p \subset C\\ \text{plaquette}}} \left( a_{\beta} + b_{\beta} \widetilde{\chi}^{reg}(\widehat{g}|_{p}) \right) \Big)\\[2mm]
\end{align*}

\begin{center}
    \begin{tikzpicture}[scale=0.5]
    
    \plaquettefive[4,4](1,0,0,0)
   
    \foreach \y in {0.5,1.5,2.5,3.5}
    \centerarc[postaction={decorate},magenta!90!black,very thick](4.5,\y)(135:225:0.7*0.5);  
   
    \draw (2,-0.7) node {$A$};
    \draw (-1, 4) node {$\psi_{\widehat{f}}$};
    \draw (5.5,3) node[magenta!90!black] {$\psi_{\widehat{h}}$};
    \end{tikzpicture}
    \begin{tikzpicture}[scale=0.5]
    \plaquettefive[2,4](1,0,0,0);
    \foreach \y in {0.5,1.5,2.5,3.5}
    \centerarc[postaction={decorate},magenta!90!black,very thick](-0.5,\y)(-45:45:0.7*0.5);

    \draw (1,-0.7) node {$C$};
    \draw (3.3, 4) node {$\psi_{\widehat{f}}$};
    \end{tikzpicture}
\end{center}

\noindent Applying the induction hypothesis we have that the above expression can be rewritten as

\begin{align*}
 & \sum_{\widehat{h}: z \rightarrow G } |G|^{|\EE_{\mathring{A}}|} \left( a_{\beta}^{n_{A}} + b_{\beta}^{n_{A}} \, \widetilde{\chi}^{reg}( \widehat{g}|_{\alpha} \widehat{h} \, ) \right) |G|^{|\EE_{\mathring{C}}|}  \left( a_{\beta}^{n_{C}} + b_{\beta}^{n_{C}} \, \widetilde{\chi}^{reg}(\, \widehat{h} \, \widehat{g}|_{\gamma} ) \right) \\[2mm]
 & \quad \quad \quad= |G|^{|\EE_{\mathring{A}}| + |\EE_{\mathring{C}}|} 
 \,\, 
 \sum_{\widehat{h}: z \rightarrow G } 
 \left( a_{\beta}^{n_{A}} + b_{\beta}^{n_{A}} \,\, \widetilde{\chi}^{reg}(\widehat{g}|_{\alpha} \widehat{h} \, ) \right) 
 \,\,
 \left( a_{\beta}^{n_{C}} + b_{\beta}^{n_{C}} \,\, \widetilde{\chi}^{reg}( \, \widehat{h} \, \widehat{g}|_{\gamma} ) \right)\,,
 \end{align*}
 
\noindent where $n_{A}$ and $n_{C}$ are the number of inner plaquettes of $A$ and $C$ respectively; $\widehat{g}|_{\alpha} \widehat{h}$ is the map $\EE_{\partial A} \rightarrow G$ that coincides with $\widehat{g}$ on $\alpha$ and with $\widehat{h}$ on $z$; and  $\widehat{h} \, \widehat{g}|_{\gamma}$ is the map $\EE_{\partial C} \rightarrow G$ that coincides with $\widehat{g}$ on $\gamma$ and with $\widehat{h}$ on $z$. Finally we use Lemma \ref{Lemm:GatheringPlaquettes} to rewrite the above expression as
 
\begin{equation*}
 |G|^{|\EE_{\mathring{A}}| + |\EE_{\mathring{C}}| + |z|} \, \left( a_{\beta}^{n_{A} + n_{C}} + b_{\beta}^{n_{A} + n_{C}} \,\, \widetilde{\chi}^{reg}(\widehat{g}|_{\alpha} \widehat{g}|_{\gamma}) \right) \,.
\end{equation*}

\noindent Finally observe that the set $\EE_{\mathring{\RR}}$ of inner edges of $\RR$, is actually formed by the disjoint union of $\EE_{\mathring{A}}$, $\EE_{\mathring{C}}$ and $z$, whereas the number $n_{\RR}$ of inner plaquettes of $\RR$ is indeed the sum of $n_{A}$ and $n_{C}$, so that 

\[ 
 \sum_{\substack{\widehat{g}: \EE_{\RR} \rightarrow G \\ \widehat{g}|_{a \gamma} = \widehat{f}}} \,  \prod_{\substack{p \subset \RR\\ \text{plaquette}}} \left( a_{\beta} + b_{\beta} \,\, \widetilde{\chi}^{reg}(\widehat{g}|_{p}) \right) \,\, = \,\, |G|^{|\EE_{\mathring{\RR}}|} \, \left( a_{\beta}^{n_{\RR}} + b_{\beta}^{n_{\RR}} \,\, \widetilde{\chi}^{reg}(\widehat{f}) \right)\,. 
\]

\noindent This finishes the proof of the Proposition.

\end{proof}

\subsubsection*{Proof of the \emph{Leading Term of the boundary} Theorem \ref{Theo:leadingTerm}}

We are now ready to prove our main result about the boundary state of the thermofield double.
\noindent Our starting point is \eqref{equa:BoundaryProperRectangle1}. We can replace the sum over $\widehat{g}:\EE_{\RR} \rightarrow G$ with a double sum, the first one over $\widehat{f}: \EE_{\partial \RR} \rightarrow G$ (fixing first the values at the boundary), and the second one over $\widehat{g}:\EE_{\RR} \rightarrow G$ with $\widehat{g}|_{\EE_{\partial \RR}}=\widehat{f}$, so that
\begin{equation*}\label{equa:BoundaryProperRectangle2}
\widetilde{\rho}_{\partial \RR} \,\, = \,\, \sum_{a \in G} \,\,  \left( \delta_{a,1} + \gamma_{\beta} \right)^{|\VV_{\mathring{\RR}}|} \sum_{ \widehat{f}: \EE_{\partial \RR} \rightarrow G } \,\, c_{\beta}( \, \widehat{f} \, ) \quad \,\,
\begin{tikzpicture}[equation, scale=0.4]

\plaquettefive[3,2](1,0,0,0)

\draw (3.5,2.5) node {\scriptsize $\widetilde{\psi}_{\, \widehat{f}}$}; 
\end{tikzpicture}
\otimes \quad
\begin{tikzpicture}[equation, scale=0.4]

\plaquettefive[3,2](2,0,0,0)

\draw (3.5,2.5) node {\scriptsize $\widetilde{\phi}_{\, \widehat{a}}$};
\draw (3.5,-0.2) node {\scriptsize $a$}; 

\end{tikzpicture}    
\end{equation*}
where, from Proposition~\ref{Prop:arranging_plaquettes}, it follows that
\begin{equation}\label{eq:cf-definition}
c_{\beta}(\widehat{f}) \, := \, \displaystyle \sum_{ \substack{\widehat{g}: \EE_{\RR} \rightarrow G\\[1mm] \widehat{g}|_{\EE_{\partial \RR}}=\widehat{f} } } \,\, c_{\beta}( \, \widehat{g} \, ) \, \, = \, \, \abs{G}^{\abs{\EE_{\mathring{\RR}}}} \,\big[ (1+ \gamma_{\beta})^{n_{\RR}} + \gamma_{\beta}^{n_{\RR}} \, \widetilde{\chi}^{reg}(\widehat{f} \,) \big]\,. 
\end{equation}
Let us now split the sum over $a \in G$ into a first summand with $a = 1$ (which forces $\widehat{a}(v) = 1$ for every $v \in \VV_{\RR}$, as we argued in previous sections) and $a \neq 1$:
\begin{align*} 
\widetilde{\rho}_{\partial \RR} \,\, & = \,\, \left( 1 + \gamma_{\beta} \right)^{\abs{\VV_{\mathring{\RR}}}} \sum_{ \widehat{f}: \EE_{\partial \RR} \rightarrow G } \,\, c_{\beta}( \, \widehat{f} \, ) \quad 
\begin{tikzpicture}[equation, scale=0.4]
\plaquettefive[3,2](1,0,0,0)
\draw (3.5,2.5) node {\scriptsize $\widetilde{\psi}_{\, \widehat{f}}$}; 
\end{tikzpicture}
\otimes \quad
\begin{tikzpicture}[equation, scale=0.4]
\plaquettefive[3,2](2,0,0,0)
\draw (3.5,2.5) node {\scriptsize $\widetilde{\phi}_{\, \widehat{1}}$};
\end{tikzpicture}  +  \\
& \quad \quad  + \left( \gamma_{\beta} \right)^{\abs{\VV_{\mathring{\RR}}}} \,\,   \sum_{ \widehat{f}: \EE_{\partial \RR} \rightarrow G } \,\, c_{\beta}( \, \widehat{f} \, ) \quad 
\begin{tikzpicture}[equation, scale=0.4]
\plaquettefive[3,2](1,0,0,0)
\draw (3.5,2.5) node {\scriptsize $\widetilde{\psi}_{\, \widehat{f}}$}; 
\end{tikzpicture}
\otimes \quad \sum_{a \in G}  \,\,
\begin{tikzpicture}[equation, scale=0.4]
\plaquettefive[3,2](2,0,0,0)
\draw (3.5,2.5) node {\scriptsize $\widetilde{\phi}_{\, \widehat{a}}$};
\draw (3.5,-0.2) node {\scriptsize $a$}; 
\end{tikzpicture}\\  
&  \quad \quad -\left(\gamma_{\beta} \right)^{\abs{\VV_{\mathring{\RR}}}} \sum_{ \widehat{f}: \EE_{\partial \RR} \rightarrow G } \,\, c_{\beta}( \, \widehat{f} \, ) \quad 
\begin{tikzpicture}[equation, scale=0.4]
\plaquettefive[3,2](1,0,0,0)
\draw (3.5,2.5) node {\scriptsize $\widetilde{\psi}_{\, \widehat{f}}$}; 
\end{tikzpicture}
\otimes \quad
\begin{tikzpicture}[equation, scale=0.4]
\plaquettefive[3,2](2,0,0,0)
\draw (3.5,2.5) node {\scriptsize $\widetilde{\phi}_{\, \widehat{1}}$};
\end{tikzpicture}   
\end{align*}
In the first summand, we can moreover decompose
\begin{align*}
 \sum_{ \widehat{f}: \EE_{\partial \RR} \rightarrow G } \,\, c_{\beta}( \, \widehat{f} \, ) \quad 
\begin{tikzpicture}[equation, scale=0.4]
\plaquettefive[3,2](1,0,0,0)
\draw (3.5,2.5) node {\scriptsize $\widetilde{\psi}_{\, \widehat{f}}$}; 
\end{tikzpicture} \,\, & = \,\,  |G|^{\abs{\EE_{\mathring\RR}}} \,\, (1+\gamma_{\beta})^{n_{\RR}} \sum_{\widehat{f}:\EE_{\partial \RR} \longrightarrow G} \quad 
\begin{tikzpicture}[equation, scale=0.4]
\plaquettefive[3,2](1,0,0,0)
\draw (3.5,2.5) node {\scriptsize $\widetilde{\psi}_{\, \widehat{f}}$}; 
\end{tikzpicture}\\
& 
\, + \,
 |G|^{\abs{\EE_{\mathring\RR}}} \, \gamma_{\beta}^{n_{\RR}} \, \sum_{\widehat{f}:\EE_{\partial \RR} \longrightarrow G}  \widetilde{\chi}^{reg}(\widehat{f}) \quad 
\begin{tikzpicture}[equation, scale=0.4]
\plaquettefive[3,2](1,0,0,0)
\draw (3.5,2.5) node {\scriptsize $\widetilde{\psi}_{\, \widehat{f}}$}; 
\end{tikzpicture}
\end{align*}
Thus, combining both expressions
\begin{align*} 
\widetilde{\rho}_{\partial \RR} \,\, & = \,\, \left( 1 + \gamma_{\beta} \right)^{\abs{\VV_{\mathring{\RR}}}} \left( 1 + \gamma_{\beta} \right)^{n_{\RR}} |G|^{\abs{\EE_{\mathring \RR}}} \sum_{ \widehat{f}: \EE_{\partial \RR} \rightarrow G } \,\, 
\begin{tikzpicture}[equation, scale=0.4]
\plaquettefive[3,2](1,0,0,0)
\draw (3.5,2.5) node {\scriptsize $\widetilde{\psi}_{\, \widehat{f}}$}; 
\end{tikzpicture}
\otimes \quad
\begin{tikzpicture}[equation, scale=0.4]
\plaquettefive[3,2](2,0,0,0)
\draw (3.5,2.5) node {\scriptsize $\widetilde{\phi}_{\, \widehat{1}}$};
\end{tikzpicture}   \\
& \quad + \,\, \left( 1 + \gamma_{\beta} \right)^{\abs{ \VV_{\mathring{\RR}}}} \left(\gamma_{\beta} \right)^{n_{\RR}} |G|^{\abs{\EE_{\mathring \RR}}} \sum_{ \widehat{f}: \EE_{\partial \RR} \rightarrow G } \,\, \widetilde{\chi}^{reg}(\widehat{f}) \quad 
\begin{tikzpicture}[equation, scale=0.4]
\plaquettefive[3,2](1,0,0,0)
\draw (3.5,2.5) node {\scriptsize $\widetilde{\psi}_{\, \widehat{f}}$}; 
\end{tikzpicture}
\otimes \quad
\begin{tikzpicture}[equation, scale=0.4]
\plaquettefive[3,2](2,0,0,0)
\draw (3.5,2.5) node {\scriptsize $\widetilde{\phi}_{\, \widehat{1}}$};
\end{tikzpicture}  \\
& \quad \quad  + \left( \gamma_{\beta} \right)^{\abs{\VV_{\mathring{\RR}}}} \,\,   \sum_{ \widehat{f}: \EE_{\partial \RR} \rightarrow G } \,\, c_{\beta}( \, \widehat{f} \, ) \quad 
\begin{tikzpicture}[equation, scale=0.4]
\plaquettefive[3,2](1,0,0,0)
\draw (3.5,2.5) node {\scriptsize $\widetilde{\psi}_{\, \widehat{f}}$}; 
\end{tikzpicture}
\otimes \quad \sum_{a \in G}  \,\,
\begin{tikzpicture}[equation, scale=0.4]
\plaquettefive[3,2](2,0,0,0)
\draw (3.5,2.5) node {\scriptsize $\widetilde{\phi}_{\, \widehat{a}}$};
\draw (3.5,-0.2) node {\scriptsize $a$}; 
\end{tikzpicture}\\  
&  \quad \quad -\left(\gamma_{\beta} \right)^{\abs{\VV_{\mathring{\RR}}}} \sum_{ \widehat{f}: \EE_{\partial \RR} \rightarrow G } \,\, c_{\beta}( \, \widehat{f} \, ) \quad 
\begin{tikzpicture}[equation, scale=0.4]
\plaquettefive[3,2](1,0,0,0)
\draw (3.5,2.5) node {\scriptsize $\widetilde{\psi}_{\, \widehat{f}}$}; 
\end{tikzpicture}
\otimes \quad
\begin{tikzpicture}[equation, scale=0.4]
\plaquettefive[3,2](2,0,0,0)
\draw (3.5,2.5) node {\scriptsize $\widetilde{\phi}_{\, \widehat{1}}$};
\end{tikzpicture}   
\end{align*}

\noindent Dividing the above expression by 
\[ \kappa_{\RR} = \abs{G}^{\abs{\EE_{\RR}}} \, (1+ \gamma_{\beta})^{\abs{\VV_{\mathring \RR}}+n_{\RR}} \, = \, \abs{G}^{\abs{\EE_{\mathring \RR}} + \abs{\EE_{\partial \RR}}} \, (1+ \gamma_{\beta})^{\abs{\VV_{\mathring \RR}}+n_{\RR}} \,, \]
we obtain
\begin{equation}\label{equa:BoundaryFourTermsDecompositionAux} 
\frac{1}{\kappa_{\RR}} \widetilde{\rho}_{\partial \RR}  =  \mathcal{S}_{1} + \left(\frac{\gamma_{\beta}}{1+\gamma_{\beta}}\right)^{n_{\RR}} \, \mathcal{S}_{2} + \, \left(\frac{\gamma_{\beta}}{1+\gamma_{\beta}}\right)^{|\VV_{\mathring\RR}|}  \, \mathcal{S}_{3} - \, \left(\frac{\gamma_{\beta}}{1+\gamma_{\beta}}\right)^{|\VV_{\mathring\RR}|}  \, \mathcal{S}_{4} 
\end{equation}
where
\begin{align*}
    \SSS_{1} & \,\,  \,\, = \,\, \sum_{\widehat{f}: \EE_{\partial \RR} \longrightarrow G} \,\,\,\frac{1}{\abs{G}^{\abs{\EE_{\partial \RR}}}} \,\,\,\,
    \begin{tikzpicture}[equation,scale=0.4]
    \plaquettefive[3,2](1,0,0,0);
    \draw (4.2,2.6) node {\scriptsize $\widetilde{\psi}_{\widehat f}$};
    \end{tikzpicture}
    \otimes \quad
    \begin{tikzpicture}[equation,scale=0.4]
    \plaquettefive[3,2](2,0,0,0);
    \draw (4,2.3) node {\scriptsize $\widetilde \phi_{\widehat{1}}$};
    \end{tikzpicture} \,\,, \\
    \SSS_{2} & \,\, = \, \,\,
    \sum_{ \widehat{f}: \EE_{\partial \RR} \rightarrow G } \widetilde{\chi}^{reg}(\widehat f)\,\,\,\, \frac{1}{\abs{G}^{\abs{\EE_{\partial \RR}}}} \,\,\,\,
    \begin{tikzpicture}[equation,scale=0.4]
    \plaquettefive[3,2](1,0,0,0)
    \draw (4,2.6) node {\scriptsize $\widetilde{\psi}_{\,     \widehat{f}}$};
    \end{tikzpicture}
    \otimes \quad
    \begin{tikzpicture}[equation,scale=0.4]
    \plaquettefive[3,2](2,0,0,0);
    \draw (4.1,2.8) node {\scriptsize $\widetilde \phi_{\widehat{1}}$};
    \end{tikzpicture} \,\,, \\
    \SSS_{3} & \,\, = 
    \,\,
    \sum_{ \widehat{f}: \EE_{\partial \RR} \rightarrow G } \,\, \frac{c_{\beta}( \, \widehat{f} \, )}{(1+\gamma_{\beta})^{n_{\mathcal{R}}}\, \abs{G}^{\abs{\EE_{\mathring \RR}}}}\,\, \frac{1}{\abs{G}^{\abs{\EE_{\partial \RR}}}}  \,\,\,\,
    \begin{tikzpicture}[equation,scale=0.4]
    \plaquettefive[3,2](1,0,0,0)
    \draw (4,2.7) node {\scriptsize $\widetilde{\psi}_{\,     \widehat{f}}$};
    \end{tikzpicture}
    \otimes  \sum_{a \in G} \,\, \begin{tikzpicture}[equation,scale=0.4]
    \plaquettefive[3,2](2,0,0,0);
    \draw (4.1,2.8) node {\scriptsize $\widetilde \phi_{\widehat{a}}$};
    \draw (3.6,0.2) node {\scriptsize $a$}; 
    \end{tikzpicture} \,\,, \\
    \SSS_{4} & \,\, = \,\,\,\,
    \sum_{ \widehat{f}: \EE_{\partial \RR} \rightarrow G } \,\, \frac{c_{\beta}( \, \widehat{f} \, )}{(1+\gamma_{\beta})^{n_{\mathcal{R}}}\, \abs{G}^{\abs{\EE_{\mathring \RR}}}}\,\, \frac{1}{\abs{G}^{\abs{\EE_{\partial \RR}}}}  \,\,\,\,
    \begin{tikzpicture}[equation,scale=0.4]
    \plaquettefive[3,2](1,0,0,0)
    \draw (4,2.7) node {\scriptsize $\widetilde{\psi}_{\,     \widehat{f}}$};
    \end{tikzpicture}
    \otimes \,\,\,\, \begin{tikzpicture}[equation,scale=0.4]
    \plaquettefive[3,2](2,0,0,0);
    \draw (4.1,2.8) node {\scriptsize $\widetilde \phi_{\widehat{1}}$}; 
    \end{tikzpicture} \,\,.\\
\end{align*}

\noindent Note that the first summand actually corresponds to
\[
\SSS_{1} \, =  \,\,\,
\begin{tikzpicture}[equation,scale=0.4]
\plaquettefive[3,2](1,0,0,0)
\draw (4.5,2.6) node {\scriptsize $\widetilde{\Delta}^{\otimes \EE_{\partial \RR}}$};
\end{tikzpicture}
 \otimes \,\, \begin{tikzpicture}[equation,scale=0.4]
    \plaquettefive[3,2](2,0,0,0);
    \draw (4.1,2.8) node {\scriptsize $\widetilde \phi_{\widehat{1}}$}; 
    \end{tikzpicture} 
\, = \, \widetilde{\mathcal{S}}_{\partial \RR} \,.
\]
To estimate the norm of $\mathcal{S}_{2}$ and $\mathcal{S}_{4}$ we are going to use that the operators
\[ 
\frac{1}{\abs{G}^{\abs{\EE_{\partial \RR}}}}  \,\,\,\,
    \begin{tikzpicture}[equation,scale=0.4]
    \plaquettefive[3,2](1,0,0,0)
    \draw (4,2.7) node {\scriptsize $\widetilde{\psi}_{\,     \widehat{f}}$};
    \end{tikzpicture}
    \otimes \,\,\,\, \begin{tikzpicture}[equation,scale=0.4]
    \plaquettefive[3,2](2,0,0,0);
    \draw (4.1,2.8) node {\scriptsize $\widetilde \phi_{\widehat{1}}$}; 
    \end{tikzpicture} 
    \quad , \quad \widehat{f}:\EE_{\partial \RR} \longrightarrow G\,.
\]
are (orthogonal) projections and mutually orthogonal, see Proposition \ref{Prop:BoundaryProjector1}. Then, we can estimate
\[ \| \SSS_{2} \| \,\, \leq \,\, \sup_{\widehat{f}: \EE_{\partial \RR} \longrightarrow G} \,\,\, |\widetilde{\chi}^{\, reg}(\widehat{f} \,)| \leq |G|\,, \]
and by recalling that Proposition~\ref{Prop:arranging_plaquettes} implies the formula for $c_\beta(\widehat{f})$ given in \eqref{eq:cf-definition}, we obtain
\begin{align*}
\| \SSS_{4}\|  \,\, \leq \,\, \sup_{\widehat{f}: \EE_{\RR} \longrightarrow G}  \,\,\, \frac{c_{\beta}( \, \widehat{f} \, )}{(1+\gamma_{\beta})^{n_{\RR}}\, \abs{G}^{\abs{\EE_{\mathring \RR}}}} \leq 1 + \left( \frac{\gamma_{\beta}}{1+\gamma_{\beta}}\right)^{n_{\RR}} \, \abs{G}\,. 
\end{align*}
To estimate the norm of $\mathcal{S}_{3}$ we are going to use that the operators
\[ 
\frac{1}{\abs{G}^{\abs{\EE_{\partial \RR}}}}  \,\,\,\,
    \begin{tikzpicture}[equation,scale=0.4]
    \plaquettefive[3,2](1,0,0,0)
    \draw (4,2.7) node {\scriptsize $\widetilde{\psi}_{\,     \widehat{g}}$};
    \end{tikzpicture}
    \otimes \,\, \frac{1}{|G|} \, \sum_{a \in G} \,\, \begin{tikzpicture}[equation,scale=0.4]
    \plaquettefive[3,2](2,0,0,0);
    \draw (4.1,2.8) node {\scriptsize $\widetilde \phi_{\widehat{a}}$};
    \draw (3.6,0.2) node {\scriptsize $a$}; 
    \end{tikzpicture}
    \quad , \quad \widehat{f}:\EE_{\partial \RR} \longrightarrow G
\]
are (orthogonal) projections and mutually orthogonal, see Proposition \ref{Prop:BoundaryProjector2}. Then again by using \eqref{eq:cf-definition},
\[ \| \SSS_{3}\| = \sup_{\widehat{f}: \EE_{\partial \RR} \longrightarrow G} \,\, \frac{|G| \, c_{\beta}(\widehat{f})}{(1+\gamma_{\beta})^{n_{\RR}} \abs{G}^{\abs{\EE_{\mathring \RR}}}} \, \leq \, |G|\, \left[1 + \left( \frac{\gamma_{\beta}}{1+ \gamma_{\beta}}\right)^{n_{\beta}}\abs{G}\right] \, . \]

\noindent Combining all these estimates, we can reformulate \eqref{equa:BoundaryFourTermsDecompositionAux} as
\[ \widetilde{\rho}_{\partial \RR} \,\, = \,\, \kappa_{\RR} \,\, \big( \, \widetilde{\SSS}_{\partial \RR} \, + \, \widetilde{\SSS}_{\partial \RR}^{rest} \, \big)\,,  \]
where the observable $\widetilde{\SSS}_{\partial \RR}^{rest}$ satisfies
\begin{align*} 
\| \widetilde{\mathcal S} _{\partial \RR}^{rest} \| \,\, & \leq \,\, \Big(\frac{\gamma_{\beta}}{1 + \gamma_{\beta}}\Big)^{n_{\RR}} \, |G| \, +
 \Big(\frac{\gamma_{\beta}}{1 +\gamma_{\beta}}\Big)^{|\VV_{\mathring{R}}|} \, \, (|G|+1) \,\, \Big[ \, 1 +   \Big(\frac{\gamma_{\beta}}{1 + \gamma_{\beta}}\Big)^{n_{\RR}} \, |G|\, \Big]\, \\[2mm]
 & \leq  \Big(\frac{\gamma_{\beta}}{1 +\gamma_{\beta}}\Big)^{|\VV_{\mathring{R}}|} \, \left(|G| + \left( 1 + |G|  \right)^{2} \right) \\[2mm]
 & \leq \, 3  \, |G|^{2} \,  \Big(\frac{\gamma_{\beta}}{1 +\gamma_{\beta}}\Big)^{|\VV_{\mathring{R}}|} =: \epsilon_{\RR}\,,
 \end{align*}
having used in the second inequality that $ |\mathcal{V}_{\mathring \RR}| \leq n_{\RR}$.

 \subsubsection*{Proof of the \emph{Approximate Factorization of the boundary} Theorem \ref{Coro:approx-factorization-bound}}
 
Let us assume that $\epsilon_{\RR} <1$. By Proposition \ref{Prop:BoundaryProjector3}, we have that  $\widetilde{S}_{\partial \RR}$ is a projection satisfying  $\widetilde{\rho}_{\partial \RR} = \widetilde{\rho}_{\partial \RR}\widetilde{S}_{\partial \RR}= \widetilde{S}_{\partial \RR} \widetilde{\rho}_{\partial \RR}$. Moreover, by Theorem \ref{Theo:leadingTerm}
\begin{equation}\label{equa:approximationAux1}
 \left\| \kappa_{\RR}^{-1} \, \widetilde{\rho}_{\partial \RR} - \widetilde{S}_{\partial \RR} \right\| \, = \, \| \widetilde{S}_{\partial \RR}^{rest} \| \leq \epsilon_{\RR} < 1 \,.  
 \end{equation}
These three properties yield that $\widetilde{S}_{\partial \RR} = \widetilde{J}_{\partial \RR}$, as a consequence of the next general observation:\\

\emph{Let $T$ be a self-adjoint operator on $\mathcal{H} \equiv \mathbb{C}^{d}$, and $\Pi$ an orthogonal projection onto a  subspace $W \subset \mathcal{H}$ such that $T \, \Pi = \Pi \, T = T$ and $\| T- \Pi\| < 1$. Then, $\Pi$ is the orthogonal projection onto $(\ker{T})^{\perp}$}. Indeed, since $T = \Pi T \Pi$ we have that $(\ker{T})^{\perp}$ is contained in $W$. If they were different subspaces, then there would be a state $\ket{u} \in W$ with $T\ket{u} = 0$. But then, $1 = \braket{u}{u} = \braket{u}{(\Pi-T)u} \leq \| T - \Pi\| < 1$.\\

\noindent Next, observe that by Proposition \ref{Prop:BoundaryProjector3}
\[ [\mathcal{G}_{\partial \RR}, \widetilde{J}_{\partial \RR}] \, = \, [\mathcal{G}_{\partial \RR}, \widetilde{S}_{\partial \RR} ] \, = \, 0\,.  \]
This yields that $\rho_{\partial \RR}$ and $\widetilde{\rho}_{\partial \RR}$ have the same support, that is $J_{\partial \RR} = \widetilde{J}_{\partial \RR}$, since they are related via $\rho_{\partial \RR} = \mathcal{G}_{\partial \RR} \, \widetilde{\rho}_{\partial \RR} \, \mathcal{G}_{\partial \RR}$ where $\mathcal{G}_{\partial \RR}$ is invertible. Moreover, the operators $\sigma_{\partial \RR} = \kappa_{\RR} \, \mathcal{G}_{\partial \RR} \, \widetilde{J}_{\RR} \, \mathcal{G}_{\partial \RR}$ and $\widetilde{\sigma}_{\partial \RR} = \kappa_{\RR}\, \widetilde{J}_{\RR}$ will also have the same support as $\rho_{\partial \RR}$. We can thus argue as in the proof of Proposition \ref{prop:gauge-invariance} to get  
\begin{align*}
\| \rho_{\partial \RR}^{1/2} \, \sigma_{\partial \RR}^{-1} \, \rho^{1/2}_{\partial \RR} - J_{\partial \RR} \| & = \| \widetilde{\rho}_{\partial \RR}^{1/2} \, (\kappa_{\RR} \widetilde{J}_{\partial \RR})^{-1} \, \widetilde{\rho}^{1/2}_{\partial \RR} - \widetilde{J}_{\partial \RR} \|\\ 
& = \| \kappa_{\RR}^{-1} \, \widetilde{\rho}_{\partial \RR}^{1/2} \widetilde{J}_{\partial \RR} \widetilde{\rho}_{\partial \RR}^{1/2} - \widetilde{J}_{\partial \RR} \| \\
& = \| \kappa_{\RR}^{-1} \, \widetilde{\rho}_{\partial \RR} - \widetilde{J}_{\partial \RR} \|\, \leq \, \epsilon_{\RR}
\end{align*}
where in the last line we have used again  \eqref{equa:approximationAux1}. Finally, since $\sigma_{\partial \RR}$ and $\rho_{\partial \RR}$ have the same support, we can use the identity
\[ \rho_{\partial \RR}^{-1/2} \, \sigma_{\partial \RR} \, \rho^{-1/2}_{\partial \RR} = \big( \rho_{\partial \RR}^{1/2} \, \sigma_{\partial \RR}^{-1} \, \rho^{1/2}_{\partial \RR}  \big)^{-1} = J_{\partial \RR} + \sum_{m=1}^{\infty} \,\big( J_{\partial \RR} -  \rho_{\partial \RR}^{1/2} \, \sigma_{\partial \RR}^{-1} \, \rho^{1/2}_{\partial \RR}\big)^{m} \]
to deduce that
\[ \| \rho_{\partial \RR}^{-1/2} \, \sigma_{\partial \RR} \, \rho^{-1/2}_{\partial \RR} - J_{\partial \RR}\| \, \leq \, \sum_{m=1}^{\infty} \, \epsilon_{\RR}^{m} = \frac{\epsilon_{\RR}}{1-\epsilon_{\RR}}\,.  \]

\subsection{Approximate factorization of the ground state projections}

In this section, we study for a given rectangular region $\RR$ split into three suitable regions $A, B, C$ where $B$ shields $A$ from $C$ and such that $AB, BC$ are again rectangular regions, how to estimate
\[ \| P_{ABC} - P_{AB} P_{BC}\| \]
in terms of the size of $B$. This problem is related to the approximate factorization property of the boundary states that we studied in the previous sections, (see Section \ref{sec:non-injective-peps} for details on this relation and the main results in the context of general PEPS). Our main tools will be Theorems \ref{theo:non-injective-approx-fact} and \ref{Coro:approx-factorization-bound}.
 
\begin{Coro}\label{coro:approx-fact-rectangles}
Let us consider a rectangular region with open boundary conditions

\[
\begin{tikzpicture}[scale=0.5]
\plaquettefive[10,3](4,0,0,0);
\end{tikzpicture}
\]
split into three subregions $A,B,C$ as in the next picture
\[
\begin{tikzpicture}[equation, scale=0.5]


\begin{scope}
 \draw (1.5,-1) node {\scriptsize $A$};

\draw[step=1.0,black,thin] (0,0) grid (2.9,3);


\foreach \y in {0.5,1.5,2.5}{
    \centerarc[blue, thick](-0.5,\y)(-45:45:0.35);
}


\foreach \x in {0.5,1.5,2.5}{
    \centerarc[blue, thick](\x,3.5)(225:315:0.35);
    \centerarc[blue, thick](\x,-0.5)(45:135:0.35);
}

\foreach \y in {0.5,1.5,2.5}{
    \centerarc[blue, thick](2.5,\y)(45:315:0.35);
}

\foreach \y in {1,2}{
    \centerarc[red, thick](0,\y)(-135:135:0.2);
}

\centerarc[red, thick](0,0)(-45:135:0.2);

\centerarc[red, thick](0,3)(-135:45:0.2);

\foreach \x in {1,2}{
    \centerarc[red, thick](\x,0)(-45:225:0.2);
    \centerarc[red, thick](\x,3)(-225:45:0.2);
}

\foreach \y in {0,1,2,3}{
    \centerarc[red, thick](3,\y)(135:225:0.2);
}


\draw[thin] (0,0) -- (3,0);
\draw[thin] (0,3) -- (3,3);
\draw[thin] (0,1) -- (3,1);
\draw[thin] (0,2) -- (3,2);
\end{scope}


\begin{scope}[xshift=5cm]
\plaquettefive[4,3](4,0,0,0);
 \draw (2,-1) node {\scriptsize $B$};
\end{scope}


\begin{scope}[xshift=11cm]
 \draw (1.5,-1) node {\scriptsize $C$};

\draw[step=1.0,black,thin] (0.1,0) grid (3,3);


\foreach \y in {0.5,1.5,2.5}{
    \centerarc[blue, thick](0.5,\y)(-135:135:0.35);
}


\foreach \x in {0.5,1.5,2.5}{
    \centerarc[blue, thick](\x,3.5)(225:315:0.35);
    \centerarc[blue, thick](\x,-0.5)(45:135:0.35);
}

\foreach \y in {0.5,1.5,2.5}{
    \centerarc[blue, thick](3.5,\y)(135:225:0.35);
}

\foreach \y in {0,1,2,3}{
    \centerarc[red, thick](0,\y)(-45:45:0.2);
}

\centerarc[red, thick](3,0)(45:225:0.2);

\centerarc[red, thick](3,3)(135:315:0.2);

\foreach \x in {1,2}{
    \centerarc[red, thick](\x,0)(-45:225:0.2);
    \centerarc[red, thick](\x,3)(-225:45:0.2);
}

\foreach \y in {1,2}{
    \centerarc[red, thick](3,\y)(45:315:0.2);
}


\draw (0,0) -- (3,0);
\draw (0,3) -- (3,3);
\draw (0,1) -- (3,1);
\draw (0,2) -- (3,2);
\end{scope}

\end{tikzpicture}
\]

\noindent  so that $AB, BC, B$ are again rectangular regions. If $B$ has $M$ plaquettes per row and $N$ plaquettes per column with
\begin{equation}\label{coro:approx-fact-rectangles-hypothesis1} 
\epsilon_{B}:= 3 \, |G|^{2} \, \left( \frac{\gamma_{\beta}}{1+\gamma_{\beta}} \right)^{(M-1)(N-1)}  \, < \, \frac{1}{2}\,, 
\end{equation}
then the orthogonal projections $P_{\RR}$ onto $\operatorname{Im}(V_{\RR})$ satisfy
\[ \| P_{AB} P_{BC} - P_{ABC}\| \leq 16  \epsilon_{B} \,. \]
\end{Coro}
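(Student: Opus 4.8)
The plan is to combine Theorem~\ref{Coro:approx-factorization-bound} (approximate factorization of the boundary state) with Theorem~\ref{theo:non-injective-approx-fact} (which converts approximate factorization into an estimate on $\|P_{AB}P_{BC} - P_{ABC}\|$). The geometric setup is exactly the one in the statement: $\RR = ABC$ is a proper rectangle split along the rows into $A$, $B$, $C$ with $B$ shielding $A$ from $C$, and $AB$, $BC$, $B$ are again proper rectangles. First I would verify that the hypotheses of Theorem~\ref{theo:non-injective-approx-fact} are met for the PEPS of the thermofield double. The crucial input is the tensor-product structure produced by Theorem~\ref{Coro:approx-factorization-bound}(i)--(ii): there we learned that $J_{\partial \mathcal{R}}$ and $\widetilde{J}_{\partial \mathcal{R}} = \widetilde{\mathcal{S}}_{\partial \mathcal{R}}$ are genuine tensor products of the local projections $\widetilde{\Delta}$ and $\widetilde{\phi}_{\widehat{1}}$ over the boundary indices, and that the ``approximating'' operator $\sigma_{\partial \mathcal{R}} = \kappa_{\mathcal{R}}(\mathcal{G}_{\partial\mathcal{R}} J_{\partial\mathcal{R}} \mathcal{G}_{\partial\mathcal{R}})$ inherits this same product structure (since $\mathcal{G}_{\partial\mathcal{R}}$ is itself a tensor product of weight operators). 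Hence the boundary indices of $ABC$, $AB$, $BC$, $B$ can be partitioned into the four groups $a$, $c$, $\alpha$, $\gamma$ required by Theorem~\ref{theo:non-injective-approx-fact}, with $J_{a}, J_{c}, J_{\alpha}, J_{\gamma}$ and $\sigma_{a}, \sigma_{c}, \sigma_{\alpha}, \sigma_{\gamma}$ read off directly: $a$ (resp.\ $c$) collects the boundary indices of $A$ (resp.\ $C$) not shared with $B$, while $\alpha$ (resp.\ $\gamma$) collects the shared indices $\partial A \cap \partial B$ (resp.\ $\partial C \cap \partial B$), with the interior column between $B$ and $A$ (and $B$ and $C$) absorbed appropriately.

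Next I would quantify the error $\varepsilon$ appearing in Theorem~\ref{theo:non-injective-approx-fact}. For each of $\mathcal{R} \in \{ABC, AB, BC\}$ and for $B$ itself, Theorem~\ref{Coro:approx-factorization-bound}(ii) gives
\[
\|J_{\partial \mathcal{R}} - \rho_{\partial\mathcal{R}}^{1/2}\sigma_{\partial\mathcal{R}}^{-1}\rho_{\partial\mathcal{R}}^{1/2}\| \le \epsilon_{\mathcal{R}}, \qquad
\|J_{\partial B} - \rho_{\partial B}^{-1/2}\sigma_{\partial B}\rho_{\partial B}^{-1/2}\| \le \frac{\epsilon_{B}}{1-\epsilon_{B}},
\]
where $\epsilon_{\mathcal{R}} = 3|G|^{2}(\gamma_{\beta}/(1+\gamma_{\beta}))^{|\VV_{\mathring{\mathcal{R}}}|}$. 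The key monotonicity observation is that $|\VV_{\mathring{\mathcal{R}}}|$ is smallest when $\mathcal{R} = B$, because $\VV_{\mathring{B}} \subset \VV_{\mathring{AB}}, \VV_{\mathring{BC}}, \VV_{\mathring{ABC}}$ (a vertex all of whose incident edges lie in $B$ certainly has all incident edges in any larger region). Since $B$ has $M$ plaquettes per row and $N$ per column, $|\VV_{\mathring{B}}| \ge (M-1)(N-1)$ (the interior vertices of an $M \times N$ block of plaquettes). Therefore all four quantities $\epsilon_{ABC}, \epsilon_{AB}, \epsilon_{BC}, \epsilon_{B}$ are bounded by $\epsilon_{B} := 3|G|^{2}(\gamma_{\beta}/(1+\gamma_{\beta}))^{(M-1)(N-1)}$, which by hypothesis \eqref{coro:approx-fact-rectangles-hypothesis1} is $< 1/2$. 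Consequently $\epsilon_B/(1-\epsilon_B) < 2\epsilon_B$, so we may take the common error $\varepsilon := 2\epsilon_B < 1$ in Theorem~\ref{theo:non-injective-approx-fact}. (One should double-check that the mild mismatch between the two normalizations --- $\rho^{1/2}\sigma^{-1}\rho^{1/2}$ versus $\rho^{-1/2}\sigma\rho^{-1/2}$ --- is exactly the one asked for in Theorem~\ref{theo:non-injective-approx-fact}; it is, by construction of $\sigma_{\partial\mathcal{R}}$.)

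Finally, applying Theorem~\ref{theo:non-injective-approx-fact} with this $\varepsilon$ gives $\|P_{AB}P_{BC} - P_{ABC}\| \le 8\varepsilon = 16\epsilon_B$, which is the claimed bound. I also need to check the remaining hypothesis of Theorem~\ref{theo:non-injective-approx-fact}, namely that $A$ and $C$ do not share mutually contractible boundary indices: this holds because $B$ has at least one full column of plaquettes separating $A$ from $C$, so no edge of the PEPS multigraph joins a vertex of $A$ to a vertex of $C$. The main obstacle I anticipate is purely bookkeeping: carefully matching the abstract index partition $\{a,\alpha,\gamma,c\}$ and the factorized operators of Theorem~\ref{theo:non-injective-approx-fact} to the concrete boundary-index layout of the rectangles $A$, $B$, $C$ here --- in particular making sure the ``seam'' columns (the shared vertical edges between $A$ and $B$, and between $B$ and $C$) are assigned consistently on both sides, and that $J_{\partial\mathcal{R}}$ and $\sigma_{\partial\mathcal{R}}$ really do split as the required tensor products across these seams. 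Once that is in place, the inequality chain is immediate, and the cylinder case referred to in Section~\ref{sec:PEPSrectangle} follows by the same argument with $B$ and $B'$ playing symmetric roles.
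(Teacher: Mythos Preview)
Your approach is correct and essentially identical to the paper's: partition the boundary indices into $a,c,\alpha,\gamma$, invoke Theorem~\ref{Coro:approx-factorization-bound} on each of $B,AB,BC,ABC$ (using $\epsilon_{\RR}\le\epsilon_B<1/2$), and feed the resulting tensor-product $J$'s and $\sigma$'s into Theorem~\ref{theo:non-injective-approx-fact} with $\varepsilon=2\epsilon_B$. The one piece of ``bookkeeping'' you should make explicit is that the scalar prefactors $\kappa_{\RR}$ in $\sigma_{\partial\RR}=\kappa_{\RR}\,\mathcal{G}_{\partial\RR}J_{\partial\RR}\mathcal{G}_{\partial\RR}$ must be distributed among $\sigma_a,\sigma_c,\sigma_\alpha,\sigma_\gamma$ in a consistent way, which requires (and is guaranteed by) the identity $\kappa_{ABC}\kappa_B=\kappa_{AB}\kappa_{BC}$; this follows directly from the explicit formula $\kappa_{\RR}=(1+\gamma_\beta)^{|\VV_{\mathring\RR}|+n_{\RR}}|G|^{|\EE_{\RR}|}$ since each exponent is additive under the decomposition.
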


\begin{proof}
In order to apply Theorem \ref{theo:non-injective-approx-fact}, we arrange the virtual indices of $ABC$, $AB$, $BC$ and $B$ into four sets $a,c,\alpha, \gamma$ as in the next picture: 

\begin{center}
\begin{tikzpicture}[scale=0.6]

\begin{scope}
\draw[black, ultra thick] (3,0) rectangle (7,3);
\plaquettefive[7,3](4,0,0,0);
\draw [|-|] (4.8, 3.5) -- (-0.6,3.5) -- (-0.6,-0.6) -- (4.8, -0.6);
\draw [|-|] (4.8, 3.5) -- (7.6,3.5) -- (7.6,-0.6) -- (4.8, -0.6);
\draw (1.5, 1.5) node {$A$};
\draw (4.5, 1.5) node {$B$};
\draw (-1, 1.5) node {$a$};
\draw (8, 1.5) node {$\gamma$};
\end{scope}

\begin{scope}[xshift=12cm]
\draw[black, ultra thick] (0,0) rectangle (4,3);
\plaquettefive[7,3](4,0,0,0);
\draw [|-|] (1.8, 3.5) -- (-0.6,3.5) -- (-0.6,-0.6) -- (1.8, -0.6);
\draw [|-|] (1.8, 3.5) -- (7.6,3.5) -- (7.6,-0.6) -- (1.8, -0.6);
\draw (1.5, 1.5) node {$B$};
\draw (5.5, 1.5) node {$C$};
\draw (-1, 1.5) node {$\alpha$};
\draw (8, 1.5) node {$c$};
\end{scope}

\end{tikzpicture}
\end{center}

\noindent so that
\[ \partial ABC = ac \quad , \quad \partial AB = a \gamma \quad , \quad \partial BC = \alpha c \quad , \quad \partial B = \alpha \gamma\,.  \]

\noindent The hypothesis $\epsilon_{B} < 1$ ensures that the hypothesis $\epsilon_{\RR}<1$ in Theorem \ref{Coro:approx-factorization-bound} is satisfied for $\RR \in \{ B, AB, BC, ABC \}$, and so $J_{\partial \RR}$ and $\sigma_{\partial \RR}$ have a simple tensor product structure. This allows us to factorize
\begin{align*}
J_{\partial ABC} = J_{a}  \otimes J_{c}    
\quad & , \quad 
J_{\partial B} = J_{\alpha}  \otimes J_{\gamma} \,,    \\[2mm]
\sigma_{\partial AB} = J_{a}  \otimes J_{\gamma}    
\quad &, \quad 
\sigma_{\partial BC} = J_{\alpha}  \otimes J_{c}\,.
\end{align*}
where $J_{a}$, $J_{c}$, $J_{\alpha}$ and $J_{\gamma}$ are projections. The local structure of $\widetilde{\mathcal{S}}_{\partial \RR}$ and $\mathcal{G}_{\partial \RR}$ allows us to decompose both operators as a tensor product of operators acting on $a,c,\alpha, \gamma$. Hence, we can define
\begin{align*}
   & \sigma_{a} := \frac{\kappa_{AB}}{ \sqrt{\kappa_{B}}} \, \mathcal{G}_{a} \widetilde{\mathcal{S}}_{a} \mathcal{G}_{a} \quad , \quad  \sigma_{c} := \frac{\kappa_{BC}}{\sqrt{\kappa_{B}}} \, \mathcal{G}_{c} \, \widetilde{\mathcal{S}}_{c} \, \mathcal{G}_{c} \,,\\[2mm]
   & \sigma_{\alpha} := \sqrt{\kappa_{B}} \, \mathcal{G}_{\alpha} \, \widetilde{\mathcal{S}}_{\alpha} \, \mathcal{G}_{\alpha} \quad , \quad  \sigma_{\gamma} := \sqrt{\kappa_{B}} \, \mathcal{G}_{\gamma} \, \widetilde{\mathcal{S}}_{\gamma} \, \mathcal{G}_{\gamma}\,.
\end{align*}
Then, we can easily verify that $\kappa_{ABC} \kappa_{B} = \kappa_{AB} \kappa_{BC}$, and so 
\begin{align*}
\sigma_{\partial ABC} = \sigma_{a}  \otimes \sigma_{c}    
\quad & , \quad 
\sigma_{\partial B} = \sigma_{\alpha}  \otimes \sigma_{\gamma} \,,    \\[2mm]
\sigma_{\partial AB} = \sigma_{a}  \otimes \sigma_{\gamma}    
\quad &, \quad 
\sigma_{\partial BC} = \sigma_{\alpha}  \otimes \sigma_{c}\,.
\end{align*}

\noindent From Theorem~\ref{Coro:approx-factorization-bound}, it follows that for each $\RR \in \{ ABC, AB, BC, B\}$
\[ \| \rho_{\partial \RR}^{1/2} \sigma_{\partial \RR}^{-1} \rho_{\partial \RR}^{1/2} - J_{\partial \RR} \| < \epsilon_{B}\,. \]
and
\[ \| \rho_{\partial \RR}^{-1/2} \, \sigma_{\partial \RR} \, \rho_{\partial \RR}^{-1/2} - J_{\partial \RR} \| \, \leq \, \frac{\epsilon_{B}}{1-\epsilon_{B}} \leq 2 \epsilon_{B}\,.\]
Thus, applying Theorem \ref{theo:non-injective-approx-fact} we conclude the result.
\end{proof}

Analogously, we can prove results for the torus and cylinders.

\begin{Coro}\label{coro:approx-fact-cylinder}
Let us consider a cylinder with open boundary conditions and split it into four sections $A,B,C,B'$ so that $B'AB$ and $BCB'$ are two overlapping rectangles whose intersection is formed by two disjoint rectangles $B$ and $B'$, as in the next picture:

\[
\begin{tikzpicture}[equation, scale=1]

\begin{scope}[xshift=0.5cm]
\node [cylinder, draw, minimum height=2cm, minimum width=0.8cm, fill=white, rotate=90, thick, black, fill=white] {};


\draw[black] (-0.4,1) -- (-0.4,1.2);
\draw[black] (-0.2,1.09) -- (-0.2,1.27);
\draw[black] (0,1.1) -- (0,1.3);
\draw[black] (0.2,1.1) -- (0.2,1.27);
\draw[black] (0.4,1) -- (0.4,1.2);

\draw[black] (-0.3,0.9) -- (-0.3,1.1);
\draw[black] (-0.1,0.87) -- (-0.1,1.07);
\draw[black] (0.1,0.87) -- (0.1,1.07);
\draw[black] (0.3,0.9) -- (0.3,1.1);


\draw[black] (-0.4,-0.7) -- (-0.4,-1);
\draw[black] (-0.2,-0.87) -- (-0.2,-0.95);
\draw[black] (0,-0.9) -- (0,-0.93);
\draw[black] (0.2,-0.87) -- (0.2,-0.95);
\draw[black] (0.4,-0.7) -- (0.4,-1);

\draw[black] (-0.3,-0.85) -- (-0.3,-1.05);
\draw[black] (-0.1,-0.88) -- (-0.1,-1.08);
\draw[black] (0.1,-0.88) -- (0.1,-1.08);
\draw[black] (0.3,-0.85) -- (0.3,-1.05);

\end{scope}


\begin{scope}[xshift=3cm, yshift=-1cm, scale=0.5]

\begin{scope}
\plaquettefive[14,4](4,0,0,0);
\end{scope}

\fill[white] (-0.5,-1) rectangle (1,5);
\fill[white] (13,-1) rectangle (14.5,5);

\draw[thick, black, postaction=torus vertical] (1,0) -- (1,4); 
\draw[thick, black, postaction=torus vertical] (13,0) -- (13,4); 


\fill[white] (1.3,0.3) rectangle (12.7,3.7);

\draw[white!70!black] (1.3,0.3) grid (12.7,3.7);
\draw[black, thick] (4,0) -- (4,4);
\draw[black, thick] (7,0) -- (7,4);
\draw[black, thick] (10,0) -- (10,4);

\draw (2.5,1.5) node{$A$};
\draw (5.5,1.5) node{$B$};
\draw (8.5,1.5) node{$C$};
\draw (11.5,1.5) node{$B'$};
\end{scope}

\end{tikzpicture}
\]

\[
\begin{tikzpicture}[equation, scale=0.5, decoration={markings, mark=at position 0.65 with {\arrow[thick, scale=1.5]{>}}}]


\cylinderopen[3,4];
\draw (1.5,-1) node{$A$};


\begin{scope}[xshift=6cm]
\plaquettefive[3,4](4,0,0,0);
\draw (1.5,-1) node{$B$};
\end{scope}


\begin{scope}[xshift=12cm]
\cylinderopen[3,4];
\draw (1.5,-1) node{$C$};
\end{scope}


\begin{scope}[xshift=18cm]
\plaquettefive[3,4](4,0,0,0);
\draw (1.5,-1) node{$B'$};
\end{scope}

\end{tikzpicture} 
\]

\noindent Let us assume that  $B$ and $B'$ have at least $M$ plaquettes per row and $N$ plaquettes per column with
\[ \epsilon_{BB'}:= 3 \, |G|^{2} \, \left( \frac{\gamma_{\beta}}{1+\gamma_{\beta}} \right)^{(M-1)(N-1)}  \, \leq \, \frac{1}{2}\,, \]
then the orthogonal projections $P_{\RR}$ onto $\operatorname{Im}(V_{\RR})$ satisfy
\[ \| P_{B'AB} P_{BCB'} - P_{ABCB'}\| \leq 48 \epsilon_{BB'} \,. \]
\end{Coro}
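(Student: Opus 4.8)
The plan is to reduce the cylinder statement to two applications of the rectangle case (Corollary~\ref{coro:approx-fact-rectangles}) together with one application of Theorem~\ref{Theo:RecursiveGapEstimate} in spirit, or more precisely, to directly build the approximate factorization data of Theorem~\ref{theo:non-injective-approx-fact} for the decomposition of the cylinder into the two overlapping rectangles $\RR_1 := B'AB$ and $\RR_2 := BCB'$. The subtlety compared with Corollary~\ref{coro:approx-fact-rectangles} is that here the ``middle'' shielding region is \emph{disconnected}: it consists of the two rectangles $B$ and $B'$, situated at opposite ends of the developed cylinder. So the roles are played by $A$ (outer-left), the pair $\{B,B'\}$ (shield), and $C$ (outer-right), and the key geometric fact is that $A$ and $C$ share no mutually contractible boundary indices precisely because $B$ and $B'$ separate them on the cylinder.

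First I would set up the index bookkeeping. Using the boundary-state description for rectangular/cylindrical regions from Section~\ref{sec:PEPSrectangle}, the regions $\RR_1 = B'AB$, $\RR_2 = BCB'$, $\RR_1 \cup \RR_2 = ABCB'$ (the whole cylinder) and $\RR_1 \cap \RR_2 = B \sqcup B'$ all admit the \emph{Leading Term} decomposition of Theorem~\ref{Theo:leadingTerm}, with error controlled by $\epsilon_{\RR}$. Since $B$ and $B'$ each have at least $M$ plaquettes per row and $N$ per column, and $|\VV_{\mathring{\,\cdot}}| \ge (M-1)(N-1)$ for each, the relevant error for the shield region $B\sqcup B'$ is $\epsilon_{BB'}$ as defined in the statement; likewise the errors for $\RR_1$, $\RR_2$, and the whole cylinder are all bounded by $\epsilon_{BB'}$ (they have at least as many interior vertices). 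By Theorem~\ref{Coro:approx-factorization-bound}, under $\epsilon_{BB'} < 1$ each $J_{\partial\RR}$ and $\sigma_{\partial\RR}$ has a tensor-product structure over the boundary edges. I would then arrange the virtual indices of $\partial(ABCB')$, $\partial(B'AB)$, $\partial(BCB')$, $\partial(B\sqcup B')$ into four groups $a, c, \alpha, \gamma$ — with $\alpha$ collecting the shield-side indices touching $A$ (on both the $B$-side and the $B'$-side) and $\gamma$ collecting the shield-side indices touching $C$, while $a$ (resp.\ $c$) collects the remaining boundary indices of $A$ (resp.\ $C$) — exactly matching the abstract hypothesis of Theorem~\ref{theo:non-injective-approx-fact} (with the caveat that, as noted after that theorem, if a region is the whole lattice the corresponding outer set is empty; here the cylinder is not the whole torus so $a$ and $c$ are genuine, though in the torus corollary they would vanish). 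Defining $\sigma_a, \sigma_c, \sigma_\alpha, \sigma_\gamma$ from the local factors of $\mathcal{G}_{\partial\RR}$ and $\widetilde{\mathcal{S}}_{\partial\RR}$ scaled by appropriate powers of the $\kappa_\RR$'s — using the identity $\kappa_{ABCB'}\,\kappa_{B\sqcup B'} = \kappa_{B'AB}\,\kappa_{BCB'}$, which follows from additivity of $|\VV_{\mathring{\,\cdot}}|$, $n_{\,\cdot}$ and $|\EE_{\,\cdot}|$ under the overlapping decomposition — gives the required product factorizations $\sigma_{\partial\RR} = (\text{two of } \sigma_a,\sigma_c,\sigma_\alpha,\sigma_\gamma)$.

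Next, I would feed these into Theorem~\ref{theo:non-injective-approx-fact}. The two bounds it requires, $\|J_{\partial\RR} - \rho_{\partial\RR}^{1/2}\sigma_{\partial\RR}^{-1}\rho_{\partial\RR}^{1/2}\| < \varepsilon$ for $\RR\in\{ABCB', B'AB, BCB'\}$ and $\|J_{\partial B\sqcup B'} - \rho_{\partial B\sqcup B'}^{-1/2}\sigma_{\partial B\sqcup B'}\rho_{\partial B\sqcup B'}^{-1/2}\| < \varepsilon$, are furnished by Theorem~\ref{Coro:approx-factorization-bound}(ii) with $\varepsilon = \epsilon_{BB'}/(1-\epsilon_{BB'}) \le 2\epsilon_{BB'}$ (using $\epsilon_{BB'}\le 1/2$). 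That theorem then yields $\|P_{B'AB}P_{BCB'} - P_{ABCB'}\| \le 8\varepsilon \le 16\epsilon_{BB'}$. To get the stated constant $48$ rather than $16$, I expect one extra step: because the shield is the disjoint union $B\sqcup B'$, one may need to first invoke Corollary~\ref{coro:approx-fact-rectangles} (or repeat its argument) on the sub-decompositions $B'AB = (B'A)\,(AB)$-type splittings, or equivalently account for the fact that the effective $\varepsilon$ enters through three regions each with its own factor, producing an accumulated constant $3 \times 16 = 48$. The cleanest route is probably to note that $B\sqcup B'$ as a ``middle'' region means Theorem~\ref{theo:non-injective-approx-fact} is applied with the error on the middle replaced by the sum (or product bound) of the two independent errors on $B$ and $B'$, each $\le 2\epsilon_{BB'}$, together with the $\RR$-errors, giving a total $\le 48\epsilon_{BB'}$ after the telescoping estimate in the proof of that theorem.

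The main obstacle will be the index-arrangement step: verifying rigorously that the disconnected shield $B\sqcup B'$ still admits the clean four-set partition $a,c,\alpha,\gamma$ with $\partial(ABCB') = ac$, $\partial(B'AB) = a\gamma$, $\partial(BCB') = \alpha c$, $\partial(B\sqcup B') = \alpha\gamma$, and that $A$ and $C$ genuinely share no mutually contractible boundary indices. This is geometrically evident from the picture — the two copies $B$ and $B'$ wrap around and sit between $A$ and $C$ on both sides — but making it precise requires carefully tracking which virtual edges of the PEPS (the multigraph edges $E_N$, not the physical edges $\EE_N$) cross each cut, and confirming the tensor-product factorizations of $J$, $\mathcal{G}$ and $\widetilde{\mathcal{S}}$ respect exactly this partition. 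Once that is in place, the rest is the same bookkeeping as in Corollary~\ref{coro:approx-fact-rectangles}, with the constants bumped up to absorb the second shield component.
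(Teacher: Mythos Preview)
Your approach is essentially the same as the paper's: apply Theorem~\ref{theo:non-injective-approx-fact} with the three regions $A$, $B\sqcup B'$, $C$, using the tensor-product structure of $J_{\partial\RR}$, $\mathcal{G}_{\partial\RR}$, $\widetilde{\mathcal{S}}_{\partial\RR}$ and the $\kappa$-identity to build the factorized $\sigma$'s. The index arrangement you describe (grouping the $B$-side and $B'$-side contributions into $\alpha$ and $\gamma$) is equivalent to the paper's, which instead keeps six labels $a,c,\alpha,\alpha',\gamma,\gamma'$; either way works since everything is a tensor product over boundary edges.

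The only place where your reasoning is imprecise is the origin of the constant $48$. It does \emph{not} come from ``three regions each with its own factor'' or from any sub-splitting of $B'AB$; no additional application of Corollary~\ref{coro:approx-fact-rectangles} is needed. The point is simply this: Theorem~\ref{Coro:approx-factorization-bound} gives bounds for a \emph{single} connected rectangular/cylindrical region, so it applies directly to $ABCB'$, $B'AB$, $BCB'$, $B$, $B'$ individually (each with error $\le 2\epsilon_{BB'}$), but not to the disconnected union $B\sqcup B'$. Since $\rho_{\partial(B\sqcup B')} = \rho_{\partial B}\otimes \rho_{\partial B'}$ and likewise for $\sigma$ and $J$, one combines the two individual bounds via
\[
\|XY - J_{\partial B}J_{\partial B'}\| \le \|X - J_{\partial B}\|\,\|Y\| + \|J_{\partial B}\|\,\|Y - J_{\partial B'}\| \le 2\epsilon_{BB'}(1+2\epsilon_{BB'}) + 2\epsilon_{BB'} \le 6\epsilon_{BB'},
\]
where $X = \rho_{\partial B}^{-1/2}\sigma_{\partial B}\rho_{\partial B}^{-1/2}$ and $Y = \rho_{\partial B'}^{-1/2}\sigma_{\partial B'}\rho_{\partial B'}^{-1/2}$. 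Taking $\varepsilon = 6\epsilon_{BB'}$ in Theorem~\ref{theo:non-injective-approx-fact} then gives $8\varepsilon = 48\epsilon_{BB'}$.
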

\begin{proof}
We are going to apply Theorem \ref{theo:non-injective-approx-fact} for the three regions $A$, $BB'$ and $C$. Firstly, we need to find a suitable arrangement of the virtual indices.  First we split the boundary of $BB'$ into two four parts $\gamma, \gamma',\alpha, \alpha'$ as in
\[
\begin{tikzpicture}[equation, scale=0.5]

\def\sep{0.6};

\begin{scope}
\draw[very thick] (0,0) rectangle (3,4);
\plaquettefive[3,4](4,0,0,0);
\draw (1.5,1.5) node{$B$};

\draw[|-|] (0.8, 0-\sep) -- (0-\sep, 0-\sep) -- (0-\sep, 4+\sep) -- (0.8, 4+\sep);

\draw[|-|] (0.8, 0-\sep) -- (3+\sep, 0-\sep) -- (3+\sep, 4+\sep) -- (0.8, 4+\sep);

\draw (-1,2) node{$\alpha$};

\draw (4,2) node{$\gamma$};
\end{scope}

\begin{scope}[xshift=8cm]
\draw[very thick] (0,0) rectangle (3,4);
\plaquettefive[3,4](4,0,0,0);
\draw (1.5,1.5) node{$B'$};

\draw[|-|] (1.2, 0-\sep) -- (0-\sep, 0-\sep) -- (0-\sep, 4+\sep) -- (1.2, 4+\sep);

\draw[|-|] (1.2, 0-\sep) -- (3+\sep, 0-\sep) -- (3+\sep, 4+\sep) -- (1.2, 4+\sep);

\draw (-1.2,2) node{$\gamma'$};

\draw (4.2,2) node{$\alpha'$};
\end{scope}

\end{tikzpicture}
\]
and define regions $a$ and $c$ according to
\[
\begin{tikzpicture}[equation, scale=0.5]

\def\sep{0.6};

\begin{scope}
\draw[very thick] (0,0) rectangle (3,4);
\draw[very thick] (6,0) rectangle (9,4);
\plaquettefive[9,4](4,0,0,0);
\draw (1.5,1.5) node{$B'$};
\draw (4.5,1.5) node{$A$};
\draw (7.5,1.5) node{$B$};
\draw[|-|] (1.2, 0-\sep) -- (0-\sep, 0-\sep) -- (0-\sep, 4+\sep) -- (1.2, 4+\sep);
\draw[|-|] (1.2, 0-\sep) -- (6.8, 0-\sep);
\draw[|-|] (1.2, 4+\sep) -- (6.8, 4+\sep);
\draw[|-|] (6.8, 0-\sep) -- (9+\sep, 0-\sep) -- (9+\sep, 4+\sep) -- (6.8, 4+\sep);
\draw (-1.2,2) node{$\gamma'$};
\draw (10,2) node{$\gamma$};
\draw (4,5) node{$a$};
\draw (4,-1) node{$a$};
\end{scope}

\begin{scope}[xshift=14cm]
\draw[very thick] (0,0) rectangle (3,4);
\draw[very thick] (6,0) rectangle (9,4);
\plaquettefive[9,4](4,0,0,0);
\draw (1.5,1.5) node{$B$};
\draw (4.5,1.5) node{$C$};
\draw (7.5,1.5) node{$B'$};
\draw[|-|] (0.8, 0-\sep) -- (0-\sep, 0-\sep) -- (0-\sep, 4+\sep) -- (0.8, 4+\sep);
\draw[|-|] (0.8, 0-\sep) -- (7.2, 0-\sep);
\draw[|-|] (0.8, 4+\sep) -- (7.2, 4+\sep);
\draw[|-|] (7.2, 0-\sep) -- (9+\sep, 0-\sep) -- (9+\sep, 4+\sep) -- (7.2, 4+\sep);
\draw (-1.2,2) node{$\alpha$};
\draw (10.2,2) node{$\alpha'$};
\draw (4,5) node{$c$};
\draw (4,-1) node{$c$};
\end{scope}

\end{tikzpicture}
\]
so that
\begin{align*} 
& \partial B'AB=\gamma' a \gamma \quad , \quad \partial BCB' = \alpha c \alpha' \quad , \quad \partial ABCB' = ac\,,\\
& \partial B= \alpha \gamma \quad , \quad \partial B' = \gamma' \alpha'\,.
\end{align*}

\noindent The local structure of $\widetilde{S}_{\partial \RR}$ and $\mathcal{G}_{\partial \RR}$ allows us to write them as a tensor product of operators acting on the defined boundary segments, and consider
\begin{align*}
   \sigma_{a} &:= \frac{\kappa_{B'AB}}{ \sqrt{\kappa_{B}\kappa_{B'}}} \mathcal{G}^{\dagger}_{a} \widetilde{\mathcal{S}}_{a} \mathcal{G}_{a}, & 
   \sigma_{c} &:= \frac{\kappa_{BCB'}}{\sqrt{\kappa_{B}\kappa_{B'}}} \,. \mathcal{G}^{\dagger}_{c} \widetilde{\mathcal{S}}_{c} \mathcal{G}_{c}\,,\\[2mm]
    \sigma_{\alpha} &:= \sqrt{\kappa_{B}\kappa_{B'}} \mathcal{G}^{\dagger}_{\alpha}\,, \widetilde{\mathcal{S}}_{\alpha} \mathcal{G}_{\alpha} &
    \sigma_{\alpha'} &:= \mathcal{G}^{\dagger}_{\alpha'} \widetilde{\mathcal{S}}_{\alpha'} \mathcal{G}_{\alpha'}\,, \\[2mm]
   \sigma_{\gamma} &:= \sqrt{\kappa_{B}\kappa_{B'}} \mathcal{G}^{\dagger}_{\gamma}\,, \widetilde{\mathcal{S}}_{\gamma} \mathcal{G}_{\gamma} &
   \sigma_{\gamma'} &:= \mathcal{G}^{\dagger}_{\gamma'} \widetilde{\mathcal{S}}_{\gamma'} \mathcal{G}_{\gamma'}\,.
\end{align*}
Then, using that $\kappa_{ABC} \kappa_{B} \kappa_{B'} = \kappa_{B'AB} \kappa_{BCB'}$ we can easily check that
\begin{align*}
    & \sigma_{\partial B'AB} = \sigma_{\gamma'} \otimes \sigma_{a} \otimes \sigma_{\gamma} \quad , \quad \sigma_{\partial BCB'} = \sigma_{\alpha'} \otimes \sigma_{c} \otimes \sigma_{\alpha'} \quad , \quad \sigma_{\partial ABCB'} = \sigma_{a} \otimes \sigma_{c}\,,\\[2mm]
    & \sigma_{\partial B} = \sigma_{\alpha} \otimes \sigma_{\gamma} \quad , \quad \sigma_{\partial B'} = \sigma_{\gamma'} \otimes \sigma_{\alpha'}\,.
\end{align*} 
By Theorem~\ref{Coro:approx-factorization-bound}, we have for each $\RR \in \{ B, B', B'AB, BCB', ABCB' \}$
\begin{equation*}\label{equa:cylinderApproxFactorization1} 
\| \rho_{\partial \RR}^{1/2} \sigma_{\partial \RR}^{-1} \rho_{\partial \RR}^{1/2} - J_{\partial \RR} \| \,\, , \,\, \| \rho_{\partial \RR}^{-1/2} \sigma_{\partial \RR} \rho_{\partial \RR}^{-1/2} - J_{\partial \RR} \| \, \leq \, 2 \epsilon_{BB'} \,.
\end{equation*}
In particular, considering the joint region $BB'$ we get
\begin{align*}
\| \rho_{\partial BB'}^{-1/2} \sigma_{\partial BB'} \rho_{\partial BB'}^{-1/2} - J_{\partial BB'}\| & \, \leq \, \| (\rho_{\partial B}^{-1/2} \sigma_{\partial B} \rho_{\partial B}^{-1/2}) (\rho_{\partial B'}^{-1/2} \sigma_{\partial B'} \rho_{\partial B'}^{-1/2}) - J_{\partial B} J_{\partial B'}\|\\[2mm]
& \, \leq \, \| \rho_{\partial B}^{-1/2} \sigma_{\partial B}  \rho_{\partial B}^{-1/2} - J_{\partial B} \| \cdot \|\rho_{\partial B'}^{-1/2} \sigma_{\partial B'} \rho_{\partial B'}^{-1/2} \|\\[2mm]
& \quad \quad + \, \| \rho_{\partial B'}^{-1/2} \sigma_{\partial B'} \rho_{\partial B'}^{-1/2} - J_{\partial B'} \| \cdot \|J_{\partial B} \| \\[2mm]
& \, \leq \,  2\epsilon_{BB'} (1+2\epsilon_{BB'}) + 2\epsilon_{BB'}\\[2mm]
& \, \leq \,   6 \epsilon_{BB'}\,.
\end{align*}
Applying Theorem \ref{theo:non-injective-approx-fact}, we conclude the result.
\end{proof}

\begin{Coro}\label{coro:approx-fact-torus}
We consider a decomposition of the torus into four regions $A,B,C,B'$ as below. We have then two overlapping cylinders $B'AB$ and $BCB'$, whose intersection is formed by two disjoint cylinders $B$ and $B'$.
\[
\begin{tikzpicture}[equation, scale=1]

\begin{scope}[scale=0.8, thick]

\path[rounded corners=24pt, blue] (-.9,0)--(0,.6)--(.9,0) (-.9,0)--(0,-.56)--    (.9,0);
\draw[rounded corners=28pt, black] (-1.1,.1)--(0,-.6)--(1.1,.1);
\draw[rounded corners=24pt, black] (-.9,0)--(0,.6)--(.9,0);
\draw[black] (0,0) ellipse (1.5 and 0.8);
\end{scope}   
    
    
\begin{scope}[xshift=3cm, scale=0.5, yshift=-1cm]

\draw[step=1,white!70!black,thin] (0,0) grid (16,3);

\draw[black] (0,0) rectangle (4,3);
\draw (2,1.5) node{$A$};
\draw[black] (4,0) rectangle (8,3);
\draw (6,1.5) node{$B$};
\draw[black] (8,0) rectangle (12,3);
\draw (10,1.5) node{$C$};
\draw[black] (12,0) rectangle (16,3);
\draw (14,1.5) node{$B'$};

\draw[black, postaction=torus horizontal] (0,0) -- (16,0); 
\draw[black, postaction=torus horizontal] (0,3) -- (16, 3);
\draw[black, postaction=torus vertical] (0,0) -- (0, 3);
\draw[black, postaction=torus vertical] (16,0) -- (16, 3);

\end{scope}

\end{tikzpicture} 
\]


\[
\begin{tikzpicture}[equation, scale=0.5]


\begin{scope}

 \draw (2,-1) node {\small $A$};

\draw[step=1.0,black,thin] (0.1,0) grid (3.9,3);


\foreach \y in {0.5,1.5,2.5}{
    \centerarc[blue, thick](0.5,\y)(-135:135:0.35);
}

\foreach \y in {0.5,1.5,2.5}{
    \centerarc[blue, thick](3.5,\y)(45:315:0.35);
}

\foreach \y in {0,1,2}{
    \centerarc[red, thick](0,\y)(0:45:0.2);
}

\foreach \y in {1,2,3}{
    \centerarc[red, thick](0,\y)(-45:0:0.2);
}

\foreach \y in {0,1,2}{
    \centerarc[red, thick](4,\y)(135:180:0.2);
}

\foreach \y in {1,2,3}{
    \centerarc[red, thick](4,\y)(180:225:0.2);
}


\draw[thick, postaction=torus horizontal] (0,0) -- (4,0);
\draw[thick, postaction=torus horizontal] (0,3) -- (4,3);
\draw (0,1) -- (4,1);
\draw (0,2) -- (4,2);
\end{scope}



\begin{scope}[xshift=7cm]

    \draw (2,-1) node {\small $B$};

\plaquettefive[4,3](4,0,0,0);
\fill[white] (0,3) rectangle (4,3.5);
\fill[white] (0,0) rectangle (4,-0.5);
\fill[white] (0.5,0) rectangle (3.5,3);
\draw[step=1.0,black,thin] (0,0) grid (4,3);
\draw[thick, postaction=torus horizontal] (0,0) -- (4,0);
\draw[thick, postaction=torus horizontal] (0,3) -- (4,3);

\end{scope}


\begin{scope}[xshift=14cm]

\draw (2,-1) node {\small $C$};

\draw[step=1.0,black,thin] (0.1,0) grid (3.9,3);


\foreach \y in {0.5,1.5,2.5}{
    \centerarc[blue, thick](0.5,\y)(-135:135:0.35);
}

\foreach \y in {0.5,1.5,2.5}{
    \centerarc[blue, thick](3.5,\y)(45:315:0.35);
}

\foreach \y in {0,1,2}{
    \centerarc[red, thick](0,\y)(0:45:0.2);
}

\foreach \y in {1,2,3}{
    \centerarc[red, thick](0,\y)(-45:0:0.2);
}

\foreach \y in {0,1,2}{
    \centerarc[red, thick](4,\y)(135:180:0.2);
}

\foreach \y in {1,2,3}{
    \centerarc[red, thick](4,\y)(180:225:0.2);
}


\draw[thick, postaction=torus horizontal] (0,0) -- (4,0);
\draw[thick, postaction=torus horizontal] (0,3) -- (4,3);
\draw (0,1) -- (4,1);
\draw (0,2) -- (4,2);
\end{scope}


\begin{scope}[xshift=21cm]

    \draw (2,-1) node {\small $B'$};

\plaquettefive[4,3](4,0,0,0);
\fill[white] (0,3) rectangle (4,3.5);
\fill[white] (0,0) rectangle (4,-0.5);
\fill[white] (0.5,0) rectangle (3.5,3);
\draw[step=1.0,black,thin] (0,0) grid (4,3);
\draw[thick, postaction=torus horizontal] (0,0) -- (4,0);
\draw[thick, postaction=torus horizontal] (0,3) -- (4,3);

\end{scope}

\end{tikzpicture}
\]

\noindent Let us assume that  $B$ and $B'$ have at least $M$ plaquettes per row and $N$ plaquettes per column with
\[ \epsilon_{BB'}:= 3 \, |G|^{2} \, \left( \frac{\gamma_{\beta}}{1+\gamma_{\beta}} \right)^{(M-1)(N-1)}  \, \leq \, \frac{1}{2}\,, \]
then the orthogonal projections $P_{\RR}$ onto $\operatorname{Im}(V_{\RR})$ satisfy
\[ \| P_{B'AB} P_{BCB'} - P_{ABCB'}\| \leq 48 \epsilon_{BB'}\, . \]
\end{Coro}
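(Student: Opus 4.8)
The plan is to reduce the torus case to the cylinder computation carried out in Corollary~\ref{coro:approx-fact-cylinder}, following the same strategy that worked there but now with the roles of ``rectangle'' and ``cylinder'' shifted up by one dimension. First I would apply Theorem~\ref{theo:non-injective-approx-fact} to the three regions $A$, $BB'$ and $C$ (where $BB'$ plays the role of the shielding middle region), after verifying that $A$ and $C$ do not share mutually contractible boundary indices, which holds because $B$ and $B'$ separate them. The key input is that, by Theorem~\ref{Coro:approx-factorization-bound}, whenever $\epsilon_{BB'}<1$ each of the cylinders $B$, $B'$, $B'AB$, $BCB'$ and the torus $ABCB'$ has a boundary projection $J_{\partial\RR}$ and an approximating operator $\sigma_{\partial\RR}$ with a genuine tensor-product structure along the boundary segments; here one should note that the torus $\EE_N$ has empty ``outer'' boundary, so in the notation of Theorem~\ref{theo:non-injective-approx-fact} the sets $a$ and $c$ are empty and $\sigma_a,\sigma_c$ are scalars, exactly as remarked after that theorem.

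Next I would arrange the boundary virtual indices of the various regions into the appropriate groups. Each cylinder $B$, $B'$ has two ``circular'' boundary components; I would label them $\alpha,\gamma$ for $B$ and $\gamma',\alpha'$ for $B'$, chosen so that $\partial B'AB = \gamma' a \gamma$ (with $a$ empty), $\partial BCB' = \alpha c \alpha'$ (with $c$ empty), $\partial ABCB'$ trivial, $\partial B = \alpha\gamma$ and $\partial B' = \gamma'\alpha'$ — this is the verbatim analog of the arrangement in the proof of Corollary~\ref{coro:approx-fact-cylinder}. Using the product structure of $\widetilde{\mathcal S}_{\partial\RR}$ and $\mathcal G_{\partial\RR}$ (Proposition~\ref{Prop:BoundaryProjector3}, plus the gauge factorization), I would define local operators $\sigma_\alpha,\sigma_{\alpha'},\sigma_\gamma,\sigma_{\gamma'}$ with suitably distributed normalization constants $\sqrt{\kappa_B\kappa_{B'}}$, and then check the consistency identity $\kappa_{ABCB'}\,\kappa_B\,\kappa_{B'} = \kappa_{B'AB}\,\kappa_{BCB'}$ so that the $\sigma_{\partial\RR}$ factorize correctly across all five regions.

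Then, exactly as in Corollary~\ref{coro:approx-fact-cylinder}, I would invoke Theorem~\ref{Coro:approx-factorization-bound} to get, for each $\RR\in\{B,B',B'AB,BCB',ABCB'\}$,
\[ \| \rho_{\partial \RR}^{1/2} \sigma_{\partial \RR}^{-1} \rho_{\partial \RR}^{1/2} - J_{\partial \RR} \|\,,\ \| \rho_{\partial \RR}^{-1/2} \sigma_{\partial \RR} \rho_{\partial \RR}^{-1/2} - J_{\partial \RR} \| \le 2\epsilon_{BB'}\,, \]
using that $(M-1)(N-1)\le |\VV_{\mathring B}|$ for both $B$ and $B'$ so that $\epsilon_B,\epsilon_{B'}\le\epsilon_{BB'}$. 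For the joint middle region $BB'$, which is disconnected, I would bound $\| \rho_{\partial BB'}^{-1/2} \sigma_{\partial BB'} \rho_{\partial BB'}^{-1/2} - J_{\partial BB'}\|$ by splitting it as a product over the two components $B$ and $B'$ and using the triangle inequality together with $\|\rho_{\partial B'}^{-1/2}\sigma_{\partial B'}\rho_{\partial B'}^{-1/2}\| \le 1 + 2\epsilon_{BB'}$, giving the bound $2\epsilon_{BB'}(1+2\epsilon_{BB'}) + 2\epsilon_{BB'} \le 6\epsilon_{BB'}$ exactly as before. Feeding $\varepsilon = 6\epsilon_{BB'} \le 1$ (which holds since $\epsilon_{BB'}\le 1/6$, a consequence of $\epsilon_{BB'}\le 1/2$ together with... actually one needs $\epsilon_{BB'}\le 1/6$; the hypothesis $\epsilon_{BB'}\le 1/2$ should be read as ensuring $6\epsilon_{BB'}\le 1$ is \emph{not} automatic, so I would note the bound $\|P_{B'AB}P_{BCB'} - P_{ABCB'}\| \le 8\cdot 6\,\epsilon_{BB'} = 48\epsilon_{BB'}$ follows from Theorem~\ref{theo:non-injective-approx-fact} applied with this $\varepsilon$) concludes the proof.

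The main obstacle I anticipate is purely bookkeeping rather than conceptual: one must be careful that the boundary of each \emph{cylinder} consists of two circles (not the single connected boundary of a rectangle), that the gauge operators $\mathcal G_{\partial\RR}$ and the projection $\widetilde{\mathcal S}_{\partial\RR}$ split compatibly along precisely these circles, and that the normalization constants $\kappa$ are distributed so that all four tensor-product identities for $\sigma_{\partial\RR}$ hold simultaneously. Since the two cylinders $B$ and $B'$ are symmetric and disjoint, and since Theorem~\ref{Coro:approx-factorization-bound} already delivers the product structure we need for cylinders, there is no genuinely new estimate to prove — the argument is a transcription of the proof of Corollary~\ref{coro:approx-fact-cylinder} with ``rectangle'' replaced by ``cylinder'' and ``cylinder'' replaced by ``torus'', so I would present it briefly and refer back to that proof for the repeated details.
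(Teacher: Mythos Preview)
Your proposal is correct and follows essentially the same approach as the paper's proof: apply Theorem~\ref{theo:non-injective-approx-fact} to the triple $A,\,BB',\,C$ with $a,c$ empty, split each of $\partial B,\partial B'$ into two circular segments $\alpha,\gamma$ and $\gamma',\alpha'$, distribute the $\kappa$ constants via $\kappa_{ABCB'}\kappa_B\kappa_{B'}=\kappa_{B'AB}\kappa_{BCB'}$, invoke Theorem~\ref{Coro:approx-factorization-bound} on each cylinder, combine the two pieces of $BB'$ by the triangle inequality to get $6\epsilon_{BB'}$, and conclude $8\cdot 6\epsilon_{BB'}=48\epsilon_{BB'}$. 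Your worry about $6\epsilon_{BB'}\le 1$ is harmless: when $48\epsilon_{BB'}\ge 1$ the claimed bound is vacuous since $\|P_{B'AB}P_{BCB'}-P_{ABCB'}\|\le 1$ always (Lemma~\ref{Lemm:MartingaleProjector}), so one may freely assume $\epsilon_{BB'}\le 1/48$ before invoking Theorem~\ref{theo:non-injective-approx-fact}.
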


\begin{proof}
We are going to apply Theorem \ref{theo:non-injective-approx-fact} for the three regions $A$, $BB'$ and $C$. Firstly, we need to find a suitable arrangement of the virtual indices. The boundary of $B$ (resp. $B'$) can be split into two regions: the part connecting with $A$, denoted by $\alpha$ (resp. $\alpha'$); and the part connecting with $C$, denoted by $\gamma$ (resp. $\gamma'$): 
\[
\begin{tikzpicture}[equation, scale=0.5]


\begin{scope}

 \draw (2,-1) node {\small $A$};

\draw[step=1.0,black,thin] (0.1,0) grid (3.9,3);


\foreach \y in {0.5,1.5,2.5}{
    \centerarc[blue, thick](0.5,\y)(-135:135:0.35);
}

\foreach \y in {0.5,1.5,2.5}{
    \centerarc[blue, thick](3.5,\y)(45:315:0.35);
}

\foreach \y in {0,1,2}{
    \centerarc[red, thick](0,\y)(0:45:0.2);
}

\foreach \y in {1,2,3}{
    \centerarc[red, thick](0,\y)(-45:0:0.2);
}

\foreach \y in {0,1,2}{
    \centerarc[red, thick](4,\y)(135:180:0.2);
}

\foreach \y in {1,2,3}{
    \centerarc[red, thick](4,\y)(180:225:0.2);
}


\draw[thick, postaction=torus horizontal] (0,0) -- (4,0);
\draw[thick, postaction=torus horizontal] (0,3) -- (4,3);
\draw (0,1) -- (4,1);
\draw (0,2) -- (4,2);
\end{scope}



\begin{scope}[xshift=7cm]

    \draw (2,-1) node {\small $B$};

\plaquettefive[4,3](4,0,0,0);
\fill[white] (0,3) rectangle (4,3.5);
\fill[white] (0,0) rectangle (4,-0.5);
\fill[white] (0.5,0) rectangle (3.5,3);
\draw[step=1.0,black,thin] (0,0) grid (4,3);
\draw[thick, postaction=torus horizontal] (0,0) -- (4,0);
\draw[thick, postaction=torus horizontal] (0,3) -- (4,3);

\draw[thin, black, |-|] (-0.6,0) -- (-0.6,3); 
\draw[thin, black, |-|] (4.6,0) -- (4.6,3); 

\draw (-1,1.5) node {\small$\alpha$};
\draw (5,1.5) node {\small $\gamma$};

\end{scope}


\begin{scope}[xshift=14cm]

\draw (2,-1) node {\small $C$};

\draw[step=1.0,black,thin] (0.1,0) grid (3.9,3);


\foreach \y in {0.5,1.5,2.5}{
    \centerarc[blue, thick](0.5,\y)(-135:135:0.35);
}

\foreach \y in {0.5,1.5,2.5}{
    \centerarc[blue, thick](3.5,\y)(45:315:0.35);
}

\foreach \y in {0,1,2}{
    \centerarc[red, thick](0,\y)(0:45:0.2);
}

\foreach \y in {1,2,3}{
    \centerarc[red, thick](0,\y)(-45:0:0.2);
}

\foreach \y in {0,1,2}{
    \centerarc[red, thick](4,\y)(135:180:0.2);
}

\foreach \y in {1,2,3}{
    \centerarc[red, thick](4,\y)(180:225:0.2);
}


\draw[thick, postaction=torus horizontal] (0,0) -- (4,0);
\draw[thick, postaction=torus horizontal] (0,3) -- (4,3);
\draw (0,1) -- (4,1);
\draw (0,2) -- (4,2);
\end{scope}


\begin{scope}[xshift=21cm]

    \draw (2,-1) node {\small $B'$};

\plaquettefive[4,3](4,0,0,0);
\fill[white] (0,3) rectangle (4,3.5);
\fill[white] (0,0) rectangle (4,-0.5);
\fill[white] (0.5,0) rectangle (3.5,3);
\draw[step=1.0,black,thin] (0,0) grid (4,3);
\draw[thick, postaction=torus horizontal] (0,0) -- (4,0);
\draw[thick, postaction=torus horizontal] (0,3) -- (4,3);

\draw[thin, black, |-|] (-0.6,0) -- (-0.6,3); 
\draw[thin, black, |-|] (4.6,0) -- (4.6,3); 

\draw (-1,1.5) node {\small $\gamma'$};
\draw (5,1.5) node {\small $\alpha'$};

\end{scope}

\end{tikzpicture}
\]

\noindent The local structure of $\widetilde{S}_{\partial \RR}$ and $\mathcal{G}_{\partial \RR}$ allows us to write them as a tensor product of operators acting on the defined boundary segments, and define
\begin{align*}
\sigma_{\alpha} := \sqrt{\kappa_B \kappa_{B'}} \, \mathcal{G}_{\alpha}^{\dagger} \widetilde{\mathcal{S}}_{\alpha} \mathcal{G}_{\alpha}   \quad & , \quad \sigma_{\gamma} :=  \sqrt{\kappa_B \kappa_{B'}} \,   \mathcal{G}_{\alpha}^{\dagger} \widetilde{\mathcal{S}}_{\alpha} \mathcal{G}_{\alpha} \\[2mm]
\sigma_{\alpha'} :=  \frac{\kappa_{BCB'}}{\sqrt{\kappa_B \kappa_{B'}}}    \, \mathcal{G}_{\alpha'}^{\dagger} \widetilde{\mathcal{S}}_{\alpha'} \mathcal{G}_{\alpha'}   \quad & , \quad \sigma_{\gamma'} := \frac{\kappa_{B'AB}}{\sqrt{\kappa_B \kappa_{B'}}}  \, \mathcal{G}_{\gamma'}^{\dagger} \widetilde{\mathcal{S}}_{\gamma'} \mathcal{G}_{\gamma'}
\end{align*}
Using that $\kappa_{ABCB'} \kappa_{B} \kappa_{B'} = \kappa_{B'AB} \, \kappa_{BCB'}$, we can easily verify 
\begin{align*}
\sigma_{\partial B} = \sigma_{\alpha} \otimes \sigma_{\gamma} \quad & , \quad \sigma_{\partial B'} = \sigma_{\alpha'} \otimes \sigma_{\gamma'}  \\[2mm]
\sigma_{\partial B'AB} = \sigma_{\gamma'} \otimes \sigma_{\gamma}  \quad & , \quad \sigma_{\partial BCB'} = \sigma_{\alpha} \otimes \sigma_{\alpha'} \,.
\end{align*}
By Theorem \ref{Coro:approx-factorization-bound}, we have  for any cylinder $\mathcal{R} \in \{ B'AB, BCB', B, B' \}$
\[ 
\| \rho_{\partial \RR}^{1/2} \sigma_{\partial \RR}^{-1} \rho_{\partial \RR}^{1/2} - J_{\partial \RR} \| \,\, , \,\, \| \rho_{\partial \RR}^{-1/2} \sigma_{\partial \RR} \rho_{\partial \RR}^{1/2} - J_{\partial \RR} \| \, \leq \, 2 \epsilon_{BB'} \,.
\]
Reasoning as in the previous corollary, we have for the region joint $BB'$ 
\begin{align*}
\| \rho_{\partial BB'}^{1/2} \sigma_{\partial BB'}^{-1} \rho_{\partial BB'}^{1/2} - J_{\partial BB'}\| \, \leq \,   6 \,  \epsilon_{BB'}\,.
\end{align*}
Applying Theorem \ref{theo:non-injective-approx-fact}, we conclude the result.
\end{proof}

\subsection{Parent Hamiltonian of the thermofield double}
\label{sec:parent-hamiltonian}

The PEPS description of $\ket*{\rho_{\beta}^{1/2}}$ is given in terms of a family of tensors whose contraction defines linear maps
\[ V_{\RR}: \mathcal{H}_{\partial \RR} \longrightarrow \mathcal{H}^2_{\RR} \]
where $\RR$ runs over all rectangular regions $\RR \in \mathcal{F}_{N}$.Recall that we denote by $P_{\RR}$ be the orthogonal projection onto $\operatorname{Im}(V_{\RR})$.
\begin{Coro}
The family of orthogonal projectors $(P_{\RR})_{\RR}$ satisfies the Martingale Condition with the decay function $\delta:(0, \infty) \longrightarrow \mathbb{R}$ given by
\[ \delta(\ell) = \min{\big\{ 1, 144 \abs{G}^2 \qty(\frac{\gamma_\beta}{1+\gamma_\beta})^{ \ell-1} \,\, \big\}} 
\quad \text{where} \quad \gamma_{\beta} := \frac{e^{\beta} - 1}{|G|}\,. 
\]
\end{Coro}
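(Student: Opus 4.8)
The plan is to derive the three conditions (i), (ii), (iii) of the Martingale Condition (Definition~\ref{Defi:MartingaleCondition}) directly from the three corollaries \ref{coro:approx-fact-rectangles}, \ref{coro:approx-fact-cylinder} and \ref{coro:approx-fact-torus}, which already estimate $\|P_{AB}P_{BC} - P_{ABC}\|$ (respectively for rectangles, cylinders and the torus) in terms of the quantity $\epsilon_{B}$ (or $\epsilon_{BB'}$) attached to the shielding region. So the proof is essentially a bookkeeping exercise: match the geometric decompositions appearing in Definition~\ref{Defi:MartingaleCondition} with those in the corollaries, track how the number of plaquettes in the middle region $B$ (or $B$ and $B'$) controls $\epsilon_{B}$, and bound the resulting prefactor by the claimed $\delta(\ell)$.

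First I would recall that in all three corollaries the key quantity is $\epsilon_{B} = 3|G|^{2}\big(\tfrac{\gamma_\beta}{1+\gamma_\beta}\big)^{(M-1)(N-1)}$ where $M$ and $N$ are the numbers of plaquettes of $B$ per row and per column. In the Martingale Condition, when a proper rectangle is split as $\RR = ABC$ along the rows with $B$ containing at least $\ell$ plaquettes along the splitting direction, the region $B$ has $N\geq \ell$ plaquettes in one direction and some number $M\geq 1$ in the other (more precisely the rectangle has at least two plaquettes per row and column, so one can always guarantee the \emph{other} dimension is $\geq 2$, hence $(M-1)(N-1)\geq \ell-1$; if only $M\geq 1$ is available the bound degrades but $\delta(\ell)$ is capped at $1$ anyway). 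Thus $\epsilon_{B}\leq 3|G|^{2}\big(\tfrac{\gamma_\beta}{1+\gamma_\beta}\big)^{\ell-1}$, and Corollary~\ref{coro:approx-fact-rectangles} gives $\|P_{AB}P_{BC}-P_{ABC}\|\leq 16\epsilon_{B}\leq 48|G|^{2}\big(\tfrac{\gamma_\beta}{1+\gamma_\beta}\big)^{\ell-1}$, which is $\leq 144|G|^{2}\big(\tfrac{\gamma_\beta}{1+\gamma_\beta}\big)^{\ell-1}$ — this handles (i). For (ii) and (iii), the split is $\RR = ABCB'$ with $B$ and $B'$ each having at least $\ell$ plaquettes along the wrapping (resp.\ splitting) direction, and Corollaries~\ref{coro:approx-fact-cylinder} and \ref{coro:approx-fact-torus} give $\|P_{B'AB}P_{BCB'}-P_{ABCB'}\|\leq 48\epsilon_{BB'}\leq 144|G|^{2}\big(\tfrac{\gamma_\beta}{1+\gamma_\beta}\big)^{\ell-1}$ under the condition $\epsilon_{BB'}\leq 1/2$. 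When that condition fails, $\delta(\ell)=1$ trivially bounds any norm difference of projections (which is always in $[0,1]$ by Lemma~\ref{Lemm:MartingaleProjector} and the frustration-free property of $(P_\RR)_\RR$). Hence in all cases $\|\Pi_{ABC\cdots}-\Pi_{\cdots}\Pi_{\cdots}\|\leq \delta(\ell)$ with the stated $\delta$.

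I would also need to check the technical preamble of the Martingale Condition: that $(P_\RR)_\RR$ is a frustration-free family of projectors, i.e.\ $P_X P_Y = P_Y P_X = P_Y$ whenever $X\subset Y$. This was already observed in Section~\ref{subsec:parentHam} for the projectors onto $\Im(V_X)$ of a PEPS, since $\Im(V_Y)\subset \Im(V_X)\otimes\mathcal{H}_{Y\setminus X}$; so this costs nothing. I would also note that $\delta$ is non-increasing, takes values in $[0,1]$, and tends to $0$ as $\ell\to\infty$ (because $\gamma_\beta/(1+\gamma_\beta)<1$ strictly for every finite $\beta$), so it is a legitimate decay function, and moreover it decays exponentially, hence in particular polynomially, so that the infinite product in Theorem~\ref{Theo:spectralGapOpenBoundary} converges.

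The main (mild) obstacle is purely combinatorial: one must be careful that the region $B$ in the Martingale decomposition, which is only guaranteed to have $\ell$ plaquettes \emph{along the splitting direction}, also has at least two plaquettes in the transverse direction so that the exponent $(M-1)(N-1)$ in $\epsilon_B$ is at least $\ell-1$ rather than $0$. Since the Martingale Condition only concerns rectangles, cylinders and tori with at least two plaquettes per row and per column, and the middle strip $B$ inherits the full transverse dimension of $\RR$, this is automatic; I would spell out this matching of dimensions explicitly for each of the three cases, and observe that the harmless slack between the constants $16$, $48$ and $144$ absorbs any off-by-one issues in the exponent. Everything else is a direct substitution into the already-proven corollaries.
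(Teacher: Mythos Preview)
Your proposal is correct and follows essentially the same route as the paper's proof: in both cases one reduces each of the three geometric conditions of Definition~\ref{Defi:MartingaleCondition} to the corresponding Corollary~\ref{coro:approx-fact-rectangles}, \ref{coro:approx-fact-cylinder}, or \ref{coro:approx-fact-torus}, uses that the transverse dimension of $B$ (or $B,B'$) is at least $2$ so that $(M-1)(N-1)\geq \ell-1$ and hence $\epsilon_{B}\leq 3|G|^{2}\big(\tfrac{\gamma_\beta}{1+\gamma_\beta}\big)^{\ell-1}=\delta(\ell)/48$, and handles the case where the corollary's hypothesis is not met via the trivial bound $\|P_{X}P_{Y}-P_{X\cup Y}\|\leq 1$ coming from Lemma~\ref{Lemm:MartingaleProjector}. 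One small wording issue: your phrase ``when that condition fails, $\delta(\ell)=1$'' is not literally a valid implication (you only know $\epsilon_{BB'}\leq \delta(\ell)/48$, not the reverse inequality); the clean dichotomy is rather ``either $\delta(\ell)=1$ and there is nothing to prove, or $\delta(\ell)<1$ and then $\epsilon_{B}\leq \delta(\ell)/48<1/48<1/2$ so the corollary applies'', which is exactly how the paper argues.
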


\begin{proof}
We have to check conditions $(i)$-$(iii)$ from Definition \ref{Defi:MartingaleCondition} for the given function $\delta(\ell)$. To see $(i)$, notice first that $\| P_{ABC} - P_{AB}P_{BC}\| \leq 1$ always holds by Lemma \ref{Lemm:MartingaleProjector}. So let us assume that $B$ contains at least $\ell$ plaquettes along the splitting direction for a value $\ell$ satisfying $\delta(\ell) < 1$, which necessarily means that 
\begin{equation}\label{equa:MCaux1} 
\delta(\ell) = 144 \abs{G}^2 \qty(\frac{\gamma_\beta}{1+\gamma_\beta})^{ \ell-1} < 1\,. 
\end{equation}
Using the notation of Corollary \ref{coro:approx-fact-rectangles}, and taking into account that $B$ contains at least two plaquettes along the non-splitting direction, we deduce that
\begin{equation}\label{equa:MCaux2} 
\epsilon_{B} < 3 |G|^{2} \qty(\frac{\gamma_\beta}{1+\gamma_\beta})^{ \ell-1} = \frac{\delta(\ell)}{48} < \frac{1}{48}\,.
\end{equation}
Thus, the aforementioned corollary can be applied to obtain the desired estimate
\[ \| P_{AB}P_{BC} - P_{ABC}\| \leq 8 \epsilon_{B}(1+\epsilon_{B}) < 16 \epsilon_{B} \leq \delta(\ell)\,.  \]
The proof of conditions $(ii)$ and $(iii)$ is analogous, using respectively Corollaries \ref{coro:approx-fact-cylinder} and \ref{coro:approx-fact-torus}. For instance, to see $(ii)$, we can again assume that $B$ and $B'$ have both at least $\ell$ plaquettes along the splitting direction where $\ell$ satisfies \eqref{equa:MCaux1}. This yields, using the notation of Corollary \ref{coro:approx-fact-cylinder}, that \eqref{equa:MCaux2} holds with $\epsilon_{BB'}$ instead of $\epsilon_{B}$. And thus, applying the same corollary we get
\[ \| P_{B'AB}P_{BCB'} - P_{ABCB'}\|  < 48 \epsilon_{B} \leq \delta(\ell)\,.  \]
\end{proof}

Next, we aim at constructing a parent Hamiltonian of this PEPS following the guidelines of Section \ref{subsec:parentHam}, i.e. by verifying the conditions of Proposition~\ref{Prop:parentHamiltonianProperty}. 

\begin{Lemm}\label{rema:n-beta}
For each $\beta>0$, let us fix a natural number
\[n(\beta) \geq 4\left( 1+(1+\gamma_{\beta}) \log(288\abs{G}^2)\right)\,.\]
Then, for every $\ell \geq n(\beta)$ it holds that 
\[ \delta(\ell/4) \leq \left( 1/2\right)^{\frac{\ell}{n(\beta)}} \le 1/2\,.\] 
\end{Lemm}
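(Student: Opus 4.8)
The plan is to unwind the definition of $\delta$ and reduce the claimed inequality to an elementary estimate on the exponent $\ell/4-1$ in terms of $\ell/n(\beta)$. Recall from the preceding corollary that $\delta(\ell) = \min\{1, 144\abs{G}^2 (\gamma_\beta/(1+\gamma_\beta))^{\ell-1}\}$. First I would dispose of the trivial bound: since $\delta$ only ever takes values in $[0,1]$, the inequality $\delta(\ell/4)\le (1/2)^{\ell/n(\beta)}\le 1/2$ will follow as soon as we show $(1/2)^{\ell/n(\beta)}\le 1/2$, which is immediate from $\ell\ge n(\beta)$, and $144\abs{G}^2(\gamma_\beta/(1+\gamma_\beta))^{\lfloor\ell/4\rfloor-1}\le (1/2)^{\ell/n(\beta)}$. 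So the whole content is the second inequality, and the floor can be bounded below by $\ell/4-1$ since $\gamma_\beta/(1+\gamma_\beta)<1$, giving the slightly stronger-looking target
\[
144\,\abs{G}^2\,\Bigl(\tfrac{\gamma_\beta}{1+\gamma_\beta}\Bigr)^{\ell/4-1}\le \Bigl(\tfrac{1}{2}\Bigr)^{\ell/n(\beta)}.
\]

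Next I would take logarithms (base $2$, say, or natural logarithms — either works). Writing $q_\beta := (1+\gamma_\beta)/\gamma_\beta > 1$, the inequality to prove becomes
\[
\log_2(144\abs{G}^2) - \bigl(\tfrac{\ell}{4}-1\bigr)\log_2 q_\beta \le -\tfrac{\ell}{n(\beta)},
\]
i.e.
\[
\log_2(144\abs{G}^2) + \log_2 q_\beta \le \ell\Bigl(\tfrac{1}{4}\log_2 q_\beta - \tfrac{1}{n(\beta)}\Bigr).
\]
The key quantitative input is a lower bound on $\log q_\beta$: since $q_\beta = 1 + 1/\gamma_\beta \ge 1 + \tfrac{1}{\gamma_\beta}$ is not quite what is used, instead one uses $\log(1+x)\ge x/(1+x)$ with $x = 1/\gamma_\beta$, yielding $\log q_\beta \ge \tfrac{1/\gamma_\beta}{1+1/\gamma_\beta} = \tfrac{1}{1+\gamma_\beta}$. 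This is exactly why the factor $(1+\gamma_\beta)$ appears in the definition of $n(\beta)$. With this, $\tfrac14\log q_\beta \ge \tfrac{1}{4(1+\gamma_\beta)}$, and one checks from $n(\beta)\ge 4(1+(1+\gamma_\beta)\log(288\abs{G}^2)) \ge 8(1+\gamma_\beta)$ (using $\log(288\abs{G}^2)\ge 1$) that $\tfrac{1}{n(\beta)}\le \tfrac{1}{8(1+\gamma_\beta)}\le \tfrac12\cdot\tfrac14\log q_\beta$, so that $\tfrac14\log q_\beta - \tfrac{1}{n(\beta)}\ge \tfrac{1}{8(1+\gamma_\beta)}\ge \tfrac{1}{n(\beta)}$ after plugging in the bound for $n(\beta)$ once more. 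Hence the right-hand side is at least $\ell/n(\beta)$, and since $\ell\ge n(\beta)$ it suffices to verify that $n(\beta)\cdot\tfrac{1}{n(\beta)} = 1$ dominates the left-hand side — but the left-hand side is $\log(144\abs{G}^2) + \log q_\beta$, which is larger than $1$, so one must instead keep $\ell$ on the RHS and bound directly $\ell\cdot\tfrac{1}{8(1+\gamma_\beta)} \ge \ell/n(\beta)\cdot 2 \ge \log_2(288\abs{G}^2)\cdot\tfrac{\ell}{n(\beta)}\cdot\ldots$; the cleanest route is to absorb the constant: the bound $n(\beta)\ge 4(1+(1+\gamma_\beta)\log(288\abs{G}^2))$ is engineered precisely so that $\tfrac{n(\beta)}{4}\log q_\beta - 1 \ge \log(288\abs{G}^2) - 1 \ge \log(144\abs{G}^2)$, which after rearranging is the $\ell = n(\beta)$ case, and the general $\ell\ge n(\beta)$ case follows because the slope $\tfrac14\log q_\beta - \tfrac{1}{n(\beta)}$ is positive (bounded below by $\tfrac{1}{n(\beta)}$ as above), so enlarging $\ell$ only helps.

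Concretely, the steps in order are: (1) reduce to the non-trivial branch of the $\min$ and replace $\lfloor\ell/4\rfloor$ by $\ell/4-1$; (2) take logarithms and rearrange to an affine inequality in $\ell$; (3) prove the elementary bound $\log\bigl(\tfrac{1+\gamma_\beta}{\gamma_\beta}\bigr)\ge \tfrac{1}{1+\gamma_\beta}$ via $\log(1+x)\ge \tfrac{x}{1+x}$; (4) use the defining inequality for $n(\beta)$, together with $\log(288\abs{G}^2)\ge 1$, to check the affine inequality at $\ell=n(\beta)$ and to check that the slope in $\ell$ is at least $1/n(\beta)$, so the inequality persists for all $\ell\ge n(\beta)$; (5) conclude $\delta(\ell/4)\le (1/2)^{\ell/n(\beta)}$, and note $(1/2)^{\ell/n(\beta)}\le 1/2$ since $\ell/n(\beta)\ge 1$. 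I expect the only real friction to be bookkeeping of the constants in step (4) — making sure the chosen lower bound for $n(\beta)$ simultaneously controls the additive constant $\log(144\abs{G}^2)$ and forces the slope to be positive of size $\gtrsim 1/n(\beta)$; there is no conceptual obstacle, just the need to chase the factor $288 = 2\cdot 144$ (the extra factor $2$ buys the slope margin) through the logarithms carefully.
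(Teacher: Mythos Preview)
Your plan is correct and uses essentially the same analytic ingredient as the paper, namely the bound $\log\!\bigl(\tfrac{1+\gamma_\beta}{\gamma_\beta}\bigr)\ge \tfrac{1}{1+\gamma_\beta}$ (which the paper phrases as $(1-1/x)^x\le 1/e$). The paper avoids your affine-in-$\ell$ bookkeeping by first showing $\bigl(\tfrac{\gamma_\beta}{1+\gamma_\beta}\bigr)^{n(\beta)/4-1}\le \tfrac{1}{288|G|^2}$ directly from the definition of $n(\beta)$, and then invoking the elementary exponent inequality $\ell/4-1\ge (n(\beta)/4-1)\cdot \ell/n(\beta)$ (valid exactly for $\ell\ge n(\beta)$) to pass to general $\ell$; this sidesteps the log-base and constant-tracking friction you flag in step~(4), and in particular avoids your slip $\log(288|G|^2)-1\ge \log(144|G|^2)$, which fails for natural logarithms and is only an equality in base~$2$. (Minor note: the statement has $\delta(\ell/4)$, not $\delta(\lfloor\ell/4\rfloor)$, so no floor appears.)
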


\begin{proof}
Using that the map $x \mapsto (1-1/x)^x$ is upper bounded by $1/e$ for $x>1$, we can estimate from above
\begin{equation*} 
\qty( \frac{\gamma_{\beta}}{1+\gamma_{\beta}})^{\frac{n(\beta)}{4}-1} \, = \, \left( 1-\frac{1}{1+\gamma_{\beta}}\right)^{(1+\gamma_{\beta})\frac{\frac{n(\beta)}{4}-1}{1+\gamma_{\beta}}} \, \leq \, e^{-\frac{\frac{n(\beta)}{4}-1}{1+\gamma_{\beta}}} \, \leq \, \frac{1}{288 |G|^{2}} \,.  
\end{equation*}
Hence, for every $\ell \geq n(\beta)$
\begin{equation*}
 \qty(\frac{\gamma_\beta}{1+\gamma_\beta})^{ \frac{\ell}{4}-1} \leq \qty(\frac{\gamma_\beta}{1+\gamma_\beta})^{ (\frac{n(\beta)}{4}-1) \frac{\ell}{n(\beta)}}  \leq \left(\frac{1}{288 |G|^{2}}\right)^{\frac{\ell}{n(\beta)}} \leq \frac{1}{144\abs{G}^{2}} \left( \frac{1}{2}\right)^{\frac{\ell}{n(\beta)}} \,.
\end{equation*}
 The last inequality immediately yields the result.
\end{proof}
Next, consider the family of proper rectangles $\XXX_{n,N} = \mathcal{F}_{n,N}^{rect}$ having at most $n \in \mathbb{N}$ plaquettes per row and per column
\[ 
\begin{tikzpicture}[equation, scale=0.4]
\begin{scope}[xshift=30cm]
\draw[step=1.0,gray,thick] (0,0) grid (5,3);
\draw[<->] (0,-0.5) -- (5,-0.5);
\draw[<->] (-0.5,0) -- (-0.5,3);
\draw (2.5,-1) node {$a$};
\draw (-1,1.5) node {$b$};
\end{scope}
\end{tikzpicture} \hspace{2cm} 
1 \leq a,b \leq n\,, \]
and define the local Hamiltonian 
\[ H_{\EE_N} = \sum_{X \in \XXX_{n,N}}{P_{X}^{\perp}}  \,, \]
where $P_{X}^{\perp} := \mathbbm{1} - P_{X}$, that is, the orthogonal projection onto $\operatorname{Im}(V_{X})^{\perp}$. Note that the range of interaction of the Hamiltonian $H_{\EE_N}$ depends on the parameter $n$. 


If we choose $n=n(\beta)$ from Lemma \ref{rema:n-beta}, we obtain as a consequence of Proposition~\ref{Prop:parentHamiltonianProperty}, that $H_{\mathcal{E}_N}$ is a parent Hamiltonian for $\ket*{\rho_\beta^{1/2}}$.
\begin{Coro}[Uniqueness of the ground state]
For every rectangular region $\RR \in \mathcal{F}_{N}$ containing at least $n(\beta)$ plaquettes per row and per column we have that the associated Hamiltonian  
\[ H_{\RR} = \sum_{X \in \XXX , X \subset \RR} P_{X}^{\perp} \quad \quad \mbox{satisfies} \quad \quad \ker(H_{\RR}) = \operatorname{Im}(P_{\RR})\,. \]
In other words, $P_{\RR}$ is the orthogonal projector onto the ground state space of $H_{\RR}$. In particular, $\ket*{\rho_{\beta}^{1/2}}$ is the unique ground state of $H_{\EE_{N}}$.
\end{Coro}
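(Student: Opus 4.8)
The plan is to deduce the Corollary directly from Proposition~\ref{Prop:parentHamiltonianProperty}, after checking that its hypotheses are met once we set $r = n(\beta)$. First I would recall what Proposition~\ref{Prop:parentHamiltonianProperty} requires: it takes a frustration-free family of projectors $(P_X)_X$ satisfying the Martingale Condition with a decay function $\delta(\ell)$, and an integer $r \geq 3$ with $N \geq 2(1+r)$ such that $\delta(\ell) < 1/2$ for every $\ell \geq r-2$; under these assumptions, for every rectangular region $\RR \in \mathcal{F}_N$ containing at least $r$ plaquettes per row and per column, the Hamiltonian $H_\RR = \sum_{X \in \XXX, X \subset \RR} P_X^\perp$ built from $\XXX = \mathcal{F}_{r,N}^{rect}$ satisfies $\ker(H_\RR) = \operatorname{Im}(P_\RR)$. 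So the task reduces to verifying these hypotheses for the thermofield double PEPS.

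The key inputs are already in place. The family $(P_\RR)_\RR$ of orthogonal projections onto $\operatorname{Im}(V_\RR)$ is frustration-free (this was observed in Section~\ref{subsec:parentHam}: if $X \subset Y$ then $\operatorname{Im}(V_Y) \subset \operatorname{Im}(V_X) \otimes \mathcal{H}_{Y \setminus X}$, hence $P_X P_Y = P_Y P_X = P_Y$), and the Corollary immediately preceding Lemma~\ref{rema:n-beta} establishes that $(P_\RR)_\RR$ satisfies the Martingale Condition with decay function $\delta(\ell) = \min\{1, 144|G|^2 (\gamma_\beta/(1+\gamma_\beta))^{\ell-1}\}$. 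The remaining point is the threshold condition $\delta(\ell) < 1/2$ for all $\ell \geq r-2$ with $r = n(\beta)$, i.e.\ for all $\ell \geq n(\beta) - 2$. Here I would invoke Lemma~\ref{rema:n-beta}, which gives $\delta(\ell/4) \leq (1/2)^{\ell/n(\beta)} \leq 1/2$ for every $\ell \geq n(\beta)$; since $\delta$ is non-increasing and $\ell \geq n(\beta) - 2$ implies (for $n(\beta)$ large, which it is by the defining inequality) that $\ell \geq n(\beta)/4$ in the relevant range, one concludes $\delta(\ell) \leq \delta(n(\beta)/4) \leq 1/2$; a mild tightening of the bound in Lemma~\ref{rema:n-beta} (replacing $1/2$ by a value strictly below $1/2$, or simply noting the inequalities are strict for $\ell$ slightly above $n(\beta) - 2$) yields the strict inequality $\delta(\ell) < 1/2$. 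One also needs $N \geq 2(1+n(\beta))$, which is the implicit standing assumption that $N$ is large compared to the (now $\beta$-dependent) interaction range; I would state this explicitly as a hypothesis on $N$.

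With the hypotheses verified, Proposition~\ref{Prop:parentHamiltonianProperty} applies verbatim with $r = n(\beta)$ and gives $\ker(H_\RR) = \operatorname{Im}(P_\RR)$ for every $\RR \in \mathcal{F}_N$ with at least $n(\beta)$ plaquettes per row and per column, so that $P_\RR$ is the ground state projector of $H_\RR$. For the final assertion about $\ket*{\rho_\beta^{1/2}}$, I would take $\RR = \EE_N$ the whole torus: by construction $\ket*{\rho_\beta^{1/2}}$ lies in $\operatorname{Im}(V_{\EE_N}) = \ker(H_{\EE_N})$, and the Leading Term analysis (Theorem~\ref{Theo:leadingTerm} and Theorem~\ref{Coro:approx-factorization-bound}) shows the boundary state $\rho_{\partial \EE_N}$ has a one-dimensional-type support structure — more precisely, one should check that $\operatorname{rank}(\rho_{\partial \EE_N}) = \dim \operatorname{Im}(V_{\EE_N}) = 1$, which follows because on the torus the sets $a, c$ in the factorization are empty and $\widetilde{\mathcal{S}}_{\partial \EE_N} = \widetilde\Delta^{\otimes \EE} \otimes \widetilde\phi_{\widehat 1}$ together with the compatibility constraints force a single surviving configuration. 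Hence $\ker(H_{\EE_N})$ is one-dimensional and $\ket*{\rho_\beta^{1/2}}$ is its unique element (up to scalar).

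The main obstacle I anticipate is the last step: verifying that $\operatorname{Im}(V_{\EE_N})$ is genuinely one-dimensional rather than merely nonzero. Unlike the abelian case, where this is transparent, here one must trace through the plaquette and vertex compatibility conditions \eqref{eq:compatibility-condition}, \eqref{equa:compatibilityCondition2} on the torus and confirm that the only globally consistent $(\widehat g, \widehat a)$ contributing to the support is the one selected by the projections $\widetilde\Delta$ and $\widetilde\phi_{\widehat 1}$ — equivalently, that $\widetilde{\mathcal{S}}_{\partial \EE_N}$ composed with the global constraints has rank one. This is a finite combinatorial/representation-theoretic check, but it is where the non-abelian structure could in principle produce extra ground states, so it deserves care; everything else is a clean application of the machinery already developed.
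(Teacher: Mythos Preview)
Your main approach is correct and is exactly what the paper does: the Corollary is an immediate application of Proposition~\ref{Prop:parentHamiltonianProperty} with $r = n(\beta)$, once the Martingale Condition (the preceding Corollary) and the threshold estimate of Lemma~\ref{rema:n-beta} are in hand. Your verification of the hypotheses is fine, if a little informal on the strictness of $\delta(\ell)<1/2$.

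The final step, however, is overcomplicated and slightly confused. On the full torus $\EE_N$ there are \emph{no} boundary virtual indices at all: every edge is internal, so $V_{\EE_N}$ is a map $\mathbb{C}\to\mathcal{H}_{\EE_N}^2$, i.e.\ a single vector given by the full tensor contraction, and that vector is (up to normalization) precisely $\ket*{\rho_\beta^{1/2}}$ by construction of the PEPO for $e^{-\frac{\beta}{2}H_\Lambda^{\text{syst}}}$. Since $\rho_\beta$ is full rank this vector is nonzero, so $\operatorname{Im}(V_{\EE_N})$ is automatically one-dimensional. There is no boundary state $\rho_{\partial\EE_N}$ to analyze, no combinatorial or representation-theoretic check on global compatibility of $(\widehat g,\widehat a)$ to perform, and no risk of ``extra ground states from the non-abelian structure'' at this stage. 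The anticipated obstacle you describe simply does not arise; uniqueness is immediate once you recognize that $\partial\EE_N=\emptyset$.
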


Let us now turn to the spectral gap properties of this Hamiltonian. Combining Theorems~\ref{Theo:spectralGapOpenBoundary} and~\ref{Theo:RecursiveGapEstimate}, we can estimate the spectral gap of the parent Hamiltonian uniformly in the system size.

\begin{Coro}[Spectral gap]\label{coro:spectral-gap-parent-hamiltonian}
There is a positive constant $\mathcal{K}>0$ independent of $N$ and $\beta$ such that
\[ \gap(\mathcal{F}_{N,N}^{torus}) \, \geq \, \mathcal{K}\,. \]
\end{Coro}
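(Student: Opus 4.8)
The plan is to chain together the three gap-comparison results already proved in the excerpt, feeding each one into the next, with the uniform bound on the decay function $\delta(\ell)$ playing the central role. First I would invoke the Martingale Condition for the family $(P_{\RR})_{\RR}$ with the explicit decay function
\[
\delta(\ell) = \min\Bigl\{ 1,\, 144\abs{G}^2 \Bigl(\tfrac{\gamma_\beta}{1+\gamma_\beta}\Bigr)^{\ell-1} \Bigr\},
\]
established in the Corollary just before Lemma~\ref{rema:n-beta}, together with the integer $n(\beta) \ge 4\bigl(1 + (1+\gamma_\beta)\log(288\abs{G}^2)\bigr)$ from that lemma, so that $\delta(\ell/4) \le (1/2)^{\ell/n(\beta)} \le 1/2$ for every $\ell \ge n(\beta)$. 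For $N$ large enough (so that $\lfloor N/2-1\rfloor$ is at least the relevant threshold), this makes $\delta(\lfloor N/2-1\rfloor) < 1/2$, which is exactly the hypothesis needed in Theorem~\ref{Theo:fromPeriodictoOpenBoundary}. That theorem then gives
\[
\gap(\mathcal{F}_N^{torus}) \,\ge\, \tfrac{1}{16}\,\gap(\mathcal{F}_{N,N}^{rect}).
\]

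Next I would apply Theorem~\ref{Theo:spectralGapOpenBoundary} with the choice $r = n(\beta)$ (assuming $N \ge n(\beta) \ge 16$; one checks $n(\beta) \ge 16$ holds for all $\beta>0$, or enlarges $n(\beta)$ to guarantee it). This yields
\[
\gap(\mathcal{F}_{N,N}^{rect}) \,\ge\, \Bigl(\,\prod_{k=0}^{\infty} \frac{1-\delta_k}{1+\frac{1}{s_k}}\Bigr)\,\gap(\mathcal{F}_{N,n(\beta)}^{rect}),
\]
where $\delta_k = \delta\bigl(\lfloor \tfrac{n(\beta)}{4}(\sqrt{9/8})^k\rfloor\bigr)$ and $s_k = \lfloor(\sqrt{4/3})^k\rfloor$. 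The key point is that the infinite product converges to a strictly positive value bounded below \emph{independently of $N$ and $\beta$}: using the estimate from Lemma~\ref{rema:n-beta}, $\delta_k \le \delta\bigl(\tfrac{n(\beta)}{4}(\sqrt{9/8})^k\bigr) \le (1/2)^{(\sqrt{9/8})^k}$ (after absorbing the floor), which decays doubly-exponentially in $k$, so $\sum_k \delta_k$ is bounded by an absolute constant; likewise $\sum_k 1/s_k$ converges absolutely since $s_k$ grows exponentially. Hence $\prod_k \frac{1-\delta_k}{1+1/s_k} \ge c_0$ for some universal $c_0 > 0$.

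The final ingredient is a uniform lower bound on $\gap(\mathcal{F}_{N,n(\beta)}^{rect})$ — the spectral gap over the \emph{bounded-size} family of proper rectangles with at most $n(\beta)$ plaquettes per side. Here I would argue that the local Hamiltonian $H_X$ for $X \in \XXX_{n(\beta),N}$ is, after the gauge transformation $\mathcal{G}$, unitarily/similarity-equivalent (on the relevant support) to a fixed model depending only on $G$: the parent Hamiltonian terms $P_X^\perp$ are built from projections onto $\operatorname{Im}(V_X)$, and by the Approximate Factorization results (Theorem~\ref{Coro:approx-factorization-bound}) these supports have the explicit tensor-product structure $\widetilde{S}_{\partial X}$, so $H_X$ on a rectangle of fixed dimensions $a\times b$ with $a,b \le n(\beta)$ is a finite-dimensional operator whose spectrum depends on $\beta$ only through the weights $\mathcal{G}$. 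One then shows the gap of each such $H_X$ is bounded below by $\hat{c}(G)\,e^{-C\beta}$ (or even a $\beta$-independent constant for the parent Hamiltonian of $\ket*{\rho_\beta^{1/2}}$, since the ground-state projection $P_X$ is $\beta$-independent by Theorem~\ref{Coro:approx-factorization-bound}(i)); taking the infimum over the finitely many rectangle shapes and over $N$ gives the uniform bound. Multiplying the three factors, $\gap(\mathcal{F}_N^{torus}) \ge \tfrac{1}{16}\,c_0\,\gap(\mathcal{F}_{N,n(\beta)}^{rect}) \ge \mathcal{K}$ for a constant $\mathcal{K}>0$ independent of $N$ (and $\beta$, by the above discussion).

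I expect the main obstacle to be the last step — proving $\gap(\mathcal{F}_{N,n(\beta)}^{rect})$ is bounded below uniformly in $N$. Although each individual $H_X$ acts on a bounded-dimensional space, the family $\mathcal{F}_{N,n(\beta)}^{rect}$ contains, for each fixed shape, many translated copies, and one must be careful that the gap of $H_X$ genuinely depends only on the shape and not on the position (which follows from translation invariance of the construction) and that the number of distinct shapes is finite (bounded by $n(\beta)^2$). The cleanest route is to observe that $\ker H_X = \operatorname{Im}(P_X) = \operatorname{Im}(V_X)$ is one-dimensional (spanned by $\ket*{\rho_\beta^{1/2}}$ restricted to $X$, by the Uniqueness Corollary), so $H_X$ has a non-trivial gap automatically; quantifying it uniformly then reduces to a compactness/continuity argument on the finitely many shape-dependent operators, which — while routine in principle — requires tracking the $\beta$-dependence through the weights carefully to land the stated $e^{-C\beta}$ scaling in the main theorem.
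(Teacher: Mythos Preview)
Your first three steps---invoking the Martingale Condition with the explicit decay function, applying Theorems~\ref{Theo:fromPeriodictoOpenBoundary} and~\ref{Theo:spectralGapOpenBoundary} with $r=n(\beta)$, and bounding the infinite product uniformly in $N$ and $\beta$ via the estimate $\delta_k \le (1/2)^{(\sqrt{9/8})^k}$---are exactly what the paper does.

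The genuine gap is in your last step, where you overcomplicate matters and also make an error. The paper's argument for $\gap(\mathcal{F}_{N,n(\beta)}^{rect}) \ge 1$ is a one-liner you have missed: by definition of the parent Hamiltonian, the family of interaction terms is $\XXX_{n(\beta),N} = \mathcal{F}_{N,n(\beta)}^{rect}$ itself. Hence for every $X \in \mathcal{F}_{N,n(\beta)}^{rect}$ the projector $P_X^\perp$ is literally one of the summands in $H_X = \sum_{Y \in \XXX,\, Y \subset X} P_Y^\perp$, so $H_X \ge P_X^\perp$. Since $P_X$ is the ground-space projector of $H_X$ (by the Uniqueness Corollary, or directly since all $P_Y^\perp$ annihilate $\operatorname{Im}(V_X)$), this gives $\gap(H_X) \ge 1$ immediately and $\beta$-independently.

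Your proposed route through gauge transformations, compactness, and tracking $\beta$-dependence is unnecessary and also rests on a false claim: $\ker H_X = \operatorname{Im}(V_X)$ is \emph{not} one-dimensional for a proper sub-rectangle $X$---it has dimension equal to the rank of $V_X$, which grows with the boundary of $X$. The Uniqueness Corollary identifies $\ker H_\RR$ with $\operatorname{Im}(P_\RR)$ but says nothing about it being one-dimensional except on the full torus. Moreover, the Corollary you are proving asserts $\mathcal{K}$ is independent of $\beta$, so a bound of the form $\hat c(G)\,e^{-C\beta}$ on the base gap would not suffice; the $e^{-C\beta}$ factor in the main theorem enters only later, in the comparison between the Davies generator and the parent Hamiltonian (Proposition~\ref{Prop:davies-to-parent-hamiltonian}), not here.
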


\begin{proof}
Let us denote  $\delta_{k}:= \delta(\lfloor \tfrac{n(\beta)}{4}(\sqrt{9/8}\,)^{k}\rfloor\,)$ and $s_k:=\lfloor (\sqrt{4/3}\,)^{k} \rfloor$ for each integer $k \geq 0$. Observe that the choice of $n(\beta)$ in Lemma \ref{rema:n-beta} and the fact that $\log{(288)} > 5$ ensure that $N \geq n(\beta) >16$, and that for each $k \geq 0$
\begin{equation} \label{eq:delta-k-bound}
\delta_{k} \leq (1/2)^{(\sqrt{9/8})^{k}} < 1/2\,.
\end{equation}
We can apply Theorems \ref{Theo:fromPeriodictoOpenBoundary} and \ref{Theo:spectralGapOpenBoundary} to estimate from below
\[ \gap(\mathcal{F}_{N,N}^{torus}) \geq \frac{1}{16}\gap(\mathcal{F}_{N,N}^{rect}) \, \geq \, \frac{1}{16} \left[\, \prod_{k=0}^{\infty} \frac{1-\delta_{k}}{1+\frac{1}{s_{k}}}\right] \gap(\mathcal{F}^{rect}_{N,n})  \,. \]
Observe that for regions $X \in \mathcal{F}^{rect}_{N,n}$ we have that $\gap(H_{X}) \geq 1$ since $H_{X} \geq P_{X}^{\perp}$, as $P_{X}^{\perp}$ itself is a local interaction, and $P_{X}$ is the projector onto the ground space of $H_{X}$ by the previous corollary. Therefore, we can lower bound $\gap(\mathcal{F}^{rect}_{N,n}) \geq 1$, and so
\[ \gap(\mathcal{F}_{N,N}^{torus}) \geq \frac{1}{16} \left[\, \prod_{k=0}^{\infty} \frac{1-\delta_{k}}{1+\frac{1}{s_{k}}}\right]\,. \]
The infinite product in the r.h.s. is positive and can be bounded independently of $N$ and $\beta$, since
\[ 
\prod_{k=0}^{\infty} \frac{1-\delta_{k}}{1+\frac{1}{s_{k}}} \geq \exp\left[ - 2\sum_{k}\delta_{k} -\sum_{k} \frac{1}{s_{k}} \right] .
\]
Both of the two series are summable and their value is upper bounded by a constant independent of $N$ and $\beta$, since $s_k$ does not depend on either and $\delta_k$ can be estimated as in \eqref{eq:delta-k-bound}.
\end{proof}

\section{Davies generators for Quantum Double Models}
\label{sec:davies-generators}

\subsection{Davies generators}
We will now recall the construction of the generator of a semigroup of quantum channels which describes a weak-coupling limit of the joint evolution of the system with a local thermal bath, known as the Davies generator \cite{Davies}.
This construction applies to any commuting local Hamiltonian, but for simplicity of notation we will only consider the same setup of the previous sections, i.e. the qudits $\mathbb{C}^{d}$ live on the edges, and not the vertices, of a lattice $\Lambda = (\mathcal V, \mathcal E)$. 

\begin{Defi}
\label{def:detailed-balance}
Let $\rho_{\beta}$ denote the Gibbs state associated to $H_\Lambda^{\text{syst}}$ at inverse temperature $\beta$. Then
\[ \langle A, B\rangle_{\beta} := \operatorname{Tr}(\rho_{\beta} A^{\dagger}B) \qc A, B \in \mathcal{B}_{\Lambda}, \]
defines a scalar product on $\mathcal{B}_{\Lambda}$, called the \emph{Liouville} or \emph{GNS} scalar product.

An operator $\mathcal{T}:\mathcal{B}_{\Lambda} \to \mathcal{B}_{\Lambda}$ satisfies \emph{detailed balance} if it is self-adjoint with respect to the GNS scalar product.
\end{Defi}

When the system is in contact with a thermal bath at inverse temperature $\beta$, the joint Hamiltonian of the system+bath is given by 
\begin{equation}
H_{\lambda} = H_\Lambda^{\text{syst}} \otimes \mathbbm{1}^{\text{bath}} + \mathbbm{1}^{\text{\text{syst}}} \otimes H^{\text{bath}} + \lambda H_I
\end{equation}
where $H_\Lambda^{\text{syst}}$ is the Hamiltonian of the system,  $H^{\text{bath}}$ is the Hamiltonian of the bath, $H_I$ is the coupling term between system and bath, and $\lambda \ge 0$ is the coupling strength.
We will assume that the coupling interaction is \emph{local}, in the sense that
\begin{equation}
    H_I = \sum_{e\in \EE} \sum_{\alpha \in \mathcal{A}} S_{e,\alpha} \otimes B_{e,\alpha},
\end{equation}
where for each edge $e$, $\{S_{e,\alpha}\}_{\alpha \in \mathcal{A}}$ is a finite family of self-adjoint operators on $\mathcal{H}_e$, while $B_{e,\alpha}$ are self-adjoint operators on the bath. 

We will assume that the bath satisfies the appropriate conditions that guarantee (see \cite{Davies}), in the weak-coupling limit, that the reduced dynamics  of the system in the Heisenberg picture is described by a Quantum Markov semigroup $T_{t} = e^{t \mathcal G}:\mathcal{B}_\Lambda \longrightarrow \mathcal{B}_\Lambda$ whose generator takes the following form:
\begin{equation}\label{eq:davies-generator}
    \mathcal G = i\delta_{H_\Lambda^{\text{syst}}} + \mathcal{L},
\end{equation}
where $\delta_{H_\Lambda^{\text{syst}}}(Q) = \comm{H_\Lambda^{\text{syst}}}{Q}$ is a derivation (generating a unitary evolution), while 
the dissipative term $\mathcal L$ has the specific form
\begin{align}\label{eq:davies-generator-dissipation}
\mathcal L &= \sum_{e\in \mathcal{E}} \mathcal L_e  \quad , \quad  \mathcal L_e = \, \sum_{\alpha, \omega} \widehat{g}_{e,\alpha}(\omega) \, \mathcal{D}_{e,\alpha,\omega},\\ \notag \mathcal{D}_{e,\alpha,\omega}(Q) &= \frac{1}{2}\qty( S_{e,\alpha}^{\dagger}(\omega) [Q, S_{e,\alpha}(\omega)] +  [S_{e,\alpha}^{\dagger}(\omega), Q] \, S_{e,\alpha}(\omega) ).
\end{align}

The variable $\omega$ runs over the finite set of Bohr frequencies of $H_\Lambda^{\text{syst}}$ (the differences between energy levels), while  $\widehat{g}_{e,\alpha}(\omega)$ are positive transition rates which depend on the autocorrelation function of the bath. In particular, if the bath is assumed to be at thermal equilibrium at inverse temperature $\beta$, then they satisfy $\widehat{g}_{e,\alpha}(-\omega) = e^{- \beta \omega} \, \widehat{g}_{e,\alpha}(\omega)$.
The jump operators  $S_{e,\alpha}(\omega)$ are the Fourier components of $S_{e, \alpha}$ evolving under $H_\Lambda^{\text{syst}}$, namely
\[ e^{itH_\Lambda^{\text{syst}}} S_{e,\alpha} e^{-itH_\Lambda^{\text{syst}}} = \sum_{\omega} S_{e,\alpha}(\omega) \, e^{-i \omega t}\,, \,\, t \in \mathbb{R}\,. \]
From this definition, it follows that $S_{\alpha}^{\dagger}(\omega) = S_{\alpha}(-\omega)$.

We can then rewrite the sum in the definition of $\mathcal L_e$ only over $\omega \geq 0$:
\[ \mathcal{L}_e = \sum_{\alpha} \qty[
    \widehat{g}_{e, \alpha}(0) \, \mathcal{D}_{e,\alpha,0} + 
\sum_{\omega > 0} \qty(\, \widehat{g}_{e, \alpha}(\omega) 
\, \mathcal{D}_{e,\alpha,\omega} + \widehat{g}_{e, \alpha}(-\omega) \, \mathcal{D}_{e,\alpha,-\omega} \, )
].
\]
We now denote for each $e, \alpha$ and $\omega \geq 0$
\[ 
\mathcal{L}_{e, \alpha, \omega} = \begin{cases}
    \widehat{g}_{e, \alpha}(0) \, \mathcal{D}_{e,\alpha,0} & \text{if $\omega = 0$},\\
    \widehat{g}_{e, \alpha}(\omega) \, \mathcal{D}_{e,\alpha,\omega} + \widehat{g}_{e, \alpha}(-\omega) \, \mathcal{D}_{e,\alpha,-\omega} & \text{otherwise}.
\end{cases}
\]

The properties of the Davies generator which we will need are summarized in the following proposition.
\begin{Prop}[\cite{Davies}]\label{prop:davies-generator}
\leavevmode
\begin{enumerate}
    \item $\mathcal L_{e}$ commutes with $\delta_{H_\Lambda^{\text{syst}}}$ for each $e$.
    \item $\mathcal L_{e,\alpha,\omega}$ satisfies detailed balance for each $e, \alpha$ and $\omega$.
    \item $- \mathcal L_{e,\alpha,\omega}$ is positive semidefinite w.r.t. the GNS scalar product. In fact
    \begin{equation}\label{equa:positive1} 
-\langle A, \mathcal{L}_{e, \alpha, \omega}(A)\rangle_{\beta}  =  \widehat{g}_{e, \alpha}(\omega)    \, \norm{ [A, S_{e, \alpha}(\omega)] }^2_{\beta} + \, \widehat{g}_{e, \alpha}(-\omega)  \, \norm*{ [A, S_{e, \alpha}^{\dagger}(\omega)]}^2_{\beta} \ge 0,
\end{equation}
    \item The Gibbs state  $\rho_{\beta}$ is an invariant state for $\mathcal L_{e,\alpha,\omega}$, in the sense that 
    \[
     \operatorname{Tr}(\rho_{\beta}e^{t {\mathcal L_{e,\alpha,\omega}}}(Q)) = \operatorname{Tr}(\rho_{\beta} Q) \quad \text{ for every  $Q$ and every $t \ge 0$.}
    \]
    \end{enumerate}
    \end{Prop}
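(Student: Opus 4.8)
The statement to prove is Proposition~\ref{prop:davies-generator}, which collects the standard properties of the Davies generator. Since the excerpt explicitly attributes this to \cite{Davies}, the natural plan is to recall the key computations rather than reprove the weak-coupling limit from scratch. I will treat each of the four items in turn, since they are essentially independent once the explicit form of $\mathcal{L}_{e,\alpha,\omega}$ is at hand.

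For item (1), the plan is to use the defining relation $e^{itH_\Lambda^{\text{syst}}} S_{e,\alpha} e^{-itH_\Lambda^{\text{syst}}} = \sum_\omega S_{e,\alpha}(\omega) e^{-i\omega t}$, which shows that each Fourier component satisfies $\comm{H_\Lambda^{\text{syst}}}{S_{e,\alpha}(\omega)} = \omega\, S_{e,\alpha}(\omega)$, i.e.\ $S_{e,\alpha}(\omega)$ is an eigenoperator of the modular-type derivation $\delta_{H_\Lambda^{\text{syst}}}$. Plugging this into the Lindblad form $\mathcal{D}_{e,\alpha,\omega}(Q) = \tfrac12(S^\dagger(\omega)\comm{Q}{S(\omega)} + \comm{S^\dagger(\omega)}{Q}S(\omega))$ and using the Leibniz rule for the commutator, one checks directly that $\delta_{H_\Lambda^{\text{syst}}}$ commutes with each $\mathcal{D}_{e,\alpha,\omega}$, hence with $\mathcal{L}_e$; the frequency phases cancel because $\mathcal{D}$ pairs $S(\omega)$ with $S^\dagger(\omega) = S(-\omega)$.

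For items (2) and (3), the core computation is the same: expand $\langle A, \mathcal{D}_{e,\alpha,\omega}(B)\rangle_\beta = \Tr(\rho_\beta A^\dagger \mathcal{D}_{e,\alpha,\omega}(B))$, use cyclicity of the trace together with the KMS-type identity $\rho_\beta S_{e,\alpha}(\omega) = e^{-\beta\omega} S_{e,\alpha}(\omega)\rho_\beta$ (again a consequence of $S(\omega)$ being an eigenoperator of the commutator with $H_\Lambda^{\text{syst}}$), and the condition $\widehat{g}_{e,\alpha}(-\omega) = e^{-\beta\omega}\widehat{g}_{e,\alpha}(\omega)$. This is what makes the weighted combination $\mathcal{L}_{e,\alpha,\omega} = \widehat{g}(\omega)\mathcal{D}_\omega + \widehat{g}(-\omega)\mathcal{D}_{-\omega}$ (for $\omega>0$) self-adjoint in the GNS product, rather than the individual $\mathcal{D}_\omega$. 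For (3), the same manipulation applied to $-\langle A, \mathcal{L}_{e,\alpha,\omega}(A)\rangle_\beta$ collects the cross terms into the two squared GNS norms displayed in \eqref{equa:positive1}; positivity is then immediate since $\widehat{g}_{e,\alpha}(\pm\omega)\ge 0$. I expect this KMS/detailed-balance bookkeeping --- carefully tracking the $e^{-\beta\omega}$ factors and which $S$'s carry which frequency --- to be the only real point of care, though it is routine.

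For item (4), the plan is to note that invariance of $\rho_\beta$ is the ``conservation of unit trace'' statement dual to the fact that the Heisenberg-picture generator is unital, $\mathcal{L}_{e,\alpha,\omega}(\mathbbm{1}) = 0$ (clear from the commutator form of $\mathcal{D}$), combined with detailed balance from item (2): for any $Q$, $\Tr(\rho_\beta\, \mathcal{L}_{e,\alpha,\omega}(Q)) = \langle \mathbbm{1}, \mathcal{L}_{e,\alpha,\omega}(Q)\rangle_\beta = \langle \mathcal{L}_{e,\alpha,\omega}(\mathbbm{1}), Q\rangle_\beta = 0$, and hence $\Tr(\rho_\beta\, e^{t\mathcal{L}_{e,\alpha,\omega}}(Q)) = \Tr(\rho_\beta Q)$ for all $t\ge 0$ by differentiating in $t$. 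I would close by remarking that all four properties pass from $\mathcal{L}_{e,\alpha,\omega}$ to $\mathcal{L}_e = \sum_{\alpha,\omega}\mathcal{L}_{e,\alpha,\omega}$ and to $\mathcal{L} = \sum_e \mathcal{L}_e$ by linearity, and that Assumptions~\ref{assumption-ti}--\ref{assumption-primitivity} play no role here (they enter only later, for ergodicity).
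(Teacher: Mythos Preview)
The paper does not give a proof of this proposition at all: it is stated with the citation \cite{Davies} and immediately followed by the next proposition. Your sketch is the standard verification of these facts from the explicit Lindblad form together with the eigenoperator relation $[H_\Lambda^{\text{syst}},S_{e,\alpha}(\omega)]=-\omega S_{e,\alpha}(\omega)$ and the KMS/transition-rate condition $\widehat g_{e,\alpha}(-\omega)=e^{-\beta\omega}\widehat g_{e,\alpha}(\omega)$, and it is correct; the only caution is that the identity \eqref{equa:positive1} does not come out of $-\langle A,\mathcal D_{e,\alpha,\omega}(A)\rangle_\beta$ alone but genuinely requires combining the $\omega$ and $-\omega$ pieces via the KMS relation $\rho_\beta S_{e,\alpha}(\omega)=e^{\beta\omega}S_{e,\alpha}(\omega)\rho_\beta$ to move $\rho_\beta$ through the jump operators, so be sure to display that step explicitly rather than leave it implicit.
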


Our main result will only apply under the following conditions on the thermal bath: the first will guarantee that the generator $\mathcal{L}$ is translation invariant, while the second will imply its ergodicity.

\begin{Assumption}\label{assumption-ti}
The coupling operators $S_{e,\alpha} $  and the transition rates $\widehat{g}_{e,\alpha}(\omega)$ are \emph{translation invariant}, in the sense that 
\begin{enumerate}
    \item for a fixed $\alpha \in \mathcal{A}$, $S_{e,\alpha}$ are translates of each other when we vary $e$;
    \item $\widehat{g}_{e,\alpha}(\omega)$ does not depend on $e$.
\end{enumerate}
In this case, we will write $\widehat{g}_{\alpha}(\omega)$ for the common value of $\widehat{g}_{e,\alpha}(\omega)$.
\end{Assumption}
\begin{Assumption}\label{assumption-primitivity}
The following two conditions hold:
\begin{enumerate}
\item $\widehat{g}_{e,\alpha}(\omega) > 0$ for every choice of $e\in \EE$, $\alpha$ and $\omega$ such that $S_{e,\alpha}(\omega)  \neq 0$.
\item  For every $e \in \EE$
\begin{equation}
\{ S_{e,\alpha}\}'_{\alpha \in \mathcal{A}} = \mathbb{C}\mathbbm{1}_{\mathcal{H}_e},
\end{equation}
where $\{\}'$ denotes the commutant.
\end{enumerate}
\end{Assumption}

\begin{Prop}\label{prop:davies-kernel}
Suppose that $\widehat{g}_{e,\alpha}(\omega) > 0$ for every choice of $e\in \EE$, $\alpha$ and $\omega$ such that $S_{e,\alpha}(\omega)  \neq 0$.
Then for each $X\subseteq \EE$, we have that
    \begin{equation}\label{eq:ker-davies}
    \ker\qty(\sum_{e\in X} \mathcal{L}_e )= \qty{ S_{e,\alpha}(\omega) \mid e\in X, \forall \alpha, \omega }'.
    \end{equation}
 In particular, if Assumption \ref{assumption-primitivity} holds, then
   \[ 
    \ker\qty(\sum_{e\in X} \mathcal{L}_e ) \subset \mathbbm{1}_{\mathcal{H}_X} \otimes \mathcal{B}(\mathcal{H}_{\EE\setminus X}),
   \]
   which implies that $\mathcal{L} = \sum_{e \in \EE} \mathcal{L}_e$ has a unique invariant state.
\end{Prop}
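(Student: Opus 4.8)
The plan is to read the kernel off the detailed-balance structure of the Davies generator collected in Proposition~\ref{prop:davies-generator}, and then translate the resulting commutation condition into the tensor-product form using the primitivity Assumption~\ref{assumption-primitivity}. First I would fix $X\subseteq\EE$, write $\mathcal{L}_X:=\sum_{e\in X}\mathcal{L}_e=\sum_{e\in X}\sum_{\alpha,\,\omega\ge 0}\mathcal{L}_{e,\alpha,\omega}$, and recall that each $\mathcal{L}_{e,\alpha,\omega}$ is self-adjoint and negative semidefinite for the GNS product $\langle\cdot,\cdot\rangle_\beta$ (Proposition~\ref{prop:davies-generator}(2)--(3)); hence so is $\mathcal{L}_X$. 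For a GNS-self-adjoint, negative semidefinite $T$ one has $T(A)=0\iff\langle A,T(A)\rangle_\beta=0$ (writing $-T=R^{2}$ with $R\ge 0$ GNS-self-adjoint gives $\langle A,T(A)\rangle_\beta=-\|R(A)\|_\beta^{2}$, and $R(A)=0$ forces $T(A)=-R(R(A))=0$). Since $\rho_\beta$ is faithful, $\|\cdot\|_\beta$ is a genuine norm on $\mathcal{B}_\Lambda$, so the vanishing of $\langle A,\mathcal{L}_X(A)\rangle_\beta$ --- a sum of non-positive terms --- is equivalent to $\langle A,\mathcal{L}_{e,\alpha,\omega}(A)\rangle_\beta=0$ for every $e\in X$, $\alpha$ and $\omega\ge 0$.

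Next I would substitute the explicit expression \eqref{equa:positive1}: the $(e,\alpha,\omega)$ term equals $\widehat{g}_{e,\alpha}(\omega)\|[A,S_{e,\alpha}(\omega)]\|_\beta^{2}+\widehat{g}_{e,\alpha}(-\omega)\|[A,S_{e,\alpha}^{\dagger}(\omega)]\|_\beta^{2}$ (a single summand when $\omega=0$). Using the hypothesis $\widehat{g}_{e,\alpha}(\omega)>0$ whenever $S_{e,\alpha}(\omega)\neq 0$ --- and observing that the commutator is trivially zero when $S_{e,\alpha}(\omega)=0$ --- together with $S_{e,\alpha}^{\dagger}(\omega)=S_{e,\alpha}(-\omega)$, the simultaneous vanishing of these terms is equivalent to $[A,S_{e,\alpha}(\omega)]=0$ for every $e\in X$, every $\alpha$, and every Bohr frequency $\omega$ (of either sign, $\omega=0$ included). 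This is exactly \eqref{eq:ker-davies}.

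For the second assertion I would assume $\{S_{e,\alpha}\}'_\alpha=\mathbb{C}\mathbbm{1}_{\mathcal{H}_e}$ for each $e\in X$ and prove the two inclusions of $\eqref{eq:ker-davies}=\mathbbm{1}_{\mathcal{H}_X}\otimes\mathcal{B}(\mathcal{H}_{\EE\setminus X})$ separately. The inclusion $\subseteq$ is immediate: if $A$ commutes with all $S_{e,\alpha}(\omega)$ it commutes with $S_{e,\alpha}=\sum_\omega S_{e,\alpha}(\omega)$ for each $e\in X$; since for fixed $e$ the self-adjoint family $\{S_{e,\alpha}\}_\alpha$ generates $\mathcal{B}(\mathcal{H}_e)$ by the double commutant theorem, and distinct edges act on distinct tensor factors, $\bigcap_{e\in X}\{S_{e,\alpha}\}'_\alpha=(\mathcal{B}(\mathcal{H}_X)\otimes\mathbbm{1}_{\mathcal{H}_{\EE\setminus X}})'=\mathbbm{1}_{\mathcal{H}_X}\otimes\mathcal{B}(\mathcal{H}_{\EE\setminus X})$. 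For $\supseteq$ I would use that $S_{e,\alpha}(\omega)$ is a Fourier coefficient of $e^{itH_\Lambda^{\text{syst}}}S_{e,\alpha}e^{-itH_\Lambda^{\text{syst}}}$ and that, $H_\Lambda^{\text{syst}}$ being a sum of commuting local terms, this conjugation only feels the finitely many terms whose support meets $e$; hence $S_{e,\alpha}(\omega)$ stays supported in the interaction neighbourhood of $e$ inside $X$, so any $\mathbbm{1}_{\mathcal{H}_X}\otimes B$ commutes with it.

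The first two steps are routine bookkeeping with the GNS inner product. The main obstacle is the last step: the jump operators $S_{e,\alpha}(\omega)$ are genuinely more delocalized than $S_{e,\alpha}$, so the crux is to control their support --- exploiting the commutativity of the terms of $H_\Lambda^{\text{syst}}$ to confine each $S_{e,\alpha}(\omega)$ to a bounded neighbourhood of $e$, and to check that this neighbourhood does not leave $X$ in the regime where the statement is applied (in particular $X=\EE$, which yields ergodicity $\ker\mathcal{L}=\mathbb{C}\mathbbm{1}$).
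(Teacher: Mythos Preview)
Your argument for \eqref{eq:ker-davies} is correct but takes a different route from the paper's. The paper first notes $\{S_{e,\alpha}(\omega)\}'\subseteq\ker\mathcal{L}_X$ directly from the Lindblad form, and for the converse invokes Frigerio's theorem (the kernel of a QMS generator with a faithful invariant state is a $*$-algebra): once $A,A^\dagger,A^\dagger A\in\ker\mathcal{L}_X$, the dissipation identity
\[
0=\sum_{e\in X}\bigl(\mathcal{L}_e(A^\dagger A)-A^\dagger\mathcal{L}_e(A)-\mathcal{L}_e(A^\dagger)A\bigr)=\sum_{e\in X,\alpha,\omega}\widehat{g}_{e,\alpha}(\omega)\,[A,S_{e,\alpha}(\omega)]^\dagger[A,S_{e,\alpha}(\omega)]
\]
forces each commutator to vanish. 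You bypass Frigerio entirely: GNS-self-adjointness and negative semidefiniteness give $\ker\mathcal{L}_X=\{A:\langle A,\mathcal{L}_X A\rangle_\beta=0\}$, and \eqref{equa:positive1} already displays this quadratic form as a sum of weighted commutator norms. Your argument is more elementary and self-contained; the paper's has the mild advantage of making the $*$-algebra structure of the kernel explicit.

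For the second assertion you and the paper prove the inclusion $\subseteq$ identically via $S_{e,\alpha}=\sum_\omega S_{e,\alpha}(\omega)$; the paper's proof in fact stops there. Your attempt at $\supseteq$ contains a genuine gap: the support of $S_{e,\alpha}(\omega)$ is the interaction neighbourhood of $e$ (the two stars and two plaquettes through $e$), which for $e$ near the boundary of $X$ protrudes into $\EE\setminus X$, so there is no reason a general $\mathbbm{1}_{\mathcal{H}_X}\otimes B$ should commute with every $S_{e,\alpha}(\omega)$. You correctly flag this as the crux, and your remark that the issue disappears for $X=\EE$ is on point --- there $\supseteq$ reduces to $\mathbb{C}\mathbbm{1}\subseteq\ker\mathcal{L}$, which is immediate and is all that is needed for the ergodicity Corollary.
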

\begin{proof}
The inclusion of the commutant of the jump operators in the kernel of the generator can be verified directly. The converse follows from the fact that $\rho_\beta$ is a full rank invariant state for $\sum_{e}\mathcal{L}_e$. This implies that $\ker(\sum_{e} \mathcal{L}_e)$ is a *-algebra \cite{Frigerio}. If $A \in \ker(\sum_{e} \mathcal{L}_e)$, then also $A^\dag$ and $A^\dag A$ are in $\ker(\sum_{e} \mathcal{L}_e)$. Then 
one can see that
\begin{align*}
0 &= \sum_{e\in X} \mathcal{L}_e(A^\dag A) - A^\dag \mathcal{L}_e(A) - \mathcal{L}_e(A)^\dag A 
\\ &= \sum_{e\in X,\alpha,\omega} \widehat{g}_{e,\alpha}(\omega)
\Big(
S_{e,\alpha}(\omega)^\dag A^\dag A S_{e,\alpha}(\omega)
- S_{e,\alpha}(\omega)^\dag A^\dag S_{e,\alpha}(\omega)A \\
& \phantom{= \sum_{e\in X,\alpha,\omega}} - A^\dag S_{e,\alpha}(\omega)^\dag  A S_{e,\alpha}(\omega)
+ A^\dag S_{e,\alpha}(\omega)^\dag  S_{e,\alpha}(\omega) A \Big)
\\ &= \sum_{e\in X,\alpha,\omega} \widehat{g}_{e,\alpha}(\omega) {\comm{A}{S_{e,\alpha}(\omega)}}^\dag \comm{A}{S_{e,\alpha}(\omega)}\,,
\end{align*}
which is only possible when $A$ commutes with all the jump operators $S_{e,\alpha}(\omega)$.
The second part follows from the fact that $S_{e,\alpha} = \sum_{\omega} S_{e,\alpha}(\omega)$ and therefore
\[ \qty{ S_{e,\alpha}(\omega) \mid e\in X, \forall \alpha, \omega }' \subset  \{ S_{e,\alpha} \mid e\in X, \forall \alpha \}' , \]
and, if Assumption \ref{assumption-primitivity} holds, then the r.h.s. of the last equation is simply $\mathbbm{1}_{\mathcal{H}_X} \otimes \mathcal{B}(\mathcal{H}_{\EE\setminus X})$, which reduces to $\mathbb{C} \mathbbm{1}_{\mathcal{H}_{\EE}}$ when $X = \EE$.
\end{proof}

\begin{Coro}
If Assumption~\ref{assumption-primitivity} holds, then $\mathcal{L}$ is primitive, in which case
\[ \gap(\mathcal{G}) = \gap(\mathcal{L}). \]
\end{Coro}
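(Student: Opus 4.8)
The plan is to prove the two assertions of the corollary separately. First, I would establish primitivity of $\mathcal{L}$. Recall that $\mathcal{L} = \sum_{e\in \EE} \mathcal{L}_e$, and by Proposition~\ref{prop:davies-kernel} (applied with $X = \EE$), the kernel of $\mathcal{L}$ equals $\{ S_{e,\alpha}(\omega) \mid e\in \EE, \forall \alpha, \omega \}'$. Since $S_{e,\alpha} = \sum_\omega S_{e,\alpha}(\omega)$, this commutant is contained in $\{ S_{e,\alpha} \mid e \in \EE, \forall \alpha \}'$. Under Assumption~\ref{assumption-primitivity}, for each edge $e$ we have $\{ S_{e,\alpha}\}_\alpha' = \mathbb{C}\mathbbm{1}_{\mathcal{H}_e}$; taking the intersection over all edges, the joint commutant $\{ S_{e,\alpha} \mid e \in \EE, \forall \alpha \}'$ is $\mathbb{C}\mathbbm{1}_{\mathcal{H}_\Lambda}$ (an operator commuting with $\mathbb{C}\mathbbm{1}_{\mathcal{H}_e}$ on every tensor factor must be a multiple of the identity). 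Hence $\ker \mathcal{L} = \mathbb{C}\mathbbm{1}_{\mathcal{H}_\Lambda}$. Together with the facts from Proposition~\ref{prop:davies-generator} that $-\mathcal{L}$ is positive semidefinite with respect to the GNS scalar product and that $\rho_\beta$ is an invariant full-rank state, this is exactly the statement that the semigroup $e^{t\mathcal{L}}$ is primitive, i.e.\ has a unique full-rank invariant state to which every initial state converges.

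Second, I would prove the identity $\gap(\mathcal{G}) = \gap(\mathcal{L})$. The key input is part (1) of Proposition~\ref{prop:davies-generator}: each $\mathcal{L}_e$ commutes with $\delta_{H_\Lambda^{\text{syst}}}$, hence so does $\mathcal{L} = \sum_e \mathcal{L}_e$. Therefore $\mathcal{G} = i\delta_{H_\Lambda^{\text{syst}}} + \mathcal{L}$ is a sum of two commuting operators. Working in the GNS Hilbert space $(\mathcal{B}_\Lambda, \langle\cdot,\cdot\rangle_\beta)$, the operator $i\delta_{H_\Lambda^{\text{syst}}}$ is anti-self-adjoint (since $\delta_{H_\Lambda^{\text{syst}}}$ is a derivation by a self-adjoint operator and the GNS product is built from $\rho_\beta$ which commutes with $H_\Lambda^{\text{syst}}$), while $\mathcal{L}$ is self-adjoint and negative semidefinite. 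Since they commute, they can be simultaneously block-diagonalized, and the spectrum of $\mathcal{G}$ is contained in $\{ i\mu + \nu : i\mu \in \mathrm{spec}(i\delta_{H_\Lambda^{\text{syst}}}),\ \nu \in \mathrm{spec}(\mathcal{L}),\ \text{on a common eigenspace} \}$. The relevant quantity is the spectral gap, defined as the infimum of $-\mathrm{Re}(z)$ over nonzero eigenvalues $z$ of the generator; equivalently, as the rate of exponential decay to the invariant state.

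For the gap comparison, I would argue as follows. On one hand, $\mathrm{Re}(\mathcal{G}) = \mathcal{L}$ in the sense that for any $A$, $\mathrm{Re}\langle A, \mathcal{G}A\rangle_\beta = \langle A, \mathcal{L}A\rangle_\beta$, because the $i\delta_{H_\Lambda^{\text{syst}}}$ part contributes purely imaginary. This already gives $\gap(\mathcal{G}) \geq \gap(\mathcal{L})$ once one restricts to the orthogonal complement of the invariant state: indeed $\ker \mathcal{G} \supseteq \ker \mathcal{L} = \mathbb{C}\mathbbm{1}$, and in fact equality holds since $A \in \ker\mathcal{G}$ forces $\mathrm{Re}\langle A, \mathcal{G}A\rangle_\beta = \langle A, \mathcal{L}A\rangle_\beta = 0$, hence $A \in \ker \mathcal{L}$. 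For the reverse inequality, because $\mathcal{L}$ and $\delta_{H_\Lambda^{\text{syst}}}$ commute and are both normal, one can decompose $\mathcal{B}_\Lambda \ominus \mathbb{C}\mathbbm{1}$ into joint invariant subspaces on which $\mathcal{L}$ acts as $\nu \leq -\gap(\mathcal{L})$ and $i\delta_{H_\Lambda^{\text{syst}}}$ acts with purely imaginary spectrum; on each such subspace the eigenvalues of $\mathcal{G}$ have real part $\nu \leq -\gap(\mathcal{L})$. Taking the supremum over all nonzero eigenvalues gives $\gap(\mathcal{G}) \leq \gap(\mathcal{L})$, and the two inequalities combine to the claimed equality. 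I do not expect genuine obstacles here; the only care needed is to set up the GNS picture correctly so that ``self-adjoint/anti-self-adjoint and commuting implies simultaneous normal-form'' applies, and to make sure the kernels coincide so that the ``gap'' (measured on the complement of the stationary state) is the same for both operators.
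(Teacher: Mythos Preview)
Your argument is correct. The paper states this corollary without proof, treating it as an immediate consequence of Propositions~\ref{prop:davies-generator} and~\ref{prop:davies-kernel}; what you have written is precisely the standard reasoning one would supply to fill in the details, and it goes through without issue.

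Two small remarks. First, for primitivity you could simply invoke the second part of Proposition~\ref{prop:davies-kernel} with $X=\EE$ directly, which already gives $\ker\mathcal{L}=\mathbbm{1}_{\mathcal{H}_\EE}\otimes\mathcal{B}(\mathcal{H}_\emptyset)=\mathbb{C}\mathbbm{1}$; your unpacking of the commutant is correct but slightly redundant. Second, for the gap equality your decomposition argument is fine, but it is worth noting that once $\mathcal{L}$ and $i\delta_{H_\Lambda^{\text{syst}}}$ are commuting normal operators on the GNS Hilbert space (one self-adjoint, one anti-self-adjoint), simultaneous diagonalization immediately gives that the real parts of the eigenvalues of $\mathcal{G}$ coincide with the eigenvalues of $\mathcal{L}$, and your kernel argument shows the zero eigenspace is the same for both; this is all that is needed.
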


In what follows, we will always assume that the choice of the coupling terms $S_{e,\alpha}$ guarantees that the Davies generator is primitive.

\subsection{Davies generators for the Quantum Double Models}
We can now give an explicit description of the Davies generators for the Quantum Double Models.

\begin{Prop}\label{prop:davies-quantum-double}
If $H_\Lambda^{\text{syst}}$ is the Hamiltonian of a Quantum Double model, then for each $e\in \EE$, $\mathcal L_e$ only acts non-trivially on the two plaquettes and the two stars containing $e$ (see Figure~\ref{fig:support-local-lindbladian}).
Moreover, $S_{e,\alpha}(\omega)$ is zero unless $\omega\in\Omega := \{-4,-3,\dots, 3, 4\}$.
\end{Prop}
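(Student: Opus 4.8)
The plan is to exploit the locality of the Quantum Double Hamiltonian, namely that $H_\Lambda^{\text{syst}} = -\sum_v A(v) - \sum_p B(p)$ with all terms commuting projections, in order to control the support and the Bohr frequencies of the jump operators $S_{e,\alpha}(\omega)$. First I would recall that for a fixed edge $e$, the only terms of $H_\Lambda^{\text{syst}}$ that fail to commute with an arbitrary operator $S_{e,\alpha}$ supported on $\mathcal{H}_e$ are the four terms whose support contains $e$: the two star operators $A(v_1), A(v_2)$ at the endpoints of $e$ and the two plaquette operators $B(p_1), B(p_2)$ adjacent to $e$. Writing $H_\Lambda^{\text{syst}} = H_e + H_e^c$ where $H_e := -A(v_1) - A(v_2) - B(p_1) - B(p_2)$ collects these four terms and $H_e^c$ commutes with $S_{e,\alpha}$, the Heisenberg evolution factors as $e^{itH_\Lambda^{\text{syst}}} S_{e,\alpha} e^{-itH_\Lambda^{\text{syst}}} = e^{itH_e} S_{e,\alpha} e^{-itH_e}$, since the $H_e^c$ part passes through $S_{e,\alpha}$ and cancels. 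This already shows that every Fourier component $S_{e,\alpha}(\omega)$ is supported on the union of the two stars and two plaquettes containing $e$, which gives the support statement; consequently $\mathcal{L}_e = \sum_{\alpha,\omega}\widehat{g}_{e,\alpha}(\omega)\mathcal{D}_{e,\alpha,\omega}$, being built from these jump operators and their adjoints, also acts nontrivially only on that region.

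Next I would pin down the set of Bohr frequencies that can appear. Since $A(v)$ and $B(p)$ are commuting orthogonal projections, the operator $H_e$ has integer eigenvalues: diagonalizing simultaneously, each eigenvalue of $H_e$ is $-(n_A + n_B)$ where $n_A \in \{0,1,2\}$ counts how many of $A(v_1),A(v_2)$ act as the identity on the eigenvector and $n_B \in \{0,1,2\}$ likewise for $B(p_1),B(p_2)$. Hence $\operatorname{spec}(H_e) \subseteq \{0,-1,-2,-3,-4\}$, and the differences of eigenvalues of $H_e$ — which are exactly the frequencies $\omega$ for which $S_{e,\alpha}(\omega)$ can be nonzero, by the standard decomposition $e^{itH_e}Se^{-itH_e} = \sum_\omega S(\omega)e^{-i\omega t}$ with $S(\omega) = \sum_{\varepsilon - \varepsilon' = \omega} \Pi_{\varepsilon} S \Pi_{\varepsilon'}$ over spectral projections $\Pi_\varepsilon$ of $H_e$ — all lie in $\{-4,-3,\dots,3,4\} = \Omega$. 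This is precisely the claimed bound on $\omega$.

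The one point requiring a little care is the first claim: I should make explicit why the Heisenberg evolution generated by the \emph{global} $H_\Lambda^{\text{syst}}$ coincides with that generated by the \emph{local} $H_e$ when conjugating $S_{e,\alpha}$. This is where commutativity of all the Hamiltonian terms is essential: $[H_e, H_e^c] = 0$, so $e^{itH_\Lambda^{\text{syst}}} = e^{itH_e}e^{itH_e^c}$, and $e^{itH_e^c}$ commutes with $S_{e,\alpha}$ because $S_{e,\alpha} \in \mathcal{B}(\mathcal{H}_e)$ while $H_e^c$ is a sum of star and plaquette terms none of which is supported on $e$ (each either contains $e$ — in which case it is one of the four terms in $H_e$ — or is supported away from $e$). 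Thus $e^{itH_e^c}$ and $e^{-itH_e^c}$ cancel around $S_{e,\alpha}$, leaving $e^{itH_e}S_{e,\alpha}e^{-itH_e}$. I expect this bookkeeping — correctly identifying which terms of $H_\Lambda^{\text{syst}}$ touch a given edge and verifying the factorization — to be the only mildly technical step; it is the main (though minor) obstacle, and everything else follows from the eigenvalue count above and the explicit form \eqref{eq:davies-generator-dissipation} of $\mathcal{L}_e$.
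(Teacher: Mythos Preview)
Your proposal is correct and follows essentially the same approach as the paper: both reduce the Heisenberg evolution of $S_{e,\alpha}$ to conjugation by the four commuting projection terms $A(v_1),A(v_2),B(p_1),B(p_2)$ touching $e$, and then read off the Bohr frequencies from the fact that this local piece has integer spectrum in $\{0,-1,\dots,-4\}$. The only cosmetic difference is that the paper writes out the identity $e^{it\Pi}Qe^{-it\Pi}=\Pi Q\Pi+\Pi^\perp Q\Pi^\perp+e^{-it}\Pi^\perp Q\Pi+e^{it}\Pi Q\Pi^\perp$ and iterates it, whereas you argue directly via the spectrum of $H_e$; both yield the same conclusion.
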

\begin{figure}[ht]
\begin{center}  \begin{tikzpicture}   
  \draw[step=1.0,gray,thin] (-1.5,-0.5) grid (2.5,2.25);
  \draw[ultra thick] (0,0) -- (1, 0) -- (1,1) -- (0,1) -- (0,0);
   \draw[ultra thick] (2,1) -- (1,1) -- (1,2) -- (0,2) -- (0,1) -- (-1,1);
    \draw[ultra thick, red] (0,1) -- (1,1);

    \foreach \c in {0.5, 1, 1.5}{
    \foreach \x in {-1, -0.5, 0}
        \shade[ball color=black] ($ (\x + \c, \c -\x -0.5) $) circle (0.8ex);
    }
  \node at (0.5, 0.5) {$p_1$};
  \node at (0.5, 1.5) {$p_2$};
  \node at (-0.2,1.2) {$v_1$};
 \node at (1.2,0.8) {$v_2$};
\end{tikzpicture}
\end{center}
\caption{\label{fig:support-local-lindbladian}
Region supporting $\mathcal L_e$ (the edge $e = (v_1, v_2)$ is marked in red).}
\end{figure}
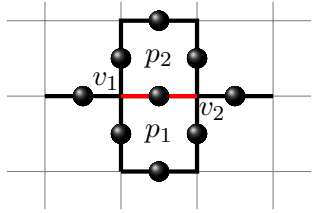
\begin{proof}
Since the local terms of $H_\Lambda^{\text{syst}}$ are commuting, for fixed $e \in \EE$ and $\alpha$ and for every $t \in \mathbb{R}$ we have that
\begin{equation}\label{equa:FourierSfrequencies}
\begin{split}
e^{it H_\Lambda^{\text{syst}}} & S_{e, \alpha} e^{-it H_\Lambda^{\text{syst}}}\\ 
& = e^{it(A(v_{1}) + A(v_2) + B(p_1) + B(p_2))} \, S_{e, \alpha} \, e^{-it(A(v_{1}) + A(v_2) + B(p_1) + B(p_2))} \,, 
\end{split}
\end{equation}
where $v_{1}, v_{2}$ are the vertices of $e$ and $p_{1}, p_{2}$ are the two plaquettes containing $e$.
The local terms $A(v)$ and $B(p)$ are both projections. For a projection $\Pi$ with orthogonal complement $\Pi^{\perp}:=\mathbbm{1} - \Pi$, it holds that
\[ e^{it \Pi} =  \Pi^{\perp} + e^{it} \Pi\,, \]
and so
\[ e^{it \Pi} Q e^{-it \Pi} = \Pi Q \Pi + \Pi^{\perp}Q\Pi^{\perp} + e^{-it} \Pi^{\perp}Q\Pi + e^{it} \Pi Q \Pi^{\perp}\,. \]
Consequently, we can rewrite \eqref{equa:FourierSfrequencies} as a sum
\[ e^{itH_\Lambda^{\text{syst}}} \, S_{e, \alpha} \, e^{-itH_\Lambda^{\text{syst}}} \, = \, \sum_{\omega=-4}^{4} \, S_{e, \alpha}(\omega) \, e^{it \omega}\,. \]
where each $S_{e, \alpha}(\omega)$ has also support contained in $\partial v_1 \cup \partial v_2 \cup p_{1} \cup p_{2}$ .
\end{proof}

\subsection{Davies generator as a local Hamiltonian}
We have seen in the previous sections that $- \mathcal L = -\sum_{e\in \EE}\mathcal{L}_e$ is local, self-adjoint, and positive with respect to the GNS scalar product, with a unique element in its kernel corresponding to the Gibbs state $\rho_\beta$. We will now describe how to convert it into a frustration free local Hamiltonian whose unique ground state is the thermofield double of $\rho_\beta$ (a local purification of the Gibbs state), defined as
\begin{equation}\label{eq:thermofield-double}
\ket*{\rho_\beta^{1/2}} = \frac{1}{Z_\beta^{1/2}} \sum_{\lambda \in \sigma(H_\Lambda^{\text{syst}})} e^{-\frac{\beta}{2}} \ket{\lambda}\otimes \ket{\lambda} = \qty(\rho_\beta^{1/2} \otimes \mathbbm{1}) \ket{\Psi},
\end{equation}
where $\ket{\Psi}$ is a maximally entangled state on $\mathcal{H}_\Lambda^2 = \mathcal{H}_{\Lambda} \otimes \mathcal{H}_{\Lambda} $.
We will find convenient to work with a ``vectorized'' representation of the GNS scalar product, which we introduce in the next proposition.
\begin{Prop}\label{prop:gns-purification}
Let $\ket{\Psi}$ be a maximally entangled state on $\mathcal{H}_\Lambda^2 = \mathcal{H}_{\Lambda} \otimes \mathcal{H}_{\Lambda} $, and denote 
\[\iota(A) = A\rho_\beta^{1/2},\quad A \in \mathcal{B}_{\Lambda}.\]
Then
\begin{equation}
    \mathcal{B}_\Lambda \ni Q \longrightarrow \ket*{\iota(Q)} = \qty( Q \rho_\beta^{1/2} \otimes \mathbbm{1}) \ket{\Psi} \in \mathcal{H}_\Lambda^2
\end{equation}
is an isometry between $(\mathcal{B}_{\Lambda}, \norm{\cdot}_\beta)$ and $\mathcal{H}_\Lambda^2$ (equipped with the natural Hilbert tensor scalar product).
\end{Prop}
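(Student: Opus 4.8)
The statement to prove is that $Q \mapsto \ket*{\iota(Q)} = (Q\rho_\beta^{1/2} \otimes \mathbbm{1})\ket{\Omega}$ is an isometry from $(\mathcal{B}_\Lambda, \norm{\cdot}_\beta)$ to $\mathcal{H}_\Lambda^2$. Since both sides are finite-dimensional Hilbert spaces of the same dimension $\abs{G}^{2\abs{\EE}}$ (once we recall $\dim \mathcal{B}_\Lambda = (\dim \mathcal{H}_\Lambda)^2 = \dim \mathcal{H}_\Lambda^2$), it suffices to check that the map preserves scalar products, i.e.\ that $\braket*{\iota(A)}{\iota(B)} = \langle A, B\rangle_\beta$ for all $A, B \in \mathcal{B}_\Lambda$; being a scalar-product–preserving linear map between spaces of equal finite dimension, it is automatically a unitary isometry.

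The plan is a direct computation using the defining property of the maximally entangled vector. First I would fix an orthonormal basis $\{\ket{i}\}_i$ of $\mathcal{H}_\Lambda$ and write $\ket{\Omega} = \sum_i \ket{i}\otimes\ket{i}$, so that the fundamental identity $(X \otimes \mathbbm{1})\ket{\Omega} = (\mathbbm{1}\otimes X^{T})\ket{\Omega}$ holds, where $X^{T}$ denotes the transpose in that basis, and more importantly $\braket{\Omega}{(X\otimes \mathbbm{1})\,\Omega} = \operatorname{Tr}(X)$ for any $X \in \mathcal{B}_\Lambda$. Then I would compute
\[
\braket*{\iota(A)}{\iota(B)} = \bra{\Omega}\big(\rho_\beta^{1/2}A^\dagger \otimes \mathbbm{1}\big)\big(B\rho_\beta^{1/2}\otimes \mathbbm{1}\big)\ket{\Omega} = \bra{\Omega}\big(\rho_\beta^{1/2}A^\dagger B\rho_\beta^{1/2}\otimes \mathbbm{1}\big)\ket{\Omega} = \operatorname{Tr}\!\big(\rho_\beta^{1/2}A^\dagger B\rho_\beta^{1/2}\big).
\]
By cyclicity of the trace and self-adjointness of $\rho_\beta^{1/2}$, the last expression equals $\operatorname{Tr}(\rho_\beta A^\dagger B) = \langle A, B\rangle_\beta$, which is exactly the GNS scalar product of Definition~\ref{def:detailed-balance}. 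This establishes that the map is an isometric embedding onto its image.

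To conclude, I would observe that the map is injective: if $\ket*{\iota(Q)} = 0$ then $\langle Q, Q\rangle_\beta = \operatorname{Tr}(\rho_\beta Q^\dagger Q) = 0$, and since $\rho_\beta$ is a full-rank (positive definite) state this forces $Q = 0$. Combined with the dimension count $\dim \mathcal{B}_\Lambda = \dim \mathcal{H}_\Lambda^2$, injectivity gives surjectivity, so the map is a genuine Hilbert space isomorphism, not merely an embedding. I do not anticipate a real obstacle here; the only point requiring a modicum of care is keeping the placement of $\rho_\beta^{1/2}$ and the daggers straight through the $\ket{\Omega}$-contraction, and making sure the normalization convention for $\ket{\Omega}$ (unnormalized $\sum_i \ket{ii}$, as used elsewhere in the paper in $\ket{Q} = (Q\otimes\mathbbm{1})\ket{\Omega}$) is the one in force, so that $\operatorname{Tr}$ appears with coefficient exactly $1$ and no spurious factor of $\dim \mathcal{H}_\Lambda$ creeps in.
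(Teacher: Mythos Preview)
Your proposal is correct and follows essentially the same approach as the paper: a direct computation showing $\braket*{\iota(A)}{\iota(B)} = \bra{\Omega}(\rho_\beta^{1/2}A^\dagger B\rho_\beta^{1/2}\otimes\mathbbm{1})\ket{\Omega} = \operatorname{Tr}(\rho_\beta A^\dagger B) = \langle A,B\rangle_\beta$, together with the observation that the map is a bijection because $\rho_\beta$ is full rank. The paper's version is just terser, omitting the explicit basis expansion of $\ket{\Omega}$ and the dimension count that you spell out.
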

\begin{proof}
The map is manifestly linear, it is a bijection since $\rho_\beta$ is full rank, and a simple calculation shows that it preserves the scalar product:
\[ \braket{\iota(A)}{\iota(B)}
    = \expval{\qty(\rho_\beta^{1/2} A^\dag B \rho_\beta^{1/2}) \otimes \mathbbm{1}}{\Omega} = \tr(\rho_\beta A^\dag B) = 
\langle A, B\rangle_{\beta}  \] 
for every $A,B \in \mathcal{B}_\Lambda$
\end{proof}
Note that $(\mathcal{H}_{\Lambda}^2, \pi, \ket*{\rho_\beta^{1/2}})$, with $\pi(Q) = Q\otimes \mathbbm{1}$, is a GNS triple for the pair $(\mathcal{B}_\Lambda,\rho_\beta)$.

This isometry allows us to define a local Hamiltonian representing the dissipative part of the Davies generator $\mathcal{L}$:

\begin{Prop}\label{prop:davies-local-gap}
Let $\widetilde{H}$ be the operator on $\mathcal H^2_\Lambda$ defined by
\[ \widetilde{H} = \sum_{e \in \mathcal{E}} \widetilde{H}_{e} \quad \text{where} \quad  \widetilde{H}_{e}\ket{\iota(Q)} = -\ket{\iota(\mathcal{L}_e (Q)}\qc \forall Q \in \mathcal{B}_{\Lambda}.\]
Then, $\widetilde{H}$ is self-adjoint and positive semidefinite.
Moreover, if Assumption~\ref{assumption-primitivity} is satisfied, the following statements hold:
\begin{enumerate}
\item If $X \subset \mathcal E$ is a finite subset, then
\begin{align}
    \ker\qty( \sum_{e\in X} \widetilde{H}_e) &= 
    \left\{  \ket*{\iota(Q)}  \mid Q \in \ker\qty(\sum_{e \in X} \mathcal{L}_e) \right\} \\
    &\subseteq \left\{ \qty( (Q  \otimes \mathbbm{1}_{X}) \rho_\beta^{1/2} ) \otimes \mathbbm{1} \ket{\Psi} \mid Q \in \mathcal{B}_{\EE\setminus X} \right\} \notag
\end{align}
In particular, the thermofield double $\ket*{\rho_\beta^{1/2}}= \ket{\iota(\mathbbm{1})}$ is the unique ground state of $\widetilde H$.
\item  $\gap{\mathcal{L}} = \gap{\widetilde{H}}$.
\end{enumerate}
\end{Prop}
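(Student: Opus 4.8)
The plan is to prove Proposition~\ref{prop:davies-local-gap} by transporting all spectral information of the dissipative Davies generator $\mathcal{L}$ through the GNS isometry $\iota$ of Proposition~\ref{prop:gns-purification}. First I would observe that, by the very definition of $\widetilde{H}_e$ via $\widetilde{H}_e \ket{\iota(Q)} = -\ket{\iota(\mathcal{L}_e(Q))}$ and the fact that $Q \mapsto \ket{\iota(Q)}$ is an isometry from $(\mathcal{B}_\Lambda, \norm{\cdot}_\beta)$ onto $\mathcal{H}_\Lambda^2$, the operator $-\mathcal{L}_e$ acting on the GNS Hilbert space is unitarily equivalent to $\widetilde{H}_e$ acting on $\mathcal{H}_\Lambda^2$. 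Self-adjointness of $\widetilde{H}_e$ is then immediate from the detailed-balance property of $\mathcal{L}_{e,\alpha,\omega}$ (Proposition~\ref{prop:davies-generator}(2)), which says precisely that each $\mathcal{L}_{e,\alpha,\omega}$, hence $\mathcal{L}_e$, is self-adjoint with respect to $\langle \cdot, \cdot\rangle_\beta$; and positive semidefiniteness of $\widetilde{H}_e$ (equivalently of $-\mathcal{L}_e$) follows from the explicit quadratic-form expression \eqref{equa:positive1} together with $\widehat{g}_{e,\alpha}(\pm\omega) \ge 0$. Summing over $e \in \mathcal{E}$ gives that $\widetilde{H} = \sum_e \widetilde{H}_e$ is self-adjoint and positive semidefinite.

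Next I would establish the kernel statement (1). For a finite $X \subset \mathcal{E}$, applying Proposition~\ref{prop:davies-kernel} to the sublattice $X$ and using Assumption~\ref{assumption-primitivity} (so that $\{S_{e,\alpha}\}'_\alpha = \mathbb{C}\mathbbm{1}_{\mathcal{H}_e}$ for every $e$), we get $\ker(\sum_{e\in X}\mathcal{L}_e) = \mathbbm{1}_{\mathcal{H}_X} \otimes \mathcal{B}(\mathcal{H}_{\mathcal{E}\setminus X}) = \mathcal{B}_{\mathcal{E}\setminus X}$. Transporting through $\iota$, the kernel of $\sum_{e\in X}\widetilde{H}_e$ is $\{\ket{\iota(Q)} : Q \in \mathcal{B}_{\mathcal{E}\setminus X}\}$, which is the first claimed equality. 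For the second equality I would just unfold the definition $\iota(Q) = Q\rho_\beta^{1/2}$ and use the commuting decomposition $e^{-\frac{\beta}{2}H_\Lambda^{\text{syst}}} = e^{-\frac{\beta}{2}H_X}\, e^{-\frac{\beta}{2}(H_{\mathcal{E}} - H_X)}$ (valid since the star and plaquette terms commute, cf.\ \eqref{equa:GibbsQDMdecomposition1}), pushing the $Q \in \mathcal{B}_{\mathcal{E}\setminus X}$ factor past $e^{-\frac{\beta}{2}H_X}$; the overall normalization $Z_\beta^{-1/2}$ can be absorbed into $Q$. Taking $X = \mathcal{E}$ yields $\ker(\widetilde{H}) = \{\ket{\iota(c\mathbbm{1})} : c \in \mathbb{C}\} = \mathbb{C}\ket*{\rho_\beta^{1/2}}$, so the thermofield double is the unique (up to scalar) ground state.

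Finally, for statement (2), since $\widetilde{H}$ on $\mathcal{H}_\Lambda^2$ and $-\mathcal{L}$ on the GNS space are unitarily equivalent (via the isometry $\iota$, which is onto), they have identical spectra, and in particular the same spectral gap above the ground state: $\gap(\widetilde{H}) = \gap(-\mathcal{L}) = \gap(\mathcal{L})$, where the last equality is just notation since $\mathcal{L}$ is negative semidefinite and we measure the gap of $-\mathcal{L}$. I do not anticipate a serious obstacle here; the content is essentially bookkeeping with the GNS isometry. The one point requiring mild care is making sure the \emph{locality} of $\widetilde{H}_e$ — i.e.\ that $\widetilde{H}_e$ is supported on the doubled Hilbert space of the four plaquettes/stars around $e$ — is correctly inherited: this follows because $\mathcal{L}_e$ has that support (the preceding Proposition on the support of $\mathcal{L}_e$) and left multiplication by $\rho_\beta^{1/2}$ together with the maximally entangled $\ket{\Omega}$ intertwines $\mathcal{L}_e(Q)\rho_\beta^{1/2}$ with an operator acting only on the corresponding region of $\mathcal{H}_\Lambda^2$; one should phrase this via $\pi(Q) = Q \otimes \mathbbm{1}$ and the right action of $\rho_\beta^{1/2}$, noting that $\rho_\beta^{1/2}$ factorizes over the commuting local terms so the relevant conjugations stay local. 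With the identification of $\widetilde{H}$ as (a local, frustration-free Hamiltonian unitarily equivalent to) $-\mathcal{L}$, the main Theorem then follows by combining this with the parent-Hamiltonian comparison of the next proposition and the gap estimate in Corollary~\ref{coro:spectral-gap-parent-hamiltonian}.
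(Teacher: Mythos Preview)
Your proof is correct and follows exactly the paper's approach: the paper simply states that the result is an immediate consequence of Proposition~\ref{prop:gns-purification} (the GNS isometry) and Proposition~\ref{prop:davies-kernel}, which is precisely the bookkeeping you carry out. One minor inaccuracy in your closing aside: the support of $\widetilde{H}_e$ is \emph{not} just the four plaquettes/stars around $e$ but is in general strictly larger than the support of $\mathcal{L}_e$ (the paper makes this explicit in Remark~\ref{Rema:supportHtilde} and Figure~\ref{fig:SupportLtildee}); however, locality is not part of the proposition's statement, so this does not affect your argument.
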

\noindent The proof is an immediate consequence of Proposition~\ref{prop:gns-purification} and Proposition~\ref{prop:davies-kernel}.
\begin{Rema}\label{Rema:supportHtilde}
Note that the support of $\widetilde{H}_{e}$ will in general be larger than the support of $\mathcal{L}_e$, see Figure \ref{fig:SupportLtildee}.

\begin{figure}[ht]
\begin{center}  \begin{tikzpicture}   
  \draw[step=1.0,gray,thin] (-2.5,-1.5) grid (3.5,3.25);
    \draw[ultra thick] (-2,1) -- (3,1);
    \draw[ultra thick] (-1,2) -- (2,2) -- (2,0) -- (-1,0) -- cycle;
     \draw[ultra thick] (0,3) -- (1,3) -- (1,-1) -- (0,-1) -- cycle;
    \draw[ultra thick, red] (0,1) -- (1,1);
    
    \foreach \c in {0, 0.5, ..., 2}{
    \foreach \x in {-1.5, -1, ..., 0.5}
        \shade[ball color=black] ($ (\x + \c, \c -\x -0.5) $) circle (0.8ex);
    }
  
\end{tikzpicture}
\end{center}
  \caption{Region supporting $\widetilde{H}_{e}$ (the edge $e$ is marked in red).}
  \label{fig:SupportLtildee}
\end{figure}
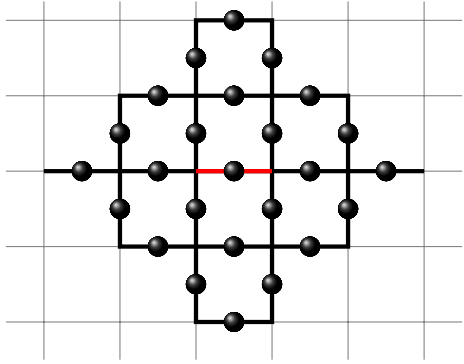
\end{Rema}

In general, $\widetilde H_e$ will not be a projection. In order to simplify the comparison with the Parent Hamiltonian, we will first lower bound $\sum_{e\in X} \widetilde H_e$ by a projection.

\begin{Prop}\label{Prop:lowerBoundLocalDaviesHamiltonian}
For each $X \subset \EE$, let $\Pi_X$ denote the projection on the subspace
\begin{equation}
\left\{ \qty( (Q  \otimes \mathbbm{1}_{X}) \rho_\beta^{1/2} ) \otimes \mathbbm{1} \ket{\Psi} \mid Q \in \mathcal{B}_{\EE\setminus X} \right\}.
\end{equation}
Then, under Assummptions~\ref{assumption-ti} and \ref{assumption-primitivity}, there exist positive constants $C_1$, $C_2$ independent of the system size and of $\beta$ such that for every $X \subset \EE$
    \begin{equation}\label{eq:local-gap-davies}
    \sum_{e\in X} \widetilde H_e \ge\frac{C_2}{\abs{\Omega}}  \frac{\widehat{g}_{\min}}{\abs{X}}e^{-C_1 \beta \abs{R_X}} \,  \Pi_X^\perp \qc \widehat{g}_{\min} = \min_\alpha \min_{\omega \in \Omega} \widehat{g}_{\alpha}(\omega),
\end{equation}
where $R_X$ denotes the region supporting $\sum_{e \in X} \widetilde H_e$, and $\abs{X}$ and $\abs{R_X}$ denote the number of edges in $X$ and $R_X$, respectively.
\end{Prop}
\begin{proof}
Let us denote by
\[
    \delta_{e,\alpha, \omega}(A) = \comm{A}{ S_{e,\alpha}(\omega)} \qc \delta_{e,\alpha}(A) = \comm{A}{ S_{e,\alpha}} = \sum_\omega \delta_{e,\alpha,\omega}(A).
\]
We can then rewrite \eqref{equa:positive1} as 
\begin{multline*}
\braket*{\phi}{\widetilde{H}_e \phi}
 =-\braket{\iota^{-1} \phi}{\mathcal{L}_{e}(\iota^{-1} \phi)}_{\beta} 
= \sum_{\alpha,\, \omega } \widehat{g}_{\alpha}(\omega)  \braket{\delta_{e,\alpha, \omega}\iota^{-1}(\phi)}{\delta_{e,\alpha, \omega}\iota^{-1}(\phi)}_{\beta} \\
= \sum_{\alpha,\, \omega } \widehat{g}_{\alpha}(\omega)  \norm{\delta_{e,\alpha, \omega}\iota^{-1}(\phi)}^2_{\beta},
\end{multline*}
which implies
\begin{equation}\label{eq:proof-local-gap-1}
\sum_{e \in X} \braket*{\phi}{\widetilde{H}_e \phi} \ge \widehat{g}_{\min} \sum_{e \in X}  \sum_{\alpha,\, \omega } \norm{\delta_{e,\alpha, \omega}(Q)}^2_{\beta} \ge 
\frac{\widehat{g}_{\min}}{\abs{\Omega}} \sum_{e \in X}  \sum_{\alpha} \norm{\delta_{e,\alpha}(Q)}^2_{\beta}\,, 
\end{equation}
where we denoted $Q = \iota^{-1}(\phi) \in \mathcal{B}_{\Lambda}$, or equivalently $ \ket\phi = (Q\rho_{\beta}^{1/2} \otimes \mathbbm{1}) \ket{\Psi}$.

Let $R_X$ be the region supporting $\sum_{e \in X} \widetilde H_e$. We define a \emph{localized} version of the norm $\norm{\cdot}_{\beta}$, by
\[
    \norm{Q}_{R_X}^2 = \tr[ e^{-{\beta} H_{R_X}^{\text{syst}}} Q^* Q ] \qc Q \in \mathcal{B}_{\Lambda}.
\]
Note that 
\[
    \norm{Q}_\beta^2 = \frac{1}{Z_\beta} \norm{Q e^{-\frac{\beta}{2}(H_\Lambda^{\text{syst}} - H_{R_X}^{\text{syst}})} }^2_{R_X}.
\]
In other words, denoting $Q' = \frac{1}{Z_\beta^{1/2}}Q e^{-\frac{\beta}{2}(H_\EE^{\text{syst}} - H_{R_X}^{\text{syst}})}$,
we have that $\norm{Q}_\beta= \norm{Q'}_{R_X}$.

Moreover, since $ e^{-\beta \norm*{H_{R_X}^{\text{syst}}}} \mathbbm{1} \le e^{-\beta H_{R_X}^{\text{syst}}} \le \mathbbm{1}$ where  $\norm*{H_{R_X}^{\text{syst}}}$ denotes the largest eigenvalue of $H_{R_X}^{\text{syst}}$, it follows that
\begin{equation}
    e^{-\frac{\beta}{2} \norm*{H_{R_X}^{\text{syst}}}} \norm{Q'}_{HS} \le \norm{Q'}_{R_X} \le \norm{Q'}_{HS},
\end{equation}
where $\norm{\cdot}_{HS}$ denotes the standard Hilbert-Schmidt scalar product on $\mathcal{B}_\Lambda$.
Now let us consider the following quantity
\[
    |||Q'|||_X := \qty( \sum_{e\in X} \sum_{\alpha} \norm{\delta_{e,\alpha}(Q')}_{HS}^2)^{1/2},
\] 
so that we can lower bound the r.h.s. of \eqref{eq:proof-local-gap-1}, obtaining
\begin{equation}\label{eq:proof-local-gap-2}
\sum_{e \in X}  \sum_{\alpha} \norm{\delta_{e,\alpha}(Q)}^2_{\beta} 
= \sum_{e \in X}  \sum_{\alpha} \norm{\delta_{e,\alpha}(Q')}^2_{R_X}
\ge e^{-\beta \norm*{H_{R_X}^{\text{syst}}}}  |||Q'|||_X^2.
\end{equation}

We can see that $|||\cdot|||_X$ is a seminorm, and $|||Q'|||_X =0$ if and only if 
\[ Q' \in \{ S_{e,\alpha} \mid e\in X, \forall \alpha \}'= \mathcal{B}_{\EE \setminus X}.\] Moreover, for each $Q_0 \in \mathcal{B}_{\EE\setminus X}$, it holds that
\begin{align*} 
    |||Q' - Q_0 \otimes \mathbbm{1}_X|||_X^2 & = \sum_{e\in X, \alpha} || \delta_{e,\alpha}(Q') - \delta_{e,\alpha}(Q_0 \otimes \mathbbm{1}_X)||_{HS}^2\\ 
    & = \sum_{e \in X,\alpha} || \delta_{e,\alpha}(Q')||_{HS}^2 = |||Q'|||_X^2.
\end{align*}
Therefore we can pass to the quotient space $\mathcal{B}_{\EE}/\mathcal{B}_{\EE\setminus X}$, and compare the norm induced by $|||\cdot|||_X$ with the norm induced by the Hilbert-Schmidt scalar product. 
As we show in Lemma \ref{lemm:almostcommutelocally}, there exists a constant $C_2>0$, independent of system size and $\beta$, such that for every $X$
\begin{equation}\label{eq:proof-local-gap-3}
    |||Q'|||_X^2 \ge \frac{C_2}{\abs{X}} \inf_{Q_0 \in \mathcal{B}_{\EE \setminus X}} \norm{Q'-Q_0\otimes \mathbbm{1}_X}_{HS}^2
    \ge \frac{C_2}{\abs{X}} \inf_{Q_0 \in \mathcal{B}_{\EE \setminus X}} \norm{Q'-Q_0\otimes \mathbbm{1}_X}_{R_X}^2.
\end{equation}

\noindent Let us now consider $\Pi_X^\perp$. We observe that, once again setting
$\ket\phi = (Q\rho_{\beta}^{1/2} \otimes \mathbbm{1}) \ket{\Psi}$,
\begin{align*}
    \braket*{\phi}{\Pi_X^\perp\phi} = \norm*{\Pi_X^\perp \ket{\phi}}^2 &= 
    \inf_{Q_0 \in \mathcal{B}_{\EE\setminus X}} \norm{Q - Q_0 \otimes \mathbbm{1}_{X}}_\beta^2
    \\[2mm] &= \inf_{Q_0 \in \mathcal{B}_{\EE \setminus X}} \frac{1}{Z_\beta} \norm{ (Q - Q_0 \otimes \mathbbm{1}_{X})  e^{-\frac{\beta}{2}(H_\EE^{\text{syst}} - H_{R_X}^{\text{syst}})}}_{R_X}^2
    \\[2mm] &= \inf_{Q_0 \in \mathcal{B}_{\EE \setminus X}} \norm{ \frac{1}{Z_\beta} Q e^{-\frac{\beta}{2}(H_\EE^{\text{syst}} - H_{R_X}^{\text{syst}})} - Q_0 \otimes \mathbbm{1}_{X}}_{R_X}^2 
    \\[2mm] &= \inf_{Q_0 \in \mathcal{B}_{\EE \setminus X}} \norm{Q'-Q_0\otimes \mathbbm{1}_X}_{R_X}^2.
\end{align*}
Putting this last expression together with bounds \eqref{eq:proof-local-gap-1}, \eqref{eq:proof-local-gap-2} and \eqref{eq:proof-local-gap-3} concludes the proof, by observing that $\norm*{H_{R_X}^{\text{syst}}} \le C_1 \abs{R_X}$ for a constant $C_1$ independent of $X$ and of the system size.
\end{proof}

In order to prove the bound on the $||| \cdot |||_X$ seminorm needed to complete the proof of the last proposition, we first show an intermediate result about a related quantity.

\begin{Lemm}\label{lemm:equivalentSeminorms}
Let $(S_{\alpha})_{\alpha \in \mathcal{A}}$ and $(V_{ \gamma})_{\gamma \in \mathcal{C}}$ be two families of operators in $\mathcal{B}(\mathbb{C}^{d})$ such that their commutators satisfy $\{ S_{\alpha}\}_{\alpha}' = \{ V_{\gamma}\}_{\gamma}' = \mathbb{C}\mathbbm{1}_{d}$. For each $e \in \Lambda$ let $S_{e, \alpha}$ and $V_{e, \gamma}$ be the associated elements in the canonical inclusion $\mathcal{B}_{e} \hookrightarrow \mathcal{B}_{\Lambda}$, and define the following seminorms on $\mathcal{B}_{\Lambda}$
\begin{align*} 
|||Q|||_{e} = \left(\sum_{\alpha \in \mathcal{A}} \| [S_{e, \alpha}, Q] \|_{HS}^{2}\right)^{1/2} \quad , \quad |||Q|||_{e}' = \left(\sum_{\gamma \in \mathcal{C}} \| [V_{e, \gamma}, Q] \|_{HS}^{2}\right)^{1/2} \,.
\end{align*}
Then, there is a constant $K>0$ independent of the system size and of the edge $e$ such that 
\begin{equation}\label{equa:equivalentSeminorms}
K |||Q|||_{e} \leq {|||Q|||'}_{e} \leq \frac{1}{K} |||Q|||_{e} 
\end{equation}
for every $Q \in \mathcal{B}_{\Lambda}$.
\end{Lemm}

\begin{proof}
We can then rephrase the condition $\{ S_{\alpha}\}_{\alpha}' = \{ V_{\gamma}\}_{\gamma}' = \mathbb{C}\mathbbm{1}_{d}$ as the fact that the seminorms $|||\cdot |||_{e}$ and $|||\cdot |||_{e}'$ restricted to $\mathcal{B}_{e} \equiv \mathcal{B}(\mathbb{C}^{d})$ define actual norms on the quotient space $\mathcal{B}_{e}/\mathbb{C}\mathbbm{1} \equiv \mathcal{B}(\mathbb{C}^{d})/\mathbb{C} \mathbbm{1}$, whose dimension is $d^{2}-1$. Hence, both quotient norms have to be equivalent. But note that  $|||Q|||_{e} = ||| Q+\mathbb{C} \mathbbm{1}|||_{e}$ and $|||Q|||_{e}' = ||| Q+\mathbb{C} \mathbbm{1}|||_{e}'$ for every $Q \in \mathcal{B}_{e}$. Thus, there exists a constant $K>0$ depending on both of the families and on $d$, but independent of the system size and $e$, such that for every $Q \in \mathcal{B}_{e}$
\begin{equation}\label{equa:equivalentSeminormsAux1}
K |||Q|||_{e} \leq {|||Q|||'}_{e} \leq \frac{1}{K} |||Q|||_{e} \,.
\end{equation}
To extend the latter inequalities to every $Q \in \mathcal{B}_{\Lambda}$, let us fix such a $Q$ with $HS$-norm equal to one, and consider its Schmidt decomposition $Q = \sum_{k=1}^{d^{2}} s_{k} Q_{k} \otimes Q_{k}'$ w.r.t.\ the HS scalar product, where $\{ Q_{k}\}_{k=1}^{d^2} \subset \mathcal{B}_{e}$ and $\{Q_{k}'\}_{k=1}^{d^2} \subset \mathcal{B}_{\Lambda \setminus \{ e\}}$ are orthonormal sets and $\sum_{k}s_{k}^{2}=1$. Then
\[ \| [S_{e, \alpha}, Q] \|_{HS}^{2} = \| \sum_{k=1}^{d^{2}} s_{k} [S_{e, \alpha}, Q_{k}]  \otimes Q_{k}'  \|_{HS}^{2} = \sum_{k=1}^{d^{2}} s_{k}^{2} \| [S_{e, \alpha}, Q_{k}]\|_{HS}^{2} \, \| Q_{k}'\|_{HS}^{2}\,, \]
so that
\begin{align*} 
||| Q|||_{e}^{2} = \sum_{\alpha \in \mathcal{A}} \| [S_{e, \alpha}, Q] \|_{HS}^{2} & = \sum_{k=1}^{d^{2}} s^2_{k}\left(\sum_{\alpha \in \mathcal{A}} \| [S_{e, \alpha}, Q_{k}]\|_{HS}^{2}\right) \| Q_{k}'\|_{HS}^{2}\\ 
& = \sum_{k=1}^{d^{2}}s^2_{k} ||| Q_{k}|||_{e}^{2}\,.   
\end{align*}
An analogous equality holds for $||| \cdot |||_{e}'$, replacing the $S_{e, \alpha}$ with $V_{e, \gamma}$. Thus, applying \eqref{equa:equivalentSeminormsAux1} we conclude that \eqref{equa:equivalentSeminorms} holds.
\end{proof}

\begin{Lemm}\label{lemm:almostcommutelocally}
There exists a constant $C_2>0$, depending on the local dimension $d$ and the bath operators $\{ S_{\alpha}\}_{\alpha}$, such that for every $X \subset \EE$ and every $Q \in \mathcal{B}_{\Lambda}$
\begin{equation}\label{equa:almostcommutelocally}
|||Q|||_X^2 := \sum_{e\in X} \sum_{\alpha} \norm{\delta_{e,\alpha}(Q)}_{HS}^2\ge \frac{C_{2}}{|X|} \inf_{Q_0 \in \mathcal{B}_{\EE \setminus X}} \norm{Q-Q_0\otimes \mathbbm{1}_X}_{HS}^2 
\end{equation}
\end{Lemm}
\begin{proof}
The idea is similar to \cite{Nachtergaele2013} where it is shown that if an observable almost commutes with every observable supported on a region $X$, then it is close to an observable supported in $\Lambda \setminus X$. 

Let us start with the following observation: there exists a family of unitaries $\{ U_{j}\}_{j \in [d^{2}]}$, where $[d^{2}]:=\{ 0,1,\ldots, d^{2}-1\}$, that forms an orthogonal basis of $\mathcal{B}(\mathbb{C}^{d})$, and such that 
\begin{equation}\label{equa:almostcommutelocallyAux0} 
\frac{1}{d^{2}} \sum_{j \in [d^{2}]} U_{j} Q U_{j}^{\dagger} = \tr(Q) \frac{1}{d} \mathbbm{1}_{d}\, 
\end{equation}
for every $Q \in \mathcal{B}(\mathbb{C}^{d})$. Indeed, one can define $\theta := e^{2\pi i/d}$ and consider the Sylvester operators (also known as Weyl-Heisenberg matrices) $\Sigma_{1} = \sum_{k=0}^{d-2}\ketbra{k}{k+1} + \ketbra{d-1}{0}$ and $\Sigma_{3} = \sum_{k=0}^{d-1} \theta^{k} \ketbra{k}{k}$, which satisfy $\Sigma_{3} \Sigma_{1} = \theta \Sigma_{1} \Sigma_{3}$. Then, the family of unitaries
\[ \{ U_{k,l}:= \Sigma_{1}^{k} \Sigma_{3}^{l} = \theta^{kl} \Sigma_{3}^{l} \Sigma_{1}^{k} \colon 0 \leq k,l \leq d-1\}\] 
satisfies the above conditions. One can easily demonstrate that they are orthogonal with respect to the Hilbert-Schmidt scalar product, and so they form a basis. To check \eqref{equa:almostcommutelocallyAux0},  it is sufficient to check its validity when considering $Q$ as the elements of the basis, which can be easily verified. This completes the proof of the observation. Notice also that the commutant of this family satisfies $\{ U_{j}\}_{j \in [d^{2}]}'=\mathbb{C} \mathbbm{1}_{d}$ since it spans the whole algebra.

For each site $e \in \Lambda$, let us identify the local Hilbert space $\mathcal{H}_{e} \equiv \mathbb{C}^{d}$  and take the family $\{U_{e,k}\}_{k \in [d^{2}]}$ in $\mathcal{B}_{e} \hookrightarrow \mathcal{B}_{\Lambda}$ given in the previous observation. The families $\{ S_{e, \alpha}\}_{\alpha}$ and $\{U_{e,k}\}_{k \in [d^{2}]}$ satisfy the conditions of Lemma \ref{lemm:equivalentSeminorms} by the assumptions on the bath operators $S_{e, \alpha}$ and the construction of $U_{e, k}$, so there exists a constant $K>0$ independent of the system size and of the edge $e$, such that the norms given in the aforementioned lemma satisfy
\begin{equation}\label{equa:almostcommutelocallyAux0,5}
K |||Q|||_{e} \leq {|||Q|||'}_{e} \leq \frac{1}{K} |||Q|||_{e} 
\end{equation}
for every $Q \in \mathcal{B}_{\Lambda}$.

Let us now consider a finite subset $X$ of $\Lambda$. For each $\mathbf{k}: X \to [d^{2}]$ define the unitary $U_{X,\mathbf{k}}:= \prod_{e \in X} U_{e, \mathbf{k}(e)}$ supported on $X$. Then, as a consequence of \eqref{equa:almostcommutelocallyAux0} we have that
\begin{equation}
\frac{1}{d^{2|X|}} \sum_{\mathbf{k}: X \to [d^{2}]} U_{X,\mathbf{k}} Q U_{X,\mathbf{k}}^{\dagger} = \tr_{X}(Q) \otimes \tfrac{1}{d^{|X|}} \mathbbm{1}_{X} \,,
\end{equation}
and therefore
\[ Q - \tr_{X}(Q) \otimes \tfrac{1}{d^{|X|}} \mathbbm{1}_{X}  = \frac{1}{d^{2 |X|}}\sum_{\mathbf{k}:X \to [d^{2}]}  [Q, U_{X,\mathbf{k}}] U_{X,\mathbf{k}}^{\dagger}\,.\]
We can then upper bound the norm of this difference, using the fact that the HS norm is unitarily invariant,
\begin{equation}\label{equa:almostcommutelocallyAux1}
\begin{split}
\left\| Q - \tr_{X}(Q) \otimes \tfrac{1}{d^{|X|}} \mathbbm{1}_{X}  \right\|_{HS}^{2} & \leq  \frac{1}{d^{4|X|}} \left\|\sum_{\mathbf{k}:X \to [d^{2}]}  [Q, U_{X,\mathbf{k}}] U_{X,\mathbf{k}}^{\dagger}  \right\|_{HS}^{2}\\ 
& \leq \frac{1}{d^{2|X|}} \sum_{\mathbf{k}:X \to [d^{2}]}  \|[Q, U_{X,\mathbf{k}}] \|_{HS}^{2}\,.
\end{split}
\end{equation}
Next, we will make use of the fact that for every family of unitary operators $B_{1}, \ldots, B_{m}$ it holds that
\[ \|[Q, B_{1} \ldots B_{m}]\|_{HS} = \| \sum_{j=1}^{m} B_{1} \ldots B_{j-1}[Q, B_{j}] B_{j+1} \ldots B_{m} \|_{HS} \leq \sum_{j=1}^{m} \| [Q, B_{j}]\|_{HS}\,. \]
Using the product expression defining $U_{X,\mathbf{k}}$, we can apply the previous estimation to bound from above \eqref{equa:almostcommutelocallyAux1} as follows
\begin{align*}
\left\| Q - \tr_{X}(Q) \otimes \tfrac{1}{d^{|X|}} \mathbbm{1}_{X}  \right\|_{HS}^{2} & \leq \frac{1}{d^{2|X|}} \sum_{\mathbf{k}:X \to [d^{2}]}  \left( \sum_{e \in X} \| [Q,U_{e, \mathbf k(e)}] \|_{HS}\right)^{2}\\
& \leq \frac{|X|}{d^{2|X|}} \sum_{\mathbf{k}:X \to [d^{2}]} \sum_{e \in X} \| [Q, U_{e,\mathbf{k}(e)}] \|_{HS}^{2} \\
& = \frac{|X|}{d^{2|X|}} \sum_{e \in X} d^{2|X|-2}\sum_{k \in [d^{2}]} \|[Q, U_{e,k}] \|_{HS}^{2} \\
& \leq \frac{|X|}{ d^{2}K^2} \sum_{e \in X}   \sum_{\alpha \in \mathcal{A}} \|[Q, S_{e,\alpha}] \|_{HS}^{2} \\
& = \frac{|X|}{ d^{2}K^2} |||Q|||_{X}^{2}\,.
\end{align*}
where in the last inequality we have applied \eqref{equa:almostcommutelocallyAux0,5}. Finally, note that the infimum on the right hand-side of \eqref{equa:almostcommutelocally} is upper bounded by the first term in the previous expression, since $Q_{0} = \tr_{X}(Q) \otimes \tfrac{1}{d^{|X|}}\mathbbm{1}_{X} \in \mathcal{B}_{\Lambda \setminus X}$, so taking $C_{2}=d^{2}K^2$, we conclude with the desired inequality.

\end{proof}

\begin{Rema}\label{rema:translation-invariance}
One could weaken Assumption~\ref{assumption-ti}, and not require that the coupling constants $\widehat{g}_{e,\alpha}$ 
are independent of the edge $e$. In this case, one can still prove Proposition~\ref{Prop:lowerBoundLocalDaviesHamiltonian}, replacing $\widehat{g}_{\min}$ by 
\[
\widehat{g}_X = \min_{e\in X} \min_{\alpha, \omega} \widehat g_{e,\alpha}(\omega).
\]
If one is nonetheless able to find a lower bound to $\widehat{g}_{X}$ uniform in $X$, then one could use that bound in place of $\widehat{g}_{\min}$ to recover a uniform result as in the translation invariant case.
Similarly, if the jump operators $S_{e,\alpha}$ depend on the location $e$, then the constants $C_1$ and $C_2$ in Proposition~\ref{Prop:lowerBoundLocalDaviesHamiltonian} are not independent on $X$ anymore, but the result could be recovered if one is able to obtain uniform estimates on them. For $C_2$, one can take the supremum over the constants $K$ from Lemma~\ref{lemm:equivalentSeminorms}.
\end{Rema}

\subsection{Parent Hamiltonian vs Davies generator}
We now have have two Hamiltonians, one coming from the Davies generator and the other from the parent Hamiltonian construction (see Section \ref{sec:parent-hamiltonian})
\[ \sum_{e \in \EE} \widetilde{H}_{e} \quad \text{and} \quad H_{\EE} = \sum_{X \in \XXX}P_{X}^{\perp}\, \]
both having the same (unique) ground state $\ket*{\rho_\beta^{1/2}}$. Recall that $\chi = \chi_{n,N} = \mathcal{F}_{n,N}^{rect}$ where $n=n(\beta)$ is chosen as in Lemma \ref{rema:n-beta}. Since we have computed the gap of the parent Hamiltonian in Section~\ref{sec:parent-hamiltonian}, we now want to show that we can use that estimate to bound the gap of the Davies generator.
\begin{Prop}\label{Prop:davies-to-parent-hamiltonian}
If the Davies generator for the Quantum Double Model with group $G$ satisfies Assumptions~\ref{assumption-ti}-\ref{assumption-primitivity}, then its spectral gap $\mathcal L$ is bounded by
    \begin{equation}
        \operatorname{gap}(\mathcal L) \ge C\, \hat{g}_{\min} \frac{e^{-c \beta\, n(\beta)^2}} {n(\beta)^{4}} \, \operatorname{gap}(H_{\EE}),
    \end{equation}
    where $H_\EE$ is the parent Hamiltonian of the thermofield double state $\ket*{\rho_\beta^{1/2}}$ with parameter $n(\beta)$, $c$ and $C$ are positive constants independent of $\beta$, $G$ and the system size, and $\widehat{g}_{\min} = \min_{\alpha, \omega} \widehat{g}_{\alpha}(\omega)$.
\end{Prop}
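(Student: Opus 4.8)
The plan is to descend through three Hamiltonians, each time losing only a controlled factor: first replace $\mathcal{L}$ by the frustration-free Hamiltonian $\widetilde{H}=\sum_{e}\widetilde{H}_e$ of Proposition~\ref{prop:davies-local-gap} (for which $\gap(\mathcal{L})=\gap(\widetilde{H})$), then lower-bound each $\widetilde{H}_e$ by a multiple of the projection $\Pi_e^{\perp}$ to obtain a purely projector Hamiltonian $H':=\sum_e\Pi_e^\perp$ with the same ground state, and finally compare $H'$ with the parent Hamiltonian $H_{\EE}=\sum_{X\in\XXX_{n(\beta),N}}P_X^\perp$ whose gap is already controlled by Corollary~\ref{coro:spectral-gap-parent-hamiltonian}. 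The exponential factor $e^{-c\beta}$ will come from Proposition~\ref{Prop:lowerBoundLocalDaviesHamiltonian}, while the $n(\beta)^{-4}$ will come from the last comparison, as a combinatorial count of interaction rectangles through a fixed edge.

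\textbf{The first two steps.} By Proposition~\ref{prop:davies-local-gap}, $\widetilde{H}$ is self-adjoint, positive semidefinite, frustration-free, with unique ground state the thermofield double $\ket*{\rho_\beta^{1/2}}$, and $\gap(\mathcal{L})=\gap(\widetilde{H})$. By Proposition~\ref{Prop:lowerBoundLocalDaviesHamiltonian} together with Remark~\ref{rema:translation-invariance} (and using $\abs{\Omega}=9$), one has $\widetilde{H}_e\ge \kappa\,\Pi_e^\perp$ with $\kappa=\tfrac{C_2}{9}\,\widehat{g}_{\min}\,e^{-C_1\beta}$, hence $\widetilde{H}\ge\kappa\,H'$. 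Since $0\le\kappa H'\le\widetilde{H}$ forces $\ker H'\subseteq\ker\widetilde{H}$, and since $\ket*{\rho_\beta^{1/2}}\in\ker\Pi_e^\perp$ for every $e$ (it is the ground state of each $\widetilde{H}_e$), the two Hamiltonians share the one-dimensional ground space $\mathbb{C}\ket*{\rho_\beta^{1/2}}$; therefore $\gap(\widetilde{H})\ge\kappa\,\gap(H')$. All of this is routine once the constants $C_1,C_2$ and $\abs{\Omega}$ are read off.

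\textbf{The main step: comparing $H'$ with $H_{\EE}$.} I would prove $H_{\EE}\le M\,H'$ with $M=O\!\big(n(\beta)^4\big)$, which gives $\gap(H')\ge\gap(H_{\EE})/M$ since both sides are frustration-free with the same ground state. It suffices to dominate each interaction term: for a rectangle $X\in\XXX_{n(\beta),N}$, working on the enlarged region $\widehat X=\bigcup_{e\in X}R_e$ (so that every $\Pi_e^\perp$, $e\in X$, is supported in $\widehat X$), Propositions~\ref{prop:davies-kernel} and~\ref{prop:davies-local-gap}(1) identify $\ker\big(\sum_{e\in X}\Pi_e^\perp\big)=\{\,\ket{\iota(Q)}\mid Q\in\mathcal{B}_{\EE\setminus X}\,\}$; because such $Q$ leave the PEPO tensors on $X$ untouched, this kernel equals $\operatorname{Im}(V_X)\otimes\mathcal{H}^2_{\widehat X\setminus X}=\ker(P_X^\perp)$, so $P_X^\perp\le\lambda_X^{-1}\sum_{e\in X}\Pi_e^\perp$ where $\lambda_X=\gap\big(\sum_{e\in X}\Pi_e^\perp\big)$. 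Summing over $X$ and using that a fixed edge lies in at most $O(n(\beta)^4)$ proper rectangles of dimensions at most $n(\beta)\times n(\beta)$ yields $H_{\EE}=\sum_X P_X^\perp\le c_0\sum_e\#\{X\ni e\}\,\Pi_e^\perp\le c_0\,O(n(\beta)^4)\,H'$, provided $\lambda_X^{-1}\le c_0$ uniformly (a uniform-in-$X$ lower bound on $\lambda_X$, allowing at worst an additional $e^{c'\beta}$ that can be absorbed into the final exponent). Combining everything, $\gap(\mathcal{L})=\gap(\widetilde{H})\ge\kappa\,\gap(H')\ge\tfrac{\kappa}{M}\,\gap(H_{\EE})=C\,\widehat{g}_{\min}\,\tfrac{e^{-c\beta}}{n(\beta)^4}\,\gap(H_{\EE})$.

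\textbf{Main obstacle.} The delicate point is precisely the uniform control of $\lambda_X^{-1}=\gap\big(\sum_{e\in X}\Pi_e^\perp\big)^{-1}$ over all rectangles $X$ up to size $n(\beta)$ — that is, showing that this Davies-derived local Hamiltonian does not become gapless as the region grows — and, intertwined with it, the bookkeeping forced by the fact that $\Pi_e^\perp$ is supported on the enlarged region $R_e$ of Figure~\ref{fig:SupportLtildee} rather than on the single edge $e$, so that the edges grouped with a rectangle $X$ must be chosen to leave the PEPO structure on $X$ genuinely intact. One natural route is to note that the ground-space projectors of the family $\big(\sum_{e\in\RR}\Pi_e^\perp\big)_{\RR}$ coincide with the PEPS projectors $(P_{\RR})_{\RR}$, which satisfy the Martingale Condition, and to feed this into Theorems~\ref{Theo:RecursiveGapEstimate}, \ref{Theo:fromPeriodictoOpenBoundary} and~\ref{Theo:spectralGapOpenBoundary}, reducing the local gap of $H'$ to that on rectangles of the fixed scale dictated by $n(\beta)$; pinning down exactly how that base-scale gap is bounded is where the real work lies. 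The remaining ingredients — the identification of $\ker\widetilde{H}_e$, the value $\abs{\Omega}=9$, the constants $C_1,C_2$, and the count of rectangles through an edge — are straightforward.
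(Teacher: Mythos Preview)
Your first two steps and the combinatorial count $m(\XXX)=O(n(\beta)^4)$ match the paper exactly. The third step, however, contains a real error: you assert that $\ker\big(\sum_{e\in X}\Pi_e^\perp\big)$ \emph{equals} $\operatorname{Im}(V_X)\otimes\mathcal{H}^2_{\widehat X\setminus X}$, but only the inclusion $\subseteq$ holds. By Proposition~\ref{prop:davies-local-gap}(1) the kernel on the left is $\{\ket{\iota(Q)}:Q\in\mathcal{B}_{\EE\setminus X}\}$; on the $X$-factor every such vector carries the single fixed operator $e^{-\frac{\beta}{2}H_X}$ (dressed by boundary-crossing terms), whereas $\operatorname{Im}(V_X)$ is parametrised by all boundary conditions on $\partial X$ and has rank far larger than one. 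A straight dimension count already gives $\dim\ker\Pi_X^\perp=\abs{G}^{2\abs{\EE\setminus X}}$ versus $\dim\ker P_X^\perp=(\operatorname{rank}V_X)\cdot\abs{G}^{2\abs{\EE\setminus X}}$. Hence the ground-space projectors $\Pi_\RR$ of the family $\big(\sum_{e\in\RR}\Pi_e^\perp\big)_\RR$ are strictly \emph{smaller} than the PEPS projectors $P_\RR$, and your proposed Martingale route to bounding $\lambda_X$ --- which was premised on $\Pi_\RR=P_\RR$ --- never gets off the ground.

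The paper sidesteps the obstacle by using only the inclusion. Since $\Pi_X^\perp$ and $P_X^\perp$ are both orthogonal projections and $\ker\Pi_X^\perp\subseteq\ker P_X^\perp=\operatorname{Im}(V_X)\otimes\mathcal{H}^2_{\EE\setminus X}$, one obtains $P_X^\perp\le\Pi_X^\perp$ with no gap constant to track. The argument then reads
\[
\sum_{e\in\EE}\Pi_e^\perp \;\ge\; \frac{1}{m(\XXX)}\sum_{X\in\XXX}\sum_{e\in X}\Pi_e^\perp \;\ge\; \frac{1}{m(\XXX)}\sum_{X\in\XXX}\Pi_X^\perp \;\ge\; \frac{1}{m(\XXX)}\sum_{X\in\XXX}P_X^\perp \;=\;\frac{1}{m(\XXX)}\,H_\EE,
\]
with the last inequality the projection comparison just described, and the middle one taken with constant one. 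No recursive estimate on $\lambda_X$ is invoked: the decisive simplification you missed is that $P_X^\perp\le\Pi_X^\perp$ follows directly from the one-sided kernel containment between two \emph{projections}, so the uniform local-gap control you flagged as ``where the real work lies'' is not needed.
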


 By combining Proposition~\ref{Prop:davies-to-parent-hamiltonian} with Corollary~\ref{coro:spectral-gap-parent-hamiltonian}, Theorem \ref{Theo:MainResultIntroduction} is then obtained as a corollary,: $\gap(H_{\EE})$ is a constant independent on $N$ and $\beta$, and $n(\beta)$ scales as $C'e^{\beta}$
for some constant $C'$ depending on $\abs{G}$ (see Lemma~\ref{rema:n-beta}). Therefore we can always find constants $\lambda$ and $c'$, independent of system size and $\beta$, such that
$\gap(\mathcal{L}) \ge \hat{g}_{\min} e^{-c' \, e^{\beta}} \lambda$.

\begin{proof}
We begin by counting how many rectangles $X \in \XXX$ contain a given edge $e$: we want to find an upper bound $m(\XXX)$ independent of the edge $e$. 


To estimate $m(\XXX)$, recall that each $X \in \XXX$ has dimension $a \times b$ with $2 \leq a,b \leq n(\beta)$, and so $X$ contains $O(n(\beta)^2)$ edges. For a fixed choice of $a$ and $b$, a given edge is therefore contained in at most $O(n(\beta)^2)$ rectangles of size $a\times b$. Since there are $O(n(\beta)^2)$ possible choices of $a$ and $b$ within the allowed range, we can roughly estimate 
\[ m(\XXX) \leq O(n(\beta)^{4})\,. \]
Then this implies that
\[ \sum_{e \in \EE} \widetilde{H}_{e}  \, \geq \, \frac{1}{m(\XXX)} \, \sum_{X \in \XXX}\left( \sum_{e \in X} \widetilde{H}_{e}  \right) \, \ge \, \frac{1}{m(\XXX)} \frac{C_2}{\abs{\Omega}} \widehat{g}_{\min} \sum_{X \in \XXX}  \frac{1}{\abs{X}}e^{-C_1 \beta \abs{R_X}} \Pi_{X}^{\perp}\, \]
where we used Proposition~\ref{prop:davies-local-gap}. As we just discussed, $\abs{X}$ is $O(n(\beta)^2)$, while $R_X$ is contained in a rectangle with sides of length at most $n(\beta)+4$, which also contains $O(n(\beta)^2)$ edges.  We can therefore find positive constants $C$ and $c$ such that
\[
\sum_{e \in \EE} \widetilde{H}_{e}  \, \geq \, C \, \widehat{g}_{\min} \frac{e^{-c \beta n(\beta)^2}}{n(\beta)^4}    \sum_{X \in \XXX} \Pi_{X}^{\perp}
\]

Next, we note that $P_{X} \geq \Pi_X$ for every rectangular region $X \subset \EE$, or equivalently
\begin{equation*} 
\ker{\Pi_X^{\perp}} \, = \, \{  \,\,    (A  \otimes e^{-\frac{\beta}{2} H_X}) e^{-\frac{\beta}{2}(H_\EE - H_X)} \,\, \colon \,\, A \in \mathcal{B}_{\EE \setminus X} \,\, \} \,  \subseteq  \,\, \operatorname{Im}(V_{X})\,.
\end{equation*}
Indeed, using the PEPS decomposition
\[
\begin{tikzpicture}[equation, yscale=0.6]

\draw (0.4,-0.5) -- (0.4,1.5);
\draw (1.6,-0.5) -- (1.6,1.5);

\filldraw[fill=gray!40!white, draw=black] (-0.3,0) rectangle (2.3,1);

\draw (1,0.5) node {$e^{-\frac{\beta}{2} H_{\EE}}$};

\draw (0.4,-1) node {$\mathcal{H}_{\EE \setminus X}$};
\draw (1.6,-1) node {$\mathcal{H}_{X}$};

\end{tikzpicture}
\quad = \quad
\begin{tikzpicture}[equation, yscale=0.6]

\draw (0.5,-0.5) -- (0.5,1.5);
\draw (2,-0.5) -- (2,1.5);

\filldraw[fill=blue!40!white, draw=black] (1.5,0) rectangle (2.5,1);

\draw (2,0.5) node {$V_{X}$};

\filldraw[fill=blue!40!white, draw=black] (0,0) rectangle (1,1);

\draw (0.5,0.5) node {$V_{\EE \setminus X}$};

\draw[thin, decorate, decoration={snake, segment length=1mm, amplitude=0.5mm}] (1,0.5) -- (1.5,0.5);

\draw (0.5,-1) node {$\mathcal{H}_{\EE \setminus X}$};
\draw (2,-1) node {$\mathcal{H}_{X}$};

\end{tikzpicture}
\]

\noindent we can compare

\[
\ker \Pi_{X}^{\perp} \, = \, \Bigg\{ \quad \begin{tikzpicture}[equation,yscale=0.6, baseline={([yshift=0.2cm]current bounding box.center)}]

\draw (0.4,-0.5) -- (0.4,3);
\draw (1.6,-0.5) -- (1.6,3);

\filldraw[fill=white!40!white, draw=black] (-0.3,1.5) rectangle (0.9,2.5);

\filldraw[fill=gray!40!white, draw=black] (-0.3,0) rectangle (2.3,1);

\draw (0.3,2) node {$A$};
\draw (1,0.5) node {$e^{-\frac{\beta}{2} H_{\EE} }$};

\draw (0.4,-1) node {$\mathcal{H}_{\EE \setminus X}$};
\draw (1.6,-1) node {$\mathcal{H}_{X}$};

\end{tikzpicture} 
\quad \colon A \in \mathcal{B}_{\EE \setminus X} \Bigg\} 
\]

\noindent and

\[
\operatorname{Im}(V_{X})  \, = \,\Bigg\{ \quad
\begin{tikzpicture}[equation, baseline={([yshift=0.2cm]current bounding box.center)}]

\begin{scope}[yscale=0.6, yshift=-9cm]

\draw (0.4,-0.5) -- (0.4,3);
\draw (1.9,-0.5) -- (1.9,3);

\filldraw[fill=white!40!white, draw=black] (-0.3,0) rectangle (0.9,2.5);

\filldraw[fill=blue!40!white, draw=black] (1.4,0) rectangle (2.4,1);

\draw (1.9,0.5) node {$V_{X}$};

\draw[thin, decorate, decoration={snake, segment length=1mm, amplitude=0.5mm}] (0.9,0.5) -- (1.4,0.5);

\draw (0.3,1.25) node {$A'$};

\draw (0.4,-1) node {$\mathcal{H}_{\EE \setminus X}$};
\draw (1.9,-1) node {$\mathcal{H}_{X}$};

\end{scope}

\end{tikzpicture}
\quad \colon A' \in \mathcal{B}_{\EE \setminus X} \otimes \mathcal{H}_{\partial (\EE \setminus X)} \Bigg\}
\]
This implies that $\sum_{X\in\XXX} \Pi_X^\perp \ge \sum_{X\in\XXX} P_X^\perp = H_{\EE}$, and chaining all the lower bounds we obtain the claimed result.
\end{proof}



\begin{thebibliography}{10}

\bibitem{Alicki2D} R. Alicki, M. Fannes, M. Horodecki,  \emph{On thermalization in Kitaev's 2D model}. Journal of Physics A: Mathematical and Theoretical, 42(6), 065303 (2009).

\bibitem{Alicki4D} R. Alicki, M. Horodecki, P. Horodecki, R. Horodecki, \emph{On thermal stability of topological qubit in Kitaev's 4D model}. Open Systems \& Information Dynamics, 17(01), 1-20 (2010).

\bibitem{Anshu2020}
A. Anshu, \emph{Improved local spectral gap thresholds for lattices of finite size}.
Physical Review B 101, 165104 (2020).

\bibitem{random-circuits} F. G. Brandao, A. W.  Harrow, M. Horodecki, \emph{Local random quantum circuits are approximate polynomial-designs}. Commun. Math. Phys., 346(2), 397-434 (2016).

\bibitem{Haah-code} S. Bravyi, J.  Haah,  \emph{Quantum self-correction in the 3d cubic code model}. Physical Review Letters, 111(20), 200501 (2003).

\bibitem{Bravyi-Terhal} S. Bravyi, B. Terhal.  
\emph{A no-go theorem for a two-dimensional self-correcting quantum memory based on stabilizer codes}. 
New Journal of Physics, 11(4), 043029 (2009).

\bibitem{review-memories} B. J. Brown, D. Loss, J. Pachos, C.N. Self, J. Wootton. 
\emph{Quantum memories at finite temperature}. 
Reviews of Modern Physics, 88(4), 045005 (2016)

\bibitem{Brown} B. J. Brown, A. Al-Shimary, and J. K Pachos. \emph{Entropic barriers for two-dimensional quantum memories}. Phys. Rev. Lett. 112, 120503 (2014).

\bibitem{Cesi2001} F. Cesi. \emph{Quasi-factorization of the entropy and logarithmic Sobolev inequalities for Gibbs random fields}. Probability Theory and Related Fields 120.4, 569–584 (2001).

\bibitem{Cirac2021} J. I. Cirac, D. Pérez-García, N. Schuch, and F. Verstraete. \emph{Matrix product states and projected entangled pair states: Concepts, symmetries, theorems}.
Rev. Mod. Phys. 93, 045003 (2021).

\bibitem{Cirac2019} J.I. Cirac, J. Garre-Rubio and D. Pérez-García, \emph{Mathematical open problems in projected entangled pair states},
Rev. Mat. Complut., 32, 579–599 (2019).

\bibitem{Cirac11} J.I. Cirac, D. Poilblanc, N. Schuch, F. Verstraete,  \emph{Entanglement spectrum and boundary theories with projected entangled-pair states}. Physical Review B, 83(24), 245134 (2011).

\bibitem{Cirac20} J.I. Cirac, D. Perez-Garcia, N. Schuch, F.  Verstraete,  \emph{Matrix product states and projected entangled pair states: Concepts, symmetries, and theorems}, Rev. Mod. Phys. 93, 045003 (2021).

\bibitem{Davies} E. B. Davies, \emph{Markovian master equations}. Commun. Math. Phys. 39, 91–110 (1974). 

\bibitem{Dennis02} E. Dennis, A. Kitaev, A. Landahl, and J. Preskill, \emph{Topological quantum memory}. L. Math. Phys., 43(9), 4452-4505 (2002).

\bibitem{FCS} M. Fannes, B. Nachtergaele, R.F. Werner, 
\emph{Finitely Correlated States on Quantum Spin Chains}, Commun. Math. Phys. 144, 443 – 490 (1992).

\bibitem{Frigerio} A. Frigerio, \emph{Stationary states of quantum dynamical semigroups}, Commun. Math. Phys., 63, 269–276 (1978).

\bibitem{Gosset2016}
D. Gosset, E. Mozgunov. \emph{Local gap threshold for frustration-free spin systems}, Journal of Mathematical Physics 57, 091901 (2016).

\bibitem{Haah-code-1} J. Haah,  \emph{Local stabilizer codes in three dimensions without string logical operators}. Phys. Rev. A, 83(4), 042330 (2011).

\bibitem{Lemm2019}
M. Lemm, E. Mozgunov. 
\emph{Spectral gaps of frustration-free spin systems with boundary}, Journal of Mathematical Physics 60, 051901 (2019).

\bibitem{Lemm2020}
M. Lemm, \emph{Finite-size criteria for spectral gaps in D-dimensional quantum spin systems}. Analytic Trends in Mathematical Physics 741, 121 (2020).

\bibitem{James}
G. James, Ma, Liebeck. 2001. \emph{Representations and Characters of Groups}. 2nd ed. Cambridge Mathematical Textbooks. Cambridge University Press (2001).

\bibitem{Kastoryano-Brandao} M. J. Kastoryano, F.  Brandao,  \emph{Quantum Gibbs Samplers: the commuting case}. Commun. Math. Phys., 344(3), 915-957 (2016).

\bibitem{KaLu18}  
M. J. Kastoryano and A. Lucia, \emph{Divide and conquer method for proving gaps of frustration free Hamiltonians}. J. Stat. Mech.  033105 (2018)

\bibitem{KaLuPeGa19}  
M. J. Kastoryano, A. Lucia, and D. Perez-Garcia, \emph{Locality at the boundary implies gap in the bulk for 2D PEPS}. Commun. Math. Phys.  366, 895–926 (2019).

\bibitem{KSVV02}
A. Y. Kitaev, A. Shen and M. N. Vyalyi. \emph{Classical and quantum computation}. (No. 47). American Mathematical Soc. (2002)

\bibitem{Knabe88}
S. Knabe, \emph{Energy gaps and elementary excitations for certain VBS-quantum antiferromagnets}. Journal of Statistical Physics, 52(3):627–638 (1988).

\bibitem{Komar} A. Kómár, O. Landon-Cardinal, K. Temme,  \emph{Necessity of an energy barrier for self-correction of Abelian quantum doubles}. Physical Review A, 93(5), 052337 (2016).

\bibitem{Kuwahara} T. Kuwahara, A.M. Alhambra, A. Anshu,  \emph{Improved thermal area law and quasi-linear time algorithm for quantum Gibbs states}, Phys. Rev. X 11, 011047 (2021).

\bibitem{Landon} O. Landon-Cardinal and D. Poulin, \emph{Local topological order inhibits thermal stability in 2D}, Phys. Rev. Lett., 110, 090502 (2013).

\bibitem{LevinWen} M.A. Levin, X.G. Wen,  \emph{String-net condensation: A physical mechanism for topological phases}. Physical Review B, 71(4), 045110 (2005).

\bibitem{Nachtergaele} B. Nachtergaele, 
\emph{The spectral gap for some spin chains with discrete symmetry breaking}, Commun. Math. Phys., 175, 565-606 (1996). 

\bibitem{Nachtergaele2013} B. Nachtergaele, V.B. Scholz, R.F. Werner,
\emph{Local Approximation of Observables and Commutator Bounds}, In: Janas, J., Kurasov, P., Laptev, A., Naboko, S. (eds) Operator Methods in Mathematical Physics. Operator Theory: Advances and Applications, vol 227. Birkhäuser, Basel. 

\bibitem{Romanreview} R. Orus,  \emph{Tensor networks for complex quantum systems}. Nature Reviews Physics, 1(9), 538-550 (2019).

\bibitem{Antonio} D. Pérez-García, A. Pérez-Hernández,  \emph{Locality estimates for complex time evolution in 1D}. Commun. Math. Phys., 399, 929–970 (2023).

\bibitem{PGVCW07}  
D. Pérez-García, F. Verstraete, J. Cirac,  M. M. Wolf. \emph{PEPS as unique ground states of local Hamiltonians}. Quantum Information and Computation, 8, (2007).

\bibitem{fractons} M. Pretko, X. Chen, Y. You,  \emph{Fracton phases of matter}. International Journal of Modern Physics A, 35(06), 2030003 (2020).

\bibitem{Temme} K. Temme, \emph{Thermalization Time Bounds for Pauli Stabilizer Hamiltonians}. Commun. Math. Phys., 350, 603–637 (2017).

\bibitem{Murg-review} F. Verstraete, V. Murg, J.I.  Cirac,  \emph{Matrix product states, projected entangled pair states, and variational renormalization group methods for quantum spin systems}. Advances in Physics, 57(2), 143-224 (2008).

\bibitem{VWPGC06}  
F. Verstraete, M.M. Wolf, D. Pérez-García, J.I. Cirac. \emph{Criticality, the Area Law, and the Computational Power of Projected Entangled Pair States}. Phys. Rev. Lett. 96, 220601 (2006).



\end{thebibliography}
\end{document}